\newenvironment{abstract} 
{\clearpage\null \vfill \begin{center}%
\bfseries \abstractname \end{center}}%
{\vfill \null}
\DeclareFontFamily{OT1}{pzc}{}
\DeclareFontShape{OT1}{pzc}{m}{it}{<-> s * [1.10] pzcmi7t}{}
\DeclareMathAlphabet{\mathpzc}{OT1}{pzc}{m}{it}
\newcommand*{\mb}{\mathbf} 
\newcommand*{\mr}{\mathrm} 
\newcommand*{\mc}{\mathcal} 
\newcommand*{\ms}{\mathscr} 
\newcommand*{\mf}{\mathfrak} 
\newcommand*{\ph}{\phantom} 
\newcommand*{\al}{\ensuremath{\alpha}}
\newcommand*{\be}{\ensuremath{\beta}}
\newcommand*{\ga}{\ensuremath{\gamma}}
\newcommand*{\Ga}{\ensuremath{\Gamma}}
\newcommand*{\de}{\ensuremath{\delta}}
\newcommand*{\De}{\ensuremath{\Delta}}
\newcommand*{\ka}{\ensuremath{\kappa}}
\newcommand*{\eps}{\ensuremath{\epsilon}}
\newcommand*{\la}{\ensuremath{\lambda}}
\newcommand*{\La}{\ensuremath{\Lambda}}
\newcommand*{\si}{\ensuremath{\sigma}}
\newcommand*{\Si}{\ensuremath{\Sigma}}
\newcommand*{\om}{\ensuremath{\omega}}
\newcommand*{\Om}{\ensuremath{\Omega}}
\newcommand{\cg}{\textnormal{\textsl{g}}} 
\newcommand*{\pa}{\partial}
\newcommand*{\bra}{\langle}
\newcommand*{\ket}{\rangle}
\newcommand*{\ra}{\rightarrow}
\newcommand*{\Ra}{\Rightarrow}
\newcommand*{\rt}{\triangleright}
\newcommand*{\tr}{\mathrm{tr}}
\newcommand*{\Tr}{\mathrm{Tr}}
\newtheorem{definition}{Definition}[chapter]
\newtheorem{theorem}{Theorem}[chapter]
\newtheorem*{lemma}{Lemma}
\newtheorem{proposition}{Proposition}[chapter]
\begin{document}


\frontmatter

		
\begin{titlepage}
	\addtolength{\hoffset}{0.5\evensidemargin-0.5\oddsidemargin} 
	\noindent%
	\begin{center}
	\end{center}
	\noindent%
	\begin{tabular}{@{}p{\textwidth}@{}}
		\toprule[2pt]
		\midrule
		\vspace{0.2cm}
		\begin{center}
			\Large{\textbf{On Geometry and Symmetries in Classical and Quantum Theories of Gauge Gravity}}
		\end{center}
		\vspace{0.2cm}\\
		\midrule
		\toprule[2pt]
	\end{tabular}
	\vspace{1.8 cm}
  {\small
  	\begin{center}
  		{\Large \textbf{Dissertation}}
  		
  		\vspace{1.2cm}
  		\large
  		
  		zur Erlangung des Doktorgrades
  		
  		
  		\vspace{0.2cm}
  		an der Fakult{\"a}t f{\"u}r Mathematik, 

        \vspace{0.2cm}
        Informatik und Naturwissenschaften
  		
  		\vspace{0.2cm}
  		Fachbereich Physik 
  		
  		
  		\vspace{0.2cm}
  		der Universit{\"a}t Hamburg

  		\vspace{1.6cm}

  		vorgelegt von
  		
  		\vspace{0.5cm}

  		{\Large Vadim Belov}
  		
  		\vspace{0.5cm}
  		   
  		{\large
  		      aus Sankt Petersburg
  		    }
  	\end{center}
  }
  \vspace{1.5cm}
  {\large \begin{center}Hamburg\\2019\end{center} }
\end{titlepage}
\clearpage


\newpage
\thispagestyle{empty}

\vspace*{\fill}


\noindent 
\begin{tabular}{@{}ll}
	Gutachter der Dissertation: 					& Dr.~Benjamin~Bahr \\
	\hspace{10cm}									& Prof.~Dr.~Gleb~Arutyunov \\
	\\
	Zusammensetzung der Pr\"ufungskommission: 		
													& Dr.~Benjamin~Bahr \\
													& Prof.~Dr.~Gleb~Arutyunov \\
													& Prof.~Dr.~Sven-Olaf~Moch \\
													& Prof.~Dr.~J\"{o}rg Teschner \\
													& Prof.~Dr.~Dieter~Horns \\
	\\
	Vorsitzender der Pr\"ufungskommission:			& Prof.~Dr.~Dieter~Horns \\
	\\
	Datum der Disputation:							& 10.04.2019 \\
	\\
	Vorsitzender Fach-Promotionsausschusses PHYSIK: & Prof.~Dr.~Michael~Potthoff \\
	\\
	Leiter des Fachbereichs PHYSIK:					& Prof.~Dr.~Wolfgang~Hansen \\
	\\
	Dekan der Fakult\"at MIN:							& Prof.~Dr.~Heinrich~Graener
\end{tabular}


\large
\newpage
\thispagestyle{empty}

\newpage
\begin{abstract}
Spin Foam and Loop approaches to Quantum Gravity reformulate Einstein's theory of relativity in terms of connection variables. The metric properties are encoded in face bivectors/conjugate fluxes that are required to satisfy certain conditions, in order to allow for their geometric interpretation. We show that the (sub-)set of the so-called `volume simplicity constraints' is not implemented properly in the current EPRL-FK spinfoam vertex amplitude, if extended to arbitrary polyhedra. 

We then propose that a certain knot-invariant of the bivector geometry, induced on the boundary graph, encodes the missing conditions, allowing for reconstruction of a polytope from its two-dimensional faces. Implemented in the quantum amplitude, this leads to corrected semi-classical asymptotics for a hypercuboid, and is conjectured to be non-trivial in more general situations.

The analysis of linear version of `volume simplicity' suggests to switch from hypersurface normals to edge lengths, that is from 3-forms directly to tetrads -- in the extended configuration space of the Plebanski constrained formulation. We then give the corresponding dual version of linear simplicity constraints, which prescribe 3d volume for the polyhedral faces in the boundary of a 4d polytope.

We also analyse the status of metric/vielbein degrees of freedom and the role of local translations in the classical Einstein-Cartan gravity, viewed as a Poincare gauge theory. The relation with the diffeomorphism symmetry is established through the key concept of development, which generalizes parallel transport of vectors in the geometric theory of Cartan connections. We advocate the latter to be the natural gauge-theoretic framework for the theory of relativity.\end{abstract}

\renewcommand{\abstractname}{Zusammenfassung}

\begin{abstract}
Spinschaum-Modelle und Schleifenquantengravitation reformulieren Einsteins Allgemeine Relativit\"{a}tstheorie mit Hilfe von Zusammenhangsvariablen. Die metrischen Eigenschaften sind in den konjugierten Fl\"{u}ssen kodiert. Diese m\"{u}ssen gewisse Bedingungen erf\"{u}llen, um eine geometrische Interpretation der Variablen zuzulassen. Wir zeigen, dass eine dieser Bedingungen, die ``Volumenzwangsbedingung'', nicht richtig im EPRL-FK Modell implementiert ist, sobald sie auf generelle Polyeder verallgemeinert wird. 

Wir schlagen dann eine gewisse Knoten-Invariante der Bivektorgeometrie im Randgraphen vor, die die fehlende Bedingung enthatlen k\"{o}nnte. Diese erlaubt eine Rekonstruktion des Polyeders aus dessen zweidimensionalen Fl\"{a}chen. Implementiert in der Quantenamplitude, hat diese dann die richtige semiklassische Asymptotik f\"{u}r den Hyperquader. Es wird vermutet, dass sie f\"{u}r allgemeinere Polyeder ebenfalls anwendbar ist. 

Die Analyse der linearen Volumenzwangsbedingung legt nahe, von den Hyperfl\"{a}chenormalen zu den Kantenl\"{a}ngen \"{u}berzugehen, also von 3-Formen direkt zu den Dreibeinen im erweiterten Konfigurationsraum des Plebanskiformalismus. Wir entwickeln die duale Version der linearen Volumenzwangsbedingung, welche 3d Volumina f\"{u}r die polyhedralen Fl\"{a}chen im Rand der 4d Polytope vorgibt.

Ebenfalls analysieren wir den Status der metrischen/Vielbein-Freiheitsgrade, sowie die Rolle der lokalen Translationen in klassischer Einstein-Cartan-Gravitation, formuliert als Poincare-Eichtheorie. Wir stellen diese als den nat\"{u}rlichen eichtheoretischen Rahmen f\"{u}r Relativit\"{a}tstheorie dar.
\end{abstract}

\thispagestyle{empty}
\selectlanguage{english}

\newpage
\thispagestyle{empty}

\begin{flushright}
	\vspace*{3cm}
	\emph{The Book of Nature is written in the language of mathematics, and the symbols are triangles, circles and other geometrical figures, without whose help it is impossible to comprehend a single word of it; without which one wanders in vain through a dark labyrinth. [G. Galilei]}
\end{flushright}

\begin{flushright}
	\vspace*{3cm}
	\emph{I attach special importance to the view of geometry which I have just set forth, because without it I should have been unable to formulate the theory of relativity. [A. Einstein]}
\end{flushright}

\begin{flushright}
	\vspace*{3cm}
	\emph{I think I can safely say that nobody understands quantum mechanics. [R. Feynman]}
\end{flushright}


\newpage
\thispagestyle{empty}
\mbox{}

\thispagestyle{empty}
\clearpage
\thispagestyle{empty}

\cleardoublepage
\pagenumbering{gobble}
\setcounter{tocdepth}{2}
\tableofcontents
\cleardoublepage

\pagenumbering{arabic}

\mainmatter


\chapter{Introduction}


The two cornerstones of fundamental physics are the Quantum theory and General Relativity, they provide some of the deepest insights into the nature of observable universe. The first one is primarily concerned with the elementary constituents of matter and their behaviour at the smallest scales, resulting in the Standard Model of particle physics and the description of three fundamental interactions (electroweak and strong nuclear forces) in terms of Quantum Field Theory (QFT). Whereas the second one presents the classical theory of gravitation, prescribing the dynamics of the large collection of masses and extending its effects across the whole universe, studied in cosmology. It endows the space-time itself with dynamical properties, usually formulated in the language of (pseudo-)Riemannian geometry.


The two disciplines thus rarely overlap in their usual domains of applicability, where the effects of one theory are negligibly small as compared to the other. It is only in the exotic physical conditions of high energy densities of matter concentrated in the small region -- such as in the center of black holes, or at the big bang inception of expanding universe -- where both theories should work together, describing the quantum properties of a strongly curved space-time geometry. Up until now, consistent theory of Quantum Gravity -- that would combine the essential features of both types of description of dynamics -- has not been fully developed. A number of promising candidates have been put forward, however, each emphasizing various aspects of the problem~\footnote{Such as string theory, loop and foam approaches, dynamical triangulations and quantum Regge calculus, causal sets, asymptotic safety hypothesis, enthropic and modified gravity -- just to name a few.}. It seems clear that the usual methods of the perturbative QFT do not work properly, leading to intractable singularities, and one must explore other directions.

In some of the modern non-perturbative approaches to the quantization of gravity, the role of Riemannian metrics is very much reduced, in favour of mathematically more tractable connection variables. For instance, in the canonical Loop Quantum Gravity (LQG)~\cite{Ashtekar1991Non-pertubativeLQG,Rovelli2004QG,Thiemann2007ModCanQuantGR}, the densitiezed triad of the hypersurface plays the part of momenta variables, conjugate to the configuration of the certain Ashtekar-Barbero connection. The emerging picture of `quantum space' may be read in terms of a certain `twisted geometries'~\cite{FreidelSpeziale2010Twisted-geometries,FreidelSpeziale2010Twistors-to-twisted-geometries,RovelliSpeziale2010LQG-on-a-graph,BianchiDonaSpeziale2011Polyhedra}, consisting of the discrete polyhedra of `fuzzy' shapes, glued non-trivially alond their two-dimensional faces. In the closely related Spin Foam (SF) state-sum/path-integral quantization~\cite{Baez2000SF-BF,Perez2013SF-review,RovelliVidotto2014CovariantLQG}, the analogous general bivectors of the 2-dimensional surfaces are first integrated out completely, in order to obtain the vacuum state of the so-called BF Topological Quantum Field Theory (TQFT), which can be rigorously quantized. (Part of) the metric information is then restored via restriction of summation in the partition function to the states that could attain the meaning of some `discrete metric geometry'. Thereby one also says that the respective quantum topological BF model is being `reduced' to that of General Relativity (GR).

Both these strategies then struggle to recover some more familiar spacetime geometry from their formalisms, for instance, within some sort of semi-classical approximation (`infra-red' limit). Their reliance on the discrete structures for the regularization also leads to the continuum limit problems (`ultra-violet' completion), that is consistency of the results under arbitrary refinements. The inability to identify what constitutes a \emph{`geometry'} in these models impedes the sensible interpretation of their outcomes. In result, one cannot achieve an unequivocal conclusion, whether the gravitational field has been satisfactorily quantized or not. In this thesis we address the nature of the degrees of freedom that are usually associated with the (pseudo-)Riemannian metric field $\cg$, both from the discrete viewpoint of the above Spin Foam quantum amplitudes, as well as in the classical continuum field-theoretic framework.

Our \textbf{first contribution} consists in realization that one of the most reliable and studied SF proposals so far -- the Engle-Pereira-Rovelli-Livine-Freidel-Krasnov (EPRL/FK) amplitude~\cite{EPRL-FK2008flipped2,EPRL-FK2008finiteImmirzi,EPRL-FK2008FK,EPRL-FK2008LS,ConradyHnybida2010timelike1} -- originally constructed and working well-enough for simplicial discretizations, \ul{does not extend trivially to the arbitrary polyhedral cells}. Such an extension is required to match the general-valence graph kinematical states of the boundary configurations (as LQG suggests). This was indeed put forward by Kami\'{n}ski-Kisielowski-Lewandowski (KKL)~\cite{KKL2010AllLQG,KKL2010correctedEPRL,BHKKL2011OperatorSF}, following the original EPRL-recipe. By scrutinizing the instructive case of the hypercuboid in Ch.~\ref{ch:problem}, \ul{we explicitly demonstrate that the part of the so-called `simplicity constraints', reducing BF to GR, is not implemented properly in the model}~\cite{Belov2018Poincare-Plebanski}. In result, the vertex amplitude cannot be assigned the meaning of describing contribution from some (unique) flat polyhedron, like it was for the rigid 4-simplex. This is evidenced by the appearance in the large-$j$ asymptotic limit of the same type face-mismatch (and torsion) as in canonical LQG. This constitutes a major problem for interpretation of such states in terms of some (semi-)classical 4d geometry~\footnote{We are basing our analysis on the numerical studies first performed in~\cite{BahrSteinhaus2016Cuboidal-EPRL} and subsequently generalized in~\cite{Dona-etal2017KKL-asympt}. This part of our work provides a possible interpretation for their results.}.

Our \textbf{second contribution} -- after identification of the problem -- was to propose some tentative solution. This is done in Ch.~\ref{ch:prop-1}, more or less straightforwardly. The purely  mathematical offspring of the present analysis was the formula, expressing the volume of 4-dim polyhedron in terms of bivectors of its 2-dim faces, proved in~\cite{Bahr2018polytope-volume,Bahr2018non-convex-polytope}. Based on these developments, \ul{we link the missing set of constraints with certain type of knot invariants, related to embedding of the boundary graph in 4d}~\cite{BahrBelov2017VolumeSimplicity}. We show how the condition of independence on the chosen particular knot, imposed on the amplitude (weakly), is capable of resolving the above-indicated issue, in the case of hypercuboid.

As has been noted, Spin Foams usually avoid operating full-fledged metric/vielbein variables, preferring instead to constrain the bivectors accordingly. The follow up of our analysis in Ch.~\ref{ch:problem} was another another realization that \ul{the vielbein/co-frame field is inherently present in the formalism: its degrees of freedom are encoded in the hypersurface normals} (relatives of lapse-shift variables of canonical theory) that are exploited for constraints imposition. Arguably, they are as independent as the connection variables (cf.~\cite{AlexandrovRoche2011CovariantLQG-critique}). This led us to \ul{suggest as the starting point not the Lorentz-BF, but the Poincar\'{e}-BF}~\cite{Belov2018Poincare-Plebanski}, whose Klein's homogeneous space suits the idea of the flat Minkowski vacuum more naturally. The corresponding classical theory is developed in Ch.~\ref{ch:prop-2}, as well as \ul{the novel version of simplicity constraints is proposed} (dual to the currently used). This is our \textbf{third contribution} to the analysis of quantum SF models.

The above findings can be concisely put in the form of a question: \emph{``What are the independent degrees of freedom of gravitational field that are actually being quantized?''} This naturally led us to address the status of the metric/vielbein $\cg\sim\theta$ most directly in the classical theory of Einstein-Cartan gravity that underlies the quantum framework. In particular: since the quantization of connections is fairly well understood, it is all the more natural to phrase GR in these terms. Indeed, this was the driving force of Loop Quantum Gravity since Ashtekar uncovered his variables -- in the canonical setting -- related to local Lorentz symmetry (at the point, or `internal'). It is desirable thus to relate metric/vielbein variables as well as the associated symmetry w.r.t. diffeomorphism transformations (`external') -- which are distinguishing features of gravity -- to the local translations of the Poincar\'{e} gauge theory~\footnote{In particular, given their controversial status in the discrete framework, cf.~\cite{Dittrich2008QG-diffeos,BahrDittrich2009broken-gauge-sym}.}. This constitutes the first half of the thesis, where the comprehensive study of classical Cartan connections is performed. (Also, in order to set up the stage for the rest of the work.)



The idea is certainly not new and (re-)appeared in many forms~\cite{BlagojevicHehl2013,Utiyama1956gauge-trick,Kibble1961gauge-trick-ECSK,GotzesHirshfeld1990MacDowellMansouri-geometry,StelleWest1980broken-deSitter-holonomy,Wise2010Cartan-geometry}. The most noticeable proponents of Cartan connections are R. W. Sharpe~\cite{Sharpe1997Diff-Geometry-Cartan} in mathematics of differential geometry, and D. K. Wise~\cite{Wise2007thesis,Wise2010Cartan-geometry} in the physics of gravity, respectively. In~particular, regarding the gauge theoretic description of gravity and its symmetries, those are frequently viewed as arising from the sort of `symmetry breaking' of some related topological theory (conceptually similar to Spin Foams).

We choose more conservative standpoint, actually dating back to \'{E}. Cartan himself. The mathematical theory of connections is laid out in Ch.~\ref{ch:background} in accordance with his geometric intuition, following very closely to the presentation in~\cite{Sharpe1997Diff-Geometry-Cartan} (to which we refer for the most of proofs). Nevertheless, we make more extensive use of the theory of geometric $G$-structures, adapted to the bundle framework of modern gauge theories~\cite{Bleecker1981gauge-variational-principles}, in what we called a `(locally) Klein bundle'. It enjoys the full action of the principal Poincar\'{e} group of geometry, while the role of a base manifold is diminished. The notion of Cartan connection then embodies the geometric picture of ``rolling'' the affine space on the ``lumpy'' surface of a curved manifold.

The main \textbf{outcome} of our analysis is the \ul{implementation of diffeomorphism transformations as the gauge group of translations, which is `not broken' but an exact local symmetry}. This requires a refined notion of tensors in affine space, which can be either `free' or `bound to a point'. One then shows how the \ul{covariant derivative could be seen as a Lie dragging} that acts both on vectors and points. In Ch.~\ref{ch:Einstein-Cartan}, the physical theory of relativity is given in terms of Cartan geometry, following~\cite{Cartan1986Affine-connections}. In particular, the geometric discretization of forms by means of integration is naturally seen in terms of vector summation. The series of results is obtained on torsion in relation to (non-) closure of surfaces with `defects'. We then discuss tangentially the possible implications for quantization, when the approaches of LQG and Spin Foams are briefly reviewed in Ch.~\ref{ch:quantization}, bridging the gap between `classical' and `quantum' parts of the thesis.

\chapter{Background of `background-independence'}
\label{ch:background}

The ultimate goal of the search for the theory of Quantum Gravity (QG), broadly defined, is to develop a conceptual and mathematical scheme which combines the relativistic field-theoretical description of gravitation with the main principles of quantum physics within a coherent framework. The standard methods of the perturbative QFT lead to intractable divergences when applied to the gravitational field and space-time itself, prompting to revisit the role of background structure(s). Let us retrace consecutively the basic assumptions and fundamental principles to be thoroughly implemented in the modern non-perturbative approaches to the quantization of gravity. The selection of material is quite subjective and may differ somewhat from the traditional presentation. Whereas the language may appear less formal, for the sake of accessibility, we supplied all the references to the rigorous proofs and statements wherever possible.


There are two pieces of the puzzle -- 1) the gravity and 2) the quantum, respectively, that one has to specify first in order to make the whole subject tangible. The best description of the gravitational phenomena up to date is provided by the Einstein's classical theory of General Relativity (GR), so that its basic assumptions are to be taken at face value and (at least some of them) realized quantum mechanically. This emphasis on GR postulates is characteristic to such conservative~\footnote{'Conservative' in the sense of not altering the theory's foundations without compelling evidences. This is what distinguishes it from other approaches where general relativistic description is viewed as effective (or emergent), such as string theory.} approaches as Loop Quantum Gravity (LQG), Spin Foams (SF) and related programs. The specialty of gravity is that its field, in comparison with other known interactions of the Standard Model, directly relates to the mutual displacement of objects, prescribing the change of their \emph{relative} situation in the manifold $\mc M$ of conceivable `points-events'. According to Einstein, this type of gravitational pull is realized through the properties of the spacetime itself, that is endowing the abstractly defined (topological) $\mc M$ with~\emph{geometry}. 

The notion of the manifold is the standard one and intuitively clear, though we remind it promptly. Roughly speaking, it serves the purpose of introducing the way to label points and differentiate functions in a consistent manner.~\footnote{According to~\cite[p.2]{Sharpe1997Diff-Geometry-Cartan}:
\blockquote{In the hierarchy of geometry (whose ``spine'' rises from homotopy theory 
through cell complexes, through topological and smooth manifolds to analytic varieties), the category of smooth manifolds and maps lies ``halfway'' between the global rigidity of the analytic category and the almost total flabbiness of the topological category. We might say that a smooth manifold possesses \emph{full infinitesimal rigidity} governed by Taylor's theorem while at the same time having absolutely no rigidity relating points that are not 
``infinitesimally near'' each other, as is seen by the existence of partitions of unity . . . Smooth manifolds are sufficiently rigid to act as a support for the structures of differential geometry while at the same time being sufficiently flexible to act as \ul{a model for many physical and mathematical circumstances that allow independent local perturbations}.} 
The latter is precisely our main focus in the context of modern QG approaches, where this relation between global-and-local, discrete-and-continuous is often debated.} [Since we are interested in the local field excitations, for the mathematical details on the notions from topology, concerning global aspects, we refer to the textbooks, e.g.~\cite{Hatcher2002alg-topology,Munkres1984alg-topology,Bourbaki1966topology}] The crucial part is that its points -- and hence the coordinate labels that can be chosen arbitrarily -- have no internal significance per se. They attain characterization only through the dynamical elements of the theory themselves, such as particle collisions (an example of Einstein's `spacetime coincidences', cf.~\cite[sec.2.2.5]{Rovelli2004QG}) or certain field value measurements. In other words, whereas the fields of the Standard Model are localized w.r.t. inert/non-dynamical Minkowski (or de Sitter) space, in GR they are localized w.r.t. one another, what embodies the relational view of the world. The contiguity relation becomes dynamical entity in its own right that we know under the name of gravitational field. Quoting from~\cite[p.10]{Rovelli2004QG}:
\blockquote{This absence of the familiar spacetime ``stage'' is called the \emph{background independence} of the classical theory. Technically, it is realized by the gauge invariance of the action under (active) diffeomorphisms. A diffeomorphism is a transformation that smoothly drags all dynamical fields and particles from one region of the four-dimensional manifold to another. In turn, gauge invariance under diffeomorphism (or \emph{diffeomorphism invariance}) is the consequence of the combination of two properties of the action: its invariance under arbitrary changes of coordinates and the fact that there is no nondynamical ``background'' field.}

In respect, all the (non-gravitational) fields inhabiting the arena $\mc M$ react backwards on the geometry, universally coupling it to energy-momentum -- the quantity of motion contained in the `medium'. All this makes Einstein's relativity stand out as the general theory of motion, with the `shape' of spacetime actively responding to the configuration of the matter. The challenge of Quantum Gravity is precisely to fully incorporate this drastic novelty and to understand what is a general-relativistic, or background-independent, Quantum Field Theory (QFT). 


Note that we kept discussion as general as possible so far and did not specify yet what we understand by `geometry' and how exactly it relates to the displacement. This is to stress the universal significance of the above fundamental concepts, having roots in the very nature of how one experiences reality and describes observations. In fact, the physical content of GR can be expressed in multiple possible ways~\footnote{This is not to mention almost the infinitude of conceivable corrections and modifications of the Einstein's GR, compatible with the empirical data.}, putting various its aspects in the forefront, depending on what is considered essential [e.g., metric or connection?]. It is already at this early stage, the preference given to one formalism over the other may play a crucial role, and the choice of the appropriate mathematical language is the first step in the quantization process. 

\blockquote{When a long computation gives a short answer, then one looks for a better method}, -- these words preceded in~\cite[p.344]{MTU1973Gravitation} the introduction of an effective computational device that, however, demands a heavier investment in the calculus of differential forms~\blockquote{than anyone would normally find needful for any introductory survey of relativity}. In contrast, the present chapter provides an account of the mathematical theory of connections from a geometric standpoint -- avoiding as much as possible what Cartan~\footnote{According to~\cite[p.261]{Willmore1959Diff-Geometry}:~\blockquote{a powerful mathematician who possessed a remarkable geometrical intuition which enabled him to see the geometrical content of very complicated calculations. In fact Cartan often used geometrical arguments to replace some of the calculations, and a reader who does not possess this remarkable gift is often baffled by his arguments. It is hoped that as a result of reading this brief introduction the reader will be encouraged to make a serious study of Cartan's book}~\cite{Cartan1983Riemannian-geometry}. We wholeheartedly subscribe under the above words, and may only recommend adding~\cite{Cartan2001Orthogonal-frame,Cartan1986Affine-connections,Sharpe1997Diff-Geometry-Cartan} to the list.} called ``les d\'{e}bauches d'indices'' -- in an effort to stress the naturality of this framework.

\newpage

\section{(Principal) fiber bundles and the general notion of gauge}
\label{sec:PFB-gauge}

The rationale behind GR was the realization that the inertial motion has relative status, whereas the physical effects of force depend on the chosen observer. However, the laws of physics must be expressed in the form of mathematical equations, valid for any admissible observer's frame of reference. Known as the (generalized) Galilei/Einstein's \emph{principle of relativity}, it is at the core of both GR and the gauge theories of particle interactions. Once the key notions of a \emph{frame} and the \emph{transformations} between them are specified, this leads to the geometrization of all the known forces of nature. 

In a different vein, since all measurements are made relative to a choice of frame (broadly defined), and the measurement process can never be completely divorced from the aspect of the universe being measured, we are led to the concept of the bundle of reference frames, corresponding to the smooth concatenation of all admissible observers, each with its own measuring apparatus. The \emph{geometry} (broadly defined) is all about the relations, left invariant under admissible frame transformations -- in accordance with Klein's Erlangen program~\cite{Sharpe1997Diff-Geometry-Cartan,Klein1872Erlangen}.

\begin{figure}[!h]
\center{\includegraphics[width=0.5\linewidth]{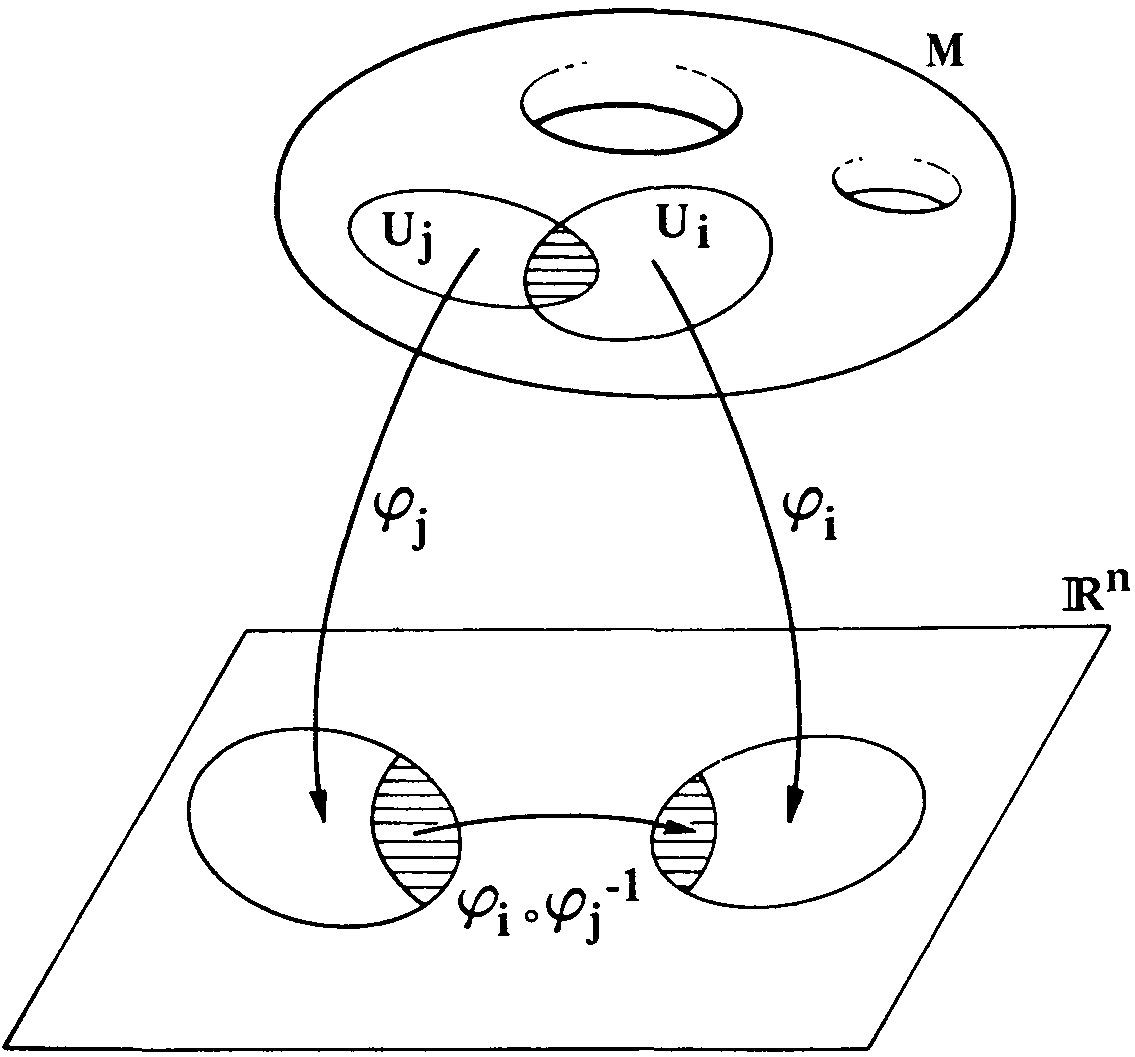}}
\caption{Schematic picture of the manifold and consistent coordinate charts (from~\cite{Bleecker1981gauge-variational-principles}).}
\label{fig:manifold}
\end{figure}
The typical example of a reference frame (and the one which underlies the formulation of GR) is provided by the smooth structure on $\mc{M}$ itself, making it into a \textbf{differentiable manifold}. Concretely, let $\{(U_i,\varphi_i)\}$ be the atlas, consisting of \emph{coordinate charts} $\varphi_i:U_i\ra\mathbb{E}^m$, covering some (paracompact, Hausdorff) $\mc{M}=\bigcup_iU_i$ by open sets $\varphi_i(U_i)\subset\mathbb{E}^m$ of the Euclidean space~\footnote{It is crucial that the \emph{affine} structure of $\mathbb{E}$ allows to canonically identify all the tangent spaces by the parallel transport of vectors to the common \emph{point} $a$, so that the notion of differentiation makes sense.}, such that the transition functions between overlapping charts $\varphi_i\circ \varphi_j^{-1}:\varphi_j(U_i\cap U_j)\ra \varphi_i(U_i\cap U_j)$ are all smooth $C^\infty$ (see~Fig.~\ref{fig:manifold}). Then the \emph{local parametrization} of $U\subset\mc{M}$ via some backwards mapping $x\equiv\varphi^{-1}$ extends to the derivative map $u\equiv x_\ast$, which establishes the linear isomorphism between the corresponding tangent spaces at the point $a\in\mathbb{E}^m$ and that of its image $x(a)\in\mc{M}$ -- this being referred to as a \textbf{frame} (at the point):
\begin{equation}
\begin{aligned}\label{eq:linear-frame}
&u \ : &  T\mathbb{E}^m \cong \mathbb{E}^m\times \mathbb{R}^m& &  &\ra &  &T\mc{M}   \\
& & (a,\mb{v})& & &\mapsto &  &X_{x(a)}\in T_{x(a)}\mc{M}. 
\end{aligned}
\end{equation}
For a time being, let us use interchangeably $x\in\mc{M}$ for the point and its parametrization. Given any standard basis $(\mb{e}_1,...,\mb{e}_m)$ in $\mathbb{R}^m$ at $a=\varphi(x)\in\mathbb{E}^m$, such a map determines a basis $(u_x(\mb{e}_1),...,u_x(\mb{e}_m))$ in $T_x\mc{M}$. Let $L_x(\mc{M})\equiv \{u_x\}$ be the set of all (linear) frames at $x$, then put
\begin{equation}\label{eq:frame-bundle}
L(\mc{M}) \ = \ \bigcup_{x\in\mc{M}} L_x(\mc{M})
\end{equation}
to be the \textbf{linear frame bundle}. It comes equipped with a natural right $\mr{GL}(m,\mathbb{R})$ action $R_{A}: L(\mc{M})\ra L(\mc{M})$, given by $R_A(u)=u\circ A$ for each $A\in \mr{GL}(m,\mathbb{R})$, which may be regarded as the change of standard basis, used by observer to describe its local neighbourhood. This is an example of a more general construction of the principal fiber bundle (PFB), recall it to be precise.
\begin{definition}\label{def:fiber-bundle}
A (locally trivial) \textbf{fiber bundle with an abstract fiber} $F$ consists of a quadruple $\xi=(E,\mc{M},\pi,F)$ of (smooth) manifolds $E$ (called the \textbf{total space}), $\mc{M}$ (called the \textbf{base}), and the (standard) \textbf{fiber} $F$, together with a submersion map $\pi : E\ra\mc{M}$ (called \textbf{projection}, sometimes), such that $\pi^{-1}(U)$ is diffeomorphic by the \textbf{local trivialization} map $\phi$ to $U\times F$ for an open set $U\subset \mc{M}$, containing $x\in\mc{M}$, and the following diagram commutes 
\begin{displaymath}
\begin{tikzcd}[column sep=small]
\pi^{-1}(U) \arrow[dr,swap, "\pi "] \arrow[rr, "\phi"] & & U\times F \arrow[dl,"\mr{proj}_1"] \\
& U & 
\end{tikzcd}.
\end{displaymath}
The pair $(U,\phi)$ is called a \textbf{chart} (or local bundle coordinate system, subordinate to the chosen coordinates on the base manifold $\mc{M}$).
\end{definition}

\begin{definition}\label{def:G-structure}
Suppose that a Lie group $G$ acts smoothly on $F$ as a group of diffeomorphisms. A~$G$ \textbf{atlas} for bundle $\xi$ is a collection of charts $\mathpzc{A}=\{(U_\al,\phi_\al)\}$, covering $\mc{M}=\bigcup_\al U_\al$, such that for each pair of charts in $\mathpzc{A}$ the map
\begin{equation}\label{eq:coordinate-change}
\phi_{\al\be} \ = \ \phi_\al^{\ph{1}}\phi_\be^{-1} : \, (U_\al\cap U_\be)\times F \ \ra \ (U_\al\cap U_\be)\times F,
\end{equation}
called a \textbf{coordinate change}, has the form $\phi_{\al\be}(x,f)=(x,s_{\al\be}(x)f)$, where the smooth maps $s_{\al\be}:(U_i\cap U_j)\ra G$ are called \textbf{transition functions} and satisfy: 
\begin{enumerate}[label={\upshape(\roman*)}, align=left, widest=iii]
\item $s_{\al\al}(x)=e$ for all $x\in U_\al$; \label{prop:transition-1}
\item $s_{\be\al}(x)=s_{\al\be}(x)^{-1}$ for all $x\in U_\al\cap U_\be$; \label{prop:transition-2}
\item $s_{\al\be}(x)s_{\be\ga}(x)s_{\ga\al}(x)=e$ for all $x\in U_\al\cap U_\be\cap U_\ga$. \label{prop:transition-3}
\end{enumerate}
The two $G$ atlases are \textbf{equivalent} if their union is also a $G$ atlas (from which a unique maximal one can always thus be formed). A $G$ \textbf{bundle} is a $\xi$, on which a $G$ \textbf{structure} is specified, that is an equivalence class of $G$ atlases.
\end{definition}

The transition functions describe how the direct products $U_\al\times F$ glue together to form the total space. Indeed, $E$ can be considered as factorspace, obtained from the disjoint union $\bigcup_\al (U_\al\times F)$ via the equivalence relation, identifying the points $(x,f)\in U_\al\times F$ and $(x,s_{\be\al}(x)f)\in U_\be\times F$. Given the covering of $\mc{M}$ by $\{U_\al\}$ and the transition functions, satisfying~\ref{prop:transition-1}-\ref{prop:transition-3}, the bundle can be reconstruted with virtually \emph{any} $G$-module space $F$ as a typical fiber.  

\begin{definition}[PFB]\label{def:PFB}
For a special type of \textbf{principal} $G$ bundle, its structure group is diffeomorphic to its standard fiber, on which $G$ acts by left translations. It can always be associated to $\xi$ by considering the fiber $P_x\equiv\pi^{-1}(x)$ to consist of the (generalized) $F$-frames $p:F\stackrel{\sim}{\ra}\pi_E^{-1}(x)$ at the point $x$, and the diffeomorphism from the standard fiber $G$ acting as a change of basis $\bar{p}(g)=p\circ g$ (cf.~\cite[Prop.1.5.4]{KobayashiNomizu1963vol-1}). Consequently, the total space $P$ is then endowed with a \textbf{smooth right action} (commuting with the left coordinate changes) $P\times G\ra P$ that is fiber preserving and acts simply transitively on each fiber (which is just the orbit $\pi^{-1}(x) = \{pg|g\in G\}$ of $G$ through $p$ above $x=\pi(p))$.
\end{definition}

In fact, the $G$ action alone is enough to fully characterize the PFB, as the following implies.

\begin{theorem}[\text{\cite[App. E]{Sharpe1997Diff-Geometry-Cartan}}]\label{th:PFB-characterization}
Let $P$ be a smooth manifold, $G$ a Lie group, and the (smooth) right action $P\times G\ra P$ be free (i.e. if $pg=p$ for some $p\in P  \Ra  g=e$) and proper (i.e. if $A$ and $B$ compact $\Ra  \{g\in G| Ag\cap B\neq\emptyset\}$ is compact). Then
\begin{enumerate}[label={\upshape(\roman*)}, align=left, widest=iii]
\item $P/G$ with a quotient topology is a topological manifold $(\mr{dim} P/G= \mr{dim}P-\mr{dim}G)$; 
\item $P/G$ has a unique smooth structure for which the canonical projection $\pi:P\ra P/G$ is a submersion; 
\item The tuple $(P,P/G,\pi,G)$ is a smooth principal right G bundle. 
\end{enumerate}
\end{theorem}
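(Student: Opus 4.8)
The plan is to build the bundle charts by hand from a transversal \emph{slice} to the group orbits and to deduce all three assertions from the existence of such slices: freeness will supply injectivity and transversality at the infinitesimal level, while properness upgrades this to genuine local triviality and guarantees the quotient is Hausdorff. First I would settle the topology of $P/G$. The projection $\pi$ is \emph{open}, since for open $V\subseteq P$ the saturation $\pi^{-1}(\pi(V))=\bigcup_{g\in G}R_g(V)$ is a union of open sets. Next, the orbit relation $R=\{(p,pg)\}\subseteq P\times P$ is the image of $\theta:P\times G\to P\times P$, $(p,g)\mapsto(p,pg)$; the stated properness condition is exactly the assertion that $\theta$ is a proper map, and a proper map into the locally compact Hausdorff space $P\times P$ is closed, so $R$ is closed. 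Openness of $\pi$ together with closedness of $R$ yields that $P/G$ is Hausdorff, and second countability is inherited from $P$; only the charts remain to be produced.

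Second, I would show that each orbit $pG$ is a closed embedded submanifold diffeomorphic to $G$. Freeness makes the orbit map $\mu_p:G\to P$, $g\mapsto pg$, injective, and its differential is injective everywhere (the fundamental vector fields of a free action vanish nowhere), so $\mu_p$ is an injective immersion; properness makes $\mu_p$ proper, hence a closed embedding. Consequently $\dim(pG)=\dim G=:k$, which will force $\dim P/G=\dim P-k=:m-k$ once the charts are in place.

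The heart of the argument is the slice construction. Fixing $p_0$, I would choose an embedded submanifold $S\ni p_0$ of dimension $m-k$ with $T_{p_0}P=T_{p_0}(p_0G)\oplus T_{p_0}S$ and consider $\Phi:S\times G\to P$, $(s,g)\mapsto sg$. Its differential at $(p_0,e)$ carries $T_{p_0}S$ identically onto the $T_{p_0}S$ summand and $T_eG$ isomorphically onto $T_{p_0}(p_0G)$, hence is an isomorphism, so by the inverse function theorem $\Phi$ is a local diffeomorphism near $(p_0,e)$. The key lemma is that, after shrinking $S$, $\Phi$ is \emph{globally} injective on $S\times G$: otherwise there would exist $s_n,s_n'\to p_0$ and $g_n\in G$ with $s_ng_n=s_n'$; properness confines the $g_n$ to a compact set, so $g_n\to g$ along a subsequence, and passing to the limit gives $p_0g=p_0$, whence $g=e$ by freeness, contradicting local injectivity near $(p_0,e)$. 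Thus $\Phi$ is a diffeomorphism onto an open saturated set $\pi^{-1}(U)$ with $U:=\pi(S)$, and $\phi:=\Phi^{-1}$ provides a local trivialization $\pi^{-1}(U)\cong U\times G$ intertwining $R_g$ with right multiplication on the fiber.

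Finally, I would assemble the bundle axioms and uniqueness. Declaring each $\pi|_S:S\to U$ a homeomorphism equips $P/G$ with a topological manifold structure of dimension $m-k$, giving (i); transporting the smooth structure of $S$ makes $\pi$ a submersion, and since any two slice charts are related by the smooth assignment $s\mapsto s'$ with $s'g=s$, the transition maps are smooth and the resulting structure is the unique one for which $\pi$ is a submersion, giving (ii). For (iii) I would verify on overlaps that two such trivializations differ by $\phi_{\alpha\beta}(x,g)=(x,s_{\alpha\beta}(x)g)$ with smooth $s_{\alpha\beta}:U_\alpha\cap U_\beta\to G$ satisfying the cocycle identities \ref{prop:transition-1}--\ref{prop:transition-3} of Definition~\ref{def:G-structure}; combined with the free, fiber-transitive right action this exhibits $(P,P/G,\pi,G)$ as a principal $G$ bundle. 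The main obstacle is precisely the slice lemma, namely the passage from local to \emph{global} injectivity of $\Phi$ after shrinking, which is exactly where properness (beyond mere freeness) is indispensable: without it the trivializing tube can wind back onto itself and the quotient can fail to be Hausdorff or even a manifold.
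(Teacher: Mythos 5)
Your proof is correct and is essentially the standard slice/tube argument (open projection plus closed orbit relation for Hausdorffness, then properness-plus-freeness to make the tube map $\Phi:S\times G\to P$ globally injective after shrinking, yielding charts, the submersion structure, and the local trivializations). The paper itself gives no proof — it cites Sharpe, App.~E — and your argument is the same construction carried out in that reference, so there is nothing to reconcile beyond minor compressed details (e.g.\ using equivariance to get that $d\Phi$ is an isomorphism on all of $S\times G$, not just near $(p_0,e)$).
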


The construction in~\eqref{eq:frame-bundle} is a PFB with a structure group $\mr{GL}(m,\mathbb{R})$ and an obvious projection $\pi(u_x) =x$. Other PFBs with different groups, such as unitary $\mr{SU}(N)$, are prolific in the standard model of particle physics, where the notion of a `frame' could mean the choice of zero-phase angle or direction of axes in the (so-called `internal') isospin space~\cite{Yang-Mills1954,Utiyama1956gauge-trick,Bleecker1981gauge-variational-principles}. Since the `glueing' of fibers is solely determined by transition functions, the bundle $\xi$ over $\mc{M}$ with typical fiber $F$ can always be associated to $P$ by a smooth effective (left) action via representation $\rho:G\ra\mr{Diff}(F)$. Denoted $P\times_G^{\ph{1}}F$, this \textbf{associated} fiber bundle consists of equivalence classes, identifying the points $[pg,f]=[p,\rho(g)f]$, and can be obtained from $P\times F$ as the orbit space by factoring through the group action
\begin{equation}
g\cdot (p,f) \ = \ (pg,\rho(g^{-1})f),
\end{equation}
and the standard projection $\pi_E([p,f])=\pi_P(p)=x$.

\begin{definition}\label{def:section} 
A \textbf{local section} of a bundle $\pi:E\ra\mc{M}$ is the smooth map $\tilde{\psi}:U\ra E$ from the open subset $U\subset\mc{M}$ such that $\pi\circ\tilde{\psi}=\mr{Id}_U$. The space of (local) sections of $\xi$ over $U$ we will denote $\Ga(U\ra E)$, or simply $\Ga_U(\xi)$. When $U$ can be extended to the whole of $\mc{M}$, one speaks of \textbf{global} sections.
\end{definition}
The configurations of a (matter) field are described by the elements $\tilde{\psi}\in\Ga(\xi)$ in a coordinate-independent (geometric) manner. In [particle] physics, one usually works with the local ($F$-valued) functions $\psi_\al:U_\al\ra F$ in a region $U_\al\subset\mc{M}$, through the use of bundle charts $(x,\psi_\al(x))=(\phi_\al\circ\tilde{\psi})(x)$. Implicit in this determination is the choice of some basis in $F\cong E_x$, corresponding to certain states of the system. This \emph{fixed, but arbitrary} choice of (moving) reference frame -- known as \emph{gauge} -- is the necessary key ingredient for describing motions in a background independent fashion, referring only to immanent (and dynamical) elements of the system.

\begin{definition}\label{def:gauge}
The \textbf{gauge} is the (local) section $\tilde{\si}:U\ra P_U=\pi^{-1}(U)$ of a PFB with the group~$G$. Since $P$ has the local trivializations of the form $\phi:p\ra(\pi(p),\si(p))$, where $\si:\pi^{-1}(U)\ra G$ satisfy the right equivariance $\si(pg)=\si(p)g$, there is a natural 1-to-1 correspondence between trivializations and gauges: $\phi(\tilde{\si}(x)g)=(x,g)\leftrightarrow \tilde{\si}(x)=\phi^{-1}(x,e)$.
\end{definition}

Presumably, for the physically meaningful quantities the choice of gauge in their description should be immaterial, in accord with the relativity postulate. Hence, some additional care should be taken in order to ensure that the expressions in various patches glue coherently into well-defined objects that do not actually depend on the local trivialization. The power of PFB manifests in that it allows for a neat balance between the two perspectives above, encoding the `geometrically'-given sections in all possible frames, so to say. If $G$ acts on $F$ as $g\cdot f\equiv\rho(g)f$, one adopts the following
\begin{definition}\label{def:equiv-map} The space of ($F$-valued) $G$-equivariant maps
\begin{equation}\label{eq:equiv-map}
C(P,F) \ = \ \{\psi:P\ra F| \psi \ \text{is smooth, such that} \ \psi(pg)=g^{-1}\cdot \psi(p)\}.
\end{equation}
There is a natural bijective correspondence:
\begin{equation}\label{eq:correspondence-fields}
\psi\in C(P,F) \ \leftrightarrow \ \tilde{\psi}\in\Ga(P\times_G^{\ph{1}}F).
\end{equation}
\end{definition}

By the use of~\eqref{eq:correspondence-fields}, the field $\psi$ should be regarded as a function on $P$: if $\psi(p)$ is the value of $\psi$ relative to $p\in P_x$, then $\psi(pg)=g^{-1}\cdot \psi(p)$ is its value w.r.t. transformed frame~$pg$. The good transformation properties of $\psi$ w.r.t. frame change are generally referred to as \emph{covariance}, which will be the convention that we stick to. Given the gauge $\tilde{\si}_\al:U_\al\ra P$, we can pull-back $\psi$ down to function $\psi_\al(x)=(\tilde{\si}_\al^\ast\psi)(x)\equiv\psi(\tilde{\si}_\al(x))$ on $U_\al\subset\mc{M}$. If $\tilde{\si}_\be:U_\be\ra P$ is another gauge, then $\psi_\be(x)=\psi(\tilde{\si}_\al(x) s_{\al\be}(x))= s_{\be\al}(x)\cdot\psi_\al(x)$, showing how the locally-defined objects relate under the change of gauge. Expressing the transition functions as $s_{\al\be}(x) = \si_\al(p)\si_\be(p)^{-1}$, the property~\ref{prop:transition-3} of Def.~\ref{def:G-structure} can be readily shown.

\subsection{Differential forms and operations with them}
\label{subsec:diff-forms}

In general, for integration and differentiation purposes, the main objects of interest will also include the following
\begin{definition}[Forms] An exterior vector-valued differential $k$-form on $P$ is a smooth map (from Whitney sum) $\om:TP\oplus...\oplus TP\ra V$, whose restriction to any fiber $\om_p:T_pP\oplus...\oplus T_pP\ra V$ is $k$-times multilinear and totally skew-symmetric. If $G$ acts linearly via representation $\rho:G\ra \mr{GL}(V)$, one denotes $\La^k(P,V)$ the space of such forms, satisfying equivariance relation $R^\ast_g\om = g^{-1}\cdot\om$, i.e. $\om_{pg}(R_{g\ast}X_1,...,R_{g\ast}X_k)=\rho(g^{-1})\om_p(X_1,...,X_k)$ for $X_1,...X_k\in T_pP$. If $\dim{V}=1$, so that $V\cong\mathbb{R}$, one speaks of just $k$-forms (or invariant, scalar forms). Let $\bar{\La}^k(P,V)$ denote the space of \textbf{horizontal} forms, which are vanishing on vertical vectors: $\varphi(X_1,...,X_k)=0$ if any $\pi_{\ast}(X_i)=0$. There is a natural 1-to-1 correspondence:\label{def:forms}
\begin{equation}\label{eq:horizontal-froms}
\varphi \in \bar{\La}^k(P,F) \ \leftrightarrow \ \tilde{\varphi}\in \La^k(P/G,P\times_G^{\ph{1}}F).
\end{equation}
(That is, invariantly-defined skew-symmetric maps $\tilde{\varphi}:\bigwedge^k T\mc{M}\ra E$, akin to~\eqref{eq:correspondence-fields}. Note: unlike horizontal vectors, the horizontal forms are naturally given, since the vertical vectors are canonically generated by the group action tangent to the fiber, cf. further discussion in Sec.~\ref{sec:Cartan-connections}.)
\end{definition}

The new forms are constructed in three natural ways: 
\begin{enumerate}[label=\arabic*)]
\item The \emph{exterior multiplication} (or \emph{wedge}-product) of $\om_1\in\La^{k}(P,V_1)$ and $\om_2\in\La^{l}(P,V_2)$ is obtained via anti-symmetrization in the arguments of the combined map $\om_1\wedge\,\om_2\in\La^{k+l}(P,V_1\otimes~V_2)$ between product spaces, namely
\begin{equation}\label{eq:wedge}
(\om_1\wedge\om_2)(X_1,...,X_{k+l}) \ = \ \frac{1}{k!l!}\sum_{\varrho\in\mr{perm}(k+l)} (-1)^\varrho\, \om_1(X_{\varrho(1)},...,X_{\varrho(k)})\otimes \om_2(X_{\varrho(k+1)},...,X_{\varrho(k+l)}).
\end{equation}
\item The \emph{interior product} $\imath_X:\La^{k+1}(P,V)\ra\La^k(P,V)$ with the given vector field $X\in\Ga(TP)$ is simply an evaluation $(\imath_X\om)(X_1,...,X_k)=\om(X,X_1,...,X_k)$, and $\imath_X f:=0$ for 0-forms. It satisfies the derivation property $\imath_X(\om_1\wedge\om_2)=(\imath_X\om_1)\wedge\om_2+(-1)^{\deg(\om_1)}\om_1\wedge(\imath_X\om_2)$; 
\item The \emph{exterior derivative} $d:\La^{k}(P,V)\ra\La^{k+1}(P,V)$ is given by expression
\begin{gather}
d\om (X_0,...,X_k) \ = \ \sum_{0\leq i\leq k} (-1)^i\, X_i(\om(X_0,...,\hat{X}_i,...,X_k)) \nonumber\\
+ \sum_{0\leq i\leq j\leq k} (-1)^{i+j}\, \om([X_i,X_j],X_0,...,\hat{X}_i,...,\hat{X}_j,...,X_k)) \quad \text{(``hat'' means ``omit this entry'')},\label{eq:exterior-diff}
\end{gather}
such that it reduces to $df_p(X)=X(f)|_p = \sum f^i_\ast(X_p)\,\mb{e}_i$ on functions, and can be uniquely characterized by its properties (axiomatic definition, cf.~\cite[Lemma 1.5.13]{Sharpe1997Diff-Geometry-Cartan},~\cite[\S 4.3]{BishopGoldberg1980tensor-analysis}):
\begin{enumerate}[label={\upshape(\roman*)}, align=left, widest=iii]
\item $d(\om_1+\om_2)=d\om_1+d\om_2$ (linearity), \label{prop:d-1}
\item $d(\om_1\wedge\om_2)=d\om_1\wedge\om_2+(-1)^{\deg(\om_1)}\om_1\wedge d\om_2$ (derivation), \label{prop:d-2}
\item $d(d\om)=0$ (co-boundary). \label{prop:d-3}
\end{enumerate}
The latter property \ref{prop:d-3} $d^2=0$ -- figuring in the Poincar\'{e}'s lemma -- generalizes the local integrability condition on 1-form to be a total differential $\om=df$ (potentiality of the gradient, leading to the path-independence of the integral $\int\om$, cf.~footnote~\ref{foot:Pfaff}). 
\end{enumerate}

In the case the space of values $V$ is endowed with some multiplication $\mr{m}_\ast:V\otimes V\ra V$, the sum of (infinite dimensional) vector spaces $\La(P,V)=\bigoplus_{0\leq k\leq n}\La^k(P,V)$ is made into graded differential algebra with combined multiplication
\begin{equation}\label{eq:deifferential-algebra}
\La^k(P,V)\times \La^l(P,V) \stackrel{\wedge}{\longrightarrow} \La^{k+l}(P,V\otimes V) \stackrel{\mr{m}_\ast}{\longrightarrow} \La^{k+l}(P,V),
\end{equation}
satisfying $d(\mr{m}_\ast(\om_1\wedge\om_2))=\mr{m}_\ast(d\om_1\wedge\om_2)+(-1)^k \mr{m}_\ast(\om_1\wedge d\om_2)$.

\begin{definition}\label{def:dot-wedge}
Let $\rho_\ast:\mf{g}\ra\mf{gl}(V)$ be the Lie algebra homomorphism, induced by the group action
\begin{equation}\label{eq:algebra-action}
\mb{A}\cdot \mb{v} \ \equiv \ \rho_\ast(\mb{A})\mb{v} \ = \ \frac{d}{ds}\rho(\exp s\mb{A})\mb{v}\bigg|_{s=0}, \qquad \text{for} \quad \mb{A}\in\mf{g}, \ \mb{v}\in V.
\end{equation}
Analogously to~\eqref{eq:deifferential-algebra}, for $\om\in\La^k(P,\mf{g})$ and $\varphi\in\La^l(P,V)$, we can then define
\begin{equation}\label{eq:dot-wedge}
(\om\dot{\wedge}\,\varphi)(X_1,...,X_{k+l}) \ = \ \frac{1}{k!l!}\sum_{\varrho\in\mr{perm}(k+l)}(-1)^\varrho \, \om(X_{\varrho(1)},...,X_{\varrho(k)})\cdot \varphi(X_{\varrho(k+1)},...,X_{\varrho(k+l)}),
\end{equation}
which is just the wedge~\eqref{eq:wedge}, followed by the $\mf{g}$ action in the coefficient space. In particular, if $\mr{m}_\ast$ in~\eqref{eq:deifferential-algebra} is induced by the adjoint representation $\mr{Ad}:G\ra\mr{GL}(\mf{g})$ sending $(g,\mb{B})\mapsto\mr{Ad}(g)\mb{B}$, that is $\mr{ad}\equiv\mr{Ad}_{\ast e}:\mf{g}\mapsto\mf{gl}(\mf{g})$ sending $(\mb{A},\mb{B})\mapsto \mr{m}_\ast(\mb{A},\mb{B})\equiv\mr{ad}(\mb{A})\mb{B}=[\mb{A},\mb{B}]$, then we write: $[\om_1\wedge\om_2]=(-1)^{kl+1}[\om_2\wedge\om_1]=\sum (\om_1^\al\wedge\om_2^\be)\otimes [\mb{E}_\al,\mb{E}_\be]$ for~\eqref{eq:dot-wedge}. The graded Jacobi identity: $(-1)^{rk}[[\om_k\wedge\om_l]\wedge\om_r]+(-1)^{kl}[[\om_l\wedge\om_r]\wedge\om_k]+(-1)^{lr}[[\om_r\wedge\om_k]\wedge\om_l]=0$ is satisfied.
\end{definition}

\paragraph{Classical interpretation.} The \ul{scalar} forms constitute the natural arguments of multidimensional integration (cf.~\cite[Ch.4]{BishopGoldberg1980tensor-analysis}), when pulled-back down to $\La^k(\mc{M})$ (via gauge choice). Through their definition and properties, the following familiar operations in the Euclidean space are concisely captured and abstracted: 1) the construction via $\wedge$ of the volume-determinants built on (unspecified, or generic) argument vectors; 2) the value of the latter on the concrete infinitesimal elements of the $k$-dim surface $S_k$ is obtained via $\imath$-substitution; 3) the exterior derivative generalizes the ordinary $\mr{grad}$, $\mr{curl}$ and $\mr{div}$ of $\mathbb{R}^3$-vectors. 

Moreover, being formulated in the explicitly coordinate-free language, they are naturally applicable to non-Euclidean manifolds, which by definition represent the smooth patchwork glued from the local regions of $\mathbb{E}$. Provided some local chart and the `natural' basis $\pa_i=\pa/\pa x^i$ of vectors and dual forms $dx^i(\pa_j)=\de^i_j$ are given, the usual coordinate expressions:
\begin{align*}
\om \ & = \ \frac{1}{k!}\, \om_{i_1...i_k} \, dx^{i_1}\wedge...\wedge dx^{i_k}, \\
\om\wedge\theta \ & = \ \frac{1}{k!l!}\, \om_{i_1...i_k} \theta_{i_{k+1}...i_{k+l}} \, dx^{i_1}\wedge...\wedge dx^{i_{k+l}}, \\
\imath_X\om \ & = \ \frac{1}{k!}\, X^i\om_{ii_1...i_k} \, dx^{i_1}\wedge...\wedge dx^{i_k},\\
d\om \ & = \ \frac{1}{k!} \, d\om_{i_1...i_k}\wedge dx^{i_1}\wedge...\wedge dx^{i_k},
\end{align*}
are an easy consequence of the intrinsic definitions.

The line element of $\mathbb{E}$, however, is only applicable point-wise -- by considering each tangent space $T_x\mc{M}\approx\mathbb{R}^m$ as Euclidean. It is not until thus-induced scalar product/metric $\cg_x\in \hat{T}_x\mc{M}\otimes\hat{T}_x\mc{M}$ is specified smoothly at each point (a section of the tensor bundle over $\mc{M}$), that the interior multiplication $\imath$ could be regarded as the inner product contraction $\imath_X\hat{Y}=\hat{Y}(X)=\cg(X,Y)$ -- using the (non-canonical) identification $T_x\mc{M}\stackrel{\sim}{\ra}\hat{T}_x\mc{M}$ via $Y\mapsto \hat{Y}(.)=\cg(.,Y)$. We do not hasten to provide such an identification right away, though. The reason is that we wish to stick to the framework where the ``amorphous'' parameter space (which we tend to identify with domain $\mc{M}\approx P/H$) is clearly delineated from the value space $V$ of ``objects'' (with canonically given structure), attached to it~\footnote{Accordingly, our notation marks $X\in TP$ with ordinary typeface, while $\mb{v}\in V$ is bold.}. Arguably, this provides the cleanest strategy to isolate physical/geometrical content from possible artefacts due to coordinate choices (``the former precedes the latter'', so to say), and to ensure the re-parametrization invariance of the results. In fact, the whole method of forms and moving frames was originally developed by Cartan to tackle these subtle issues, in particular. 

In the familiar cases of fields on the fixed Galilei/or Minkowski background, there is no such puzzle, for there is a global Cartesian coordinate system valid over the entire $\mathbb{M}^m$ and canonically given inner product on forms. This can be used to construct the integrands from the physical quantities of interest (such as magnetic fluxes, or circulation of electric field), roughly speaking, by weighting the projections in the certain directions with the respective $k$-volumes. (For instance, $F=\frac{1}{2}F_{ij}dx^i\wedge dx^j=B_x dy\wedge dz+ B_y dz\wedge dx +B_z dx\wedge dy + (E_x dx+ E_y dy +E_z dz)\wedge dt $ is the typical example, where $F_{ij}$ are components of the electromagnetic stress-energy tensor.) The integration over $k$-surface is then seen as simply `summation' of the well-defined objects (scalars) in the limit of smaller and smaller divisions, connecting the `finer' description to the `coarse-grained' observables which are actually being measured.  

The physical laws (e.g., that of Faraday-Maxwell) establish the relations between different such quantities in an invariant manner, accumulating the large number of experimental facts. The bridge to the formulation in partial (1st order) differential equations, relating the numerical values of physical quantities at two infinitesimally close points, is provided by the famous \emph{Stokes formula}:
\begin{equation}\label{eq:Stokes-1}
\oint_{S_k=\pa S_{k+1}} \om \ = \ \int_{S_{k+1}} d\om.
\end{equation}
Incidentally, that also encapsulates how the new invariants, associated with the higher dimensional elements, may be constructed from the quantities at the boundary (consider the values of integrands in the decrementally small region, so that rather field or surface variations could be disregarded). 

\subsection{Motivations from QG: discretization, continuum limit}
\label{subsec:discretization-problem}

\paragraph{Regularization/cut-off in general.}
Arguably, the exact field values could never be measured pointwise (this is not even mentioning the relational character of observations), only to a sufficiently fine precision, like described above. For the practical purposes, one commonly adheres to \emph{regularization} by introducing some sort of \emph{cut-off} on physically interesting degrees of freedom. Broadly speaking, this allows to organize computations effectively, reducing the infinitude of field's characteristics down to the manageable amount (e.g., corresponding to a finite series of measurements). Examples of this include the typical energy of collisions in particle physics, the lattice spacing in the Monte-Carlo simulations, as well as the Regge discretization~\cite{Regge1961calculus} in gravity. The latter, akin to lattice gauge theory, replaces the smooth spacetime with simplicial complexes, conveying the gravitational d.o.f. (distributional) to the way how the simplexes glue together in a non-trivial manner. 

\paragraph{Continuum limit troubles.}
The characteristic of the `good' theory requires consistency of the description as one removes the cut-off, by gradually taking into account more and more degrees of freedom. This is referred to as the \emph{continuum limit}, corresponding to the infinitely fine (in principle) description of the system. The regularization should capture well the essential properties of the system, in particular, its behaviour w.r.t. symmetry transformations. This is currently not the case in the modern attempts to quantization of gravity based on the use of discrete structures for regularization, descendants of the Regge's triangulations~\footnote{LQG is somewhat exceptional, in this regard, whose (kinematical) Hilbert space allows for the continuum limit, although well-behaved only w.r.t. the symmetries of 3d hypersuface. The dynamics is viewed in terms of hypersurface deformations and problematic.}. The reason appears to be non-sufficient understanding of the diffeomorphism symmetry and its implementation, resulting in that one of the GR's main tenets is lost under discretization~\cite{Dittrich2008QG-diffeos,BahrDittrich2009broken-gauge-sym}. What is worse, it is not entirely clear how this should be realized in the first place.

\paragraph{Inspiration via analogies.}
It is not hard to notice the similarity of the above narrative to that of surrounding the whole integration process, briefly sketched around~\eqref{eq:Stokes-1}. Indeed, the increments in the principal parts of the integrals utilize some linear approximations, reminiscent of the individual simplicial cells, whose geometry is known and can be used for contraction. The refinement of the subdivision adds more contributions to the sum and the better approximation, ideally. In fact, these analogies inspired (various) geometric discretization techniques, that were exploited, e.g., for putting the Maxwell's equations on a grid (finite element method, cf.~\cite{Euler2007Maxwell-discr}), and in some TQFTs as well~\cite{Sen-etal2000geom-discretisation-Chern-Simons}. They were also incorporated into several discrete models of QG, mainly based on the topological nature of the latter in 3d.

The primary idea consists of \emph{discretizing the domain}, s.t. it is represented via the collection of basic building blocks -- called `cells' -- which topologically are closed $n$-balls, `glued together' into a \emph{cell-complex}. The details may vary, and depending on the required properties of maps, one distinguishes purely topological CW complexes, piecewise-linear, combinatorial, and simplicial ones. In the piecewise-linear context~\cite{Baez2000SF-BF}, one usually talks about point as 0-cell, a 1-cell is a closed interval, a 2-cell is a polygon, and so on. Any lower dimensional face $Y$, contained in $X$ (or equal) is called its `face' and denoted $Y\leq X$. A piecewise-linear cell-complex is a collection of cells $\mc{K}$ in some $\mathbb{R}^n$ such that: 1) if $Y\leq X\in\mc{K}$ then $Y\in\mc{K}$; 2) If $X,Y\in\mc{K}$ then $X\cap Y\in X,Y$. One writes $|\mc{K}|$ for the (set theoretic) union of cells contained in $\mc{K}$~\footnote{One notices that if the topological domain supports some non-trivial geometric structure, one should be careful that it does not run into conflict with the extra assumption on linearity of $\mathbb{R}^n$-cells, which is basically the content of our following remarks.}.

The algebraico-topological considerations then usually enter the discretization of \emph{objects on the domain} as follows. Given the simplicial cell-complex $\De$, the singular homology is constructed by considering maps of the standard $k$-simplex to a \emph{topological} space, and composing them into \emph{formal} sums. Define the $k$-chains $c_k$ as maps from (combinatorial) $k$-skeleton $\De_k=(\si_k^1,...,\si_k^{N_k})$ to the (abstract) abelian group, such that the image for each cell $-\si\in \De_k$ with reversed orientation: $c_k(-\si)=-c_k(\si)$. Taking $C_k(\De,\mathbb{R})$ to be the space of all real-valued $k$-chains, due to natural multiplicative structure (that gives the notion of `scale'), each element can be seen as the linear combination $c_k=\sum_{i=1}^{N_k}\mu_k^i \, u^i_k$ of unit chains $u_k^i(\pm\si_k^j)= \pm\de^{ij}$, and $C_k(\De,\mathbb{R})$ -- as a linear space over $\mathbb{R}$, spanned by the standard basis of elementary cells. Suppose that (the homeomorphic image of) $\De$ somehow approximates the smooth manifold $\mc{M}$. Then the $k$-forms are naturally discretized as co-chains $C^k(\De,\mathbb{R})\equiv\mr{Hom}(C^k(\De,\mathbb{R}),\mathbb{R})$ via integration over simplices: 
\begin{equation}
w(c_k) \ = \ \int_{c_k}\om \ = \ \sum_i\mu_k^i w_i, \qquad w_i \ = \ \int_{u_k^i}\om.
\end{equation}
The isomorphism of the corresponding cohomologies is induced by this map, according to de-Rham's theorem. By the Stokes formula~\eqref{eq:Stokes-1}, $d$ is the co-boundary operator dual to $\pa$. In accord with our understanding of integration, the co-chains $w$ represent densities w.r.t. measures $\mu_k$, imparted to the discretized support by $\mathbb{R}$-valued chains. 

\paragraph{Problem posing.}
As such, the purely combinatorial construction of a cell complex is well-suited to study the topological properties of $\mc{M}$, respectively, but what about geometry? Part of the geometrical information stripped away is restored by attaching signed $k$-measures (lengths, areas, volumes, etc.) to the cells of $\De$, which may be suitable for theories with background metric space. Notwithstanding, we see couple of weak points, hindering the genuine extension of this scheme to gravity. 
\begin{enumerate}[label={\upshape(\arabic*)}, align=left, widest=iii]
\item In the background independent GR, the metric itself is dynamical and subject to variation, hence it should rather enter the $\om$ `density' part.
\item The formal summation in $c_k$ makes little-to-no `geometric' sense (or at least obscure for us) prior to integration, since it is only by the field $\om$ the real values are associated to the (elementary) cells that play the role of `labels'. On the contrary, the integration itself is naturally viewed as summation.
\item At last, the scalar forms are insufficient, since the geometric information contains not only $k$-volumes of (elementary) surfaces, but also their directions (mutual~\footnote{To provide some more context: In the gauge \emph{invariant} languages, using lengths/areas/volumes as discrete (metric) variables, the insufficiency of a subset of them to characterize the configuration of polyhedral cells locally~\cite{Barrett-etal1999note-area-Regge}, led to the introduction of numerous angle-variables~\cite{Barrett19941st-order-Regge,DittrichSpeziale2008area-angle,BahrDittrich2010angle-calculus}. The latter cannot be arbitrarily assigned but have to satisfy some non-trivial constraints. (The rigid simplex is exceptional, since its shape is uniquely determined by the edge-lengths in every dimension.) In contrast, we are talking about gauge \emph{covariant} description, local to a region.}) in spacetime.
\end{enumerate}
With all the above said, we formulate the \textbf{task of integration of vector-valued forms}, attaching to the domain some sensible characteristic of geometry. One should be careful though, since the result may depend on the choice of frame. All this prompts to take a closer look on variables of the gravitational field in the background-independent GR, to analyse what constitutes its fundamental degrees of freedom in the continuum, keeping eye also on what happens in discrete QG proposals. 

\pagebreak

Let us first remind in Sec~\ref{sec:global-geometry} the very basic concepts from the ordinary (flat) Euclidean/Minkowskian spaces, permitting us to measure lengths and angles by means of a scalar product between two vectors (and, more generally, tensors). Then in Sec.~\ref{sec:Cartan-affine}, this view will be `localized' using the framework of Cartan affine connections. This strictly corresponds to the conventional way of thinking about manifolds in terms of plane tangent spaces, that are attached -- or `soldered' -- to $\mc{M}$ at the point. We return to the task in Sec.~\ref{sec:geom-sum}, where it will be addressed in the geometric context of gravity theory.

\newpage

\section{Elementary geometry}
\label{sec:global-geometry}

In this section we recount the basic notions from the geometry of the familiar Euclidean space~$\mathbb{E}$, with the straightforward generalization to Minkowski spacetime~$\mathbb{M}$, whose affine structure is carefully highlighted. The focus is on the multilinear algebra, describing plane elements of various dimensionality, where we mostly stick to~\cite[Ch.1]{Cartan1983Riemannian-geometry}. The description exploits the existence of globally defined reference frames -- including the choice of origin -- having in mind their subsequent local and approximate character, upon `gauging' and inclusion of (gravitational) interactions. 

\subsection{Affine spaces and bases} 
\label{subsec:affine-space}

Let $\mathbb{E}$ be a point set and $(V,+)$ an $m$-dimensional (real) vector space regarded as abelian group. Consider the action $\mathbb{E}+V\ra\mathbb{E}$ which is free (i.e. if $a+\mb{v}=a$ for some $a\in\mathbb{E}\ \Rightarrow \mb{v}=\mb{0}\in V$) and transitive (i.e. for any $a,b\in\mathbb{E}$ there is $\mb{v}\in V$ such that $b=a+\mb{v}$). The triple $(\mathbb{E},V,+)$ is then called an \emph{affine space}, whereas $\mathbb{E}$ -- the \emph{principal homogeneous space} for the group of translational motions $V$, the stabilizer (sub-)group of every point being trivial $\mb{0}\in V$. The dimension of an affine space is defined as the dimension of the vector space of its translations.

Translations $V$ play a twofold role for $\mathbb{E}$. First, in an obvious manner any element $\mb{v}\in V$ defines the rigid motion of space as a whole via point set correspondence (globally defined diffeomorphism) 
\begin{equation}
\begin{aligned}\label{eq:translation-global}
&\phi_\mb{v} \ : &  \mathbb{E}& &  &\ra &  &\mathbb{E}   \\
& & a & & &\mapsto &  &a+\mb{v}. 
\end{aligned}
\end{equation}
On the other side, for any fixed $o\in\mathbb{E}$ and an arbitrarily chosen linear basis $e=(\mb{e}_1,...,\mb{e}_m)$ in $V$, the bijective correspondence is established via the group action between the points of $\mathbb{E}$-space and their parametrization via $V$-elements:
\begin{equation}\label{eq:affine-basis}
a \ = \ o + \mb{e}_i^{\ph{i}}\, x^i(a).
\end{equation}
The pair $(o,e)$ forms a global \emph{affine frame} in the sense that any point $a\in\mathbb{E}$ can be written as~\eqref{eq:affine-basis} (w.r.t. some arbitrary $o\in\mathbb{E}$ and $e\in V$, of course), so that the chart $(\mathbb{E},x)$ provides the globally-defined coordinate system. 

In the affine spaces, the ``addition of points'' is not allowed. One can only pass to new points by adding elements of $V$, or meaningfully take the difference of two points to obtain the arrow-vector that they bound. In this regard, one draws the following distinction between two types of vectors. If the origin is fixed, the vector is called \emph{bound}, or \emph{sliding}, e.g. the ``radius-vector'' of some other point's position as in~\eqref{eq:affine-basis}. When only the magnitude and direction of the arrow matter, while the particular initial point is of no importance, then one speaks about \emph{free} vector. They are normally obtained as the difference of two bound vectors. The typical example of the first is the force, being dependent on the point of application, while the total angular momentum is a free vector -- indeed: $\sum_i[\mb{p}_i\times (\mb{r}_i+\mb{a})]=\sum_i[\mb{p}_i\times \mb{r}_i]$ due to $\sum_i\mb{p}_i=0$ in Newtonian mechanics~\footnote{Usually, in the  Euclidean space equipped with a choice of origin, this separation is non-relevant, since a free vector is equivalent to the bound one of the same magnitude and direction whose initial point is the origin.}.



\subsection{Group theoretic properties} 
\label{subsec:group-theory}

On the manifold $P$ of all (global) affine frames of $\mathbb{E}$ the natural right action $P\times G\ra P$ of the \emph{affine group} $G=V\rtimes\mr{GL}(V)$ is defined (`passive view'):
\begin{equation}\label{eq:affine-group-1}
(o',e') \ = \ (o,e)\,g \ = \ (o+\mb{a},eA) , \qquad g \ = \ (\mb{a},A),
\end{equation}
with the multiplication law of the semi-direct product $g_1g_2=(\mb{a}_1,A_1)(\mb{a}_2,A_2)=(\mb{a}_1+A_1\rt\mb{a}_2,A_1A_2)$. Here the components of $\mb{a}=\mb{e}_i^{\ph{i}}\,a^i$ describe the new frame's position $o'$ w.r.t. the original system. We also have used the following shorthand notation: 
\begin{equation}\label{eq:linear-group}
e' \ \equiv \ (\mb{e}_1',...,\mb{e}_m') \ = \ (\mb{e}_i^{\ph{i}}A^i_{\ph{i}1},...,\mb{e}_i^{\ph{i}}A^i_{\ph{i}m}) \ \equiv \ eA, \qquad A\in \mr{GL}(m,\mathbb{R}),
\end{equation}
defining the respective projections on the original axes of the new frame's \emph{unit} vectors.

The following short sequence of group homomorphisms
\begin{equation}\label{eq:semi-direct-seq-1}
1 \ \ra \ V \ \stackrel{\al}{\ra} \ G \ \stackrel{\be}{\ra} \ V\backslash G\cong\mr{GL}(V) \ \ra \ 1
\end{equation}
is exact in the sense that the kernel of each homomorphism is equal to the image of the preceding one. The map $\mr{GL}(V)\ra\mr{Aut}(V)$, sending $(A,\mb{a})\mapsto (A\triangleright\mb{a})\in V$, provides the so-called split of~\eqref{eq:semi-direct-seq-1} in the sense that there is a homomorphism $\ga:\mr{GL}(V)\ra G$, defined as $\ga(A)=(\mb{0},A)$, projecting to the identity: $\be\circ\ga=1$. One says that the semi-direct product $G$ is (non-trivial) extension of $\mr{GL}(V)$ by $V$. 

In a different vein (`active view'): whereas the (sub-)group of general linear transformations $H\subseteq\mr{GL}(V)$ preserves the linear structure in the sense that $A(c_1\mb{v} + c_2\mb{w}) = c_1 A(\mb{v}) + c_2 A(\mb{w})$, the affine group $G$ respects the affine structure of translations in the sense that $g(o+\mb{v})=g(o)+A(\mb{v})$. The kernel of $\be$ is the (sub-)group of translations $V$. Using this, $H$ can be identified with the stabilizer of the point, and thus the kernel of the canonical projection $\pi$ in the dual sequence: 
\begin{equation}\label{eq:semi-direct-seq-2}
1 \ \ra \ H \ \ra \ G \ \stackrel{\pi}{\ra} \ \mathbb{E}\cong G/H \ \ra \ 1.
\end{equation}
(Exact, but not homomorphisms of groups, since $\mathbb{E}$ is the homogeneous space of cosets.) The relation between two perspectives may be seen as follows. 

The set $P$ is the \emph{principal homogeneous space} for $G$, on which the latter acts freely and transitively. Hence, for an arbitrarily fixed $(o,e)$, there is a bijective correspondence between $P$ and~$G$. The group $G$ -- called the \emph{principal group} of a geometry (on $\mathbb{E}$)~\footnote{The name of the PFB has its origins in this translation of the German ``hauptgruppe'' (cf.~\cite[p.138]{Sharpe1997Diff-Geometry-Cartan}).} -- then acts on itself from the left. Decomposing $g=(\mb{a},1)\circ (\mb{0},A)$, the factor-space $G/H_o$ of (left) cosets $gH_o$, w.r.t. sub-group $H_o\approx\mr{GL}(V)$ of homogeneous transformations of the linear part of the frame, may be identified with the point set $\mathbb{E}$ itself -- in accord with Klein's Erlangen program. 

The projection map $\pi:G\ra G/H$ is a PFB, canonically associated with the arbitrary homogeneous Klein geometry, understood in the sense of a pair $(G,H)$ of (principal) group $G$ and its (closed) sub-group $H$. Had we chosen the different origin, the respective stability sub-groups are related by conjugation $(\mb{a},1)^{-1} H_o (\mb{a},1)=H_{o+\mb{a}}$, corresponding to different embeddings of $H$ in~$G$. [This base-point disappears in the infinitesimal description, thus making the choice of origin $o$ the same type of ``integration constant'' as $e$.] The splitting of the sequence~\eqref{eq:semi-direct-seq-2} (i.e. $\tilde{\ga}:G/H\ra G$, s.t. $\pi\circ\tilde{\ga}=\mr{Id}_{\mathbb{E}}$) defines the connection in the direct sense of choosing the ``horizontal'' lift, typical to the groupoid approach to this concept~\cite{Mackenzie2005groupoids-algebroids}. Both `active', and `passive' viewpoints on the transformation group are completely equivalent and allow generalization to arbitrary geometries (e.g., non-spacetime reductive, s.a. conformal or projective)~\cite{Klein1872Erlangen,Cartan1924projective-connections}; though, we have the tendency to use the language of frames as being more ``visual''. 

It is convenient to represent $\mathbb{E}^m$ -- and its tangent (free) vectors -- via an embedded $(x^0=1)$-hyperplane in $\mathbb{R}^{m+1}$, where $G$-subgroup of $\mr{GL}(m+1,\mathbb{R})$ acts linearly as follows:
\begin{equation}\label{eq:affine-linearized}
(\mb{e}_0^{\ph{i}},\{\mb{e}_j^{\ph{i}}\})
\begin{pmatrix}
1 & 0 \\
\mb{a} & A 
\end{pmatrix}
\ = \ (\mb{e}_0^{\ph{i}}+\mb{a},\{\mb{e}_i^{\ph{i}}A^i_{\ph{i}j}\}).
\end{equation}
Due to parallelizability of $\mathbb{E}$, the frames with the $\mb{e}_0^{\ph{i}}$-origin fixed  may be used, s.t. the points are represented via bound vectors $\vv{ox}\equiv\mb{X}_0^{\ph{i}}=\mb{e}_0^{\ph{i}}+\mb{x}$, having components $(1,x^i)$, while the free vectors lie in the hyperplane $\mb{X}=\mb{e}_i^{\ph{i}}X^i$ altogether and transform linearly under $H$-subgroup, correspondingly. Let us now concentrate on the basic notions of metric geometry of (free) vectors and their $k$-dim generalizations.

\subsection{Metric properties} 
\label{subsec:metric}

The affine geometry studies the properties of figures that remain unaffected under $G$ [via~\eqref{eq:affine-linearized} the latter may be viewed as a subset of projective transformations/homographies leaving the hyperplane at infinity intact]. The notions of addition of vectors and their equality may thus be established, as well as the equivalence between sliding vectors (parallelism). However, the (ratio of) lengths of two vectors is available only for those which are parallel. Indeed, the arbitrary choice of non-collinear axes $e$ [e.g. pointing to some distant stars] sets up the most general system of Cartesian coordinates: the projections of the vector on each of the axes $\mb{e}_i^{\ph{i}}$ are made by the hyperplane parallel to the linear span of $\{\mb{e}_{j\neq i}^{\ph{i}}\}$, the magnitudes being measured in terms of the individual units (rods and clocks) on each of the basis vectors of the oblique frame. 

The particular unit of length, considered as absolute, may be set up through the following
\begin{definition}\label{def:metric-vector}
The metric on $\mathbb{E}$ is  the symmetric non-degenerate $\mathbb{R}$-valued bilinear form
\begin{equation}\label{eq:metric-def}
\cg: \, V\otimes V \ \ra \ \mathbb{R}.
\end{equation}
Introducing the coefficients w.r.t. arbitrary basis $\cg_{ij}=\cg(\mb{e}_i^{\ph{i}},\mb{e}_j^{\ph{i}})$, it may be written as $\cg=\cg_{ij}\hat{\mb{e}}^i\otimes\hat{\mb{e}}^j$.
This is used to compute the squared magnitude of the vector's length and the scalar product: 
\begin{equation}\label{eq:vector-sp}
|\mb{X}|^2 \ \equiv \ \cg(\mb{X},\mb{X}) \ = \ \cg_{ij}X^iX^j, \qquad \bra\mb{X},\mb{Y}\ket \ \equiv \ \cg(\mb{X},\mb{Y}) \ = \ \cg_{ij}X^iY^j,
\end{equation}
the latter may be obtained as the $2\la$-prefactor in the expansion of $|\mb{X}+\la\mb{Y}|^2$. The (planar) angle between two vectors is then deduced to be
\begin{equation}\label{eq:angle-2d}
\cos \al \ = \ \frac{\bra\mb{X},\mb{Y}\ket}{\sqrt{|\mb{X}|^2|\mb{Y}|^2}}, \qquad \al \ \equiv \ \widehat{\mb{X}\mb{Y}}.
\end{equation}
\end{definition}
The passage from the oblique axes to the orthonormal basis $\{\mb{e}_i'\}$ can always be performed, in which the $\cg$-matrix is diagonalized and have the entries 
\begin{equation}\label{eq:orthonormal}
\cg_{ij}' \ = \ \cg_{kl}A^k_{\ph{k}i}A^l_{\ph{l}j} \ = \  \pm\de_{ij}.
\end{equation}
The number of positive and negative eigenvalues is preserved (Sylvester's law) and called the \emph{signature}. We are mainly interested in the positive-definite metric of the Euclidean space $\mathbb{E}$, and that of the Minkowski spacetime $\mathbb{M}$ with $(m-1,1)$-signature (in $m=4$)~\footnote{An intriguing and phenomenologically viable option could be the principal group $G=\mr{O}(m,1)$ of the \emph{de~Sitter space} $\mr{O}(m,1)/\mr{O}(m-1,1)$, with `translations' $V$ replaced by non-commutative $H$-module [akin to boosts $[\mf{b},\mf{b}]\subset c^{-2}\mf{o}(3)$]. This allows to accomodate for a positive cosmological constant/fundamental length scale within the modified kinematics of special relativity~\cite{Cacciatori-etal2008deSitter-rel,AldrovandiPereira2009deSitter-rel,InonuWigner1953group-contraction}.}. Denoting $\eta = \mr{diag}(+1,...,-1)$, in general, the requirement of the preservation of the scalar product $\eta=A^T\eta A$ then restricts the sub-group of homogeneous transformations to the respective orthogonal group $H=\mr{O}(\eta)$ (or special-orthogonal/proper-orthochronous, depending on allowed discrete symmetries). In the latter case, the angle in~\eqref{eq:angle-2d} should be replaced by hyperbolic $\cosh\be$ for the timelike vectors, i.e. the measure of boost in terms of additive rapidity.

The usual convention introduces the \emph{covariant} components of the vector $X_i\equiv\bra\mb{e}_i^{\ph{i}},\mb{X}\ket = \cg_{ij}X^j$, s.t. the scalar product may be concisely written as $\bra\mb{X},\mb{Y}\ket=X_iY^i=X^iY_i$, and $|\mb{X}|^2=X_iX^i$. They are inversely related to their \emph{contravariant} counterparts $X^i=\cg^{ij}X_j$ by means of $\cg^{-1}\equiv\{\cg^{ij}\}=C^T/\det\cg$, constructed from the (transpose) matrix of cofactors in the determinant expansion $\det\cg = \sum_{j=1}^m\cg_{ij}C_{ij}=\sum_{i=1}^m\cg_{ij}C_{ij}$ along any row $i$ or column $j$. Although two types of components coincide (up to sign) in the orthonormal basis and often used interchangeably, they encode the different information, and the distinction is drawn in the general case: the first give the usual projections on axes $X^{i\neq j}=0\ \Leftrightarrow \ \mb{X}\propto\mb{e}_j^{\ph{i}}$, whereas the other projections are made orthogonally using the scalar product $X_{i\neq j}=0 \ \Leftrightarrow \ \mb{X}\perp\{\mb{e}_{i\neq j}^{\ph{i}}\}$-hyperplane.

\subsection{(Simple) bivectors and multivectors} 
\label{subsec:simple-multivectors}

The (free) vector is seen as the directed line segment of certain length, i.e. it is characterized 1)~by its measure, 2)~direction, and 3)~orientation in the sense of a certain order of its endpoints (their precise position is non-important for free vectors). This may be generalized to a plane elements of higher dimensionality, leading to the construction of a `geometric algebra'. Let us examine it in some greater detail using the instructive case of a 2-plane, following~\cite[Ch.1]{Cartan1983Riemannian-geometry}. Being the first non-trivial dimension where the curvature reveals itself, it is representative of a more general situation. (Also, the bivectors constitute one of the basic constructive elements in Spin Foams and LQG theories.)

\begin{definition}\label{def:bivector-simple}
A \textbf{(simple) bivector} is defined to be the configuration formed by two vectors $\mb{X}$ and $\mb{Y}$, arranged in a certain order. To clarify its geometric meaning one formulates the conditions when two bivectors are considered to be equivalent, or equal. Namely, the two parallelograms built on the pairs of vectors as their edges should
\begin{enumerate}[label={\upshape(\roman*)}, align=left, widest=iii]
\item lie in the same (or parallel) plane, \label{prop:bivecctor-1}
\item possess the same area, and \label{prop:bivecctor-2}
\item the same orientation (circulation in the boundary). \label{prop:bivecctor-3}
\end{enumerate}
\end{definition} 
How these conditions, defining the bivector, determine its coordinate representation? Consider the operation of taking the~\emph{wedge} product, or~\emph{exterior} multiplication of vectors~\footnote{The definition of~\eqref{eq:wedge} represents the dual operation on linear functionals, applied pointwise to tangent vector spaces of an arbitrary manifold. One is straightforwardly related to the other through the concept of soldering.}, that is
\begin{enumerate}[label={\upshape(\alph*)}, align=left, widest=iii]
\item linear $\mb{X}\wedge(\al\,\mb{Y}+\be\,\mb{Z}) = \al\,\mb{X}\wedge\mb{Y}+\be\,\mb{X}\wedge\mb{Z}$; \label{prop:wedge-1}
\item associative $(\mb{X}\wedge\mb{Y})\wedge\mb{Z}= \mb{X}\wedge(\mb{Y}\wedge\mb{Z})\equiv\mb{X}\wedge\mb{Y}\wedge\mb{Z}$;\label{prop:wedge-2}
\item alternating $\mb{X}\wedge\mb{Y}=-\mb{Y}\wedge\mb{X}\quad\Leftrightarrow\quad \mb{X}\wedge\mb{X} =0$. \label{prop:wedge-3}
\end{enumerate}
Given an arbitrary basis $\{\mb{e}_i^{\ph{i}}\}$, we may decompose
\begin{equation}\label{eq:bivector-simple}
\mb{\Si} \ \equiv \ \mb{X}\wedge\mb{Y} \ = \ \frac12\Si^{ij}(\mb{e}_i^{\ph{i}}\wedge\mb{e}_j^{\ph{i}}), \qquad \Si^{ij} \ \equiv \ X^iY^j-Y^iX^j \ = \ \begin{vmatrix}
X^i & Y^i \\
X^j & Y^j 
\end{vmatrix}.
\end{equation}
(The factor of $2$ prevents overcounting of skew-symmetric components.) Using the metric for contraction, the \emph{inner} product with the vector may thus be formed:
\begin{equation}\label{eq:inner-product}
\mb{U} \ \equiv \ \iota_\mb{Z}^{\ph{i}}\mb{\Si} \ := \ \bra\mb{X},\mb{Z}\ket \mb{Y}- \bra\mb{Y},\mb{Z}\ket \mb{X}.
\end{equation}
For the vector $\mb{Z}$ to be perpendicular with the plane determined by $\{\mb{X},\mb{Y}\}$, its (covariant) components $Z_{j\neq i}$ should be linearly constrained by $\Si^{ij}Z_j=0$ for each $i$. Then for any other pair of $\{\tilde{\mb{X}},\tilde{\mb{Y}}\}$ defining the same (or parallel) plane, this leads to the proportionality of the corresponding coefficients in~\eqref{eq:bivector-simple} $\tilde{\Si}^{ij}=\la\Si^{ij}$; and similarly for the covariant components, with the same $\la$.

To express the second equivalence condition~\ref{prop:bivecctor-2} in Def.~\ref{def:bivector-simple}, one sets up the bivectors's \emph{magnitude} to be the area of the respective parallelogram, and evaluates it in terms of~\eqref{eq:bivector-simple}:
\begin{align}\label{eq:bivector-area}
|\mb{\Si}|^2 \ & \equiv \ |\mb{X}|^2|\mb{Y}|^2\sin^2(\widehat{\mb{X}\mb{Y}}) \ = \ |\mb{X}|^2|\mb{Y}|^2-\bra\mb{X},\mb{Y}\ket^2 \nonumber\\
& = \ \begin{vmatrix}
|\mb{X}|^2 & \bra\mb{X},\mb{Y}\ket \\
\bra\mb{X},\mb{Y}\ket & |\mb{Y}|^2 
\end{vmatrix} \ = \ \begin{vmatrix}
X_i & Y_i \\
X_j & Y_j 
\end{vmatrix} X^iY^j \ = \ \frac12\Si_{ij}\Si^{ij}.
\end{align}
The proportionality coefficient between components of bivectors, corresponding to the same area, is restricted by $\la^2=1\ \Rightarrow \ \la=\pm1$. The third condition~\ref{prop:bivecctor-3} of the same orientation selects the plus sign $\la=+1$, hence \emph{the two bivectors are equal iff their $m(m-1)/2$ coordinates $\Si$ coincide}. Due to this identification, we will often refer to the bivector and its associated plane somewhat interchangeably, when only the direction but not the magnitude is concerned.

An alternative derivation is possible that does not invoke the use of scalar product: the condition expressing how the vector $\mb{Z}$ may be parallel to the plane of $\{\mb{X},\mb{Y}\}$  -- i.e. linear dependent -- is concisely written as
\begin{equation}\label{eq:bivector-plane}
\mb{\Si}\wedge\mb{Z} \ = \ \frac{1}{3!}\Si^{ijk}(\mb{e}_i^{\ph{i}}\wedge\mb{e}_j^{\ph{i}}\wedge\mb{e}_k^{\ph{i}}), \qquad \Si^{ijk} \ \equiv \ \Si^{ij}Z^k+\Si^{jk}Z^i+\Si^{ki}Z^j \ = \ 0,
\end{equation}
for all $i,j,k$. This represents the vanishing of rank-3 minors of the  $m\times 3$-matrix $(\mb{X},\mb{Y},\mb{Z})$. In result, for two such bivectors satisfying~\eqref{eq:bivector-plane} the relationship follows for any subset of 3 indexes:
\begin{equation}\label{eq:bivector-proportion}
\frac{\tilde{\Si}^{ij}}{\Si^{ij}} \ = \ \frac{\tilde{\Si}^{jk}}{\Si^{jk}} \ = \ \frac{\tilde{\Si}^{ki}}{\Si^{ki}} \ \equiv \ \la,
\end{equation}
expressing the fact that the ratio is constant for the areas of two parallelograms obtained by projecting onto the respective elementary 2-planes. The bivectors will be equal if $\la=1$, as is evident by placing one of the $\mb{e}_i^{\ph{i}}\wedge\mb{e}_j^{\ph{i}}$ in the plane of $\mb{\Si}$. 

For arbitrary $\mb{Z}$ (neither parallel, nor orthogonal to $\mb{\Si}$), the geometric meaning of the inner product~\eqref{eq:inner-product} is revealed by choosing the rectangular frame with the first two axes in the $\mb{\Si}$-plane, so that in components $\mb{U}=\Si^{12}(-Z_2,Z_1,0,...,0)$. The resulting vector is obtained from the orthogonal projection of $\mb{Z}$ by rotating it through $\pi/2$ in the anti-clockwise direction parallel to the plane of the bivector, and multiplied by its magnitude. One obtains the `incidence angle' $\varsigma\equiv\widehat{\mb{\Si}\mb{Z}}$ between plane and $\mb{Z}$-direction as follows: $\cos\varsigma=|\mb{U}|/(|\mb{\Si}||\mb{Z}|)$. 

The bilinear form on $V$ naturally extends to the (irreducible) representation in the ${m(m-1)/2}$-dimensional space of (general) bivectors $\bigwedge^2 V$ via repeated use of the inner multiplication/metric contraction: 
\begin{align}\label{eq:bivectors-sp-1}
\cg(\mb{e}_i^{\ph{i}}\wedge\mb{e}_j^{\ph{i}},\mb{e}_k^{\ph{i}}\wedge\mb{e}_l^{\ph{i}})\ : & = \ \iota_{\mb{e}_j^{\ph{i}}}\circ\iota_{\mb{e}_i^{\ph{i}}}(\mb{e}_k^{\ph{i}}\wedge\mb{e}_l^{\ph{i}}) \ = \ \iota_{\mb{e}_j^{\ph{i}}}(\bra\mb{e}_i^{\ph{i}},\mb{e}_k^{\ph{i}}\ket \mb{e}_l^{\ph{i}}- \bra\mb{e}_i^{\ph{i}},\mb{e}_l^{\ph{i}}\ket \mb{e}_k^{\ph{i}}) \nonumber\\
& = \ \bra\mb{e}_i^{\ph{i}},\mb{e}_k^{\ph{i}}\ket\bra\mb{e}_j^{\ph{i}},\mb{e}_l^{\ph{i}}\ket - \bra\mb{e}_i^{\ph{i}},\mb{e}_l^{\ph{i}}\ket\bra\mb{e}_j^{\ph{i}},\mb{e}_k^{\ph{i}}\ket \ = \ \begin{vmatrix}
\cg_{ik} & \cg_{il} \\
\cg_{jk} & \cg_{jl} 
\end{vmatrix} \ \equiv \ \cg_{[ij][kl]},
\end{align}
the latter being symmetric $\cg_{[ij][kl]}=\cg_{[kl][ij]}$. The scalar product in terms of components~\eqref{eq:bivector-simple} is
\begin{equation}\label{eq:bivectors-sp-2}
\bra\mb{\Si},\mb{\Xi}\ket \ \equiv \ \cg(\mb{\Si},\mb{\Xi}) \ = \ \frac14\cg_{[ij][kl]}\Si^{ij}\Xi^{kl} \ = \ \frac12 \Si_{ij}\Xi^{ij}.
\end{equation}
One may obtain the geometric reading of contraction~\eqref{eq:bivectors-sp-2} of (simple) bivectors by choosing the first two rectangular axes in the plane of $\mb{\Si}$, thus $\bra\mb{\Si},\mb{\Xi}\ket=\Si_{12}\Xi_{12}$ equals the product of the (full) magnitude of $|\mb{\Si}|$ and that of the orthogonal projection of $\mb{\Xi}$ onto the first plane. It is zero when there is a direction parallel to the plane of one of the bivectors and perpendicular to the other. In $m=3$, the dihedral angle between two planes is deduced as $\cos\al=\bra\mb{\Si},\mb{\Xi}\ket/\sqrt{|\mb{\Si}|^2|\mb{\Xi}|^2}$.

The above constructions are straightforwardly generalized to arbitrary dimension multivectors.
\begin{definition}\label{def:multivector-simple}
A \textbf{(simple) multivector} of degree $k$, or $k$-vector, is the configuration formed by the set of $k$ vectors $\{\mb{X}_1,...,\mb{X}_k\}$ arranged in a certain order. Two $k$-vectors are said to be equal iff 
\begin{enumerate}[label={\upshape(\roman*)}, align=left, widest=iii]
\item their linear span defines the same (or parallel) $k$-plane (same `position', or `posture'), \label{prop:k-vector-1}
\item they possess the same magnitude, given by the volume that is contained in the $k$-dim parallelotope built on the elements of $k$-vector, and \label{prop:k-vector-2}
\item the orientations coincide. \label{prop:k-vector-3}
\end{enumerate}
\end{definition} 

Let $I_k=\{i_1,...,i_k\}$ shortly denote the sequence of indices, s.t. $1\leq i_1<...<i_k\leq m$ and $\mb{e}_{I_k}^{\ph{i}}\equiv \mb{e}_{i_1}^{\ph{i}}\wedge\cdots\wedge\mb{e}_{i_k}^{\ph{i}}$ be the basis elements, spanning the space $\bigwedge^k V$ of dimension ${m\choose k}=\frac{m!}{k!(m-k)!}$. The considerations similar to the above identify as equal the two $k$-vectors with the same coordinates:
\begin{equation}\label{eq:k-vector-simple}
\mb{\Si}^{(k)} \ \equiv \ \mb{X}_1\wedge\cdots\wedge\mb{X}_k \ = \ \frac{1}{k!}\Si^{I_k}\mb{e}_{I_k}^{\ph{i}}, \qquad \Si^{I_k} \ \equiv \ \Si^{i_1...i_k} \ = \ \begin{vmatrix}
X_1^{i_1} & X_2^{i_1} & \cdots & X_k^{i_1}\\
X_1^{i_2} & X_2^{i_2} & \cdots & X_k^{i_2}\\
\vdots & \vdots & \ddots & \vdots\\
X_1^{i_k} & X_2^{i_k} & \cdots & X_k^{i_k}
\end{vmatrix}.
\end{equation}
An example of trivector we have encountered in~\eqref{eq:bivector-plane}. The scalar product (and the squared magnitude, consequently) of two multivectors reads:
\begin{equation}\label{eq:k-vector-sp-1}
\bra\mb{\Si}^{(k)},\mb{\Xi}^{(k)}\ket \ = \ \frac{1}{k!} \Si_{I_k}\Xi^{I_k} \ \equiv \ \frac{1}{k!}\Si_{i_1...i_k}\Xi^{i_1...i_k},
\end{equation}
using the evaluation on elementary $k$-vectors:
\begin{equation}\label{eeq:k-vector-sp-2}
\cg_{I_kJ_k}^{\ph{i}} \ \equiv \ \cg(\mb{e}_{I_k}^{\ph{i}},\mb{e}_{J_k}^{\ph{i}}) \ = \ \begin{vmatrix}
\cg_{i_1j_1} & \cg_{i_1j_2} & \cdots & \cg_{i_1j_k}\\
\cg_{i_2j_1} & \cg_{i_2j_2} & \cdots & \cg_{i_2j_k}\\
\vdots & \vdots & \ddots & \vdots\\
\cg_{i_kj_1} & \cg_{i_kj_2} & \cdots & \cg_{i_kj_k}
\end{vmatrix}.
\end{equation}
The latter is the Gram determinant $\bra\mb{e}_{I_k}^{\ph{i}},\mb{e}_{J_k}^{\ph{i}}\ket =\det\bra\mb{e}_i^{\ph{i}},\mb{e}_j^{\ph{i}}\ket$. In particular, if $k=m=\dim V$, it gives the squared volume of the parallelepiped constructed on the unit coordinate vectors $|\mb{e}_{I_m}^{\ph{i}}|^2=\det\cg$. Since covariant components are related as $\Si_{1...m}=\det\cg\,\Si^{1...m}$, one obtains $|\mb{\Si}|=(\Si_{1...m}\Si^{1...m})^{1/2}=\sqrt{|\cg|}\det(\mb{X}_1,...,\mb{X}_m)$ for the (volume) magnitude of arbitrary $m$-vector. Indeed, the ratio will not change if we add to $\mb{X}_m$ the linear combination of $\{\mb{X}_{i\neq m}\}$, spanning the ``base'' hyperplane, and the former may be aligned with $\mb{e}_m^{\ph{i}}$, since the ``height'' $(\mb{X}_k)_\perp$ stays the same. Iteration of this procedure results in the proportinality $\Si^{1...m}/|\mb{\Si}|=(\prod_{i=1}^m X_i^i)/|\mb{\Si}|=\cg^{-1/2}$, telling us that the volumes are measured in terms of fractions of the chosen elementary one. Maybe more familiar is the dual notion of the volume element as $m$-form $\mu:=\hat{\mb{e}}'^{I_m}\equiv\hat{\mb{e}}'^1\wedge\cdots\wedge\hat{\mb{e}}'^m=|\cg|^{1/2}\hat{\mb{e}}^1\wedge\cdots\wedge\hat{\mb{e}}^m\equiv|\det A|^{-1}\hat{\mb{e}}^{I_m}\in\La^m(V)$, where the oblique and orthonormal dual bases are related inversely to that of~\eqref{eq:linear-group}, \eqref{eq:orthonormal}. The basis is called positively-oriented if $\mu(\mb{e}_1^{\ph{i}},...,\mb{e}_m^{\ph{i}})>0$~\footnote{The operations of exterior and interior product can be unified within the framework of `geometric', or Clifford algebra with the combined multiplication: $ab=\frac12(ab+ba)+\frac12(ab-ba)=\cg(a,b)+a\wedge b$; the first part of this relation is familiar from the anti-commutation law of Dirac matrices $\ga_i\ga_j+\ga_j\ga_i=2\cg_{ij}\mathbbm{1}$.}.


\subsection{General (systems of) $k$-vectors and simplicity} 
\label{subsec:general-multivector}

Not every skew-symmetric tensor is a simple $k$-vector obtained from the wedge product of $k$ vectors, also known as `blade'. A \emph{system of multivectors} is a set consisting of any number of blades. Such general $\mb{\Si}$ is completely determined by the coordinates obtained as the algebraic sums of its simple components; the scalar product extends by linearity. In other words, any (general) multivector can be expanded into sum of $\mb{e}_{I_k}^{\ph{i}}$, parallel to the $k$-planes formed by coordinate axes; however, only for the simple $\mb{\Si}^{(k)}$ the axes can be chosen in such a way that they are strictly proportional $\mb{\Si}^{(k)}\propto\mb{e}_{I_k}^{\ph{i}}$. Lets examine the case of bivector.

Consider any three equations of the system~\eqref{eq:bivector-plane} of linear constraints on $\mb{Z}$:
\begin{equation}\label{eq:Pluecker-system}
\Si^{ikl} \ = \ 0, \qquad \Si^{jkl} \ = \ 0, \qquad \Si^{ijl} \ = \ 0.
\end{equation}
Excluding $Z^k$ from the first two: $\Si^{ikl}\Si^{lj}-\Si^{jkl}\Si^{li}=0$, and using the third of~\eqref{eq:Pluecker-system}, one obtains that a simple bivector has to satisfy all the (Pl\"{u}cker) relations of the form
\begin{equation}\label{eq:Pluecker-bivector}
\Si^{ij}\Si^{kl}+\Si^{jk}\Si^{il}+\Si^{ki}\Si^{jl} \ = \ 0,
\end{equation}
so that the the system~\eqref{eq:bivector-plane} is consistent and defines the vector $\mb{Z}$ belonging to the plane. In~$m=4$, this gives the unique condition: $\Si^{12}\Si^{34}+\Si^{23}\Si^{14}+\Si^{31}\Si^{24}=0$ (Klein quadric).

In $m=3$, every bivector is simple. Taking the two independent solutions $\mb{Z}=\mb{X},\mb{Y}$ of $\Si^{12}Z^3+\Si^{23}Z^1+\Si^{31}Z^2 = 0$, one can straight away deduce that (up to overall normalization)
\begin{equation}\label{eq:bivector-3d}
\frac{X^1Y^2-Y^1X^2}{\Si^{12}} \ = \ \frac{X^2Y^3-Y^2X^3}{\Si^{23}} \ = \ \frac{X^3Y^1-Y^3X^1}{\Si^{31}} \ \equiv \ \la.
\end{equation}

From the group theoretic perspective, the set of all orthonormal $k$-frames in $V\cong\mathbb{R}^m$ (of Euclidean signature, for simplicity) forms the so-called \emph{Stiefel} manifold $\mr{St}_k(V)$. It can be viewed as a homogeneous space $\mr{St}_k(V)\cong\mr{O}(m)/\mr{O}(m-k)$, or  $\mr{SO}(m)/\mr{SO}(m-k)$ for $k<m$ (it becomes a principal homogeneous space for $k=m$, with trivial stability sub-group). There is a natural right action (free but not transitive) of the group $\mr{O}(k)$, which rotates a $k$-frame in its (orbit) space of the fiber. The natural projection $\mr{St}_k(V)\ra\mr{Gr}_k(V)$, sending a $k$-frame to the sub-space that it spans, via Th.~\ref{th:PFB-characterization} makes it a principal $\mr{O}(k)$-bundle over the \emph{Grassmanian} space $\mr{Gr}_k(V)\cong\mr{O}(m)/(\mr{O}(m-k)\times\mr{O}(k))$ of $k$-planes in $V$. The linear coordinates of (the image of a Pl\"{u}cker embedding map into a `projective variety' of) an element $\Xi\in\mr{Gr}_k(V)$ relative to a standard basis in $\wedge^k V$ satisfy the general Pl\"{u}cker relations:
\begin{equation}\label{eq:Pluecker-general}
\sum_{l=1}^{k+1}(-1)^l\, \Xi_{i_1...i_{k-1}j_l}^{\ph{i}}\Xi_{j_1...\hat{j}_l...j_{k+1}} \ = \ 0.
\end{equation}

(Note: even if the bivector is simple, the coordinates represent only the equivalence class of Def.~\ref{def:bivector-simple}. The two parallelograms are not at all guaranteed to have the same \emph{shape} -- e.g., $\mb{X}\wedge\mb{Y}=(\mb{X}-\frac12\mb{Y})\wedge(\frac12\mb{X}+\frac34\mb{Y})$. The latter is encoded in the relations between geometric forms of different order, s.a. angle between edge vectors, or their lengths, respectively. This feature seems to be directly linked with the appearance of shape-mismatch in LQG's `twisted geometries' and -- as we reveal in the second part of the thesis -- in Spin Foam amplitudes, since they both utilize face-bivectors as the fundamental variables. They may be enough for triangles though. Intuitively, however, if we want the system of simple triangle/parallelogram bivectors to describe arbitrary polygon, one needs to assure that a) they all lie in the same plane, and, moreover, b) they are composed coherently along the common boundary edges. These tasks invoke other forms, apart from sole bivectors; cf~further discussion in Ch.~\ref{ch:problem}.)

\subsection{Dual, or complementary multivectors}
\label{subsec:dual-multivectors}

The position of $k$-plane can be alternatively described via its orthogonal complement, or the normal direction(s), in other words. Therefore
\begin{definition}\label{eq:complementary-vector}
We shall call the \textbf{complementary multivector} of a given simple $k$-vector $\mb{\Si}=\mb{X}_1\wedge\cdots\wedge\mb{X}_k$, the simple $(m-k)$-vector $\star\mb{\Si}=\mb{Y}_{k+1}\wedge\cdots\wedge\mb{Y}_m$, such that
\begin{enumerate}[label={\upshape(\roman*)}, align=left, widest=iii]
\item it is completely normal to the $k$-plane of $\mb{\Si}$, that is to say $\{\mb{X}_i\}\perp\{\mb{Y}_j\}$,\label{prop:dual-vector-1}
\item both multivectors $\mb{\Si}$ and $\star\mb{\Si}$ have the same magnitude, and \label{prop:dual-vector-2}
\item the $m$-vector $\mb{\Si}\wedge\star\mb{\Si}$ is positively-oriented. \label{prop:dual-vector-3}
\end{enumerate}
(It is assumed that the space itself is oriented, and the basis unit vectors are numbered accordingly.)
\end{definition} 

Consider first the simple basis vector $\mb{e}_{I_k}^{\ph{i}}\equiv \mb{e}_{i_1}^{\ph{i}}\wedge\cdots\wedge\mb{e}_{i_k}^{\ph{i}}$ that is normalised $|\mb{e}_{I_k}^{\ph{i}}|^2=1$. From the orthogonality requirement~\ref{prop:dual-vector-1} $\bra\mb{e}_i^{\ph{i}},\mb{Y}\ket =Y_i=0$, for all $i\in I_k$, it follows that in the coordinate expansion of the complementary $\star\mb{e}_{I_k}^{\ph{i}}$ into dual basis only one term survives, with non-trivial projection:
\begin{equation}\label{eq:compl-projection}
\bra \star (\mb{e}_{i_1}^{\ph{i}}\wedge\cdots\wedge \mb{e}_{i_k}^{\ph{i}}),\mb{e}_{i_{k+1}}^{\ph{i}}\wedge\cdots\wedge \mb{e}_{i_m}^{\ph{i}}\ket \ = \ \la\,  \varepsilon_{i_1...i_ki_{k+1}...i_m}^{\ph{i}},
\end{equation}
where $\varepsilon_{i_1...i_m}^{\ph{i}}\equiv\de_{i_1...i_m}^{1...m}$ is the usual totally skew-symmetric Levi-Civita symbol. Hence
\begin{align}\label{eq:compl-basis}
\star (\mb{e}_{i_1}^{\ph{i}}\wedge\cdots\wedge \mb{e}_{i_k}^{\ph{i}}) \ & = \ \frac{\la}{(m-k)!}\, \varepsilon_{i_1...i_kj_{k+1}...j_m}^{\ph{i}}(\hat{\mb{e}}^{j_{k+1}}\wedge\cdots\wedge\hat{\mb{e}}^{j_m}) \nonumber\\
& = \ \frac{\la}{(m-k)!}\, \varepsilon_{i_1...i_k}^{\ph{i_1...i_k}j_{k+1}...j_m}(\mb{e}_{j_{k+1}}^{\ph{i}}\wedge\cdots\wedge\mb{e}_{j_m}^{\ph{i}}),
\end{align}
where the indices could be raised or lowered using the metric-induced isomorphism between $V$ and its dual. In order to determine the proportionality coefficient, choose the lexicographic order of indices, and fulfil to the $m$-vector: $\mb{\Si}^{(m)}\equiv\mb{e}_{I_k}^{\ph{i}}\wedge\star\mb{e}_{I_k}^{\ph{i}}=\la\,\mb{e}_1^{\ph{i}}\wedge\cdots\wedge\mb{e}_m^{\ph{i}}$. From conditions~\ref{prop:dual-vector-2}-\ref{prop:dual-vector-3} $\Rightarrow$ $\la=\sqrt{|\cg|}$, e.g. by applying to $k=m$ case: $\cg^2=|\mb{e}_{I_m}^{\ph{i}}|^2|\star\mb{e}_{I_m}^{\ph{i}}|^2 = |\mb{\Si}^{(m)}|^2=\la^2\cg$. 

The complementary multivector of a general (not necessary simple) $k$-vector is defined as the set of complements to the simple $k$-vectors of which the given $\mb{\Si}^{(k)}$ is the sum. That is, one extends $\star$ by linearity and obtains the general relation between components of $k$-vector and its complement: 
\begin{gather}
\star\mb{\Si}^{(k)} \ = \ \frac{1}{(m-k)!}(\star\Si)^{j_1...j_{m-k}}(\mb{e}_{j_1}^{\ph{i}}\wedge\cdots\wedge\mb{e}_{j_{m-k}}^{\ph{i}}), \nonumber\\
\begin{aligned}\label{eq:compl-components}
\text{where}\quad (\star\Si)^{j_{k+1}...j_m} \ & = \ \frac{|\cg|^{-1/2}}{k!}\,\Si_{i_1...i_k}\varepsilon^{i_1...i_kj_{k+1}...j_m}, 
\\ \text{or} \quad (\star\Si)_{j_{k+1}...j_m}^{\ph{i}} \ & = \ \frac{|\cg|^{1/2}}{k!}\,\Si^{i_1...i_k}\varepsilon_{i_1...i_kj_{k+1}...j_m}^{\ph{i}};
\end{aligned}
\end{gather}
the last covariant version represents the induced action on the dual bases of forms. Applying it twice leaves a $k$-vector unchanged, up to sign $\star^2=(-1)^\eta(-1)^{k(m-k)}$, where $(-1)^\eta\equiv\det\eta$ is the metric signature (difference).

The familiar reader will recognize the standard dualization by the Hodge star operator. In the language of forms, it establishes the linear isomorphism $\star:\La^k(V)\ra\La^{m-k}(V)$ between vector spaces of the same dimensionality ${m\choose k}={m\choose m-k}$. This may be determined through the inner product via the following general relation:
\begin{equation}\label{eq:Hodge-dual}
\al\wedge\star\be \ = \ \bra\al,\be\ket \, \mu\, ;
\end{equation}
however, we introduced it in a decidedly geometric fashion. The crucial role played by the metric is evident from~\eqref{eq:compl-components}, since the coordinates of $\star\mb{\Si}$ are basically composed of the respective orthogonal projections of $\mb{\Si}$. Notice that the Pl\"{u}cker relations~\eqref{eq:Pluecker-bivector} for bivectors (simplicity) are totally skew-symmetric and can thus be rewritten in the form of inner product with the dual $\Si^{ij}(\star\Si)_{ijk_{1}...k_{m-4}}^{\ph{i}}=0$; in $m=4$, this is the full scalar product contraction $\bra\mb{\Si},\star\mb{\Si}\ket=0$.

The duality operator satisfies number of properties. Most importantly, it commutes with the action of the group, e.g. for the bivector: $[\mb{A},\star\mb{\Si}]=\star[\mb{A},\mb{\Si}]$, where $\mb{A}\in\mf{h}=\mf{o}(\eta)$. We will also make use of the following relations (cf.~\cite{Trautman1973EC-structure}, for example):
\begin{subequations}\label{eq:Hodge-relations}
\begin{align}
\hat{\mb{e}}^i\wedge \star\mb{e}_j^{\ph{i}} \ = & \ \de^i_j \star 1, \label{eq:Hodge-1}\\
\hat{\mb{e}}^i\wedge \star(\mb{e}_j^{\ph{i}}\wedge\mb{e}_k^{\ph{i}}) \ = & \ \de^i_k \star \mb{e}_j^{\ph{i}}-\de^i_j \star \mb{e}_k^{\ph{i}}, \label{eq:Hodge-2}\\
\hat{\mb{e}}^i\wedge \star(\mb{e}_j^{\ph{i}}\wedge\mb{e}_k^{\ph{i}}\wedge\mb{e}_l^{\ph{i}}) \ = & \ \de^i_l \star(\mb{e}_j^{\ph{i}}\wedge\mb{e}_k^{\ph{i}})
+\de^i_k \star(\mb{e}_l^{\ph{i}}\wedge\mb{e}_j^{\ph{i}})
+\de^i_j \star(\mb{e}_k^{\ph{i}}\wedge\mb{e}_l^{\ph{i}}), \label{eq:Hodge-3}\\
\hat{\mb{e}}^i\wedge \star(\mb{e}_j^{\ph{i}}\wedge\mb{e}_k^{\ph{i}}\wedge\mb{e}_l^{\ph{i}}\wedge \mb{e}_m^{\ph{i}}) \ = & \ \de^i_m \star(\mb{e}_j^{\ph{i}}\wedge\mb{e}_k^{\ph{i}}\wedge \mb{e}_l^{\ph{i}})
-\de^i_l \star(\mb{e}_m^{\ph{i}}\wedge\mb{e}_j^{\ph{i}}\wedge \mb{e}_k^{\ph{i}}) \nonumber \\
+ & \
\de^i_k \star(\mb{e}_l^{\ph{i}}\wedge\mb{e}_m^{\ph{i}}\wedge \mb{e}_j^{\ph{i}})
-\de^i_j \star(\mb{e}_k^{\ph{i}}\wedge\mb{e}_l^{\ph{i}}\wedge \mb{e}_m^{\ph{i}}), \qquad \text{etc.}\label{eq:Hodge-4}
\end{align}
\end{subequations}
Finally, the useful identity gives the contraction of Levi-Civita symbols:
\begin{equation}\label{eq:Levi-Civita-contr}
\varepsilon^{i_1...i_nk_{n+1}...k_m}\,\varepsilon_{j_1...j_nk_{n+1}...k_m}^{\ph{i}} \ = \ (-1)^\eta \, m!(m-n)! \,\de^{[i_1}_{\, j_1}\cdots\de^{i_n]}_{\, j_n}.
\end{equation}

(Note: the Hodge dual becomes a little more ``tricky'', when extended to arbitrary manifolds. Introducing the ``natural'' coordinates $\{x^\mu\}$ in the local patches, one could encounter superficially two types of metric inner product: $\cg_{\mu\nu}=\cg(\pa_\mu,\pa_\nu)$ (``external'') and $\cg_{ij}$ (``internal'') w.r.t. non-holonomic basis, where it can be put constant $\eta$ by choosing rectangular axes (cf.~\cite{Peldan1993actions-GR}, for instance, and many more). Then the two seemingly different types of star-dualization may be considered, correspondingly, induced by these two options for $\cg$. We stress to evade such idiosyncrasy, since the soldering is available to relate various versions of $\cg$. If written solely in terms of forms, the `non-polynomiality' issues and `degenerate' metrics fade away (cf.~\cite[p.307]{Bleecker1990ECSK}), and one also avoids pitfalls, for instance: if applied to forms on $\mc{M}$, $\star$ changes their degree and do not commute with $d$; whereas in the space of vector values (`internal'), this just corresponds to the change of representation, and behaves perfectly fine w.r.t. taking derivatives.)

\subsection{Sliding, or bound $k$-vectors}
\label{subsec:sliding-multivector}

Finally, recall that \emph{free} vectors are distinguished from \emph{sliding} vectors, taking into account their position. This can be extended from the `geometric forms of order one' to higher dimensions -- by dropping the `parallel' requirement in the definition of a multivector configuration. Precisely, a (simple) \textbf{sliding $k$-vector} is formed by $k$ vectors \emph{situated in the same $k$-plane}, and equality for different multivectors holds only under this exact condition. 

The sliding $k$-vector can be regarded as an exterior product 
\begin{equation}\label{eq:sliding-multivector}
\mb{\Si}_{0}^{(k)} \ := \ \mb{X}_0\wedge \mb{X}_1\wedge\cdots\wedge\mb{X}_k;
\end{equation}
where we remind that $\mb{X}_0^{\ph{i}}=\mb{e}_0^{\ph{i}}+\mb{x}$ represents the point-vector with components $(1,x^i)$, in the linear $\mathbb{R}^{m+1}$ representation of the affine group $G$. One can expand into elementary basis vectors
\begin{gather}\label{eq:sliding-coordinates}
\mb{\Si}_{0}^{(k)} \ = \ \frac{1}{(k+1)!}\Si^{i_0i_1...i_k}\mb{e}_{i_0}^{\ph{i}}\wedge\mb{e}_{i_1}^{\ph{i}}\wedge\cdots\wedge\mb{e}_{i_k}^{\ph{i}}, \nonumber\\
\text{where} \quad \Si^{i_0i_1...i_k} \ \equiv \ \begin{vmatrix}
1 & 0 & 0 & \cdots & 0\\
x^{i_1} & X_1^{i_1} & X_2^{i_1} & \cdots & X_k^{i_1}\\
x^{i_2} & X_1^{i_2} & X_2^{i_2} & \cdots & X_k^{i_2}\\
\vdots & \vdots & \vdots & \ddots & \vdots\\
x^{i_k} & X_1^{i_k} & X_2^{i_k} & \cdots & X_k^{i_k}
\end{vmatrix}
\end{gather}
-- coordinates of the simple sliding $k$-vector, whose plane is passing through $\mb{x}$. The equation of such $k$-plane can be obtained by putting to zero all rank-$(k+2)$ minors of the matrix $(\mb{Z}_0,\mb{X}_0,\mb{X}_1,...,\mb{X}_k)$, where $\mb{Z}_0$ is the variable point on the plane. One obtains the relations akin to~\eqref{eq:bivector-plane} for (free) bivector. A `system of sliding $k$-vectors' is similarly defined.

Take, for instance, sliding vector obtained from the particle (of unit mass) with `radius-vector' $\mb{x}$ and velocity parallel to $\mb{X}$:
\begin{equation}\label{eq:sliding-vector}
\mb{\Si}_{0}^{(1)} \ = \ \mb{X}_0\wedge \mb{X} \ = \ X^i(\mb{e}_0^{\ph{i}}\wedge\mb{e}_i^{\ph{i}}) + \frac12(x^iX^j-x^jX^i)\, \mb{e}_i^{\ph{i}}\wedge\mb{e}_j^{\ph{i}}.
\end{equation}
One recovers within a unified framework the vector of momentum (tied to a trajectory line) and angular momentum (free) bivector, respectively. The sliding vector, whose origin lies at $\mb{x}$ and extremity at $\mb{x}'$, will be represented by $\mb{X}_0\wedge\mb{X}_0'=(x'^i-x^i)\, \mb{e}_0^{\ph{i}}\wedge\mb{e}_i^{\ph{i}} + \frac12(x^ix'^j-x^jx'^i)\,\mb{e}_i^{\ph{i}}\wedge\mb{e}_j^{\ph{i}}$ -- which describes the same thing as~\eqref{eq:sliding-vector}, taking into account $\mb{X}=\mb{X}_0'-\mb{X}_0$.

\newpage

\section{Geometric theory of connections}
\label{sec:Cartan-affine}

The \emph{homogeneous} Klein geometry of affine space admits the existence of globally defined Cartesian frames -- filling the entire space -- equivalent in an ordinary sense that their axes are parallel and origins are motionless relative to each other (i.e. the coordinates are related simply as $x'^i=x^i+a^i$, with $a^i=\mr{const}$). From more `active' perspective: the two regions can be made to coincide by the affine transformation (with constant coefficients), s.t. the point set correspondence extends to the entire space as~\eqref{eq:translation-global}. 

In Cartan's approach to Riemannian geometry (and its generalizations), the simple logic and intuition familiar from the ordinary (non-curved, model) geometry is pushed to its limit, and preserved wherever it is possible also in the non-homogeneous (curved) situation. This is achieved by first localizing the flat-space description (attach it at the point and solder the tangent spaces in the linear approximation), and then the gluing itself is also formulated in local terms of equivalence via generalized parallelism between nearby tangent spaces. \blockquote{In a masterly fashion, [Cartan] employs a Euclidean osculating space that allows an almost automatic transfer of geometric properties of curves in Euclidean space to those in a Riemannian manifold}~\cite[p.ix]{Cartan2001Orthogonal-frame}. Using his method of moving frames, many of the properties and results, which may be obtained by purely analytical tools, receive a natural geometrical interpretation. 

It is quite intricate to make the Cartan intuition precise. In Sec.~\ref{sec:Klein-bundles} we make an extensive use of the theory of $G$-structures in a principal bundle context of gauge theory, defining what we called the `Klein bundle' as the patchwork of regions which infinitesimally ``look like'' the homogeneous geometry. The Cartesian frames are not so exclusive, but any invariantly related $x'=f(x)$ (curvilinear) coordinate systems are allowed. In the bundle framework, their relation is established by means of the gauge transformation, s.t. the frame's axes can be put into motion and rotated w.r.t. infinitesimally nearby frame. Whereas on the `active' side: given the map from the (abstract) point set $Q_0: Q\hookrightarrow \mc{M}\cong P/H$, describing a configuration of the `body' that is `rigid' and `sufficiently complicated'~\footnote{By the term `rigid' is meant that for any two of configurations, there is an element $g\in G$ carrying $Q_0$ to $Q_1$ (transitivity). `Sufficiently complicated' implies that the stabilizer sub-group that fixes the body point-wise is $e\in G$ (free action). This is what Cartan originally had in mind when using the term ``moving frame'' (cf.~~\cite[p. 164]{Sharpe1997Diff-Geometry-Cartan}).}, the `deformation' is defined as the transformation of a body from a reference configuration to a current configuration.

In Sec~\ref{sec:Cartan-connections} we provide the necessary mathematical background and the general definition of `Cartan connection' in terms of infinitesimal parallelism, following the example of canonical Maurer-Cartan trivialization of the tangent bundle. After the brief overview in Sec~\ref{sec:soldering} of different approaches to the soldering and Cartan connection, in Sec~\ref{subsec:structure} we formulate the equation, defining the `osculation' between two geometries (at the point). This actually embodies the geometric picture of ``rolling'' one homogeneous space/plane on the ``lumpy'' surface of the curved manifold [cf.~Fig.~\ref{fig:hamster-ball}]. Supplied with the integrability condition of the Cartan structure equations, it fully determines the homogeneous geometry and its trivial connection. The key notion of `development', encompassing and generalizing the parallel transport, is elaborated in Sec.~\ref{subsec:parallel-transport}, and it is shown to possess the properties of a functor. In Sec~\ref{subsec:universal-der}, the covariant derivative is introduced, and one demonstrates -- using the advanced symmetry of the Klein bundle -- that it is actually the same as the Lie dragging (in the course of development). The latter directly addresses the relation between diffeomorphism transformations and local translations of the Poincar\'{e} gauge theory.

We observe two physical implications of the mathematical formalism of connection theory. In the narrative of the ``local-vs-global symmetry'' one recognizes, essentially, the gauging argument of Yang-Mills/Utiyama theories that is made tangible. On the side of relativity theory, one finds the most straightforward implementation of the Einstein's principle of equivalence between inertia/acceleration and gravitation~\cite{Cartan1986Affine-connections}. Regarding our main question, posed in introduction, one can clearly draw the line between the spurious degrees of freedom, gauge-related to the canonical/or absolute structure, and the genuine local excitations of the `physical' gravitational field. 

\subsection{Locally Klein bundles and their gauge symmetries}
\label{sec:Klein-bundles}


At our first encounter with PFB of linear frames~\eqref{eq:frame-bundle}, we agreed for a while not to draw distinction between the actual base points $\bar{u}\in \bar{U}\subset\mc{M}$ and their coordinate labels $\varphi(\bar{u})=x$ in~$\mathbb{E}^m$. Let us now delve into this subtle issue. The PFB is always a collection of admissible frames, whose relation to the group is that of a torsor~\cite{Baez2010torsors}: once the specific element is picked up, they are isomorphic (non-canonically). The gauge does perform this (arbitrary) choice for you, so that the fiber $P_{\bar{u}}=(\ker\pi)_{\bar{u}}$ at the point can be identified with the group $H$ of all admissible frame transformations. Among three possible incarnations of the gauge choice -- section $\tilde{\si}$ over $\bar{U}$ / bundle chart's local trivialization $\phi$ / and the right $H$ equivariant map $\si:P|_{\bar{U}}\ra H$, s.t. $\si(pA)=\si(p)A$ -- we prefer the latter option. 

As for the base manifold, it is covered by its own coordinate charts $\bar{\varphi}:\bar{U}\ra\ms{O}=\bar{\varphi}(U)\subset\mathbb{E}^m$, granted without specific reference to any bundle structure on top of it. The only requirement is that the transition maps between overlapping charts $\bar{\varphi}_i\circ \bar{\varphi}_j^{-1}:\bar{\varphi}_j(\bar{U}_i\cap \bar{U}_j)\ra \bar{\varphi}_i(\bar{U}_i\cap \bar{U}_j)$ be sufficiently smooth (in order to do calculus; cf.~Fig.~\ref{fig:manifold}). The two parts then become tangled, when we strive to endow locally affine $\mc{M}$ with some additional (metric) structure; also the perception of (abstract) $V$ as being tangential to $\mc{M}$ usually requires the construction of linear frame bundle from~\eqref{eq:frame-bundle}.

Lets propose some possible ramification here, via combination of two types of charts. Namely, the trivialization $\phi:p\mapsto (\pi(p),\si(p))$ may be composed with $(\bar{\varphi},\mr{Id})$ to obtain the combined map $\ka\equiv(\varphi,\si)$ -- with the pull-back $\varphi:=\pi^\ast\bar{\varphi}=\bar{\varphi}\circ\pi$ -- making the following diagram commute:
\begin{displaymath}
\begin{tikzcd}[column sep=small]
U\equiv P|_{\bar{U}}=\pi^{-1}(\bar{U})  \arrow[d,swap, "\pi "] \arrow[rr, "\ka"] & & \ms{O}\times H \arrow[d,"\mr{proj}_1"] \arrow[rr, leftrightarrow] & & G \arrow[d,"\pi_G"] \\
\bar{U}  \arrow[rr, "\bar{\varphi}"] & & \ms{O}\subset\mathbb{E}^m \arrow[rr, leftrightarrow] & & G/H 
\end{tikzcd}.
\end{displaymath}
The horizontal arrows are (local) diffeomorphisms, with the appropriate restraints imposed on the group parameters in the right column, to match the local regions in the left. We used the identification with the Klein geometry, discussed in Sec.~\ref{subsec:group-theory}. The standard projection $\pi_G:G\ra G/H$ provides the submersion map of the corresponding principal fiber bundle, \emph{canonically} associated with the base homogeneous space by the global action of the principal group. It is this pair of $(G,H)$ that defines the generic Klein geometry, e.g. in~\cite[Ch.4]{Sharpe1997Diff-Geometry-Cartan}. In analogy to the `base definition of the infinitesimal Cartan geometry' ibid~\cite[Ch.5, \S 1]{Sharpe1997Diff-Geometry-Cartan}, it thus seems to us well-motivated to \ul{suggest the following (more restrictive) bundle construction globally}.

\begin{definition}\label{def:Klein-structure}
Let $P$ be the smooth manifold and the principal $H$-bundle $\pi:P\ra P/H$. We introduce the \textbf{Klein chart} as the local diffeomorphism map $\ka:U\ra \kappa(P)\subset G$ between $U\subset P$ and open sets in $G$, satisfying right-equivariance $\ka(pg)=\ka(p)g$, and restricting to the corresponding diffeomorphism $\bar{U}=\pi(U)\stackrel{\sim}{\ra}\ms{O}=\pi_G\circ\ka(P) \subset G/H$. The opposite parametrization map $\psi=\ka^{-1}: \ka(P)\ra U\subset P$ will be called the respective \textbf{Klein gauge}. Let us introduce the \textbf{Klein $G$ atlas} on $P$ as a collection of charts $\mathpzc{K}=\{U_\al,\ka_\al\}$, covering $P=\bigcup_\al U_\al$, such that the $G$-valued transition functions between overlapping charts 
\begin{equation}\label{eq:Klein-transition}
\ka_{\al\be} \ = \ \ka_\al^{\ph{1}}\ka_\be^{-1} : \, \ka_\be(U_\al\cap U_\be) \ \ra \ \ka_\al(U_\al\cap U_\be),
\end{equation}
satisfy the following co-cycle consistency conditions (direct consequence):
\begin{enumerate}[label={\upshape(\roman*)}, align=left, widest=iii]
\item $\ka_{\al\al}(p)=e$ for all $p\in U_\al$; \label{prop:Klein-transition-1}
\item $\ka_{\be\al}(p)=\ka_{\al\be}(p)^{-1}$ for all $p\in U_\al\cap U_\be$; \label{prop:Klein-transition-2}
\item $\ka_{\al\be}(p)\ka_{\be\ga}(p)\ka_{\ga\al}(p)=e$ for all $p\in U_\al\cap U_\be\cap U_\ga$. \label{prop:Klein-transition-3}
\end{enumerate}
As usual, two Klein atlases are considered equivalent, if their union is also the Klein atlas. Respectively, the \textbf{`locally Klein $G$ bundle'} is $P$ on which the \textbf{`Klein $G$ structure'} is specified as equivalence class of atlases.
\end{definition}

The transition functions describe the details of `gluing' the local patches $\ka_\al(U_\al)$ of canonical bundle of arbitrary homogeneous Klein geometry with specified principal group action. Indeed, take the factorspace of $\bigcup_\al\ka_\al(U_\al)$ by the equivalence relation of~\eqref{eq:Klein-transition}. Essentially, we engraved the form of admissible transformations between different $\bar{\varphi}_\al(\bar{U}_\al)\sim\ms{O}\subset\mathbb{E}^m$ explicitly in the `Klein structure', and their smoothness should be apparent. This is very close to the notion of `affine manifold'~\cite[p. 224]{BishopGoldberg1980tensor-analysis}, and to Cartan's own description (according to~\cite{Marle2014fromCartan-toEhresmann}):
~\blockquote{. . . a manifold whose properties, in the neighborhood of each point, are those of an affine space, and on which there is a law for fitting together the neighborhoods of two infinitesimally nearby points: it means that if, on a neighborhood of each point, we have chosen Cartesian coordinates with that point as origin, we know the transformation formulae (of the same nature as those valid in an affine space) which allow to go from a reference frame to another reference frame with an infinitesimally nearby origin.}
However, we promoted the bundle-theoretic aspects to the forefront, in order to stay in line with the modern differential geometric approach to connections. 

The advantage brought with the new perspective is that now the right action of the principal group $G$ is defined by construction on the whole bundle space, commuting with left coordinate changes. This brings the possibility to extend the usual notion of gauge transformations to embrace all symmetries of $P$: both local Lorentz transformations/rotations (over a fixed point, or ``internal''), as well as the translations of the points themselves, corresponding to (``external'') diffeomorphisms of $\mc{M}\cong P/H$~\footnote{Infinitesimally, this was always available through the (non-canonical) isomorphism of the base-tangent and associated bundles $T\mc{M}=TP/H\approx P\times_H V$, $V\cong\mf{g}/\mf{h}$, provided by the Cartan connection form / or soldering linear isomorphism $\be_p:T_{\pi(p)}\mc{M}\stackrel{\sim}{\ra}V$ (cf.~\eqref{eq:soldering} further, and discussion in \cite{Catren2015Cartan-gauge-gravity}). We just formalized this in the global construction of the bundle space, in accord with our needs.}.

\begin{definition}\label{def:Klein-symmetry}
A (full) \textbf{gauge symmetry} of a locally Klein $G$ bundle $(P,\mathpzc{K})$ is an automorphism $f:P\ra P$ that preserves the Klein structure, such that $f(pg)=f(p)g$. We denote $\mc{G}(P,\mathpzc{K})$ the group of all gauge transformations.
\end{definition}

\begin{proposition}\label{prop:gauge-symmetry}
There is a natural bijective correspondence between the gauge group $\mc{G}(P,\mathpzc{K})$ and the space $C(P,G):=\{\tau:P\ra G|\tau(pg)=g^{-1}\tau(p)g\}$ of all $G$-valued equivariant functions, transforming in the (anti-)adjoint representation (now w.r.t. the full \emph{principal group} of the model Klein geometry). It is given by $f(p)=p\tau(p)$.
\end{proposition}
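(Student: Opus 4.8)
The plan is to reproduce the classical correspondence between vertical bundle automorphisms and equivariant group-valued functions, but now adapted to the \emph{enlarged} right action of the full principal group $G$ that Definition~\ref{def:Klein-structure} builds into the locally Klein bundle. Two structural facts will be used throughout. First, the right $G$ action on $P$ is \emph{free}: in any Klein chart $\ka(pg)=\ka(p)g$, so $pg=p$ forces $\ka(p)g=\ka(p)$ and hence $g=e$ by injectivity of $\ka$. Second, the transition functions $\ka_{\al\be}=\ka_\al\ka_\be^{-1}$ are left translations by a fixed element $t_{\al\be}\in G$; this is because $\ka_{\al\be}$ inherits right-equivariance from the charts, and any diffeomorphism $\Phi$ between open subsets of $G$ satisfying $\Phi(yg)=\Phi(y)g$ must have the form $\Phi(y)=ty$ with $t=\Phi(y_0)y_0^{-1}$ constant on each component.

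For the forward assignment $f\mapsto\tau$, I would fix a Klein chart $(U_\al,\ka_\al)$ containing $p$ and, after shrinking (using that $f$ preserves the structure) so that $f(p)$ lies in the same chart, \emph{define}
\[
\tau(p)\ :=\ \ka_\al(p)^{-1}\,\ka_\al(f(p))\ \in\ G.
\]
Right-equivariance of $\ka_\al$ gives $\ka_\al(p\,\tau(p))=\ka_\al(p)\tau(p)=\ka_\al(f(p))$, so by injectivity $f(p)=p\,\tau(p)$; this is the asserted form and also shows $f(p)$ lies on the $G$ orbit of $p$. Smoothness of $\tau$ is immediate from the formula. Chart-independence is exactly where the second fact enters: substituting $\ka_\al(p)=t_{\al\be}\ka_\be(p)$, the constant $t_{\al\be}$ cancels and $\ka_\al(p)^{-1}\ka_\al(f(p))=\ka_\be(p)^{-1}\ka_\be(f(p))$, so $\tau$ is globally well defined. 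The anti-adjoint equivariance is then a one-line computation: from $f(pg)=f(p)g$ and $f(q)=q\,\tau(q)$ one gets $pg\,\tau(pg)=p\,\tau(p)g$, and freeness yields $\tau(pg)=g^{-1}\tau(p)g$, so $\tau\in C(P,G)$.

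For the backward assignment $\tau\mapsto f$, I would set $f(p):=p\,\tau(p)$ and check it is a gauge symmetry. Equivariance is the computation $f(pg)=pg\,\tau(pg)=pg\,g^{-1}\tau(p)g=p\,\tau(p)g=f(p)g$. It is a diffeomorphism because $\sigma(p):=\tau(p)^{-1}$ again lies in $C(P,G)$ (indeed $\sigma(pg)=(g^{-1}\tau(p)g)^{-1}=g^{-1}\sigma(p)g$), and the induced map $p\mapsto p\,\sigma(p)$ is a two-sided inverse: both composites reduce to the identity using $\tau(p\,\tau(p)^{\pm1})=\tau(p)$, which follows from the equivariance relation with $g=\tau(p)^{\mp1}$. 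Finally $f$ preserves the Klein structure, since the pulled-back charts $\ka_\al\circ f$ are again right-equivariant local diffeomorphisms and the transition functions relating $\ka_\al\circ f$ to the $\ka_\be$ stay right-equivariant, hence remain left translations, so $\mathpzc{K}\cup(\mathpzc{K}\circ f)$ is a Klein atlas. The two assignments are mutually inverse by construction: freeness gives $\tau_{f_\tau}=\tau$, and the defining relation gives $f_{\tau_f}=f$, establishing the bijection.

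The one genuinely delicate point, which I would treat most carefully, is the interplay between the \emph{local} nature of the Klein charts and the \emph{global} $G$ action: one must ensure that $f(p)$ lands in a chart where $\ka(f(p))$ can be compared with $\ka(p)$ inside the same copy of $G$, so that $\tau(p)$ is an honest group element and not merely a chart-dependent datum. This is precisely what the hypothesis that $f$ \emph{preserves the Klein structure} supplies (it guarantees $\ka_\al\circ f$ is a compatible chart), and it explains why the statement requires structure preservation rather than bare $G$-equivariance. Everything remaining is the routine algebra of a free principal action.
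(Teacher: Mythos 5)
The algebraic skeleton of your proposal is exactly the paper's proof: the paper consists solely of the two equivariance computations (your forward and backward assignments) plus the composition law, with existence, uniqueness and smoothness of $\tau$, the diffeomorphism property of $f_\tau$, and preservation of the Klein structure all left implicit. Your extra verifications --- that $p\mapsto p\tau(p)^{-1}$ is a two-sided inverse and that $\ka_\al\circ f_\tau$ is again a compatible Klein chart --- are correct and fill real gaps in that terse argument.

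The trouble sits exactly at the point you yourself flag as delicate, and your resolution of it fails. You define $\tau(p):=\ka_\al(p)^{-1}\ka_\al(f(p))$ after ``shrinking'' the chart so that $f(p)$ lies in it; but shrinking a neighbourhood of $p$ can only lose points, and nothing in Def.~\ref{def:Klein-symmetry} places $f(p)$ in the domain of any chart containing $p$: a gauge symmetry of a locally Klein bundle may move $p$ arbitrarily far (that is precisely why translations are admitted), and compatibility of $\ka_\al\circ f$ with the atlas constrains transition functions on overlaps, not the location of $f(p)$. What existence of $\tau(p)$ really needs is that $f(p)$ lie on the $G$-orbit of $p$. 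Orbits are open (each contains a chart neighbourhood of each of its points, since $g\mapsto pg$ is a local diffeomorphism onto $\ka^{-1}(\ka(p)V)$ for a small neighbourhood $V\ni e$), hence also closed; so existence holds precisely when $P$ is connected, or when $f$ is assumed orbit-preserving --- a hypothesis neither you nor the paper states, and without which the claim is false: on $P=G\sqcup G$ the swap is an equivariant, structure-preserving automorphism admitting no $\tau$ at all. Once $\tau(p_0)=t_0$ is known to exist, smoothness is recovered not through a common chart but by comparing $f$ with right translation by $t_0$: in a chart $\ka_\be$ around $f(p_0)$ set $\si(p)=\ka_\be(pt_0)^{-1}\ka_\be(f(p))$; then $\si$ is smooth, $\si(p_0)=e$, and freeness gives $\tau(p)=t_0\,\si(p)$.

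Your freeness claim also resolves, silently, an ambiguity that the paper leaves open, and the truth of the proposition depends on how it is resolved. You read $\ka(pg)=\ka(p)g$ as an identity valid whenever $pg$ lies in the chart domain; under that reading freeness follows as you argue, but then the quotients $\Ga\backslash G$ by a nontrivial discrete period group --- which the paper explicitly counts among locally Klein bundles right after Theorem~\ref{th:G-characterization} --- are excluded, since their natural charts satisfy only the weaker, directional equivariance (if $\ka(p)g\in\ka(U)$, then $pg\in U$ and $\ka(pg)=\ka(p)g$). Under that weaker reading the $G$-action is not free (the stabilizer of $\Ga y$ is $y^{-1}\Ga y$), and the stated bijection genuinely fails: for $G=\mathbb{R}$, $\Ga=\mathbb{Z}$, $P=\mathbb{R}/\mathbb{Z}$, the constant functions $\tau$ and $\tau+1$ in $C(P,G)$ induce the same gauge symmetry. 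So freeness (and orbit-transitivity) should be stated as hypotheses, or the strong reading of chart equivariance adopted explicitly, rather than derived as structural facts; with those in place, your argument, amended as above, does establish the bijection.
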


\begin{proof}
The usual argument extends straightforwardly:
\begin{itemize}
\item[($\Leftarrow$)] If $\tau\in C(P,G)$, then it follows that so-defined $f(pg)=pg\tau(pg)=pgg^{-1}\tau(p)g=p\tau(p)g=f(p)g$ is in $\mc{G}(P,\mathpzc{K})$.
\item[($\Rightarrow$)] If $f\in\mc{G}(P,\mathpzc{K})$, then conversely $pg\tau(pg)=f(pg)=f(p)g=p\tau(p)g$ for any $p\in P$, hence $\tau(pg)=g^{-1}\tau(p)g$ is in $C(P,G)$.
\item[(group)] The homomorphism property is obvious $(f\circ f')(p)=p\tau'(p)\tau(p)$, for two $f,f'\in\mc{G}(P,\mathpzc{K})$.
\end{itemize}
(Compare with Def.~\ref{def:gauge-transform} of the gauge group in Yang-Mills theory, given in Appendix.) 
\end{proof}

\begin{definition}\label{def:gauge-algebra}
We call the \textbf{gauge algebra} $\ms{G}(P,\mathpzc{K})\cong C(P,\mf{g})$ the space of Lie algebra valued functions $\zeta:P\ra\mf{g}$, such that $\zeta(pg)=\mr{Ad}(g^{-1})\zeta(p)$. The multiplication is defined by $[\zeta,\zeta'](p)=[\zeta(p),\zeta'(p)]$, s.t. the map $[\zeta,\zeta'](pg)=[\zeta(pg),\zeta'(pg)]=[\mr{Ad}(g^{-1})\zeta(p),\mr{Ad}(g^{-1})\zeta'(p)]=\mr{Ad}(g^{-1})[\zeta(p),\zeta'(p)]=\mr{Ad}(g^{-1})[\zeta,\zeta'](p)$ is again in $C(P,\mf{g})$, consequently.
\end{definition}

\begin{proposition}\label{prop:symmetry-generator}
There is a map $\mr{Exp}:C(P,\mf{g})\ra C(P,G)$ defined by $\mr{Exp}(\zeta)(p)=\exp(\zeta(p))$, such that $s\ra\mr{Exp}(s\,\zeta)$ is a one-parameter subgroup of $C(P,G)$ with 
\begin{equation}
\frac{d}{ds}\mr{Exp}(s\,\zeta)(p)\bigg|_{s=0} \ = \ \zeta(p).
\end{equation}
Moreover, for two $f,f'\in\mc{G}(P,\mathpzc{K})$ then
\begin{equation}
[\zeta,\zeta'](p) \ = \ \frac{\pa^2}{\pa s\, \pa r}\mr{Exp}(s\,\zeta)_p\mr{Exp}(r\,\zeta')_p\mr{Exp}(s\,\zeta)_p^{-1}\bigg|_{s,r=0}.
\end{equation}
Followed by the isomorphism of Prop.~\ref{prop:gauge-symmetry}, one obtains the map $\exp:\ms{G}(P,\mathpzc{K})\ra\mc{G}(P,\mathpzc{K})$, generating gauge transformations as $\exp(\zeta)(p)=p\exp(\zeta(p))$.
\end{proposition}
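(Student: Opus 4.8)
The plan is to reduce every assertion to a pointwise statement in the finite-dimensional Lie group $G$, since all the operations on $C(P,G)$ and $C(P,\mf{g})$ — the map $\mr{Exp}$, pointwise multiplication, inversion, and the bracket of Def.~\ref{def:gauge-algebra} — are defined fiberwise over $P$. The only genuinely Lie-theoretic inputs I will need are: (i) the equivariance of the exponential under conjugation, $\exp(\mr{Ad}(g^{-1})\mb{A})=g^{-1}\exp(\mb{A})g$; (ii) the one-parameter subgroup law $\exp((s+t)\mb{A})=\exp(s\mb{A})\exp(t\mb{A})$ together with $\frac{d}{ds}\exp(s\mb{A})\big|_{s=0}=\mb{A}$; and (iii) the identity $\frac{d}{ds}\mr{Ad}(\exp(s\mb{A}))\big|_{s=0}=\mr{ad}(\mb{A})=[\mb{A},\,\cdot\,]$ recorded in Def.~\ref{def:dot-wedge}. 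Smoothness in $p$ follows throughout from smoothness of $\zeta$ and of $\exp$, so I will not belabour it.

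First I would check that $\mr{Exp}$ really lands in $C(P,G)$. Taking $\zeta\in C(P,\mf{g})$, so that $\zeta(pg)=\mr{Ad}(g^{-1})\zeta(p)$ by Def.~\ref{def:gauge-algebra}, input (i) gives $\mr{Exp}(\zeta)(pg)=\exp(\mr{Ad}(g^{-1})\zeta(p))=g^{-1}\exp(\zeta(p))g=g^{-1}\mr{Exp}(\zeta)(p)g$, which is exactly the anti-adjoint equivariance defining $C(P,G)$ in Prop.~\ref{prop:gauge-symmetry}. Since $s\zeta$ again lies in the linear space $C(P,\mf{g})$, the one-parameter subgroup property is immediate pointwise: $\mr{Exp}((s+t)\zeta)(p)=\exp((s+t)\zeta(p))=\exp(s\zeta(p))\exp(t\zeta(p))$, the two factors commuting as elements along the same one-parameter subgroup of $G$; this is compatible with whichever ordering the product on $C(P,G)$ carries (cf.~the reversed composition $p\mapsto p\tau'(p)\tau(p)$ appearing in the proof of Prop.~\ref{prop:gauge-symmetry}). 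Differentiating at $s=0$ and using (ii) fiberwise yields $\frac{d}{ds}\mr{Exp}(s\zeta)(p)\big|_{s=0}=\zeta(p)$, identifying $\zeta$ as the infinitesimal generator.

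Next I would establish the commutator formula. Writing $\mr{Exp}(s\zeta)_p=\exp(s\zeta(p))$ and $\mr{Exp}(s\zeta)_p^{-1}=\exp(-s\zeta(p))$, the bracketed expression is the conjugation $\exp(s\zeta(p))\exp(r\zeta'(p))\exp(-s\zeta(p))$ inside $G$. Differentiating first in $r$ at $r=0$ produces $\mr{Ad}(\exp(s\zeta(p)))\zeta'(p)$ by the very definition of the adjoint action; differentiating this in $s$ at $s=0$ and invoking (iii) gives $\mr{ad}(\zeta(p))\zeta'(p)=[\zeta(p),\zeta'(p)]$. By the pointwise definition of the gauge-algebra bracket in Def.~\ref{def:gauge-algebra} this equals $[\zeta,\zeta'](p)$, as claimed. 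Finally, composing $\mr{Exp}$ with the correspondence $\tau\mapsto f$, $f(p)=p\tau(p)$, of Prop.~\ref{prop:gauge-symmetry} sends $\zeta$ to the gauge transformation $\exp(\zeta)(p)=p\,\mr{Exp}(\zeta)(p)=p\exp(\zeta(p))$, which is precisely the stated map $\exp:\ms{G}(P,\mathpzc{K})\ra\mc{G}(P,\mathpzc{K})$.

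The individual steps are routine, so the care required here is conceptual rather than computational: one must verify that passing to the fiberwise operations is legitimate, i.e.\ that $\mr{Exp}(\zeta)$, the pointwise product and the pointwise inverse are again smooth elements of the relevant function spaces satisfying the equivariance constraints. The one place I would track most carefully is the commutator identity, where the order of the two differentiations and the sign in $\mr{Exp}(s\zeta)_p^{-1}=\exp(-s\zeta(p))$ must be followed so that the mixed partial reproduces $\mr{ad}$ and not its negative; fixing the convention (iii) at the outset removes this ambiguity, after which everything is a transcription of standard facts about the exponential map of $G$ into the function-valued setting.
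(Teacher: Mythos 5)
Your proposal is correct and follows essentially the same route as the paper's own (very terse) proof: the paper verifies the equivariance $\mr{Exp}(\zeta)(pg)=\exp(\mr{Ad}_{g^{-1}}\zeta(p))=\mr{Ad}(g^{-1})\mr{Exp}(\zeta)(p)$ exactly as you do, and then dismisses the remaining claims with ``similarly for the other statements,'' citing the standard gauge-theory treatment in Bleecker \S 3.2. You have simply written out in full the fiberwise reductions (one-parameter subgroup law, derivative at $s=0$, and the mixed-partial computation of the conjugation yielding $\mr{ad}(\zeta(p))\zeta'(p)$) that the paper leaves implicit, with the correct order of differentiation and signs.
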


\begin{proof} $\mr{Exp}(\zeta)(pg)=\exp(\zeta(pg))=\exp(\mr{Ad}_{g^{-1}}\zeta(p))=\mr{Ad}(g^{-1})\exp(\zeta(p))=\mr{Ad}(g^{-1})\mr{Exp}(\zeta)(p)$, and similarly for the other statements. Note that we are simply reproducing here the results from the standard gauge theory in the context of the wider symmetry of a Klein bundle (since these may be not so widely known; cf.~\cite[\S 3.2]{Bleecker1981gauge-variational-principles}).
\end{proof}

We see no apparent contradictions in the above construction. Moreover, it is completely general w.r.t. the choice of the model homogeneous Klein geometry, and fully in line with the Cartan's twofold generalization of Euclidean geometry~\footnote{Quoting Cartan's own words on the subject~\cite{Cartan1924projective-connections} (in the approximate translation of~\cite{Marle2014fromCartan-toEhresmann}):

\blockquote{The fundamental idea stems from the notion of \emph{parallelism} introduced by M.~T.~Levi-Civita in such a fruitful way. The many authors who generalized the theory of metric spaces all started from the fundamental idea of M.~Levi-Civita, but, seemingly, without freeing it from the idea of \emph{vector}. That does not matter as long as one deals with manifolds with affine connections . . . But that seemed to forbid any hope to build an \emph{autonomous} theory of manifolds with conformal or projective connections. In fact, the main thing in M.~Levi-Civita's idea is that it allows to glue together two small, infinitesimally nearby pieces of a manifold, and it is that idea of gluing which is most fruitful.}} -- the latter is neatly captured in a diagram adapted from R.~W.~Sharpe~\cite{Sharpe1997Diff-Geometry-Cartan}, and nicely visualized in a drawings due to D.~K.~Wise~\cite{Wise2010Cartan-geometry}:

\pagebreak

\begin{figure}[h]
\center{\includegraphics[width=0.4\linewidth]{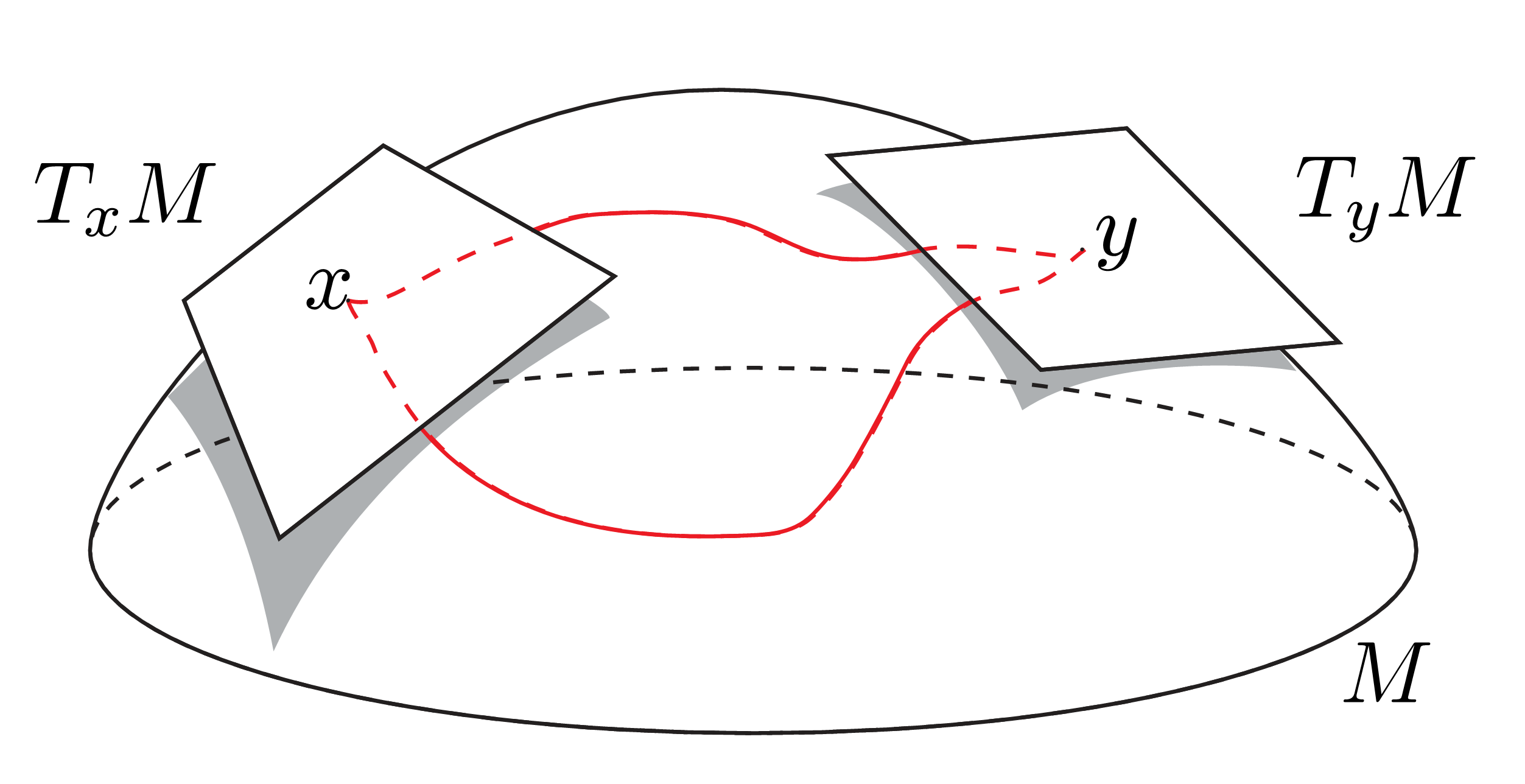}}
\end{figure}

\begin{tikzcd}[column sep = 25 ex, row sep=large]
\begin{tabular}{c}Euclidean\\ geometry\\  \end{tabular} \arrow{r}{allow\ non-homogeneity} \arrow{d}[swap]{\begin{tabular}{c}$generic$\\ $homogeneous\ spaces$\\  \end{tabular}}
& \begin{tabular}{c}Riemannian \\ geometry\\  \end{tabular} \arrow{d}{\begin{tabular}{c}$generic\ osculating$\\ $homogeneous\ spaces$\\  \end{tabular}}\\
\begin{tabular}{c}Klein\\ geometry\\  \end{tabular} \arrow{r}{allow\ non-homogeneity} & \begin{tabular}{c}Cartan\\ geometry\\  \end{tabular}
\end{tikzcd}

\begin{figure}[h]
\center{\includegraphics[width=0.4\linewidth]{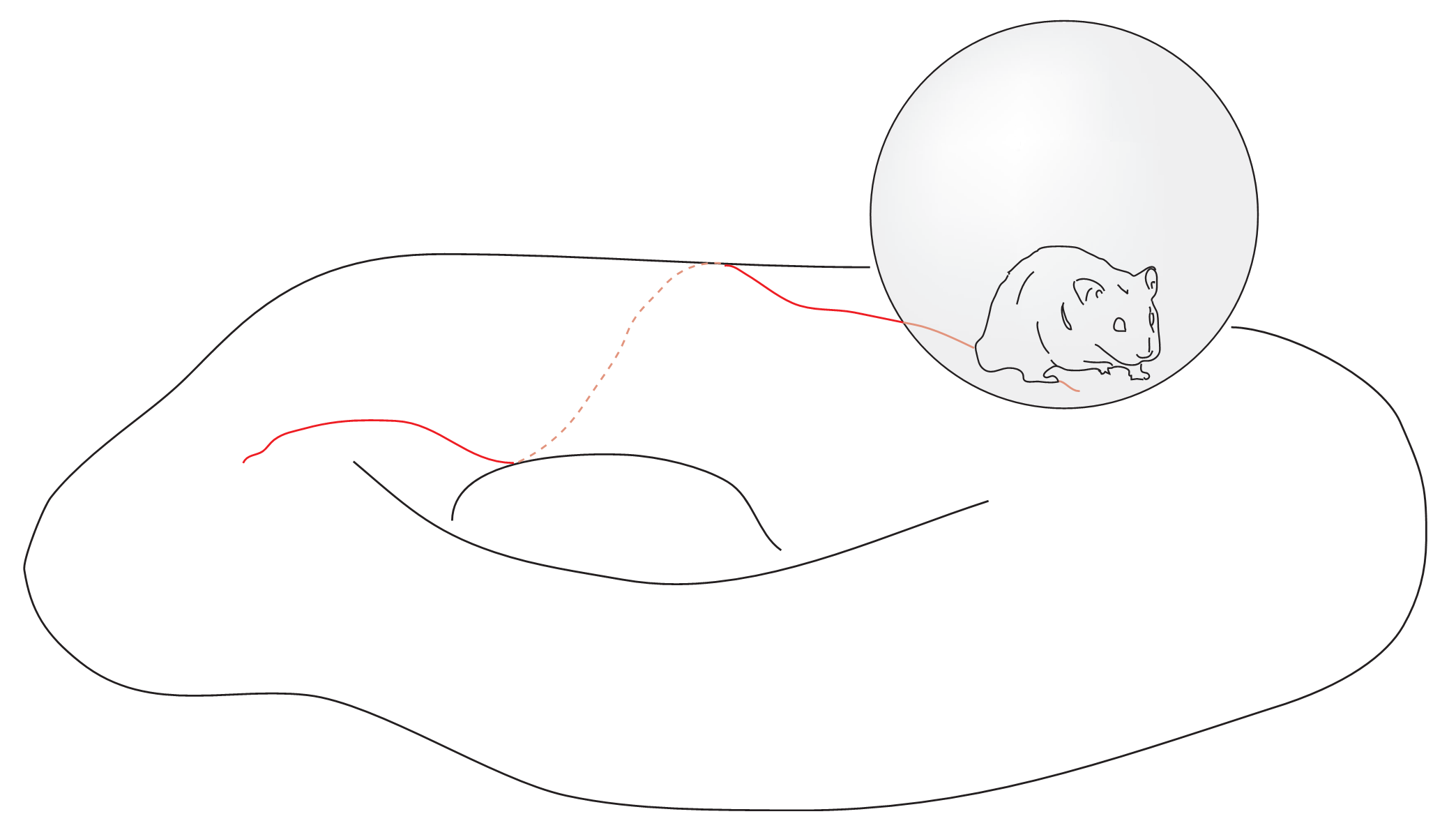}}
\caption{The general idea of Cartan geometry, and its handy visualization.}
\label{fig:hamster-ball}
\end{figure}

\newpage

\subsection{Cartan's affine connections and absolute parallelism}
\label{sec:Cartan-connections}

In order to study the general case of arbitrary geometry distorted by local perturbations (excitations of the field), an additional structure of non-trivial connection has to be specified (determining transition functions that are not constant). But before turning to the general (non-trivial) Cartan connections on $(P,\mathpzc{K})$, one should first figure out the trivial case. Suppose $\mathbb{M}^m\cong G/H$ is an affine (flat) space/time, and consider the principal homogeneous space $P$ of its (affine) frames from Sec.~\ref{subsec:group-theory}, that we regard as the locally Klein bundle, introduced above. Formally, we will need the following

\begin{definition}[\text{Parallelization~\cite[p.160]{BishopGoldberg1980tensor-analysis}}]
An $n$-dim manifold $\mc{N}$ is \textbf{parallelizable} whenever it admits a set $\{X_1,...,X_n\}$ of vector fields, globally defined on \emph{all} of $\mc{N}$, s.t. $\{X_1(p),...,X_n(p)\}$ forms a basis of $T_p\mc{N}$ for each  $p\in\mc{N}$. A manifold is parallelizable iff its tangent bundle is diffeomorphic to a trivial product $(\al,\be):T\mc{N}\stackrel{\sim}{\ra}\mc{N}\times W$, where $\al:T\mc{N}\ra\mc{N}$ is the standard projection, and $\be:T\mc{N}\ra W$ is the \textbf{trivialization} map to some vector space $W$ that induces a linear isomorphism upon restriction to each fiber $\al^{-1}(p)=T_p\mc{N}$. 
\end{definition}

An affine space is naturally parallelizable $T\mathbb{M}\cong\mathbb{M}\times V$, as well as every Lie group, in general. The trivialization map on the tautological PFB $\pi_G:G\ra G/H$ is canonically induced by the (left) group action on itself:
\begin{equation}\label{eq:Maurer-Cartan}
L_{g^{-1}\ast}:\, T_gG \ \ra \ T_eG \ \equiv \ \mf{g},
\end{equation}
defining what is called the \textbf{Maurer-Cartan form} $\om_G=L_{g^{-1}\ast}$ on $G$. It establishes the Lie algebra isomorphism $[X^\mb{A},X^\mb{B}]^{\ph{e}}_e=[\mb{A},\mb{B}]$ between $\mf{g}$ and the subspace of left-invariant vector fields on $G$, so that $L_{g\ast}X_a=X_{ga}\Leftrightarrow \om_G(X^\mb{A})=\mb{A}=\mr{const}$. (The latter may be used as the definition of the bracket on $\mf{g}$, actually.) All left-invariant vector fields are \emph{complete}~\cite[Cor.3.2.12]{Sharpe1997Diff-Geometry-Cartan},~\cite[Def.0.3.4]{Bleecker1981gauge-variational-principles} s.t. every \emph{integral curve} $\ga:I\ra G$, everywhere tangent to $X_{\ga(s)}=\dot{\ga}(s)\equiv\ga_\ast(\pa_s)$, is maximally extendable to an infinite domain $(-\infty,+\infty)$. Hence they are globally-defined, being obtained via push-forward $X^{\mb{A}}_g=L_{g\ast}\mb{A}$ of the Lie algebra basis at the identity $e\in G$ to the whole space of a group.

The intrinsic definition~\eqref{eq:Maurer-Cartan} of M.-C. may appear too much abstract, though. The extrinsic construction could be given for the linear representation of $G\hookrightarrow \mr{GL}(n,\mathbb{R})$ in a group of matrices, whose natural vector space structure can be used to calculate: $\frac{1}{s}\{L_g(a+sv)-L_g(a)\}=\frac{1}{s}L_g(sv)=gv \ \Rightarrow \ L_{g\ast}(a,v)=(ga,gv)\in TG\cong\mr{GL}(n,\mathbb{R})\times\mr{Mat}(n,\mathbb{R})$~\cite[Ch.3,~\S 1]{Sharpe1997Diff-Geometry-Cartan}. It is customary to write $\om_G=g^{-1}dg$ as the left logarithmic derivative of the identity map~$g$, representing ``the general point'' on $G$. Since $dg$ is then the identity on the tangent bundle, one has $\om_G(g,v)=g^{-1}dg(g,v)=(e,g^{-1}v)\in T_eG$. Take the abelian $G=\mathbb{R}$, for instance, and recover the standard differential. The M.-C. is also a ``grandfather'' of all left-invariant forms on $G$, such as the Haar measure $\wedge^r\om_G\in\La^r(G,\wedge^r\mf{g}\cong\mathbb{R})$, where $r=\dim(G)$. One~might advocate the omnipresence of M.-C. in the analysis of infinitesimals and groups of symmetry.

One can also characterize the Maurer-Cartan form in terms of its components. Let $\mb{E}_\al(g)=L_{g\ast} \mb{E}_{\al}$, with $\mb{E}_\al\equiv\mb{E}_\al(e)$, be a natural basis of sections $\Ga(TG)$, and $\om^\al\equiv\hat{\mb{E}}^\al$ -- the dual basis of forms: $\om^\al(\mb{E}_\be)=\de^\al_\be$. Since $\om_G(\mb{E}_\al) = \mb{E}_\al(e)$, one can write $\om_G = \sum_\al \mb{E}_\al\otimes \om^\al$. Sometimes the infinitesimal elements are also referred to as `frame' and `co-frame'. In our case, this is justified by the identification $V\cong\mf{g}/\mf{h}$ at the identity, so that the basis of the Lie algebra can be chosen as $\{\mb{E}_\al\}=\{\mb{e}_i^{\ph{i}}, \mb{e}_j^{\ph{i}}\otimes\hat{\mb{e}}^k\}$. Using the linear representation~\eqref{eq:affine-linearized} of the affine group, one writes:
\begin{align}\label{eq:agebra-split}
(\om_G)_{(\mb{a},A)}^{\ph{i}} \ & = \ 
\begin{pmatrix}
0 & 0 \\
-A^{-1}d\mb{a} & A^{-1}dA 
\end{pmatrix}
\ = \ (\mb{e}_i^{\ph{i}},\mb{e}_j^{\ph{i}}\otimes\,\hat{\mb{e}}^k)
\begin{pmatrix}
0 & 0 \\
(\om_V)^i & (\om_H)^j_{\ph{j}k} 
\end{pmatrix},  \nonumber\\ 
(\om_V)^i \ & = \ -(A^{-1})^i_{\ph{i}l} \, da^l, \qquad (\om_H)^j_{\ph{j}k}  \ = \ (A^{-1})^j_{\ph{i}l} \, dA^l_{\ph{i}k}.
\end{align}
Via natural isomorphism $T_{(\mb{a},A)}(G/H\times H)\cong T_{\mb{a}}(G/H)\times T_A (H)$, the parallelization is defined separately on each component (no-mixing). 

We mention two possible ways to read the above formula~\eqref{eq:agebra-split} in the context of the tautological bundle $P\cong G$ of the homogeneous geometry $\pi_G:G\ra G/H$. First, it could be viewed as the local expression on $G$, obtained via pull-back $\psi^\ast\om_G$ by the Klein gauge $\psi:G\ra P\cong G$ of the canonical Maurer-Cartan parallelization on $G$. (Compare this with the gauge picture of the Klein geometry in~\cite[Ch.4, \S 7]{Sharpe1997Diff-Geometry-Cartan}). Since the identity of the group is the stabilizer of the standard affine basis, one obtains simultaneously $\psi:(o,e)(\mb{0},1)\mapsto (o,e)(\mb{a},A) = (o',e')$ the position and orientation of the frame (as referred to the standard one at identity; roughly, a `deformation' of the neighbourhood). The image of derivative map $\psi_{\ast g_0}(X)=X_{\psi(g_0)}\in T_{p=g}P$ can be compared with left-invariant (constant) vector fields at $g=\psi(g_0)$. By applying $\om_G$, one effectively ``forgets the images'' and keeps only the linear part of the tangent map, essentially by ``rolling back'' to the standard frame at the identity. (This is the definition of the `Darboux derivative', see further.)


Conversely, by the use of Klein charts $(U,\ka)$ of some $(P,\mathpzc{K})$, one can induce infinitesimally a canonical parallelization on it $\om_\ka:TU\ra\mf{g}$, by pulling back the Maurer-Cartan form $\om_\ka:=\ka^\ast\om_G\equiv\om_G\circ\ka_\ast$. Since the latter carries all the basic information about the structure of $G$, this alone could be enough to characterize the neighbourhood $U$ as virtually indistinguishable from the principal homogeneous space. In other words, so-parallelizable manifold $P$ is locally identical to the group $G$ (up to some covering), but without a fixed choice of unit element. In a certain sense, one directly adopts the left-invariant vector fields from $G$.
 

Let us call the \textbf{fundamental vector field} on $(P,\mathpzc{K})$:
\begin{equation}\label{eq:fundamental-Klein-vector}
X_p^{\mb{A}} \ := \ \frac{d}{ds}(p\exp s\mb{A})\bigg|_{s=0}  \in T_pP  \qquad \mb{A}\in\mf{g}
\end{equation}
-- any vector field, generated by the right $G$ action. It is then easy to show that this is $\om_\ka$-constant:
\begin{align}\label{eq:om-const}
\om_\ka (X_p^{\mb{A}}) \ & = \ \om_G\circ\frac{d}{ds}\ka(p e^{s\mb{A}})\bigg|_{s=0} \ = \ \om_G\circ\frac{d}{ds}\ka(p) e^{s\mb{A}}\bigg|_{s=0} \nonumber\\
& = \ \om_G(X_{\ka(p)}^{\mb{A}}) \ = \ L_{\ka(p)^{-1}\ast}X_{\ka(p)}^{\mb{A}} \ = \ \mb{A} \ = \ \mr{const}
\end{align}
The behaviour of $\om_\ka$ w.r.t. right $G$ action is also standard and non-problematic -- we can prove it for fundamental fields:
\begin{align}\label{eq:om-pullback}
(R_g^\ast\om_\ka)(X_p^{\mb{A}}) \ & = \ \om_\ka (R_{g\ast}X_p^{\mb{A}}) \ = \ \om_\ka\left(\frac{d}{ds}(p g g^{-1}e^{s\mb{A}}g)\bigg|_{s=0}\right) \ = \ \om_\ka\left(\frac{d}{ds}[p g \exp(s \mr{Ad}_{g^{-1}}\mb{A})]\bigg|_{s=0}\right) \nonumber\\
& = \ \om_\ka \left(X_{pg}^{\mr{Ad}_{g^{-1}}\mb{A}}\right) \ = \ \mr{Ad}(g^{-1})\mb{A} \ = \ (\mr{Ad}(g^{-1})\om_\ka) (X_p^{\mb{A}}),
\end{align}
and then extend by linearity, since $X_p^{\mb{A}}$ form the basis of $T_pP$. Or, first pass to $G$ via local chart $\ka_\ast X_p=Y_{\ka(p)}\in T_{\ka(p)}G$, and then apply M.-C.: $(R_g^\ast\om_\ka)X_p=\om_G\circ\ka_\ast(R_{g\ast} X)_{pg}=\om_G(Y_{\ka(p)g})= L_{(\ka(p)g)^{-1}\ast}\circ R_{g\ast}(Y_{\ka(p)})= L_{g^{-1}\ast} R_{g\ast}(L_{\ka(p)^{-1}\ast}Y_{\ka(p)})=\mr{Ad}(g^{-1})\om_\ka(X_p)$, using commutation of left and right action.

Somewhat ahead the physical motivation, the above properties lead us to propose the following
\begin{definition}\label{def:Cartan-connection}
In general, we will say that the Lie algebra valued 1-form $\varpi: TP\ra\mf{g}$ on $(P,\mathpzc{K})$ determines a \textbf{Cartan $G$ connection} if:
\begin{enumerate}[label={\upshape(\roman*)}, align=left, widest=iii]
\item Upon restriction to each local chart, the trivializaion $\varpi:TU\ra\mf{g}$ determines the infinitesimal parallelization on $U$ (to the 1st order); \label{def:Cartan-1}
\item The linear isomorphism $\varpi_p:T_pP\stackrel{\sim}{\ra}\mf{g}$ over each $p\in P$ restricts to be point-wise Maurer-Cartan $\varpi(X^{\mb{A}}) =\mb{A}$ for all constant $\mb{A}\in\mf{g}$; \label{def:Cartan-2}
\item $(R_g^\ast)\varpi=\mr{Ad}(g^{-1})\varpi$ for all $g\in G$ -- $\varpi$ transforms in the (anti-)adjoint representation. \label{def:Cartan-3}
\end{enumerate} 
(Note: the provided notion of connection is \ul{not standard}. It corresponds closest to the presentation in~\cite[Def.5.3.1]{Sharpe1997Diff-Geometry-Cartan}. Although, being formulated on the Klein bundle, it enjoys wider use of symmetry w.r.t. full principal gauge group $G$. From the usual Ehresmann's vertical parallelism this differs in that $\ker\varpi=0$, however. [We comment more on their relation in the following Sec.] In a certain sense, it is combination of the two, that we hope is closer to the Cartan's original notion.) 
\end{definition}

Let us describe the behaviour of the connection w.r.t. gauge transformations.

\begin{proposition}\label{prop:connection-transform}
Let $f\in\mc{G}(P,\mathpzc{K})$ and $\tau\in C(P,G)$ be as in Prop.~\ref{prop:gauge-symmetry}. Then $f^\ast\varpi=\tau^\ast\om_G+\mr{Ad}(\tau^{-1})\varpi$ is also a connection, satisfying properties~\ref{def:Cartan-2} and~\ref{def:Cartan-3}, the first being evident. (We refer to~\cite[\S 3.2]{Bleecker1981gauge-variational-principles} for the proof in the ordinary gauge theory, which is also applicable in our case.)
\end{proposition}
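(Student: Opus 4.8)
The plan is to prove the transformation law pointwise, by evaluating $(f^\ast\varpi)_p(X)=\varpi_{f(p)}(f_{\ast p}X)$ for an arbitrary $X\in T_pP$, and then to check separately that the resulting $\mf{g}$-valued $1$-form again obeys the axioms of Definition~\ref{def:Cartan-connection}. Since $f(p)=p\,\tau(p)$ is the composite of the right action $\Phi:P\times G\ra P$, $(q,g)\mapsto qg$, with the map $p\mapsto(p,\tau(p))$, the chain rule gives $f_{\ast p}X=\Phi_{\ast(p,\tau(p))}(X,\tau_{\ast p}X)$. The key step is to split this by linearity into the two contributions from the two slots of $\Phi$: freezing the group element and varying the point yields $R_{\tau(p)\ast}X\in T_{f(p)}P$, while freezing the point and varying the group element along $Y:=\tau_{\ast p}X$ yields the fundamental vector field~\eqref{eq:fundamental-Klein-vector}, because $\tfrac{d}{ds}\big((p\tau(p))\exp(s\mb{B})\big)\big|_{s=0}=X^{\mb{B}}_{f(p)}$ with $\mb{B}=\om_G(Y)=(\tau^\ast\om_G)_p(X)$.

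Applying $\varpi$ at $f(p)=p\tau(p)$ to each piece then invokes exactly the two defining properties of the Cartan connection. On the first term the equivariance property~\ref{def:Cartan-3} gives $\varpi_{f(p)}(R_{\tau(p)\ast}X)=\mr{Ad}(\tau(p)^{-1})\varpi_p(X)$; on the second the Maurer-Cartan normalization~\ref{def:Cartan-2} gives $\varpi_{f(p)}\big(X^{\mb{B}}_{f(p)}\big)=\mb{B}=(\tau^\ast\om_G)_p(X)$. Adding the two recovers $(f^\ast\varpi)_p(X)=(\tau^\ast\om_G)_p(X)+\mr{Ad}(\tau(p)^{-1})\varpi_p(X)$, which is the claimed identity.

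It remains to confirm that $f^\ast\varpi$ is a connection, and here I would lean on the equivariance of the gauge automorphism $f$ (Definition~\ref{def:Klein-symmetry}) rather than on the explicit formula. Property~\ref{def:Cartan-1} is immediate: $f$ is a diffeomorphism and each $\varpi_{f(p)}$ is a linear isomorphism on the fiber, so the composite $(f^\ast\varpi)_p=\varpi_{f(p)}\circ f_{\ast p}$ is a pointwise isomorphism. For~\ref{def:Cartan-2}, differentiating $f(p\exp s\mb{A})=f(p)\exp s\mb{A}$ shows $f_{\ast}X^{\mb{A}}_p=X^{\mb{A}}_{f(p)}$, so $f$ maps fundamental fields to fundamental fields and hence $(f^\ast\varpi)(X^{\mb{A}})=\varpi(X^{\mb{A}})=\mb{A}$. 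The same equivariance $f\circ R_g=R_g\circ f$ settles~\ref{def:Cartan-3} at once, via $R_g^\ast(f^\ast\varpi)=(f\circ R_g)^\ast\varpi=(R_g\circ f)^\ast\varpi=f^\ast(R_g^\ast\varpi)=f^\ast(\mr{Ad}(g^{-1})\varpi)=\mr{Ad}(g^{-1})(f^\ast\varpi)$, the last equality holding because $\mr{Ad}(g^{-1})$ is a fixed linear map on $\mf{g}$ and commutes with pullback.

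The only genuinely delicate point is the splitting of $f_{\ast}$ in the first paragraph together with the $\mr{Ad}$-bookkeeping. As a consistency check one can re-derive~\ref{def:Cartan-2} from the formula itself: the relation $\tau(p\exp s\mb{A})=\exp(-s\mb{A})\,\tau(p)\,\exp(s\mb{A})$ coming from $\tau\in C(P,G)$ gives $(\tau^\ast\om_G)(X^{\mb{A}})=\mb{A}-\mr{Ad}(\tau^{-1})\mb{A}$, which conspires with the $\mr{Ad}(\tau^{-1})\mb{A}$ from the second summand to restore $\mb{A}$. Everything else is routine, and for the parallel computation in the ordinary gauge-theoretic context one may simply invoke~\cite[\S 3.2]{Bleecker1981gauge-variational-principles}, as indicated in the statement.
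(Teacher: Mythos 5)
Your proof is correct and is essentially the argument the paper delegates to Bleecker~\S 3.2 rather than writing out: the chain-rule splitting of $f_{\ast p}$ into $R_{\tau(p)\ast}X$ plus the fundamental vector $X^{\mb{B}}_{f(p)}$ with $\mb{B}=(\tau^\ast\om_G)_p(X)$, followed by the two defining axioms of the Cartan connection, is exactly the standard gauge-theoretic computation adapted to the Klein-bundle setting. Your verification that $f^\ast\varpi$ again satisfies the Maurer--Cartan and $\mr{Ad}$-equivariance properties (via $f\circ R_g=R_g\circ f$ and the relation $\tau(p\exp s\mb{A})=\exp(-s\mb{A})\tau(p)\exp(s\mb{A})$) is likewise sound, and supplies the details the paper asserts as evident or leaves to the reference.
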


For the general (non-trivial) Cartan connection $\varpi$ on the Poincar\'{e}/affine $G$ bundle, let us denote $\varpi=\theta+\om$ its components in the vector space decomposition $\mf{g}\cong\mf{p}\oplus\mf{h}$, $\mf{p}\cap\mf{h}=\emptyset$, as explicated in~\eqref{eq:agebra-split}. Such a splitting of the Lie algebra is called \emph{reductive}, in the sense that $\mf{p}=\mf{g}/\mf{h}$ is an $\mr{Ad}(H)$ (sub-)module, transforming under $H$ in the vector $V$ representation $[\mf{h},\mf{p}]\subseteq \mf{p}$. (Other reductive Klein geometries with the same $H$ include anti/de Sitter space, with non-commutative `transvections' $[\mf{p},\mf{p}]\subseteq \pm R^{-2}\mf{h}$, where $R$ is the curvature radius.) The kernels of $\om$ and $\theta$ define complementary distributions of \textbf{horizontal} and \textbf{vertical} vectors in the tangent bundle:
\begin{equation}\label{eq:vertical-horizontal}
T_pP \ = \ H_pP\oplus V_pP, \qquad HP \ := \ \ker\om , \qquad VP \ := \ \ker\theta.
\end{equation}
(Since horizontal and vertical signs always precede the bundle $P$, there should arise no confusion with the homogeneous $H$ and translational $V$ parts of the affine group. Oddly enough though, the mnemonic rule applies, associating each distribution with the kernel of corresponding parts of $\varpi$.)

The `homogeneous' $\om$ part of the affine connection constitutes the familiar \textbf{Ehresmann connection} in the principal $H$ bundle. Recall that its geometrical definition consists precisely in 1)~the smooth assignment of horizontal subspaces $H_pP$, complementary to $V_pP=\ker\pi_{\ast p}$, that 2)~satisfy invariance $R_{A\ast}H_pP=H_{pA}P$ w.r.t. $A\in H$. Equivalently, there is a 1-to-1 correspondence as above with the set of 1-forms $\om:TP\ra\mf{h}$, valued in the Lie algebra \emph{of the fiber}, such that: 1')~$\om(X_p^{\mb{A}})=\mb{A}\in\mf{h}$, and 2')~$R_A^\ast\om=\mr{Ad}(A^{-1})\om$ for all $A\in H$ (see App.~\ref{app-1}).

Usually, the horizontal sections are characterized indirectly through $X^H_p:=X_p-X_p^{\mb{A}}$ and the base directions. Namely, given a vector field $\bar{X}$ on $\mc{M}$, there is a unique `horizontal lift' $X^H$ in $P$, such that $\om(X^H)=0$ and $\pi_\ast(X_p^H)=\bar{X}_{\pi(p)}$ for all $p\in P$. Necessarily $R_{A\ast}X^H=X^H$ for all $A\in H$. In our case, horizontal vectors are also fundamental $X_p^{\mb{a}}$~\eqref{eq:fundamental-Klein-vector} -- being \emph{directly generated} by the action of translation (sub-)group $\mb{a}\in\mf{g}/\mf{h}$. This puts them on the same gauge theoretic footing as $X_p^{\mb{A}}$, with $\mb{A}\in\mf{h}$. 

Due to natural isomorphism $P/H\cong\mc{M}$ induced by the bundle projection $\pi$, there is a trivialization $\be_p:T_{\pi(p)}\mc{M}\stackrel{\sim}{\ra}\mf{g}/\mf{h}$, making the following diagram commute:
\begin{displaymath}
\begin{tikzcd}[column sep=small]
T_p(pH) \arrow[d] \arrow[rr, "\om"] & & \mf{h} \arrow[d] \\
T_p(P) \arrow[d, swap, "\pi_\ast"] \arrow[rr, "\varpi"] & & \mf{g} \arrow[d, "\pi_{G\ast}"] \\
T_{\pi(p)}(\mc{M})  \arrow[rr, "\be"] & & \mf{g}/\mf{h}
\end{tikzcd},
\end{displaymath}
such that $\be_{pA}=\mr{Ad}(A^{-1})\be_p$ (the horizontal arrows are linear isomorphisms, while vertical ones make up short exact sequences of Lie algebra homomorphisms, cf.~\cite[Th.5.3.15]{Sharpe1997Diff-Geometry-Cartan}). 

\subsection{On `soldering' and `attachment', comparison of approaches}
\label{sec:soldering}

Unlike Yang-Mills theory, the model spaces of GR have `external' geometric significance, corresponding to directions and velocities in the base. One should therefore be able to identify the tangent spaces of the ``real'' manifold $\mc{M}$ and that of some model homogeneous (flat) geometry $\mathbb{M}$ -- ``gluing'' them to the first (linear) order of approximation, so to say. The horizontal $V$-valued form $\theta$ (via induced $\be$-map) thus provides such an identification. For that reason, it is commonly known as `soldering' form, and is the key element that distinguishes gravity from the other interactions. The diagram above performs such an identification also for the respective bundles $TP\approx TG$. 

\paragraph{Formalism of (linear) frame bundles}

As a matter of fact, the bundle of linear frames -- as defined in~\eqref{eq:frame-bundle} and commonly used in GR -- is ``special'' in that it consists of linear mappings~\eqref{eq:linear-frame} from the canonical $u_x:V\ra T_{x=\pi(u)}\mc{M}$ to the tangent vector spaces (at the point). In addition to its general $H$ bundle structure, $L(\mc{M})$ or any reduction thereof (e.g. the orthonormal frame bundle $O(\mc{M})$, with the restricted structure group $H=\mr{O}(\eta)$) has a property that is not shared by other PFBs. Namely, the so-called `fundamental' $V$-valued form is defined as $\bar{\theta}:=u^{-1}\circ\pi_\ast$, s.t. it brings the values of the vector $\bar{X}_{\pi(u)}=\pi_{\ast u}(X)=\bar{\theta}^i(X)\, u(\mb{e}_i^{\ph{i}})$ in the (images in $T\mc{M}$ of the) standard basis $\{\mb{e}_i^{\ph{i}}\}\in V$, for some lift $X\in\Ga(TL(\mc{M}))$.

Conversely, the existence of the soldering form on $P$ -- i.e. $\theta\in\bar{\La}^1(P,V)$: $\mr{Ad}(H)$-equivariant, s.t. $\theta(X)= 0\ \Leftrightarrow \ \pi_\ast X = 0$ -- establishes an isomorphism (non-canonical) of corresponding bundles:
\begin{align}\label{eq:soldering}
P \ & \approx \ L(\mc{M}), & E \ = \ P\times_H V  \ & \approx \ T(\mc{M}), \\
\text{given by} \qquad p \ & \mapsto \ (\mb{e}_1^{\ph{i}}(p),...,\mb{e}_m^{\ph{i}}(p)), &  \text{where} \qquad \mb{e}_i^{\ph{i}}(p) \ \equiv\ [p,\mb{e}_i^{\ph{i}}] \ & \mapsto \ [\pi(p),\be_p^{-1}(\mb{e}_i^{\ph{i}})], \nonumber 
\end{align}
with the right $H$ action defined as $(\be_p^{-1}(\mb{e}_1^{\ph{i}}),...,\be_p^{-1}(\mb{e}_m^{\ph{i}}))\circ A=(\be_{pA}^{-1}(\mb{e}_1^{\ph{i}}),...,\be_{pA}^{-1}(\mb{e}_m^{\ph{i}}))$. In accord with the correspondence~\eqref{eq:horizontal-froms}, the isomorphism could be invariantly described by the map $\tilde{\theta}:T\mc{M}\ra E$, whereas the soldering gives the coordinates $\theta_p(X)$ of $\tilde{\theta}(\pi_\ast X)$ in the frame $p$. 

The fundamental form $\bar{\theta}$ is induced on $L(\mc{M})$ via pull-back of $\theta$ from $P$, using the above isomorphism~\eqref{eq:soldering}. It is sometimes called `canonical' on $L(\mc{M})$, since its coordinate-invariant version~\eqref{eq:horizontal-froms} gives the `tautological' identification of $T(\mc{M})$ with itself~\footnote{Concretely, if $\bar{X}=u(\mb{X})=[u,\mb{X}]$ and $X$ is its lift in $P$, then $\tilde{\bar{\theta}}(\bar{X})=[u,\bar{\theta}(X)]=[u,u^{-1}\circ\bar{X}]=[u,\mb{X}]=\bar{X}$.}. This simply attests the fact that the bundles $L(\mc{M})$ and its associate, far from being arbitrary, are related to the geometry of $\mc{M}$. If one forgets that the bundles~\eqref{eq:soldering} are actually soldered by means of $\theta$ itself -- which may be non-canonical -- both terms `fundamental' and `soldering' for $\theta$ are sometimes used interchangeably, due to their identical properties (cf. discussions in~\cite{Catren2015Cartan-gauge-gravity,Petti2006translational-symmetries} of this issue). Since one of our tasks is to delineate the ``background structure'' from the ``field'' on it (in relativistic physics parlance), we will avoid such misuse. Summarizing, one can state

\begin{proposition}[$\varpi \leftrightarrow (\theta,\om)$, cf.~\text{\cite[App. A.2]{Sharpe1997Diff-Geometry-Cartan}}]\label{prop:equiv-1} 
There is a bijective correspondence between Cartan's affine geometries $(P,\mathpzc{K},\varpi)$ as given by Def.~\ref{def:Cartan-connection} and the principal $H$ frame bundles~\eqref{eq:frame-bundle} with fundamental form $\theta$ and Ehresmann connection $\om$.
\end{proposition}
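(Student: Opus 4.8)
The plan is to exhibit the correspondence as a genuine bijection by constructing explicit maps in both directions and checking they are mutually inverse, relying on the reductive splitting $\mf{g}\cong\mf{p}\oplus\mf{h}$ already introduced in~\eqref{eq:agebra-split} and~\eqref{eq:vertical-horizontal}. First I would treat the forward direction: given a Cartan connection $\varpi$ on $(P,\mathpzc{K})$ satisfying Def.~\ref{def:Cartan-connection}, decompose it componentwise as $\varpi=\theta+\om$ along $\mf{p}\oplus\mf{h}$. I claim that $\om$ is an Ehresmann connection on the underlying principal $H$-bundle and that $\theta$ is a soldering form. For $\om$, the two defining properties $\om(X_p^{\mb{A}})=\mb{A}$ for $\mb{A}\in\mf{h}$ and $R_A^\ast\om=\mr{Ad}(A^{-1})\om$ for $A\in H$ follow by projecting Def.~\ref{def:Cartan-connection}\ref{def:Cartan-2}--\ref{def:Cartan-3} onto $\mf{h}$; the key point is that the $\mr{Ad}(H)$-invariance of the splitting (reductivity, $[\mf{h},\mf{p}]\subseteq\mf{p}$) guarantees that the projection $\mf{g}\to\mf{h}$ commutes with $\mr{Ad}(A)$, so the $\mf{h}$-component transforms correctly. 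For $\theta$, the $\mr{Ad}(H)$-equivariance is the $\mf{p}$-projection of property~\ref{def:Cartan-3}, and horizontality---$\theta(X)=0 \Leftrightarrow \pi_\ast X=0$---follows because $\ker\theta=VP$ by~\eqref{eq:vertical-horizontal} while $\ker\pi_\ast$ is exactly the vertical distribution $T_p(pH)$ generated by $\mf{h}$; the nondegeneracy clause $\ker\varpi=0$ in Def.~\ref{def:Cartan-connection} ensures $\theta$ restricts to the linear isomorphism $\be_p$ on the horizontal complement, which is precisely what makes it a soldering form in the sense of~\eqref{eq:soldering}.

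Next I would treat the reverse direction: given a principal $H$-bundle~\eqref{eq:frame-bundle} equipped with a soldering form $\theta$ and an Ehresmann connection $\om$, simply set $\varpi:=\theta+\om$ and verify it satisfies all three clauses of Def.~\ref{def:Cartan-connection}. Clause~\ref{def:Cartan-2} holds because on vertical fundamental fields $X_p^{\mb{A}}$ ($\mb{A}\in\mf{h}$) one has $\theta(X_p^{\mb{A}})=0$ (horizontality) and $\om(X_p^{\mb{A}})=\mb{A}$, while on the horizontal fundamental fields $X_p^{\mb a}$ ($\mb a\in\mf{p}$) the soldering isomorphism $\be_p$ reproduces the Maurer--Cartan value; clause~\ref{def:Cartan-3} is additive in the two $\mr{Ad}$-equivariant summands. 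The crucial clause is that $\varpi_p:T_pP\to\mf{g}$ is a \emph{linear isomorphism} (i.e. $\ker\varpi=0$), establishing the infinitesimal parallelization of clause~\ref{def:Cartan-1}: since $T_pP=H_pP\oplus V_pP$ splits as in~\eqref{eq:vertical-horizontal}, $\om$ maps $V_pP$ isomorphically onto $\mf{h}$ and $\theta$ maps $H_pP$ isomorphically onto $\mf{p}$ (this is exactly the soldering isomorphism $\be_p$), so their sum is invertible by a dimension count $\dim P=\dim\mf g$.

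The two constructions are mutually inverse essentially by definition, since decomposition along a fixed splitting followed by re-summation is the identity, and conversely; this I would note briefly rather than belabor. The main obstacle---and the step deserving the most care---is the matching of \textbf{horizontality and nondegeneracy} across the two formalisms. One must confirm that the Cartan-side condition $\ker\varpi=0$, together with $\ker\theta=VP$ and $\ker\om=HP$, is genuinely equivalent to the Ehresmann-side data (a complementary horizontal distribution plus a \emph{nondegenerate} soldering). In particular, I expect the subtle point to be that an Ehresmann connection alone does not determine a Cartan connection: the extra ingredient is precisely the soldering form $\theta$, whose nondegeneracy is what forces $\ker\varpi=0$ and thereby realizes the $\be$-isomorphism in the commuting diagram preceding Sec.~\ref{sec:soldering}. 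I would refer to~\cite[App.~A.2]{Sharpe1997Diff-Geometry-Cartan} for the routine verification of the remaining equivariance identities, as the statement itself already does.
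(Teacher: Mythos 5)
Your two constructions are, in substance, exactly the argument the paper relies on: the paper gives no proof of its own (it defers to~\cite[App.~A.2]{Sharpe1997Diff-Geometry-Cartan}), and its surrounding text already assembles your ingredients --- the reductive split $\varpi=\theta+\om$, the statement that $\om$ satisfies the two Ehresmann properties, the characterization of $\theta$ as a soldering form, and the identification~\eqref{eq:soldering} with the frame bundle~\eqref{eq:frame-bundle}. Your individual steps are sound: the projections onto $\mf{h}$ and $\mf{p}$ commute with $\mr{Ad}(A)$, $A\in H$, by reductivity; $\theta(X)=0\Leftrightarrow\pi_\ast X=0$ follows from clause~\ref{def:Cartan-2} together with injectivity of $\varpi$; and in the converse direction injectivity of $\theta+\om$ plus the dimension count $\dim P=\dim\mf{g}$ gives the fiberwise isomorphism.

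The genuine gap sits in the reverse direction, measured against the paper's own (expressly nonstandard) Def.~\ref{def:Cartan-connection}. That definition is posed on a locally Klein $G$ bundle $(P,\mathpzc{K})$: the fundamental fields~\eqref{eq:fundamental-Klein-vector} are generated by the right action of the \emph{full} principal group $G$ supplied by the Klein structure, and clauses~\ref{def:Cartan-2}--\ref{def:Cartan-3} quantify over all of $\mf{g}$ and all of $G$, not merely $\mf{h}$ and $H$. A bare principal $H$ frame bundle with data $(\theta,\om)$ carries no $G$-action and no Klein atlas, so there are no ``horizontal fundamental fields $X^{\mb{a}}$'' on which clause~\ref{def:Cartan-2} could be tested; your sentence that ``the soldering isomorphism $\be_p$ reproduces the Maurer--Cartan value'' is circular, since such fields can only be \emph{defined} as $\varpi^{-1}(\mb{a})$ after $\varpi$ has been assembled, whence that clause holds by fiat rather than by verification, and your additivity argument for clause~\ref{def:Cartan-3} likewise covers only $A\in H$. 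A proof of the bijection as literally stated must in addition manufacture the Klein structure $\mathpzc{K}$ (with its local $G$-action) out of $(\theta,\om)$ and check compatibility --- and here note that for non-flat $\varpi$ the $\varpi$-constant fields do not even close under Lie bracket, $\varpi([X^{\mb{A}},X^{\mb{B}}])=[\mb{A},\mb{B}]-\Pi(X^{\mb{A}},X^{\mb{B}})$, so they cannot integrate to an honest $G$-action, and the full-$G$ clauses can only hold in the weak, pointwise sense the paper alludes to in the remark following Def.~\ref{def:Cartan-connection}. Either carry this construction out explicitly (build the charts in which $\varpi$ osculates $\om_G$ and verify the co-cycle conditions of Def.~\ref{def:Klein-structure}), or state that you are proving the proposition in Sharpe's standard reading --- $H$-equivariance only --- in which case your argument is complete. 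Since the paper itself elides exactly this point by deferring to Sharpe, the omission is inherited rather than invented, but it is the one step your write-up should either perform or flag.
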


\paragraph{Ehresmann's standpoint}

The process of identification between (linear) tangent spaces of two manifolds by means of soldering -- as described above -- somewhat conceals (within $p\in P$) the `point of attachment'. This aspect becomes crucial when two such (affine) spaces, attached at different points, are put into correspondence by means of connection/parallel transport. Because an affine space $(\mathbb{M},o)$ with a marked point is indistinguishable from its vector tangent space $T_o\mathbb{M}\cong V$ at that point, the confusion may arise about the nature of admissible frame transformations, establishing their equivalence at different $x\in\mc{M}$. In brief: the affine parallel transport defined by rolling does not fix the origin; the linear parallel transport can be recovered by applying a translation.

This issue of `attaching map' is more pronounced in the Ehresmann's formalization of the affine connections of Cartan (cf.~\cite{Marle2014fromCartan-toEhresmann}, \cite[Ch.3, \S 3]{KobayashiNomizu1963vol-1}; initially, the concept was rather vaguely defined). More in detail, the original construction by Ehresmann starts not with~$P$, but the principal $G$ bundle~$P'$. This could also be viewed as associated bundle $P'\simeq P\times_H G$ of affine frames over $\mc{M}$, on which the action of $G$ is defined. Respectively, $B:=P'/H\simeq P'\times_G F$ is the associated bundle of affine spaces $F\cong G/H$ (not vector $V$-fibers). One then says that $B$ is \emph{``soud\'{e}''} to $\mc{M}$ if $\dim F=\dim\mc{M}$, and the following conditions are satisfied:
\begin{enumerate}[label={\upshape(\roman*)}, align=left, widest=iii]
\item The structure group $G$ of $B$ can be reduced to $H$. This is equivalent to the existence of the global section $\tilde{\si}:\mc{M}\ra B$, selecting in a smooth manner the point of attachment in each homogeneous space $B_x$ over $x\in\mc{M}$. The reduced principal $H$ bundle can then be obtained via pull-back construction:  \label{def:soude-2}
\begin{displaymath}
\begin{tikzcd}[column sep=small]
P =\tilde{\si}^\ast P' \arrow[d,swap, "\pi"] \arrow[rr, hookrightarrow] & & P' \arrow[d,"\pi'"] \\
\mc{M}  \arrow[rr, "\tilde{\sigma}"] & & B=P'\times_G G/H
\end{tikzcd}.
\end{displaymath}
The element  $p\in P$ is considered as a [linear] mapping of the standard fiber $F$ onto $B_x$, such that $p(o)=\tilde{\si}(x)$, where $o$ is the point of $F\cong G/H$ which defined the isotropic (sub-)group~$H$.
\item There is an isomorphism of vector bundles $T\mc{M}\approx \tilde{\si}^\ast VB$, where the latter is the bundle of (vertical) tangent vectors $V_{\tilde{\si}(x)}B$ to $B_x$ at $\tilde{\si}(x)$ (as $x$ running through $\mc{M}$). It can then be shown that this is equivalent to existence of the form of \emph{``soudure''} $\theta\in\bar{\La}^1(P,V)$ (cf.~\cite{Kobayashi1956Cartan-connections}).
\label{def:soude-3}
\end{enumerate} 

\begin{proposition}[$\varpi \leftrightarrow \ker\varpi'|_P=0$, cf.~\text{\cite[App. A.3]{Sharpe1997Diff-Geometry-Cartan}}]\label{prop:equiv-2} 
The Ehresmann connection $\varpi':TP'\ra \mf{g}$ defines the absolute parallelism on $P$ in terms of a pair $(\theta,\om)$, iff the restriction $\varpi'|_P=\theta+\om$ has no kernel; by transitivity, it is then equivalent to the Cartan connection of Def~\ref{def:Cartan-connection}.
\end{proposition}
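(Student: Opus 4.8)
The plan is to show that, once $P$ is realized as the reduced principal $H$-bundle $\tilde{\si}^\ast P'\hookrightarrow P'$ cut out by the attaching section, the restriction $\varpi'|_P$ automatically carries almost all of the Cartan structure, and that the single remaining requirement---point-wise invertibility---is literally the no-kernel condition. First I would record that the inclusion $\iota:P\hookrightarrow P'$ is $H$-equivariant: since $H$ fixes the base point $o\in F\cong G/H$, the locus $\{p : p(o)=\tilde{\si}(\pi(p))\}$ is preserved by the right $H$-action, so for $\mb{A}\in\mf{h}$ the fundamental field $X^{\mb{A}}$ of the $G$-action on $P'$ is tangent to $P$ and restricts there to the fundamental $H$-field. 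Pulling the two Ehresmann axioms of $\varpi'$ back along $\iota$ then yields at once $\varpi'|_P(X^{\mb{A}})=\mb{A}$ for $\mb{A}\in\mf{h}$ and $R_h^\ast(\varpi'|_P)=\mr{Ad}(h^{-1})(\varpi'|_P)$ for $h\in H$, i.e.\ properties~\ref{def:Cartan-2} and~\ref{def:Cartan-3} of Def.~\ref{def:Cartan-connection} hold for free.

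The heart of the argument is a dimension count. Because $\tilde{\si}$ reduces the structure group to $H$, one has $\dim T_pP=\dim\mf{h}+\dim\mc{M}=\dim\mf{h}+\dim(\mf{g}/\mf{h})=\dim\mf{g}$, so the fibre-wise linear map $(\varpi'|_P)_p:T_pP\ra\mf{g}$ is between spaces of equal dimension. Hence it is the required linear isomorphism (property~\ref{def:Cartan-1}) precisely when it is injective, that is when $\ker\varpi'|_P=0$. This is exactly the stated biconditional, and it is the step I expect to carry the whole weight of the proposition.

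Next I would decompose $\varpi'|_P=\theta+\om$ along the reductive splitting $\mf{g}\cong\mf{p}\oplus\mf{h}$ of~\eqref{eq:agebra-split} and verify that the two pieces are the advertised objects. Since this splitting is $\mr{Ad}(H)$-invariant, projection onto each summand commutes with $\mr{Ad}(h^{-1})$ and with $R_h^\ast$; combined with the previous paragraph this shows that $\om$ satisfies $\om(X^{\mb{A}})=\mb{A}$ for $\mb{A}\in\mf{h}$ and $R_h^\ast\om=\mr{Ad}(h^{-1})\om$, so it is an Ehresmann connection on the $H$-bundle $P$, while $\theta$ is horizontal and $\mr{Ad}(H)$-equivariant. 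The only non-formal point is the soldering property $\theta(X)=0\Leftrightarrow\pi_\ast X=0$: the implication $\Leftarrow$ is horizontality, and for $\Rightarrow$ I would use injectivity, since $\theta(X)=0$ forces $\varpi'|_P(X)=\mb{A}\in\mf{h}$ to agree with $\varpi'|_P(X^{\mb{A}})$, whence $X=X^{\mb{A}}$ is vertical. Thus triviality of the kernel is equivalent to $\theta$ being a genuine soldering form, matching condition~\ref{def:soude-3} of soudure.

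Finally, having produced the pair $(\theta,\om)$ with $\theta$ a fundamental/soldering form and $\om$ an Ehresmann connection on $P$, I would invoke Prop.~\ref{prop:equiv-1} to convert this data into a Cartan connection $\varpi$ on $P$ in the sense of Def.~\ref{def:Cartan-connection}; running the construction in reverse reconstitutes $\varpi'$ from $(\theta,\om)$, giving the claimed bijection ``by transitivity''. The main obstacle I anticipate is bookkeeping rather than conceptual: keeping straight which fundamental vectors ($\mf{g}$ versus $\mf{h}$) live on $P'$ as opposed to the reduced $P$, and confirming that the section $\tilde{\si}$ makes the $H$-action on $P'$ restrict cleanly, so that both the equivariance and the dimension count go through on $P$ itself.
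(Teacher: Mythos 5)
Your proof is correct and takes essentially the same route as the argument the paper defers to (it gives no proof of its own, citing Sharpe, App.~A.3): the $H$-equivariant inclusion $P=\tilde{\si}^\ast P'\hookrightarrow P'$ makes the Maurer--Cartan and $\mr{Ad}(H)$-equivariance properties automatic, the dimension count $\dim T_pP=\dim\mf{h}+\dim(\mf{g}/\mf{h})=\dim\mf{g}$ converts the no-kernel condition into the fibre-wise isomorphism, and Prop.~\ref{prop:equiv-1} supplies the ``by transitivity'' step. Your handling of the soldering property of $\theta$ via injectivity is exactly the right way to close the loop, so there is nothing to add.
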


\paragraph{Gauge theoretic interpretation.}

The Ehresmann connections are considered more general in the sense that they could violate some of the above requirements (e.g. allow ``slipping/ or twisting''), or even dispence with the soldering and still maintain the horizontal lifts with the notion of (linear) parallel transport over certain base manifold (``externally specified''; cf.~App.~\ref{app-1})~\footnote{The dissatisfaction with non-intuitiveness and extreme generality of the construction is vividly expressed in the Sharpe's book~\cite{Sharpe1997Diff-Geometry-Cartan}.}, as in Yang-Mills. The drawback, in our viewpoint, is that the action of translation (sub-)group -- although defined on $P'$ -- is lost upon reduction, or rather hidden in the \emph{non-canonical} choice of attaching section $\tilde{\si}$. The latter is interpreted as a (partial) ``gauge fixing'' in~\cite{Catren2015Cartan-gauge-gravity}, and the possibility is indicated that~\blockquote{we can change the attaching section by rolling the [local homogeneous model] along diffeomorphisms of $\mc{M}$ . . . Therefore, the invariance of a theory under $\mr{Diff}(\mc{M})$ guarantees its invariance under transformations of the attaching section, i.e. under changes of the partial gauge fixing that defines the attachment.}


We share the similar attitude towards the issue of diffeomorphisms. Yet the concept of the Klein bundle may provide several improvements. Observe that the reduction/gauge fixing mechanism is, alternatively, induced by the right-equivariant map $\si:P'\ra F\cong G/H$, such that $P_0=\si^{-1}([g_0H])$~\cite[Prop.4.2.14]{Sharpe1997Diff-Geometry-Cartan}. If $P_1$ is any other reduction, corresponding to the element $g_1=g_0g$, then the two are related by $P_1=P_0g$ (as sub-bundles; each of them appears as the image of the corresponding section $\tilde{\si}(x)=[p,\si(p)^{-1}],\, \pi(p)=x$.) Such maps are in-built ab initio in the notion of the local Klein chart as $\pi_G\circ\ka$, and their opposites $\psi=\ka^{-1}$ perform the attachment. The `change of attachment' is now simply generated by (local) gauge transformations $\mc{G}(P,\mathpzc{K})$ of Def~\ref{def:Klein-symmetry}. 

We notice that the labels $x\in\mc{M}$ play no substantial role in the Ehresmann's soud\'{e}-construction, so that the base could effectively be collapsed to a single point. The framework $(P,\mathpzc{K},\varpi)$ of Cartan geometry is thus virtually indistinguishable from the principal (Ehresmann) connection in the $G$ bundle over trivial manifold, consisting of one point. At the same time, the Klein bundle could be viewed as comprising both principal $P$ and soldered $B$ bundles, in a sense. Its elements are regarded not as linear but affine maps $\ka^{-1}$ (or `sections' $\tilde{\ka}=[p,\ka(p)^{-1}]$) from the model space $\mathbb{M}$ to $P/H\sim \mathbb{M}_{\pi(p)}$: the first component attaches the space, while the other one orients it in a certain manner. This should explain the apparent conundrum of group seemingly acting in the base. There is simply no base as such, only the point set with affine spaces attached to them, and glued by the equivalence relation of $G$ structure. The construction is thus genuinely \emph{`background-independent'}.


\subsection{Rigidity of the Klein geometry, `osculation', and equations of structure}
\label{subsec:structure}

The general concept of the Cartan geometry (see Fig.~\ref{fig:hamster-ball}), and the construction of the Klein bundle, foster us to \ul{put forward the following equation}:
\begin{equation}\label{eq:osculation}
(\varpi - \ka^\ast\om_G)_p^{\ph{i}} \ = \ 0, \qquad \text{or, equivalently} \qquad (\psi^\ast\varpi - \om_G)_{\ka(p)}^{\ph{i}} \ = \ 0,
\end{equation}
\ul{defining the \textbf{`osculation'}} of the homogeneous Klein geometry to that of Cartan's $(P,\mathpzc{K},\varpi)$. Namely, their respective fundamental fields are identified (at the point $p\in P$), in some local chart~$\ka$/gauge~$\psi$. The next reasoning is adapted from the justification in~\cite[Ch.3, \S 6]{Sharpe1997Diff-Geometry-Cartan} of the theorem to follow.

The idea is to construct, locally, the graph in $(M\times G)$ of what would be the function $f:M\ra G$ from some manifold, if it were to define the isomorphism of geometric structures $\om_G\sim\varpi$ (see~Fig.~\ref{fig:geom-solder}). Then a point $(p,f(p))$, in order to remain on a graph, could move only in certain directions, so that a tangent vector $(v,w)$ must satisfy the condition $f_\ast(v)=w$. Equivalently, by post-composing with canonical Maurer-Cartan trivialization $\om_G\circ f_\ast=f^\ast\om_G\equiv\om_f:TP\ra TG\ra\mf{g}$, the resulting Lie algebra valued form defines what is called the \textbf{Darboux derivative} of $f:M\ra G$ -- if such map exists. Its effect is to ``forget'' the underlying images of $f$, and keep only linear tangential part.
\begin{figure}[h]
\center{\includegraphics[width=0.4\linewidth]{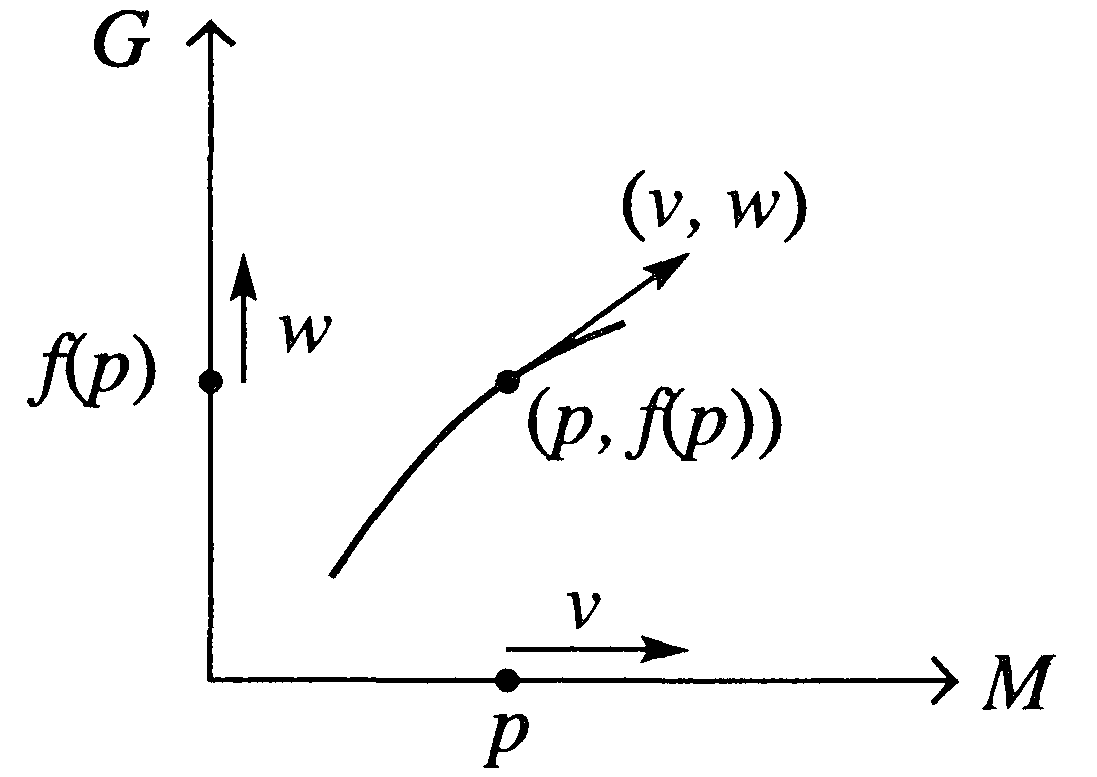}}
\caption{The geometric idea of `osculation'~\eqref{eq:osculation} (pic. from~\cite[p.117]{Sharpe1997Diff-Geometry-Cartan}).}
\label{fig:geom-solder}
\end{figure}

Now, if we are merely given a form $\varpi:TP\ra\mf{g}$, one can still apply $\om_G\circ\ka_\ast$ to $P$, defining the distribution~\eqref{eq:osculation}, which may be non-integrable. In order to characterize certain form as the Darboux derivative of the possible $f$, which appears as the leaf of involutive distribution, we appeal here to the following integrability conditions.

\begin{theorem}[(Non-abelian version of) the fundamental theorem of calculus]\label{th:fund-calculus}
Let $G$ be a Lie group with Lie algebra $\mf{g}$. Let $\om$ be a $\mf{g}$-valued 1-form on the (generic) smooth manifold $\mc{N}$, satisfying the \textbf{structural equation of Maurer-Cartan}
\begin{equation}\label{eq:structural-eq}
d\om+\frac12 [\om\wedge\om] \ = \ 0.
\end{equation}
Then
\begin{enumerate}[label={\upshape(\alph*)}, align=left, widest=iii]
\item For each point $p\in\mc{N}$, there \emph{exists} a neighbourhood $U$ of $p$ and a smooth map $f:U\ra G$ such that $\om|_U=\om_f$~\cite[Th.3.6.1]{Sharpe1997Diff-Geometry-Cartan}.
\item Moreover, the integral is \emph{unique} up to left translation: if $f_1,f_2:\mc{N}\ra G$ are two maps (called primitives) for which $\om_{f_1}=\om_{f_2}$, then there is an element $g\in G$ (constant of integration), such that $f_2(p)=g f_1(p)$~\cite[Th.3.5.2]{Sharpe1997Diff-Geometry-Cartan}. 
\item The map $f:\mc{N}\ra G$ is \emph{globally} defined, if the additional topological requirement is met, namely, the triviality of the monodromy representation $\Phi_\om:\pi_1(\mc{N},b)\ra G$, with the period group $\Ga\equiv\mr{Im}(\Phi_\om)$ being $e\in G$~\cite[Th.3.7.14]{Sharpe1997Diff-Geometry-Cartan} (see the next Sec. for definitions).
\end{enumerate}
\end{theorem}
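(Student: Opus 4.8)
The plan is to reduce all three statements to the Frobenius integrability theorem applied to a canonically built distribution on the product $\mc{N}\times G$. First I would introduce the $\mf{g}$-valued 1-form $\Omega:=\mr{pr}_1^\ast\om-\mr{pr}_2^\ast\om_G$, where $\mr{pr}_1,\mr{pr}_2$ are the two projections and $\om_G$ is the Maurer--Cartan form of $G$. Because $\om_G$ restricts to a linear isomorphism on each $T_gG$, the form $\Omega$ is pointwise surjective onto $\mf{g}$, so its kernel $\mc{D}:=\ker\Omega$ is a distribution of constant rank $\dim\mc{N}$. A vector $(v,w)\in\mc{D}$ is exactly one with $\om(v)=\om_G(w)$, which forces $w=\om_G^{-1}(\om(v))$; hence the differential of $\mr{pr}_1$ is injective on $\mc{D}$ at every point.

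The heart of the argument --- and the step I expect to be the main obstacle --- is verifying that $\mc{D}$ is involutive, which is exactly where hypothesis~\eqref{eq:structural-eq} enters. Writing $\alpha=\mr{pr}_1^\ast\om$ and $\beta=\mr{pr}_2^\ast\om_G$ so that $\Omega=\alpha-\beta$, and using the given structural equation for $\om$ together with the always-valid Maurer--Cartan equation $d\om_G=-\tfrac12[\om_G\wedge\om_G]$, I would compute $d\Omega=-\tfrac12[\alpha\wedge\alpha]+\tfrac12[\beta\wedge\beta]$. Expanding $[\alpha\wedge\alpha]=[(\Omega+\beta)\wedge(\Omega+\beta)]$ and using the graded symmetry $[\phi\wedge\psi]=[\psi\wedge\phi]$ valid for 1-forms (Def.~\ref{def:dot-wedge}) collapses this to $d\Omega=-\tfrac12[\Omega\wedge\Omega]-[\beta\wedge\Omega]$. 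Every term on the right carries a factor of $\Omega$, so $d\Omega$ annihilates pairs of vectors drawn from $\mc{D}=\ker\Omega$; by the dual (exterior-ideal) form of the Frobenius theorem, $\mc{D}$ is integrable. The delicate points are the sign bookkeeping in the graded bracket identities and confirming the constant-rank claim uniformly over $\mc{N}\times G$.

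For part (a) I would take the maximal integral leaf $L$ through a chosen point $(p,g_0)$; since the differential of $\mr{pr}_1|_L$ is an isomorphism of equidimensional spaces, $\mr{pr}_1|_L$ is a local diffeomorphism, so near $(p,g_0)$ the leaf $L$ is the graph of a smooth map $f:U\to G$ with $f(p)=g_0$, and the defining relation $\Omega|_L=0$ reads $f^\ast\om_G=\om|_U$, i.e.\ $\om_f=\om|_U$. For the uniqueness in (b), given primitives with $\om_{f_1}=\om_{f_2}$ I would form the pointwise product $h=f_2f_1^{-1}$ and apply the Leibniz rule $\om_{ab}=\mr{Ad}(b^{-1})\om_a+\om_b$ together with $\om_{b^{-1}}=-\mr{Ad}(b)\om_b$, obtaining $\om_h=\mr{Ad}(f_1)(\om_{f_2}-\om_{f_1})=0$. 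Since $\om_G$ is injective on tangent spaces, $h^\ast\om_G=0$ forces $h_\ast=0$, so $h$ is locally constant, hence constant $h\equiv g$ on connected $\mc{N}$, giving $f_2=gf_1$.

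Finally, for the global statement (c) I would pass to the universal cover $q:\tilde{\mc{N}}\to\mc{N}$: the pullback $q^\ast\om$ again satisfies~\eqref{eq:structural-eq}, and since $\tilde{\mc{N}}$ is simply connected the local primitives of (a) can be glued, via the uniqueness of (b), into a single global primitive $\tilde{f}:\tilde{\mc{N}}\to G$. Each deck transformation $\si\in\pi_1(\mc{N},b)$ preserves $q^\ast\om$, so $\tilde{f}\circ\si$ is another primitive and by (b) differs from $\tilde{f}$ by a constant left factor $\Phi_\om(\si)\in G$; the assignment $\si\mapsto\Phi_\om(\si)$ is the monodromy homomorphism. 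The descended map on $\mc{N}$ exists precisely when $\tilde{f}$ is deck-invariant, i.e.\ when the period group $\Ga=\mr{Im}(\Phi_\om)$ is trivial. The care required here is to check that $\Phi_\om$ is a well-defined homomorphism, independent of the chosen lifts and base value --- which holds because the left-translation ambiguity from (b) is precisely a group of constants.
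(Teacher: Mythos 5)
Your proposal is correct and takes essentially the same route as the paper's own treatment: the paper defers the proof to Sharpe's cited Theorems 3.6.1, 3.5.2 and 3.7.14 and sketches exactly this argument in its footnote on Pfaffian systems, namely integrating the graph distribution $\ker\left(\mr{pr}_1^\ast\om-\mr{pr}_2^\ast\om_G\right)$ on $\mc{N}\times G$ via Frobenius, with uniqueness from $\om_h=0$ for the quotient $h=f_2f_1^{-1}$ and the global statement from the universal cover and the monodromy homomorphism. Your verification of involutivity, the Leibniz-rule computation, and the descent criterion $\Ga=\{e\}$ all match that cited treatment.
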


The necessity of the M.-C. condition is a simple exercise. Using the definition of exterior derivative: $d\om_G(X,Y)=X(\om_G(Y))-Y(\om_G(X))-\om_G([X,Y])= -\om_G([X,Y])$, if applied to left-invariant fields. Since the latter form the basis, the result follows by linearity for any $X,Y\in TG$. It is inherited by $\om_f$ due to naturality property $df_\ast=f_\ast d$ of the derivative. The sufficiency proof is less trivial, and was delivered by Cartan~\footnote{As long as the purely analytical aspects are concerned, this is the special case of the (Pfaffian) system of linear differential equations $\theta_\al=0, \ \al=1...r$ in $n+r$ variables. The prospective solution allows a geometric reading in terms of $n$-dim integral (sub-)manifold $\mc{N}$, passing through any generic point of space, and everywhere tangent to the given distribution of $n$-vectors $\{\pa_i\}_{i=1...n}$. The system then appears to define the distribution in dual terms of (vector-valued) forms as $T_p\mc{N}\stackrel{?}{=}\ker\theta_p$, s.t. the solution $\mc{N}$ exists (locally) iff the condition of complete integrability (involutivity of the distribution, Frobenius) is satisfied: $d\theta_\al =0\  \mr{mod}\, \theta_1,...,\theta_r$. This implies also the existence of $r$ 1st integrals $u^\al=F^\al(x^1,...,x^n;z^1,...,z^r)$, constant on the solutions $z^\al= f^\al(x^1,...,x^n;z_0^1,...,z_0^r),\,z_0^\al=z^\al(x_0^i)$, s.t. the system is algebraically equivalent to $dF^\al=0$.\label{foot:Pfaff}}.

One will say that the Maurer-Cartan form, being canonical, determines in this way the \emph{trivial connection} on $(P,\mathpzc{K})$. In accord with the `Lie group $\leftrightarrow$ Lie algebra' correspondence, the previous theorem can be used to show that such form fully characterizes (up to some covering) the manifold $P$ as principal homogeneous space, that is identical to the group $G$, but without a fixed choice of unit element. (As the slogan says: ``a torsor is a group that has forgotten its neutral element''~\cite{Baez2010torsors}.) 

\begin{theorem}[Rigidity of the Klein geometry \text{\cite[Th.3.8.7]{Sharpe1997Diff-Geometry-Cartan}}]\label{th:G-characterization}
Let $\mc{N}$ be a connected smooth manifold, and $\om:T\mc{N}\ra\mf{g}$ be a Lie algebra valued 1-form, that satisfies the conditions:
\begin{enumerate}[label={\upshape(\alph*)}, align=left, widest=iii]
\item $d\om+\frac12 [\om\wedge\om] = 0$;
\item $\om_p:T_p\mc{N}\ra\mf{g}$ is an isomorphism on each fiber;
\item All the $\om(X)=\mr{const}$ vector fields are complete~\footnote{In the note on the second printing of his book~\cite{Sharpe1997Diff-Geometry-Cartan}, Sharpe points out the minor subtlety in his proof, which requires a little bit stronger condition of completeness for the vector-valued isomorphism $\om_p:T_p\mc{N}\stackrel{\sim}{\ra}V$. Precisely, the vector field $(\om^{-1}(f(s)),\pa_s)$ on $\mc{N}\times\mathbb{R}$ has to be complete for every smooth function $f:\mathbb{R}\ra V$. As a counterexample of completeness, one can mention the celebrated Hawking-Penrose theorem~\cite[Ch.8]{HawkingEllis1973large-scale-structure}: provided the certain plausible conditions on energy-momentum are satisfied, it states the existence of geodesic paths, inextendible beyond the certain critical point, where they meet the ``singularity'' in a finite proper time (which cannot happen for a smooth vector field~\cite[Ex.2.1.18]{Sharpe1997Diff-Geometry-Cartan}).}.
\end{enumerate}
Then
\begin{enumerate}[label={\upshape(\alph*)}, align=left, widest=iii]
\item The universal cover $\check{\pi}:G\ra\mc{N}$, for an arbitrary choice $e\in G$, has the structure of a Lie group with Lie algebra $\mf{g}$, such that $\check{\pi}^\ast\om$ is its Maurer-Cartan form.
\item The discrete (sub-)group of periods $\Ga\subset G$ acts by left multiplication on $G$ as the group of covering transformations, i.e. $\mathpzc{Gal}(G/\mc{N}):=\{T\in\mr{Diff}(G)|\check{\pi}\circ T=\check{\pi}\}$.
\end{enumerate}
\end{theorem}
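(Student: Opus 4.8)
The plan is to integrate $\om$ against a fixed model group by invoking the fundamental theorem of calculus (Theorem~\ref{th:fund-calculus}), and then to upgrade the resulting primitive to a diffeomorphism using completeness. By Lie's third theorem there is a connected, simply connected Lie group $G_0$ with Lie algebra $\mf{g}$; let $\om_{G_0}$ be its Maurer--Cartan form. First I would pass to the universal cover $\check\pi:\tilde{\mc{N}}\ra\mc{N}$ and set $\tilde\om:=\check\pi^\ast\om$. Since pullback commutes with $d$ and with $[\,\cdot\wedge\cdot\,]$, condition~(a) is inherited by $\tilde\om$, and conditions~(b),~(c) are local so they pass to the cover as well. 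As $\tilde{\mc{N}}$ is simply connected the monodromy is trivial, so Theorem~\ref{th:fund-calculus}(c) furnishes a \emph{globally defined} primitive $\tilde f:\tilde{\mc{N}}\ra G_0$ with $\tilde f^\ast\om_{G_0}=\tilde\om$. Because both $\tilde\om_p$ and $(\om_{G_0})_{\tilde f(p)}$ are fibrewise isomorphisms onto $\mf{g}$ (condition~(b)), the tangent map $\tilde f_{\ast p}=(\om_{G_0})^{-1}\circ\tilde\om_p$ is a linear isomorphism at every point, so $\tilde f$ is a local diffeomorphism.

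The heart of the argument is to promote this local diffeomorphism to a covering map, and this is where condition~(c) is indispensable. For each $\mb{A}\in\mf{g}$ the primitive $\tilde f$ intertwines the $\tilde\om$-constant field $\tilde\om^{-1}(\mb{A})$ on $\tilde{\mc{N}}$ with the left-invariant field $\om_{G_0}^{-1}(\mb{A})$ on $G_0$, carrying integral curves to integral curves; both are complete, the first by~(c) and the second because left-invariant fields on a Lie group always are. Lifting paths from a point of the (nonempty, open) image shows $\tilde f$ is surjective, and one lifts an arbitrary path in $G_0$ to $\tilde{\mc{N}}$ piece by piece along these flows — the strengthened completeness (completeness of $(\om^{-1}(f(s)),\pa_s)$ on $\tilde{\mc{N}}\times\mathbb{R}$, cf.\ the footnote to the theorem) guarantees the lift never escapes to infinity in finite parameter. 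A surjective local diffeomorphism with the path-lifting property onto a connected base is a covering map; hence $\tilde f:\tilde{\mc{N}}\ra G_0$ is a covering. Since $G_0$ is simply connected and $\tilde{\mc{N}}$ is connected, this covering has a single sheet, so $\tilde f$ is a diffeomorphism. Transporting the group structure of $G_0$ along $\tilde f$ makes $G:=\tilde{\mc{N}}$ into a Lie group with Lie algebra $\mf{g}$ for which $\tilde f^\ast\om_{G_0}=\check\pi^\ast\om$ is the Maurer--Cartan form, proving part~(a); the choice of basepoint mapping to $e\in G$ is precisely the left-translation ambiguity of the primitive recorded in Theorem~\ref{th:fund-calculus}(b).

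For part~(b), any deck transformation $T\in\mathpzc{Gal}(G/\mc{N})$ satisfies $\check\pi\circ T=\check\pi$, whence $T^\ast\om_G=T^\ast\check\pi^\ast\om=(\check\pi\circ T)^\ast\om=\check\pi^\ast\om=\om_G$. Thus $T$ preserves the Maurer--Cartan form, i.e.\ the maps $T,\mr{Id}:G\ra G$ have the same Darboux derivative. By the uniqueness clause Theorem~\ref{th:fund-calculus}(b) there is a unique $\gamma\in G$ with $T=L_\gamma$, so every covering transformation \emph{is} a left translation. The assignment $T\mapsto\gamma$ is an injective homomorphism of $\mathpzc{Gal}(G/\mc{N})\cong\pi_1(\mc{N},b)$ into $G$, and comparing the lift of a based loop with the defining construction of the monodromy identifies its image with the period group $\Ga=\mr{Im}(\Phi_\om)$. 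Hence $\Ga$ acts on $G$ by left multiplication as the full group of covering transformations.

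I expect the main obstacle to be exactly the completeness-to-covering step: a local diffeomorphism need not be a covering in general (an open embedding is a counterexample), and it is only the completeness hypothesis~(c) — in the sharpened form needed to lift paths that are not themselves integral curves — that supplies path lifting and rules out such pathologies. Once $\tilde f$ is known to be a covering, simple connectivity of $G_0$ makes it a diffeomorphism essentially for free, and both the transport of the group structure and the deck-transformation analysis reduce to the already-established fundamental theorem of calculus.
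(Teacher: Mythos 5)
Your proposal is correct, and it is essentially the proof the paper relies on: the paper gives no independent argument (it defers to Sharpe's Th.~3.8.7, framing the result as a consequence of Theorem~\ref{th:fund-calculus}), and your route — global primitive on the universal cover via the fundamental theorem, completeness upgrading the local diffeomorphism to a covering of the simply connected model group, hence a diffeomorphism, with uniqueness of primitives identifying deck transformations with left translations by the periods — is exactly the standard development-map proof underlying that citation. You also invoke the strengthened completeness condition (the time-dependent field $(\om^{-1}(f(s)),\pa_s)$ on $\mc{N}\times\mathbb{R}$) at precisely the path-lifting step where Sharpe's second-printing note says it is needed, which is the right place.
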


One sees that the Def.~\ref{def:Klein-structure} of the Klein bundle naturally encompasses (but is not exhausted by) the notion of `locally Klein geometry' from~\cite[Ch.4, \S 3]{Sharpe1997Diff-Geometry-Cartan}. This is globally a group space $G$ that may be not simply-connected though, having a discrete period subgroup $\Ga\subset G$. In a context of affine (reductive) group, it corresponds to Cartan's `locally Euclidean Riemannian manifold'~\cite[Ch.11]{Cartan2001Orthogonal-frame}, such that in a sufficiently small (but finite) neighbourhood of any of its points it can be \emph{exactly} mapped/or `developed' piece-by-piece on the region of the Euclidean space with the same line element $ds^2$. However, one cannot be certain a priori 
\begin{enumerate}[label={\upshape(\arabic*)}, align=left, widest=iii]
\item That each point of the Euclidean space could be obtained in the  
development, and 
\item That a point of the Euclidean space obtained in the development of the 
manifold could not be obtained more than once.
\end{enumerate}
Hence, globally the correspondence may be not bijective~\footnote{This is the closest classical geometric analogue that we could find to the so-called `moduli spaces of flat connections', studied in TQFT and QG. The proposed modification to the BF-theory of Ch.~\ref{ch:prop-2} is largely based on this observed similarity. This is also in line with the proposed `new vacuum' in canonical LQG~\cite{DittrichGeiller2015new-vacuum,DittrichGeiller2015flux-formulation-classic}, corresponding to zero curvature of the topological BF model.}. The typical examples include: circular cylinder, torus, cone (excluding singular vertex), and any developable surface in general.

Concerning our discussion -- leaving the topological issues aside -- the above results clearly indicate that the trivial connections take part of the canonical background structure, corresponding to some absolute/ideal model space (homogeneous Klein geometry, in general). However, it is not perceived as a fixed spacetime inert arena, on which events happen, but rather some reference device to relate to non-canonical structures.

Equation~\eqref{eq:osculation} for $\varpi$ seemingly has the form of Darboux derivative. However, we explicated the meaning associated with it, which should be more-or-less clear. We view it as performing identification between two geometries point-wise, and defining non-integrable distribution, in general. This is repercussion of the property~\ref{def:Cartan-2} in Def.~\ref{def:Cartan-connection}. (Recall that $(P,\mathpzc{K},\varpi)$ is virtually the same as Ehresmann principal connection over a pure point manifold, and has the curvature vanishing in directions of the fiber. The idea of connection here is not horizontality per se but parallelism given by $\varpi$, and the equation~\eqref{eq:osculation} simply attest its difference from the canonical one.)

For the affine group, one obtains what Cartan called `the Euclidean space, osculating at the point'. [Initially it is required that the metrics of two spaces ``osculate'' at a point -- having the same values of the scalar product $\cg_{ij}$ and the first derivatives $\pa_i\cg_{jk}$~\cite[Ch.12-13]{Cartan2001Orthogonal-frame}, modulo general coordinate transformations. That is, the Levi-Civita connection of the given line element $ds^2$ is adopted in Riemannian geometry; however, one quickly dispenses with that limitation in the most general scenario.]
In Cartan's language of moving frames, equation~\eqref{eq:osculation} corresponds to %
\begin{subequations}\label{eq:frame-displacement}
  \begin{empheq}[left=\empheqlbrace]{align}
    d\mb{m}^{\ph{i}}  &= \ \mb{e}_i^{\ph{i}} \, \theta^i, \label{eq:frame-displacement-1} \\
    d\mb{e}_j^{\ph{i}}  &= \ \mb{e}_i^{\ph{i}}\, \om^i_{\ph{i}j}, \label{eq:frame-displacement-2}
  \end{empheq}
\end{subequations}
describing the frame's displacement, \emph{matching} the affine spaces tangential to infinitesimally nearby points $\mb{m}$ and $\mb{m}'=\mb{m}+d\mb{m}$~\cite{Cartan1986Affine-connections}:~\blockquote{Here, the coefficients $\theta$ and $\om$ are linear combinations of the differentials $\{du^\mu,dv^\al\}$ of the parameters, characterizing the configuration of the frame. These [$m+r$, where $m=\dim\mathbb{M}$, $r=\dim H$] Pfaffian forms enable one, in effect, to fix the frame at $\mb{m}+d\mb{m}$ in terms of a given frame at $\mb{m}$. That is, they define the small affine displacement relating the two frames.}
In quite a casual manner, through the equations~\eqref{eq:frame-displacement}, Cartan equips $\mc{M}$ with an ``affine connection'':~\blockquote{. . . if a law relating affine spaces associated with any two \emph{infinitesimally} close points $\mb{m}$ and $\mb{m}'$ is specified. The choice of this law is quite arbitrary; it only has to enable us to say that such and such point in the affine space associated with $\mb{m}$ corresponds to such and such point in the affine space of $\mb{m}'$, and that such and such vector in the first space is parallel or equal to such and such vector in the second. In particular, $\mb{m}'$ itself can be specified with respect to the affine space originating at $\mb{m}$.~\cite[p.60]{Cartan1986Affine-connections}}

The coefficients $(\theta,\om)$ are not arbitrary for integrable connections. Indeed, differentiating the Pfaffian system~\eqref{eq:frame-displacement}, the vanishing of $\mb{e}_i^{\ph{i}}\, d\theta^i+ d\mb{e}_i^{\ph{i}}\wedge\theta^i=0$, and $\mb{e}_i^{\ph{i}}\, d\om^i_{\ph{i}j}+ d\mb{e}_i^{\ph{i}}\wedge\om^i_{\ph{i}j}=0$ should be the algebraic consequence of the system itself~\cite{Cartan2001Orthogonal-frame}. Substituting~\eqref{eq:frame-displacement}, one obtains the 1st and the 2nd structure equations: 
\begin{subequations}\label{eq:structure-0}
  \begin{empheq}[left=\empheqlbrace]{align}
    d\theta^i+ \om^i_{\ph{i}j}\wedge\theta^j \ & = \ 0, \\
    d\om^i_{\ph{i}j}+ \om^i_{\ph{i}k}\wedge\om^k_{\ph{k}j} \ & = \ 0,
  \end{empheq}
\end{subequations}
arising in the decomposition of~\eqref{eq:structural-eq} for the reductive algebra $\mf{g}=\mf{g}/\mf{h}\oplus\mf{h}$ of the affine group. 

For the general Cartan (affine) connection, the non-trivial quantities
\begin{equation}\label{eq:torsion-curvature}
\Theta \ := \ d\theta +[\om\wedge\theta],  \qquad \Om \ := \ d\om+\frac12 [\om\wedge\om],
\end{equation}
determine the \textbf{torsion $\Theta\equiv D^\om\theta$} and the \textbf{curvature $\Om\equiv D^\om\om$} of $\varpi=\theta+\om$, using Def.~\ref{def:covariant-differential} of the covariant differential. Thus both of them are horizontal, and the latter happens to coincide with the curvature~\eqref{eq:cuvature-Ehresmann} of the Ehresmann connection, characterizing non-integrability of the distribution $HP=\ker\om$. The torsion, however, is strictly related to the first line of~\eqref{eq:frame-displacement} and not to $VP=\ker\theta$; the vertical fibers of $TP$ are completely integrable to the leaves $\sim H$, simply because of horizontality of $\Theta\neq 0$. Quantities~\eqref{eq:torsion-curvature} can be combined into a single object of `Cartan $\mf{g}$-valued curvature' $\Pi\equiv(\Theta,\Om)$. Their meaning is revealed (partially) in the next Sec.~\ref{subsec:parallel-transport} in terms of frame displacements, while the physical/mechanical significance is exposed in Ch.~\ref{ch:Einstein-Cartan} in terms of stresses and tensions [defects] in the `medium'.




\subsection{Path-integrability, development and parallel transport}
\label{subsec:parallel-transport}

In general, it is impossible to satisfy the Maurer-Cartan equation~\eqref{eq:structural-eq} in any extended region exactly, identifying the piece of $P$ with $G$. One can only establish equivalence infinitesimally for two frames next to each other. What is the situation for arbitrary points $\mb{m}_0$ and $\mb{m}_1$, connected by a finite path $\bar{\ga}$ in $\mc{M}\cong P/H$? To answer the question, one chooses at each of the intermediate points an affine frame and performs matching step by step.

\begin{definition}\label{def:development}
Let $\ga:(I,s_i,s_j)\ra (P,p_i,p_j)$ be a piecewise smooth path, $\dot{\ga}(s)=\ga_\ast(\pa_s)\in T_{\ga(s)}P$, and $\varpi:TP\ra\mf{g}$ a smooth 1-form of Cartan connection on $(P,\mathpzc{K})$. Pull back the osculation equation~\eqref{eq:osculation} down to $I$ by $\ga$. The structural condition~\eqref{eq:structural-eq} is then identically satisfied on 1-dim interval, as involving 2-forms, hence the fundamental theorem applies. The unique smooth map 
\begin{align}
(\ka\ga): \ (I,s_0) & \ra (G,g_0), \\
s \ & \mapsto (\mb{a},A)(s), \nonumber
\end{align}
satisfying the system of ordinary differential equations
\begin{equation}\label{eq:development}
\ga^\ast\varpi(\pa_s) \ = \ (\ka\ga)^\ast\om_G(\pa_s) \qquad \Leftrightarrow \qquad \left\{\begin{aligned}
   -A^{-1}\frac{d\mb{a}}{ds} \ &= \ \theta(\dot{\ga}(s)),  \\
   A^{-1}\frac{dA}{ds} \ &= \ \om(\dot{\ga}(s)),
  \end{aligned}\right.
\end{equation}
is called the \textbf{development of $\varpi$ on $G$ along $\ga$} starting at $g_0$. For the affine connections, the homogeneous second part of~\eqref{eq:development} is known as the \textbf{holonomy} equation and gives the (linear) parallel transport of \emph{free} vectors. The vectorial first part is less familiar: it applies to radius vector (and, consequently, to all bound multivectors), s.t. paths in $\mc{M}\cong P/H$ are developed onto $\mathbb{M}$.
\end{definition}

(Remark: It actually suffices that $\varpi$ be merely piecewise-smooth on $I$ for there to be a unique development. Upon performing partitions $I=\bigcup_{i=1}^n[s_{i-1},s_i]$, it will be continuous everywhere and smooth where $\varpi$ is smooth. If the notion of development is to be used for regularization in Quantum Gravity, the caveats of this sort may be significant for the issues of subsequent refinement and continuum limit.)

One calls a \emph{path} an equivalence class $[\ga_{ij}]$ of (parameterized) curves, connecting two given points $p_i,p_j\in P$, modulo a finite number of retracings and reparametrizations of the curve (in a suitable category, s.a. piecewise-/smooth, or analytic~\footnote{Let us mention that for a smooth $P$, every continuous $\la:(I,a,b)\ra (P,p,q)$ is homotopic to a smooth path. Any smooth $\la_1,\la_2:(I,a,b)\ra(P,p,q)$ that are continuously homotopic are also smoothly homotopic~\cite[p.119]{Sharpe1997Diff-Geometry-Cartan}.}). The natural operation of composition $[\ga_{ij}]\circ [\ga_{jk}]$ is then defined, being associative $([\ga_{ij}]\circ [\ga_{jk}])\circ[\ga_{kl}] = [\ga_{ij}]\circ([\ga_{jk}]\circ[\ga_{kl}])$, as well as the reverse path $[\ga_{ij}]^{-1}=[\ga_{ji}]$, s.t. $[\ga_{ij}]\circ[\ga_{ij}]^{-1}=\mr{Id}_i$. The set $\{P, \Upsilon\}$ of `points-objects' $p\in P$, together with `paths-morphisms' $[\ga]\in \Upsilon$, then forms a \emph{groupoid}, where each map $[\ga_{ij}]\in\mr{hom}(p_i,p_j)$ is an isomorphism. One then calls a `path' any of its representative curves (as was implied in Def.~\ref{def:development}).

\begin{proposition}
The following properties of development characterize $\ka$ as a \emph{functor} mapping from $\{P, \Upsilon\}$ to the principal gauge group $G$, s.t. the above algebraic structure is preserved~\cite[p.120]{Sharpe1997Diff-Geometry-Cartan}.
\begin{enumerate}[label={\upshape(\alph*)}, align=left, widest=iii]
\item Let $\ga:(I,s_i,s_j)\ra (P,p_i,p_j)$ have the development $(\ka\ga):(I,s_i,s_j)\ra(G,g_i,g_j)$. Then $g(\ka\ga):(I,s_i,s_j)\ra(G,gg_i,gg_j)$ is also a development of $\ga$ (i.e. initial conditions in $G$ may be freely chosen).
\item The inverse $(\ka\ga)^{-1}:(I,s_j,s_i)\ra(G,g_j,g_i)$ is the development along the reversed path $\ga^{-1}:(I,s_j,s_i)\ra (P,p_j,p_i)$.
\item If the second path $\la:(J,s_j,s_k)\ra (P,p_j,p_k)$ has the development $(\ka\la):(J,s_j,s_k)\ra (P,g_j,g_k)$, then the joint path $(\ga\circ\la):(I\cup J,s_i,s_k)\ra (P,p_i,p_k)$ is developed onto $\ka(\ga\circ\la)\equiv(\ka\ga)\circ(\ka\la):(I\cup J,s_i,s_k)\ra (P,g_i,g_k)$.
\end{enumerate}
\end{proposition}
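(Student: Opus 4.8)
The plan is to read all three properties off the single defining relation \eqref{eq:development}, namely $\ga^\ast\varpi(\pa_s)=(\ka\ga)^\ast\om_G(\pa_s)$, together with the existence-and-uniqueness clause of Theorem~\ref{th:fund-calculus}. Writing $\om_G=g^{-1}dg$ in the linear representation \eqref{eq:affine-linearized}, the system \eqref{eq:development} is equivalent to the left-invariant Lie-algebra-driven equation $\tfrac{d}{ds}(\ka\ga)(s)=(\ka\ga)(s)\cdot\varpi(\dot\ga(s))$ on $G$. Since $I$ is one-dimensional, the Maurer--Cartan structural condition \eqref{eq:structural-eq} holds automatically (there are no nonzero $2$-forms on a curve), so for each prescribed initial value $(\ka\ga)(s_i)=g_i$ there is exactly one development; this uniqueness is the engine for all three claims. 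Property (a) is then immediate from the left-invariance $L_g^\ast\om_G=\om_G$, equivalently the manifest left-translation invariance of the ODE: if $(\ka\ga)$ solves it, so does $g(\ka\ga)$, because $\tfrac{d}{ds}\big(g\,(\ka\ga)\big)=g\,(\ka\ga)\cdot\varpi(\dot\ga)=\big(g(\ka\ga)\big)\cdot\varpi(\dot\ga)$. Hence $g(\ka\ga)$ is again a development of $\ga$, now issuing from $gg_i$, which is precisely the free choice of initial condition and reproduces the up-to-left-translation statement of Theorem~\ref{th:fund-calculus}(b).

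For property (b) I would reverse the parameter by $\tilde\ga(s)=\ga(s_i+s_j-s)$, so that $\dot{\tilde\ga}(s)=-\dot\ga(s_i+s_j-s)$. A direct chain-rule check shows that the reverse-traversal $h(s):=(\ka\ga)(s_i+s_j-s)$ satisfies the same development equation as $\ka\tilde\ga$: indeed $\dot h(s)=-\big(\ka\ga\big)'(s_i+s_j-s)=-h(s)\cdot\varpi\big(\dot\ga(s_i+s_j-s)\big)=h(s)\cdot\varpi\big(\dot{\tilde\ga}(s)\big)$. Matching the common initial value $g_j$ and invoking uniqueness identifies the two, so the development of the reversed path $\ga^{-1}$ is exactly the groupoid inverse $(\ka\ga)^{-1}$.

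For property (c) the argument is glue-and-uniquify. On the first subinterval develop $\ga$ to $\ka\ga$ from $g_i$ to $g_j$; on the second use the freedom from (a) to choose the development of $\la$ that \emph{starts} at $g_j$, giving $\ka\la$ from $g_j$ to $g_k$. The map obtained by concatenating these two is continuous (the endpoint values agree at the splice $s_j$) and piecewise-smooth, and it satisfies \eqref{eq:development} on each piece; by uniqueness it must be the development $\ka(\ga\circ\la)$ of the joined path, which is by construction $(\ka\ga)\circ(\ka\la)$. To see that $\ka$ descends to an honest functor on the path groupoid $\{P,\Upsilon\}$, I would finally check that all of this respects the equivalence relation defining $\Upsilon$: invariance under orientation-preserving reparametrization is a further chain-rule computation, and cancellation of a retracing $\ga\circ\ga^{-1}$ follows by combining (c) with (b), since the concatenated development then collapses to the constant path $\mr{Id}$.

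The main obstacle I anticipate is not any individual computation but the bookkeeping needed for well-definedness on $\Upsilon$: one must verify that development is insensitive to reparametrization and to insertion or removal of retracings \emph{before} functoriality is even meaningful, and it is here that (b) is used essentially rather than decoratively. A secondary point of care is the initial-condition matching in (c): without the left-translation freedom established in (a) the two pieces need not splice, so the three properties are genuinely interlocking rather than independent. Everything else reduces to the uniqueness clause of Theorem~\ref{th:fund-calculus} applied on the one-dimensional domain, where the structural equation is vacuous.
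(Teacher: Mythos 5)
Your proof is correct and takes essentially the same route as the paper, which itself offers no argument but defers to Sharpe (p.~120): all three properties are read off the left-invariant ODE form of the development equation~\eqref{eq:development} together with the uniqueness clause of Theorem~\ref{th:fund-calculus}, applicable because the Maurer--Cartan structural condition is vacuous on a one-dimensional domain, with left-invariance yielding (a), a chain-rule reversal yielding (b), and gluing plus uniqueness yielding (c). Your added care about well-definedness on the path groupoid (invariance under reparametrization and retracings, using (b) essentially) is a legitimate completion of detail that the paper and its cited source leave implicit.
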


The utility of the notion of Klein gauge is clearly visible. It plays the role of a ``proxy''/intermediary agent while rolling the homogeneous Klein space on the lumpy surface of the Cartan geometry (see. Fig.~\ref{fig:hamster-ball}). During this process, traced out curves and other figures leave an imprint, that might be examined for deviations from homogeneity. Supplying the results of development, $\ka$ thus \emph{contains all the (finite) information about the Cartan connection and geometry} of $\mc{M}\cong P/H$. \ul{We suggest it could be used for regularization} in discrete and quantum gravity: both of the connection $\om$ and metric $\theta$ degrees of freedom. (We elaborate further on this issue in Sec.~\ref{sec:geom-sum} and~\ref{subsec:Cartan-quntization}, respectively.)

In the case of the affine group, the result of the development gives the `Euclidean space of conjugacy along a line'~\cite[Ch.14]{Cartan2001Orthogonal-frame}. To each point $\bar{\ga}(s)\in\mc{M}$ there corresponds a point $\mb{m}\in\mathbb{M}$, with the linear frame attached to it. Since the integral curves $c(t,x,\mb{v})$ of $\be^{-1}(\mb{v})\in T_x\mc{M}$ for all $\mb{v}\in V$, determine the exponential parametrization $\exp_{\be}:T\mc{M}\ra \mc{M}\times\mc{M}$ as the local diffeomorphism $(x,X)\mapsto (x,c(1,x,\be(\mb{v}))$ from the neighbourhood of the zero-section to the diagonal $(x,x)\in\mc{M}\times\mc{M}$, one actually obtains a development in $\mathbb{M}$ not only of a given line, but also of the neighbourhood of this line in $\mc{M}$. 

\pagebreak

If the development is required to preserve scalar product $\cg_{ij}=\cg(\mb{e}_i^{\ph{i}},\mb{e}_j^{\ph{i}})$, the connection coefficients have to satisfy skew-symmetry:
\begin{align}\label{eq:orthogonal-displacement}
d\cg_{ij} \ & = \ \bra d\mb{e}_i^{\ph{i}},\mb{e}_j^{\ph{i}}\ket+ \bra \mb{e}_i^{\ph{i}},d\mb{e}_j^{\ph{i}}\ket \nonumber\\
& = \  \om_{ij}^{\ph{i}} + \om_{ji}^{\ph{i}} \ = \ 0 , \qquad \text{where} \qquad \om_{ij}^{\ph{i}} \ \equiv \ \cg_{ik} \om^k_{\ph{k}j}, 
\end{align}
upon substitution from~\eqref{eq:frame-displacement}. That is, Lie algebra is restricted to $\mf{h}=\mf{o}(\eta)$ of the corresponding orthogonal (sub-)group $H$.  Assuming the first equation~\eqref{eq:frame-displacement} is integrable~\footnote{The 1st structure equation (zero torsion) implies that the so-called ``natural frame'' can be chosen, s.t. the forms $\theta^i=dx^i$ are exact. It then follows from $\om^i_{\ph{i}j}\wedge dx^j=0$ that coefficients of $\om^i_{\ph{i}j}=\Ga^i_{\ph{i}jk}dx^k$ are symmetric $\Ga^i_{\ph{i}jk}=\Ga^i_{\ph{i}kj}$, corresponding to the Christoffel symbols of the Levi-Civita connection: $d\cg_{ij}= (\cg_{il}\Ga^l_{\ph{l}jk}+\cg_{jl}\Ga^l_{\ph{l}ik})\, dx^k$.}, let the forms $\theta^i$ be given, determining the linear element $ds^2\equiv\bra d\mb{m},d\mb{m}\ket=\eta_{ij}\, \theta^i\otimes\theta^j$ in the orthonormal frame. It is then possible to find $m(m-1)/2$ skew-symmetric forms $\om_{ij}^{\ph{i}}=-\om_{ji}^{\ph{i}}$ as the unique solution of the system $d\theta^i+\om^i_{\ph{i}j}\wedge\theta^j=0$. (Cartan's lemma~\cite[\S 79]{Cartan2001Orthogonal-frame}~\cite[Lem.6.3.4]{Sharpe1997Diff-Geometry-Cartan}; for general linear $H=\mr{GL}(V)$, the uniqueness is guaranteed by the Weyl's fundamental theorem of metric geometry~\cite[\S 32]{Cartan2001Orthogonal-frame}. We refer to the latter for the geometric interpretation of zero-torsion in terms of `deformations'.) 

It thus should be clear that

\begin{proposition}[$(P,\mathpzc{K},\varpi,\Theta=0)\leftrightarrow (L(\mc{M}),\cg)$, \text{cf.~\cite[Ch.6, \S 3]{Sharpe1997Diff-Geometry-Cartan}}] The Riemannian geometry on $\mc{M}$, as given by the non-degenerate smooth metric in the calculus on the (linear) frame bundle picture is equivalent (up to overall scale) to the torsion-free Cartan connection on $(P,\mathpzc{K})$.
\end{proposition}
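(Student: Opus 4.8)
The plan is to route the whole argument through the frame–bundle description already secured by Proposition~\ref{prop:equiv-1}, which puts a Cartan connection $\varpi$ in bijection with a pair $(\theta,\om)$ consisting of a soldering form and an Ehresmann connection on a principal $H$-bundle. Once this is in hand, the content of the statement collapses to the observation that, when $H=\mr{O}(\eta)$, the torsion-free condition $\Theta=0$ pins down $\om$ as the Levi–Civita connection of the metric extracted from $\theta$. I would therefore organize the proof as two mutually inverse constructions, forward ($\varpi\mapsto\cg$) and backward ($\cg\mapsto\varpi$), and then verify that they compose to the identity.

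For the forward direction I would start from $(P,\mathpzc{K},\varpi)$ with $\varpi=\theta+\om$ and structure group reduced to $H=\mr{O}(\eta)$, so that $\om$ is automatically $\mf{h}=\mf{o}(\eta)$-valued and hence metric-compatible in the sense of~\eqref{eq:orthogonal-displacement}. Using the soldering isomorphism~\eqref{eq:soldering}, $T\mc{M}\approx E=P\times_H V$, I would transport the fixed reference scalar product $\eta$ on $V\cong\mf{g}/\mf{h}$ down to a field $\cg$ on $\mc{M}$ by $\cg(X,Y):=\eta(\tilde\theta(X),\tilde\theta(Y))$. The point making $\cg$ well defined is exactly that two frames over the same $x$ differ by an $A\in H=\mr{O}(\eta)$, which by definition preserves $\eta$; non-degeneracy of $\cg$ follows from that of $\eta$ together with the fact that $\theta_p$ restricts to a linear isomorphism on horizontal vectors (the soldering property, cf.~Def.~\ref{def:Cartan-connection}\,\ref{def:Cartan-2}). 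The hypothesis $\Theta=d\theta+[\om\wedge\theta]=0$ from~\eqref{eq:torsion-curvature}, read in a frame as the first structure equation~\eqref{eq:structure-0}, then says precisely that $\om$ is torsion-free.

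For the backward direction I would, given a non-degenerate $\cg$ of signature $\eta$, form the orthonormal frame bundle $O(\mc{M})\subset L(\mc{M})$ whose tautological $V$-valued form $\bar\theta$ (Sec.~\ref{sec:soldering}) supplies $\theta$. The heart of the matter is the existence-and-uniqueness result already quoted in the text (Cartan's lemma, \eqref{eq:orthogonal-displacement} and its footnote): with the $\theta^i$ prescribed and $ds^2=\eta_{ij}\,\theta^i\otimes\theta^j$ fixed, there is a \emph{unique} family of skew-symmetric $\om_{ij}=-\om_{ji}$ solving $d\theta^i+\om^i_{\ph{i}j}\wedge\theta^j=0$. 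This yields an $\mf{h}$-valued $\om$ obeying the Ehresmann equivariance $R_A^\ast\om=\mr{Ad}(A^{-1})\om$ and $\Theta=0$; reassembling $\varpi=\theta+\om$ and invoking Def.~\ref{def:Cartan-connection}\,\ref{def:Cartan-1}--\ref{def:Cartan-3} through Prop.~\ref{prop:equiv-1} returns a torsion-free Cartan connection. That the two passages are mutually inverse is then immediate, since recomputing $\cg$ from the $\theta$ of an orthonormal frame recovers the original metric, and Cartan's uniqueness recovers the original $\om$.

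The qualifier \emph{up to overall scale} I would dispatch by noting $\mr{O}(\eta)=\mr{O}(\lambda\eta)$ for $\lambda>0$: the structure group, the algebra $\mf{h}$, the connection $\om$ and the torsion-free equation are all blind to a rescaling $\eta\mapsto\lambda\eta$ of the reference form on $V$, so a given Cartan datum fixes $\cg$ only after one commits to the absolute unit of length of Def.~\ref{def:metric-vector}. The main obstacle is the existence/uniqueness step in the backward direction, i.e.\ the Koszul-type linear-algebra manipulation packaged in Cartan's lemma that inverts $\theta\mapsto\om$; but since this is precisely the cited result (proved in~\cite[Ch.6, \S 3]{Sharpe1997Diff-Geometry-Cartan}), the residual work is the $H$-equivariant bookkeeping of frame-independence and of the inverse property, which is routine.
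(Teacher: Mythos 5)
Your proposal is correct and follows essentially the same route as the paper: both directions hinge on the soldering identification $\cg(X,Y)=\eta(\tilde\theta(X),\tilde\theta(Y))$ together with metric compatibility forcing $\om$ into $\mf{o}(\eta)$ (the paper derives this as~\eqref{eq:orthogonal-displacement}), and the crux in the backward direction is the same Cartan's lemma giving the unique skew-symmetric solution of $d\theta^i+\om^i_{\ph{i}j}\wedge\theta^j=0$, mediated by Prop.~\ref{prop:equiv-1}. Your explicit treatment of the \emph{up to overall scale} qualifier via $\mr{O}(\eta)=\mr{O}(\lambda\eta)$ is a small tidy addition the paper leaves implicit, but it does not change the argument.
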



The development provides tools for studying also the global properties of topology on~$\mc{M}$. However, the following result shows that the \emph{trivial} connections are appropriate to cope with the task.

\begin{proposition}
If the connection $\varpi$ satisfies structural equation, the result of the development does not depend on the choice of the path $\ga$, but only on the homotopy class $\llbracket \ga\rrbracket$~\cite[Th.3.7.7]{Sharpe1997Diff-Geometry-Cartan}. 
\end{proposition}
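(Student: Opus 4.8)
The plan is to exploit the non-abelian fundamental theorem of calculus (Theorem~\ref{th:fund-calculus}) on a two-dimensional homotopy square, rather than propagating developments step-by-step along a continuous family of nearby paths. Suppose $\ga_0,\ga_1:(I,0,1)\ra(P,p_i,p_j)$ are homotopic rel endpoints via a smooth map $H:I\times I\ra P$ with $H(s,0)=\ga_0(s)$, $H(s,1)=\ga_1(s)$, and fixed vertical edges $H(0,t)=p_i$, $H(1,t)=p_j$ for all $t$. (That a piecewise-smooth homotopy may be taken smooth is the content of the footnote cited earlier.) First I would pull the connection back to the square, forming the $\mf{g}$-valued $1$-form $H^\ast\varpi$ on $I\times I$. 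Since the exterior derivative and the Lie bracket both commute with pull-back, the structural equation $d\varpi+\frac12[\varpi\wedge\varpi]=0$ is inherited: $d(H^\ast\varpi)+\frac12[H^\ast\varpi\wedge H^\ast\varpi]=0$.

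Because $I\times I$ is contractible, it is simply connected, so its monodromy representation is trivial and part~(c) of Theorem~\ref{th:fund-calculus} applies globally: there is a smooth primitive $F:I\times I\ra G$ with $\om_F=H^\ast\varpi$. Restricting $F$ to a horizontal line $t=\mr{const}$ yields a primitive of $\ga_t^\ast\varpi$, where $\ga_t=H(\cdot,t)$. By the uniqueness clause~(b) of the same theorem, the development of $\ga_t$ with prescribed initial value $g_0$ must coincide with $s\mapsto g_0\,F(0,t)^{-1}F(s,t)$, so its endpoint is $(\ka\ga_t)(1)=g_0\,F(0,t)^{-1}F(1,t)$.

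It remains to remove the apparent $t$-dependence, and this is exactly where the fixed-endpoint condition enters decisively. Along the left edge $H(0,t)\equiv p_i$ is constant, so $H_\ast(\pa_t)=0$ there and hence $(H^\ast\varpi)(\pa_t)=0$; via $\om_F=H^\ast\varpi$ this reads $L_{F^{-1}\ast}\big(\pa_t F(0,t)\big)=0$, i.e. $\pa_t F(0,t)=0$, so $t\mapsto F(0,t)$ is constant. The identical computation on the right edge gives $t\mapsto F(1,t)$ constant. Therefore $(\ka\ga_t)(1)=g_0\,F(0,0)^{-1}F(1,0)$ is independent of $t$, and the developments of $\ga_0$ and $\ga_1$ agree at their endpoints; combined with the functoriality established in the preceding proposition, which reduces reparametrizations and retracings to this case, the result depends only on the homotopy class $\llbracket\ga\rrbracket$.

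The main obstacle I anticipate is the global existence of the primitive $F$ on the square: this is precisely the ingredient I am borrowing from part~(c) of Theorem~\ref{th:fund-calculus}, and it requires both the vanishing of the monodromy (here automatic from contractibility) and the completeness hypothesis guaranteeing the constant vector fields integrate across all of $I\times I$. The secondary delicate point is the uniform handling of the left-translation ambiguity — the ``constant of integration'' of clause~(b) — across the family $\ga_t$; the edge-constancy computation above is exactly what pins down this ambiguity simultaneously for all $t$, making the endpoint comparison legitimate. An alternative, more laborious route would partition the square by a Lebesgue-number grid into cells each mapped by $H$ into a single chart admitting a primitive, and then compose the per-cell invariances, but the two-parameter primitive renders that bookkeeping unnecessary.
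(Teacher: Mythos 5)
Your endgame is correct and cleanly executed: pulling $\varpi$ back along the homotopy $H$, noting that the structural equation survives pull-back, restricting a two-parameter primitive $F$ to the lines $t=\mr{const}$, invoking the uniqueness clause (b) of Theorem~\ref{th:fund-calculus} to write $(\ka\ga_t)(s)=g_0\,F(0,t)^{-1}F(s,t)$, and killing the $t$-dependence of $F(0,t)$ and $F(1,t)$ via $(H^\ast\varpi)(\pa_t)=0$ on the constant vertical edges. The gap is the existence of $F$ itself. You obtain it from part~(c), claiming its hypothesis is ``automatic from contractibility''. But the hypothesis of part~(c) is triviality of the period group, i.e.\ that the development of $H^\ast\varpi$ around \emph{every} loop in $I\times I$ returns to $e$. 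Contractibility only tells you every such loop is null-homotopic; to conclude that developments around null-homotopic loops are trivial, you need precisely the proposition being proven. Worse, the monodromy map $\Phi:\pi_1\ra G$ descends to homotopy classes only because of this proposition, and in Sharpe's book (whose numbering the paper follows) Theorem 3.7.14 --- the source of part~(c) --- is itself deduced from Theorem 3.7.7, which is the present statement. So the main route, as written, is circular: you cannot verify the hypothesis of part~(c) for the square without already having the conclusion.

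The fix is the argument you dismiss as the ``more laborious'' alternative: it is not optional, it is the proof. Cover the compact square by a Lebesgue-number grid of cells, each carried by $H$ into a region where a local primitive of $\varpi$ exists by part~(a); build $F$ cell by cell, ordering the cells so that each new one meets the previously covered region in a connected set, and use part~(b) to fix the left-translation constants consistently --- this is where simple connectivity of the square enters legitimately, through the combinatorics of the patching rather than through a monodromy hypothesis. This subdivision argument is essentially Sharpe's own proof of Th.~3.7.7, which is all the paper itself offers (it proves nothing, only cites Sharpe). Once $F$ is so constructed, your restriction-plus-edge-constancy computation finishes the proof exactly as you wrote it. Incidentally, your worry about completeness is misplaced: on the compact square, with primitives patched by hand, no completeness hypothesis is needed; completeness enters only in the rigidity statement, Theorem~\ref{th:G-characterization}, not here.
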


\begin{definition}
The monodromy representation $\Phi_\varpi:\pi_1(P,p)\ra G$ of $\varpi$ is a group homomorphism, obtained via development, starting at $e\in G$, of the trivial connection along all possible loops $\la:(I,\pa I)\ra (P,p)$, based at the fixed frame $p\in P$. ($\pi_1$ is the Poincar\'{e}'s fundamental group of classes of homotopically equivalent loops, i.e. related by a continuous/smooth deformation.) The image of $\Phi_\varpi$ is called the period/ or monodromy group. It is defined only up to conjugacy~\cite[Th.3.7.11]{Sharpe1997Diff-Geometry-Cartan}.
\end{definition}

\pagebreak

Returning to the arbitrary connections, the following property demonstrates the relationship between vector-horizontal and homogeneous parts of the development.

\begin{proposition}
If $(\ka\ga)$ is the development of $\varpi$ along the curve $\ga$, starting at $g_0=\ka\ga(s_0)\in G$, and $h:I\ra H$, then the development along $\ga h$ starts at $g_0 h(s_0)$, and related as $(\ka\ga h)=(\ka\ga)h$ to the first one. As a consequence, the \textbf{development of the path $\bar{\ga}:I\ra\mc{M}\cong P/H$} on $\mathbb{M}\cong G/H$ is defined as $\overline{\ka\ga}\equiv\pi_G(\ka\ga)$, being independent on the particular lift $\ga\subset P$~\cite[Ch.6, \S 4]{Sharpe1997Diff-Geometry-Cartan}.
\end{proposition}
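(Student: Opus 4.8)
The claim has two parts. First, that developing along a reparametrized/right-translated curve $\gamma h$ (for $h\colon I\to H$) yields $(\kappa\gamma h)=(\kappa\gamma)h$, with initial condition shifted to $g_0\,h(s_0)$. Second, that the composite $\overline{\kappa\gamma}:=\pi_G(\kappa\gamma)$ descends to $\mathbb M\cong G/H$ independently of the chosen lift $\gamma\subset P$ of $\bar\gamma\colon I\to\mathcal M\cong P/H$. The plan is to prove the first part by a uniqueness argument for the defining ODE \eqref{eq:development}, and then to deduce the second part as an immediate corollary, since right $H$-action is precisely what is quotiented out by $\pi_G$.

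First I would verify the transformation law for the development under $\gamma\mapsto\gamma h$. The development is characterized by Def.~\ref{def:development} as the unique solution of $\gamma^\ast\varpi(\pa_s)=(\kappa\gamma)^\ast\om_G(\pa_s)$, the uniqueness coming from part (b) of Theorem~\ref{th:fund-calculus}. So it suffices to check that the candidate map $s\mapsto(\kappa\gamma)(s)\,h(s)$ satisfies the same Darboux-derivative equation along $\gamma h$ and matches the asserted initial value. The key computational input is the Leibniz rule for the Maurer–Cartan form of a product: for two curves $g_1,g_2\colon I\to G$ one has $\om_G\big(\tfrac{d}{ds}(g_1g_2)\big)=\mr{Ad}\big(g_2^{-1}\big)\,\om_G(\dot g_1)+\om_G(\dot g_2)$. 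Applying this to $g_1=\kappa\gamma$ and $g_2=h$, and using the equivariance property~\ref{def:Cartan-3} of the Cartan connection together with $\om_G(\dot h)\in\mf h$, I expect the $\mf g$-valued left-logarithmic derivative of $(\kappa\gamma)h$ to reproduce exactly $(\gamma h)^\ast\varpi(\pa_s)$, where the right-hand side is computed using $R_{h\ast}$ and property~\ref{def:Cartan-3}. The initial condition is immediate: at $s=s_0$ the product equals $g_0\,h(s_0)$. By uniqueness up to left translation, the candidate must be the development, giving $(\kappa\gamma h)=(\kappa\gamma)h$.

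Once this is established, the second part follows quickly. Any two lifts $\gamma,\gamma'$ of the same base path $\bar\gamma\colon I\to\mathcal M\cong P/H$ differ fiberwise by a smooth map into the structure group, i.e.\ $\gamma'=\gamma h$ for some $h\colon I\to H$ (this uses that $\pi\circ\gamma=\pi\circ\gamma'=\bar\gamma$ and that $H$ acts simply transitively on fibers, per Def.~\ref{def:PFB}). Then by the first part $\kappa\gamma'=(\kappa\gamma)h$, and applying $\pi_G\colon G\to G/H$ kills the right $H$-factor: $\pi_G\big((\kappa\gamma)h\big)=\pi_G(\kappa\gamma)$, since $\pi_G(gh)=\pi_G(g)$ for $h\in H$. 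Hence $\overline{\kappa\gamma'}=\overline{\kappa\gamma}$, so the development of the base path on $\mathbb M\cong G/H$ is well defined independently of the lift.

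The main obstacle I anticipate is the first, computational step: carefully handling the interplay between the \emph{full} Maurer–Cartan form on $G$ (appearing on the $g_2=h$ factor, where $\om_G(\dot h)$ lands in $\mf h$ but need not vanish) and the reductive splitting $\mf g=\mf p\oplus\mf h$ used to separate the vectorial $\theta$-part from the homogeneous $\om$-part in~\eqref{eq:development}. One must confirm that the $\mr{Ad}(h^{-1})$ twist acting on the $\theta$-component is consistent with the equivariance $\be_{pA}=\mr{Ad}(A^{-1})\be_p$ recorded after Def.~\ref{def:Cartan-connection}, and that no spurious $\mf p$-contribution is generated by $\dot h$ — precisely because $h$ takes values in $H$, so $\om_G(\dot h)\in\mf h$ and the vectorial equation transforms homogeneously. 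Getting the bookkeeping of left- versus right-translations right (the convention $\om_G=g^{-1}dg$ makes the product rule asymmetric) is the only genuinely delicate point; everything else is formal and rests on the uniqueness clause of Theorem~\ref{th:fund-calculus}.
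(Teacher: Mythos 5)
Your proof is correct and takes essentially the route the paper itself defers to Sharpe for: uniqueness of the primitive from Theorem~\ref{th:fund-calculus}(b), the Maurer--Cartan product rule $\om_G\big(\tfrac{d}{ds}(g_1g_2)\big)=\mr{Ad}(g_2^{-1})\om_G(\dot g_1)+\om_G(\dot g_2)$ combined with properties~\ref{def:Cartan-2} and~\ref{def:Cartan-3} of Def.~\ref{def:Cartan-connection} to verify that $(\ka\ga)h$ solves the development equation along $\ga h$, and then the observation that $\pi_G$ annihilates the right $H$-factor. Your handling of the one delicate point — that $\dot h$ contributes only an $\mf{h}$-valued term, so no spurious $\mf{p}$-component arises and the quotient is well defined — is exactly right, and there are no gaps.
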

This reaffirms once again that the concept of the Klein gauge is a viable one, and was chosen correctly. It allows one to map almost effortlessly between the curves on two spaces, and transmit their properties. Among all possible curves, the most `simple' ones are of special interest.

\begin{definition}
In an arbitrary Cartan geometry $(P,\mathpzc{K},\varpi)$, a \textbf{generalized circle} on $\mc{M}\cong P/H$ is defined as the $\pi$-projection of an integral curve of the $\varpi$-constant vector field on $P$. In the homogeneous Klein geometry this corresponds to the $\pi_G$-projections of the left $G$ translates of the one-parameter subgroups of $G$. (The definition extends the notion of geodesics beyond the case of the affine/reductive Lie algebra $\mf{g}=\mf{g}/\mf{h}\oplus\mf{h}$, whose straight lines are generated by $\mf{p}=\mf{g}/\mf{h}$.)
\end{definition}

\begin{proposition}
For the torsion-free Levi-Civita $\om$, the `straightest' paths of Cartan connection coincide with the `shortest' geodesics in Riemannian geometry, extremizing the arc-length $\de\int_I\theta =0$ w.r.t. path variations. This extends to the cases with $\Theta\neq0$, whenever (the so-called irreducible `vector' component of) torsion satisfies $\Theta^i_{\ph{i}ij}=0,\ j=1,...,m$. In~other words, $\Theta^i$ has to be orthogonal to the line segment; cf.~\cite[Ch.IV]{Cartan1986Affine-connections}, also the proof for Riemannian case in~\cite[\S 116]{Cartan2001Orthogonal-frame} could be easily adapted.
\end{proposition}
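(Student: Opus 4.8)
The plan is to compare the two notions in a single orthonormal moving frame along the curve, reducing each to one ODE and then subtracting. First I would make ``straightest'' precise through the development of Def.~\ref{def:development}: a generalized circle is the $\pi$-projection of an integral curve of a $\varpi$-constant field, so its development $\overline{\ka\ga}=\pi_G(\ka\ga)$ onto $\mathbb{M}\cong G/H$ is the $\pi_G$-image of a one-parameter subgroup generated by a pure translation $\mb{a}\in\mf{g}/\mf{h}$, i.e.\ a straight line. Writing $T^i=\theta^i(\dot{\ga})$ and differentiating the vectorial part $-A^{-1}\tfrac{d\mb{a}}{ds}=\theta(\dot{\ga})$ of the development system~\eqref{eq:development}, the straightness $\tfrac{d^2\mb{a}}{ds^2}\parallel\tfrac{d\mb{a}}{ds}$ collapses, after an affine reparametrization, to the autoparallel equation
\begin{equation*}
\dot{T}^i+\om^{i}_{\ jk}\,T^jT^k \ = \ 0 .
\end{equation*}
For ``shortest'' I would take the first variation $\de\int_I\theta=0$ of the arc-length of a unit-speed curve; the usual integration by parts with fixed endpoints yields the Levi-Civita autoparallel equation $\nabla^{\mr{LC}}_{\dot{\ga}}\dot{\ga}=0$ for the unique torsion-free metric connection $\mathring{\om}$ singled out by~\eqref{eq:orthogonal-displacement}.

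The core step is to subtract these two equations. Both $\om$ and $\mathring{\om}$ are $\mf{o}(\eta)$-valued by metric compatibility ($\om_{ij}=-\om_{ji}$), so their difference is a contorsion one-form $K_{ij}=-K_{ji}$; feeding each into the first structure equation~\eqref{eq:structure-0} gives $\Theta^i=K^{i}_{\ j}\wedge\theta^j$, exhibiting $K$ as an explicit linear function of the torsion components $\Theta^{i}_{\ jk}$. Because the two lower indices of the connection are contracted with the symmetric $T^jT^k$, only the velocity-symmetrized part of $K$ survives, and the discrepancy
\begin{equation*}
D^i \ := \ \big(\om^{i}_{\ jk}-\mathring{\om}^{\,i}_{\ jk}\big)T^jT^k \ = \ K^{i}_{\ (jk)}\,T^jT^k
\end{equation*}
reduces to a single contraction of the torsion with two copies of the tangent. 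The key structural fact is that $D$ is orthogonal to $\dot{\ga}$: contracting once more with $T_i$ and using the antisymmetry $\Theta^{i}_{\ jk}=-\Theta^{i}_{\ kj}$ forces $\langle D,\dot{\ga}\rangle=0$. Thus along a straightest curve the Levi-Civita (geodesic) curvature is exactly $-D$, a genuine normal vector, and such a curve extremizes $\int_I\theta$ if and only if this curvature vanishes, i.e.\ $D\equiv 0$ on the tangent.

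It then remains to convert the pointwise condition $D=0$ into the stated statement about torsion. I would decompose $\Theta^{i}_{\ jk}$ into its $H$-irreducible components --- the totally antisymmetric (axial) part, the traceless mixed part, and the ``vector'' trace $v_j=\Theta^{i}_{\ ij}$ --- and substitute into the symmetric contraction defining $D$. The axial part drops out identically (so purely axial torsion leaves the metric geodesics unchanged), while the vector trace contributes the orthogonal projection $\tfrac{1}{m-1}\big((v\cdot\dot{\ga})\dot{\ga}-v\big)$; demanding that this vanish along the line is what produces the condition $\Theta^{i}_{\ ij}=0$ and the orthogonality statement for $\Theta^i$. The step I expect to be the main obstacle is precisely this irreducible bookkeeping: the traceless mixed part also feeds into $D$ in general, so one must verify (in the setting Cartan works in, \cite[Ch.IV]{Cartan1986Affine-connections}) that it either vanishes or fails to obstruct, and confirm that the vector trace is the genuine and sole obstruction being singled out; the reparametrization subtlety between the affinely parametrized autoparallel and the invariant functional $\int_I\theta$ must also be handled with care. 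The torsion-free case $\Theta=0$ is then the trivial specialization, and the Riemannian sub-case can be transcribed essentially verbatim from~\cite[\S 116]{Cartan2001Orthogonal-frame}.
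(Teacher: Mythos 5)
Your core reduction is correct, and it is essentially the argument the paper itself only gestures at (the paper supplies no proof of its own, deferring entirely to Cartan's Ch.~IV and \S 116): the development yields the autoparallel equation (the paper's~\eqref{eq:geodesic} with $\la=0$ after arc-length reparametrization), the first variation of $\int_I\theta$ yields the Levi-Civita equation, and since both connections are $\mf{o}(\eta)$-valued by~\eqref{eq:orthogonal-displacement}, their difference is a contorsion whose symmetric contraction $D^i=K^i_{\ph{i}(jk)}T^jT^k$ is automatically orthogonal to $\dot{\ga}$ (your route via $K_{i(jk)}=\Theta_{(jk)i}$ and the form-index antisymmetry is fine; metric compatibility of $K$ in its first index pair gives the same conclusion). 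The statement ``a straightest curve extremizes arc-length iff $D\equiv0$ along it'' is exactly right, and that is the substance of the proposition.

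The obstacle you flag at the end, however, should not be ``handled'' --- it should be recognized as a looseness in the proposition's wording, not a gap in your argument. Your worry is justified: the traceless mixed part of the torsion \emph{does} contribute to $D$ generically (e.g.\ in $m=3$ take $\Theta_{112}=-\Theta_{121}=a$, $\Theta_{332}=-\Theta_{323}=-a$, all traces zero, yet $D_2=\Theta_{112}=a\neq0$ for $\dot{\ga}=\mb{e}_1$), so the literal trace condition $\sum_i\Theta^i_{\ph{i}ij}=0$ is neither necessary nor sufficient for a given curve, and even the all-curves statement requires total antisymmetry of the torsion (vector \emph{and} mixed parts zero), of which a vanishing trace is only a necessary consequence. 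The necessary-and-sufficient per-curve criterion is precisely what your computation already produced, $\Theta_{(ik)j}T^iT^k=0$, which is the proposition's operative ``in other words'' clause: the torsion contracted with the tangent is orthogonal to the line segment. In a frame adapted to the curve ($\mb{e}_1=\dot{\ga}$, as in the paper's Frenet-Serret discussion) it reads $\Theta^1_{\ph{1}1j}=0$ for $j=1,\dots,m$ with \emph{no sum} over the tangent label, which is how the notation $\Theta^i_{\ph{i}ij}=0$ in the statement must be read. So close your proof by stopping at $D\equiv0$ and its orthogonality reading, noting purely axial torsion as the case in which \emph{all} straightest paths remain extremals; do not attempt to derive the irreducible vector-component condition, since it is not the correct criterion. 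The torsion-free case and the transcription of Cartan's Riemannian argument then go through exactly as you say.
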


\subsection{Generalized tensors. Universal covariant derivative}
\label{subsec:universal-der}

Apart from the frames and their relations of parallelism, one needs the space of objects, which are ``being framed''. The experience with Riemannian geometry and GR brings the following notion of a tensor as a set of numbers, transforming by the certain law, and corresponding to some physical/geometric `entity'. In all possible frames they are encoded via equivariant maps $C(L(\mc{M}),T^{(p,q)})\equiv\{f:L(\mc{M})\ra T^{(p,q)}|f(u\circ A)=A^{-1}\cdot f(u) \ \text{for all} \ u\in L(\mc{M}), \, A\in \mr{GL}(m,\mathbb{R})\}$, where $T^{(p,q)}$ is the space of multilinear functions $f:\hat{\mathbb{R}}^m\times\stackrel{p}{\cdots}\times\hat{\mathbb{R}}^m\times\mathbb{R}^m\times\stackrel{q}{\cdots}\times\mathbb{R}^m\ra \mathbb{R}$, on which the representation is defined as $(A\cdot f)(\hat{v}_1,...,\hat{v}_p,v_1,...,v_q)=f(\hat{v}_1\circ A,...,\hat{v}_p\circ A,A^{-1}v_1,...,A^{-1}v_q)$. (The usual language of tensor calculus in index notation may be retrieved from here via evaluation on basis elements.) This is equivalent to the coordinate-independent characterization as sections of the tensor bundle $T^{(p,q)}(\mc{M})\equiv\bigcup_{x\in\mc{M}}T^{(p,q)}(T_x(\mc{M}))$. On the one side, this seems too broad for our purposes -- the notion of equivalence by general coordinate transformations being too wide. On the other hand, not all representations could be obtained this way (e.g., if we want to incorporate fermions, or use infinite dimensional Hilbert spaces, at some stage). 

We therefore adopt the following
\begin{definition}\label{def:G-tensors}
Let the principal group $G$ act on $W$ via certain representation $\rho:G\ra\mr{Diff}(W)$ [most often, one can linearize $\mr{GL}(W)$]. \textbf{Tensors of type $(W,\rho)$} on the affine Klein bundle $(P,\mathpzc{K})$, are taken to be equivariant functions and forms 
\begin{equation}\label{eq:G-tensors}
\La^k(P,W)\ \equiv\ \{f:P\ra W|f(pg)=\rho(g^{-1})f(p)\}.
\end{equation}
\end{definition}

\begin{proposition}
As usual, an alternative invariant characterization is possible in terms of equivalence classes $[pg,f]=[p,\rho(g)f]$. We posit the existence of the following natural isomorphisms: 
\begin{subequations}
\begin{equation}
\Ga(TP/G) \ \cong \ \Ga^G(TP),
\end{equation}
to the space of the $G$-invariant vector fields, i.e. sections $X$ of $TP\ra  P$, s.t. $X_{pg}=R_{g\ast}X_p$; and
\begin{equation}
\La^k(P,W) \ \cong \ C(P,W \otimes \wedge^k \mf{g}^\ast),
\end{equation}
to the space of equivariant functions, where $\mf{g}^\ast$ transforms in the anti-adjoint representation $\mr{Ad}^\ast(G)$; then
\begin{equation}
\bar{\La}^k(P,W) \ \cong \ C(P,W\otimes\wedge^k \mf{p}^\ast)
\end{equation}
characterizes horizontal forms (spanned by $\theta$), for which the arguments in $\mf{p}=\mf{g}/\mf{h}$ can be ``fed''.
\end{subequations}
(These correspondences have been excerpted by us from the geometric gauge theory~\cite[p.195]{Sharpe1997Diff-Geometry-Cartan}, and groupoid approach~\cite[p.86]{Mackenzie2005groupoids-algebroids}, and extended to the $G$ symmetry of $(P,\mathpzc{K})$. We leave them without proofs, which we guess, should be more-or-less straightforward generalization of the existing ones for the $H$-bundle. Together with~\eqref{eq:soldering}, they should verify the equivalence of different approaches.)
\end{proposition}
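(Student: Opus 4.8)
The plan is to derive all three correspondences from the single structural feature distinguishing a Klein bundle from a generic principal bundle: the Cartan connection $\varpi$ supplies, at every $p\in P$, a \emph{linear isomorphism} $\varpi_p:T_pP\stackrel{\sim}{\ra}\mf{g}$ (property~\ref{def:Cartan-2} of Def.~\ref{def:Cartan-connection}, i.e. $\ker\varpi=0$), together with the equivariance $R_g^\ast\varpi=\mr{Ad}(g^{-1})\varpi$ (property~\ref{def:Cartan-3}). This globally trivialises the tangent bundle, $TP\cong P\times\mf{g}$, and it is exactly this trivialisation — unavailable on an ordinary Ehresmann $H$-bundle, where only the vertical part is trivialised since $\dim P>\dim G$ — that turns forms into functions. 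So the second and third isomorphisms are really consequences of $\ker\varpi=0$, whereas the first is the one piece that does not use $\varpi$ at all.

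For the first isomorphism I would invoke the Atiyah-algebroid correspondence for the principal $G$-action. Define $\Phi:\Ga^G(TP)\ra\Ga(TP/G)$ by sending a $G$-invariant field $X$ to the section of $TP/G\ra P/G$ whose value over the orbit of $p$ is the class $[X_p]$; this is well defined precisely because $X_{pg}=R_{g\ast}X_p$ places $X_p$ and $X_{pg}$ in one $G$-orbit of $TP$. The inverse assigns to each orbit its unique representative lying over a chosen $p$, which is single-valued since the lifted $G$-action on $TP$ is free (inherited from freeness on $P$: $R_{g\ast}X_p\in T_{pg}P$ can equal $X_p\in T_pP$ only if $pg=p$). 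Equivariance of the resulting field is automatic, and smoothness is read off in any local trivialisation of $P$. I would not belabour this; it is the standard fact and independent of the Cartan structure.

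For the second isomorphism I would use $\varpi$ to algebraicise. Given $\om\in\La^k(P,W)$, set
\[
\hat\om(p)(\mb A_1,\dots,\mb A_k) \ := \ \om_p\big(\varpi_p^{-1}\mb A_1,\dots,\varpi_p^{-1}\mb A_k\big), \qquad \mb A_i\in\mf{g},
\]
so that $\hat\om:P\ra W\otimes\wedge^k\mf{g}^\ast$. The core computation is equivariance: from $R_g^\ast\varpi=\mr{Ad}(g^{-1})\varpi$ one obtains $\varpi_{pg}^{-1}\mb A=R_{g\ast}\varpi_p^{-1}(\mr{Ad}(g)\mb A)$, and substituting this together with $R_g^\ast\om=\rho(g^{-1})\om$ yields $\hat\om(pg)=\big(\rho(g^{-1})\otimes\mr{Ad}^\ast(g^{-1})^{\otimes k}\big)\hat\om(p)$, using $\mr{Ad}^\ast(g^{-1})\xi(\mb A)=\xi(\mr{Ad}(g)\mb A)$. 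This is exactly the claim that $\hat\om\in C(P,W\otimes\wedge^k\mf{g}^\ast)$ with $\mf{g}^\ast$ in the anti-adjoint representation. Linearity is evident, and the map $\om_p(v_1,\dots,v_k)=\hat\om(p)(\varpi_p v_1,\dots,\varpi_p v_k)$ is a two-sided inverse, establishing the bijection.

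For the third isomorphism I would restrict the second to horizontal forms. Since the vertical vectors are the fundamental fields $X^{\mb A}$ with $\mb A\in\mf{h}$ and $\varpi(X^{\mb A})=\mb A$, one has $\varpi_p(\ker\pi_\ast)=\mf{h}$ independently of $p$; hence $\om$ is horizontal iff $\hat\om(p)$ annihilates every argument in $\mf{h}$, i.e. $\hat\om(p)\in W\otimes\wedge^k\mf{h}^{\circ}$, and the canonical identification $\mf{h}^{\circ}\cong(\mf{g}/\mf{h})^\ast=\mf{p}^\ast$ delivers $\bar\La^k(P,W)\cong C(P,W\otimes\wedge^k\mf{p}^\ast)$. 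The \textbf{main obstacle} — the step I would treat most carefully — is the equivariance bookkeeping here: horizontality with respect to $\pi:P\ra P/H$ is only an $H$-invariant condition, because for $g\in G\setminus H$ the map $R_g$ does not preserve the fibres of $\pi$; correspondingly $\mf{p}^\ast=\mf{h}^{\circ}$ is an $\mr{Ad}^\ast(H)$-module but \emph{not} an $\mr{Ad}^\ast(G)$-module (it is the reductive splitting $\mf{g}=\mf{p}\oplus\mf{h}$ that makes $\mf{p}^\ast$ the relevant $H$-module). I would therefore state the third correspondence as one of $H$-equivariant objects, or route it through the earlier identification~\eqref{eq:horizontal-froms} of horizontal forms with forms on $\mc{M}\cong P/H$, and be explicit that no hidden full $G$-equivariance is silently assumed. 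This is the one place where the ``straightforward generalisation'' asserted in the text genuinely requires care.
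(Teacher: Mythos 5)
Your proof is correct, but there is no proof in the paper to compare it against: the proposition is explicitly left unproven (``We leave them without proofs\dots''), with pointers to Sharpe for the form--function correspondences and to Mackenzie for the invariant-vector-field statement. Your argument follows exactly those two routes --- the Atiyah-algebroid correspondence for the first isomorphism, and the algebraicisation of forms through the fibrewise isomorphism $\varpi_p:T_pP\stackrel{\sim}{\ra}\mf{g}$ (using only $\ker\varpi=0$ and $R_g^\ast\varpi=\mr{Ad}(g^{-1})\varpi$) for the second and third --- so you have in effect supplied the proof the paper gestures at.

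The most valuable part of your write-up is the caveat on the third isomorphism, and it should be stated as a correction rather than a precaution. With the paper's own convention (Def.~\ref{def:G-tensors}), both $\La^k(P,W)$ and $C(P,\cdot)$ carry full $G$-equivariance, and then the literal statement $\bar{\La}^k(P,W)\cong C(P,W\otimes\wedge^k\mf{p}^\ast)$ fails: a form that is simultaneously horizontal and $G$-equivariant vanishes identically. Indeed, if $\varphi$ is horizontal and $R_g^\ast\varphi=\rho(g^{-1})\varphi$, then for $\mb{A}\in\mf{h}$ and arbitrary $Z_i\in T_pP$,
\[
\varphi_{pg}\big(X^{\mr{Ad}(g^{-1})\mb{A}}_{pg},R_{g\ast}Z_2,\dots,R_{g\ast}Z_k\big)\;=\;\rho(g^{-1})\,\varphi_p\big(X^{\mb{A}}_p,Z_2,\dots,Z_k\big)\;=\;0,
\]
so $\varphi_q$ annihilates $X^{\mb{B}}_q$ for every $\mb{B}$ in the span of $\bigcup_{g}\mr{Ad}(g)\mf{h}$. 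For the affine group $\mr{Ad}(\exp\mb{a})\mb{A}=\mb{A}-[\mb{A},\mb{a}]$ with $[\mf{h},\mf{p}]=\mf{p}$, so this span is all of $\mf{g}$; since the fundamental fields span $T_qP$, one gets $\varphi\equiv0$. Meanwhile the right-hand side is not zero (for the affine group $\mf{p}$ is an ideal, so $\mf{p}^\ast$ is a genuine $G$-module and equivariant functions abound). Hence the third correspondence holds only with $H$-equivariance on both sides --- i.e.\ as the standard tensorial-form correspondence on the underlying principal $H$-bundle, with $\mf{p}^\ast\cong\mf{h}^\circ$ as an $\mr{Ad}^\ast(H)$-module, or equivalently routed through~\eqref{eq:horizontal-froms} --- exactly as you propose. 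This is the one place where the ``more-or-less straightforward generalization'' asserted in the statement is not straightforward, and your repair is the right one.

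One minor point you gloss over: the freeness of the $G$-action on $P$, which your inverse construction in the first isomorphism relies on, holds only locally on a Klein bundle. For a locally Klein geometry $P\cong\Ga\backslash G$ with nontrivial period group, point stabilizers are nontrivial discrete subgroups, and the unique-representative argument must be supplemented by the corresponding fixed-point condition on the section's values. This does not affect the substance, but it is the same species of global subtlety the paper itself acknowledges in its discussion of rigidity of Klein geometries.
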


In the ordinary gauge theory, the expressions in the local/bundle charts $\tilde{\si}^\ast f:\bar{U}\ra W$ are obtained using trivializing sections $\tilde{\si}:\bar{U}\ra P$. In the case of $(P,\mathpzc{K})$, tensors can be pulled back by the Klein gauge $\psi:\ka(P)\ra P$ to the local expressions/or images $f_\ka\equiv\psi^\ast f$ on $G$ in the chart $(U,\ka)$. Essentially, one can map back and forth between $P$ and $G$ using $\ka$ (at the point and in the local neighbourhood). We will therefore always assume that for all practical purposes (like integration and differentiation), the appropriate local gauge has been chosen. In other words, we can safely put $p=(\mb{m},\mb{e}_1^{\ph{i}},...,\mb{e}_m^{\ph{i}})$ to be seen as `exact' expression (pointwise), framing the images of vectors and multivevector valued forms. For instance, $f(p)= \mb{X}(\mb{m},e)=\mb{e}_i^{\ph{i}}X^i$ would give the coordinates of a free vector in the frame centered at $\mb{m}$, while $f(p)=\mb{X}_0(\mb{m},e)=\mb{m}+\mb{e}_i^{\ph{i}}x^i$ brings the radius-vector of a point. The latter will serve the origin of the new frame $p'=(\mb{m}',e')$. In the most of circumstances, the arguments $p$ may be omitted, being clear from the presentation~\footnote{It is said in~\cite{Willmore1959Diff-Geometry} that \blockquote{E. Cartan regarded such concepts as infinitesimal elements, small quantities of the second order, etc., as quite precise entities, and his geometrical insight was amply powerful to prevent his falling into the numerous pitfalls which normally accompany such concepts. For example, he would regard~\eqref{eq:frame-displacement} as established by treating the geometrical interpretation as fundamental, and interpreting this geometrical fact symbolically.}

It appears that such symbolic calculus is more well-founded by placing the tangent elements in the value space, instead of the base manifold, through the notion of Klein gauge and using the `osculation'~\eqref{eq:osculation}, in particular.}.

In the spirit of the gauge theory and Levi-Civita's (linear) parallelism of GR, one could also define the notion of \textbf{universal covariant derivative} as
\begin{equation}\label{eq:universal-der}
\nabla_{\mb{X}} f \ := \ \varpi^{-1}(\mb{X})f \ \equiv \ X(f),
\end{equation}
that is generated by the fundamental vector fields $\varpi(X)=\mb{X}=(\mb{q},\mb{Q})\in\mf{g}=\mf{p}\oplus\mf{h}$ of a given Cartan connection. The calculation gives the absolute variation of the field as
\begin{align}\label{eq:abs-var}
\nabla_{\mb{X}} f(p) \ & = \ \frac{d}{ds}f(p\exp(s\,\mb{X}))\bigg|_{s=0}  \nonumber \\
& = \ \frac{d}{ds}\rho(\exp(-s\,\mb{X}))f(p)\bigg|_{s=0} \ = \ -\rho_\ast(\mb{X})f(p).
\end{align}
It embraces the definition of~\cite[p.194]{Sharpe1997Diff-Geometry-Cartan}, and modifies it in two regards:
\begin{enumerate}[label={\upshape(\arabic*)}, align=left, widest=iii]
\item The horizontal part of the derivative is also generated by the group action, acquiring the form of a gauge-transformation.
\item It is able to move points and change positions of bound tensors, \ul{providing the gauge-theoretic description of the (local) diffeomorphism group}.
\end{enumerate}
(\emph{There is no difference between covariant and Lie derivatives in that picture!})
This looks unusual, so lets provide more details. There are three principal cases to consider: 
\begin{enumerate}[label={\upshape(\alph*)}, align=left, widest=iii]
\item Rotation/Lorentz transformation. For instance, applying $X=\om^{-1}(\mb{Q})$ to the free vector $f(p)=\mb{e}_i^{\ph{i}}Y^i\equiv\mb{Y}$, one obtains
\begin{equation*}
\nabla_{\mb{Q}}\mb{Y} \ = \ -\mb{Q}\cdot \mb{Y} \ = \ - \mb{e}_i^{\ph{i}}\, Q^i_{\ph{i}j}Y^j.
\end{equation*}
For the orthogonal group $H=\mr{O}(\eta)$, the bivector representation is isomorphic to the adjoint $(\mf{h},\mr{ad}(H))$, so that
\begin{equation*}
\nabla_{\mb{Q}}(\mb{Y}_1\wedge\mb{Y}_2) \ = \ -[\mb{Q},\mb{Y}_1\wedge\mb{Y}_2] \ = \ - \mb{e}_i^{\ph{i}}\wedge \mb{e}_j^{\ph{i}}\, (Q^i_{\ph{i}k}Y_1^kY_2^j + Q^j_{\ph{i}k}Y_1^iY_2^k)
\end{equation*}
and similarly for higher multivectors. These are just infinitesimal variations, generated by the linear transformation of the frame at the fixed point.
\item The points themselves are shifted by applying translations. Take, for instance, $f(p)=\mb{m}+\mb{0}$ to be the zero radius vector, representing origin of the frame $p=(\mb{m},e)$. Consider the curve $\ga:(I,0)\ra (P,p)$, s.t. $\dot{\ga}(s)=\ga_\ast(\pa_s)=X^{\mb{q}}_{\ga(s)}=\theta^{-1}_{\ga(s)}(\mb{q})$ at $p=\ga(0)$. Then we have
\begin{align*}
\nabla_{\mb{q}}\mb{m}(p) \ & \equiv \ df(X^{\mb{q}}_p) \ = \ \frac{d}{ds}f(pe^{s\mb{q}})\bigg|_{s=0} \nonumber \\ 
& = \ \frac{d}{ds}(\mb{m}+s\mb{q}+...)\bigg|_{s=0} \ = \ \theta(X^{\mb{q}}_p),
\end{align*}
which coincides with the first equation in~\eqref{eq:frame-displacement}. 

\item Analogously, for the translation of a free vector, one takes the linear frame itself $f(p)=\mb{e}_j^{\ph{i}}$ (having coordinates uniformly constant and equal 1). Applying the same procedure as above:
\begin{equation*}
\nabla_{\dot{\ga}}\mb{e}_j^{\ph{i}}(p) \ = \ \frac{d}{ds}\mb{e}_j^{\ph{i}}\, e^{s\om(\dot{\ga})}\bigg|_{s=0} \ = \ \mb{e}_i^{\ph{i}} \, \om_p(\dot{\ga})^i_{\ph{i}j},
\end{equation*}
the second equation of~\eqref{eq:frame-displacement} is reproduced, if the frame does not stay parallel. (Notice that in order to describe how vector changes from point to point, it is not enough to give just two nearby points of $\mc{M}$, one must provide two nearby frames in $P$. The above change could be corrected by performing the gauge transform $\mc{G}(P,\mathpzc{K})$, making two frames, or rather their images in $G$, parallel.)
\end{enumerate}

The coherency of the framework is observed, providing justification for the frame's infinitesimal displacement formulas. The expressions in vector notation (``active'') could be supplemented by equivalent formulas, relating coordinates of a point in the affine space at $\mb{m}$ to its corresponding coordinates in $\mb{m}+d\mb{m}$ (describing the same point in the ``passive'' view of group transformations):
\begin{subequations}
\begin{align}
\nabla_{\dot{\ga}} \mb{X}_0(p) \ & = \ \mb{e}_i^{\ph{i}} \left(\frac{dx^i}{ds}+\theta^i(\dot{\ga})+\om(\dot{\ga})^i_{\ph{i}j}x^j\right) \ = \ 0, \\
\nabla_{\dot{\ga}} \mb{X}(p) \ & = \ \mb{e}_i^{\ph{i}} \left(\frac{dX^i}{ds}+\om(\dot{\ga})^i_{\ph{i}j}X^j\right) \ = \ 0.
\end{align}
\end{subequations}
Assuming parallel frames, one could also note that $df(\mb{m}+s\mb{q},e)/ds=-\rho_\ast(\mb{q})f(\mb{m},e)$ -- the action of translations is generated by the usual shift operator $\rho_\ast(\mb{e}_i^{\ph{i}})=-\pa_i$. The above formulas are just repercussions of the (infinitesimal) equation of development. 


The 2nd expression corresponds to the usual covariant derivative, and the notion of parallel (covariantly constant) vector fields on $\mc{M}$~\footnote{As have been said in\cite[p. 111]{Bleecker1981gauge-variational-principles}:~\blockquote{Since everyone has a different definition of covariant differentiation, we leave it to you to show that this is correct for your definition.}}. The 1st equation describes in coordinates the radius-vector undergoing a Cartan displacement. The resulting change around a small contour is given by $\Delta x^i+\Theta^i+\Om^i_{\ph{i}j}x^j=0$. Any point of $\mc{M}$ has a neighbourhood admitting $m$ linearly independent constant vector fields iff $\Om=0$. In this case, then $\Theta=0$ is equivalent to the existence of a radius-vector field defined over some neighbourhood of every point~\cite{Trautman1970fibre-bundles}.

\chapter{The Einstein-Cartan theory of relativity}
\label{ch:Einstein-Cartan}

In theoretical physics, one most often encounters two distinct notions of `geometry', roughly corresponding to the dualism between the modern theories of fundamental interactions. 
\begin{enumerate}[label={\upshape(\Alph*)}, align=left, widest=iii]
\item In the Yang-Mills type gauge theories, the state of the particle $\psi\in\Ga(\xi)$ is typically viewed as taking values in some `internal' vector space~$F$, attached at every point of base manifold~$\mc{M}$; whereas the `connective' geometry prescribes (infinitesimally, via gauge potential 1-forms $A$) how these various fibers are related. This is the situation in the Standard Model (SM) of particle physics, where some `external' (absolute) background is required in addition, to govern the locations and motion in $\mc{M}$. \label{descr:SM}
\item The latter is precisely the task of the second type of geometry, endowing $\mc{M}$ with some metric properties (typically in the Riemannian formulation). The flat spacetime picture could be considered exact -- if the gravitational interactions (weak at the typical scales of SM) are disregarded; however, it is viewed, instead, as only approximate description -- valid in the local neighbourhood.  In general, the geometry may be non-trivially `curved' -- according to GR, whose field equations govern the dynamics of the background itself. \label{descr:GR}
\end{enumerate}


The attempts have been undertaken to conceptualize two theories in each other's terms. First, the space in GR is curved by the certain Levi-Civita connection $\om$ of the first type~\ref{descr:SM}. So the main challenge is the role of metric $\cg$ and the relation between `internal-vs-external' symmetries~\cite{Petti2006translational-symmetries,Catren2015Cartan-gauge-gravity}. The preference given to $\om$ or $\cg$ may lead to various gauge theories of gravity $(\mr{B})\subset(\mr{A})$~\cite{BlagojevicHehl2013,GotzesHirshfeld1990MacDowellMansouri-geometry,StelleWest1980broken-deSitter-holonomy,Wise2010Cartan-geometry}, or the Kaluza-Klein models $(\mr{A})\subset(\mr{B})$~\cite[Ch.9]{Bleecker1981gauge-variational-principles}, correspondingly. 

We prefer more conservative/geometric approach to connections, fully exposed in the previous Ch.~\ref{sec:Cartan-affine}. This section is intended to further highlight (and remind) how the original Einstein-Cartan gravity~\cite{Cartan1986Affine-connections} is already a perfect gauge-theoretic framework for GR in its own right~\footnote{It is also sometimes referred to as Einstein-Cartan-Sciama-Kibble theory, after the re-discovery of the corresponding lagrangian in the works~\cite{Kibble1961gauge-trick-ECSK,Sciama1964ECSK}, following the Utiyama's extrapolation~\cite{Utiyama1956gauge-trick,Trautman1970fibre-bundles,Bleecker1981gauge-variational-principles} of the Yang-Mills hypothesis~\cite{Yang-Mills1954} to all interactions. We only briefly mention these developments here, since this route goes somewhat against our own thinking about gauge theory in terms of ``generalized relativity''. Historically, one can trace the motivations for the general idea of connection back to GR and Levi-Civita's parallelism~\cite{Iurato2016Levi-Civita-history}, from where it was abstracted by Weyl~\cite{Weyl1952Space-Time-Matter} and Cartan~\cite{Cartan1986Affine-connections}, among others, and later formalized in Ehresmann's works~\cite{Marle2014fromCartan-toEhresmann}. It was probably not until the Yang-Mills paper, soon thereafter, that the wider community of [particle] physicists became closely acquainted with these abstract concepts -- which, nevertheless, ``were not dreamed up. They were natural and real.'' (cf. discussion in~\cite{Yang1977gauge-geometry}).}. It goes hand-in-hand with the Einstein's intuition and his original insights on the nature of inertia and acceleration. 

The preference given to this model of gravity is not meant to stress the certain form of the action functional, or the specific modification, which allows the coupling of non-trivial torsion with angular momentum. Rather, our goal is to demonstrate that the formalism of the theory due to Cartan is structurally rigid and geometrically clean, leading unequivocally to both the Einstein equations and the aforementioned generalization. As for the prospects of quantization -- using Loop or Foam approaches -- in our view, the provided techniques are capable of supplying mathematically more sound footing for the discretization and diffeomorphism symmetry, as well as the notion of observables.

\section{Energy-momentum and spin- tensors}
\label{subsec:energy-momentum}

Consider the ordinary Minkowski space-time ($\mathbb{M}^m$,$\eta$) of special relativity with $m=4$. It is endowed with the trivial affine connection, satisfying structure equations~\eqref{eq:structure-0}. The notion of equivalence of frames have an absolute meaning, the development being independent on the particular path. Suppose that all $e\equiv\{\mb{e}_i^{\ph{i}}\}$ have been put parallel to a particular frame $p=(\mb{m}_0,e)$ at the point $d\mb{m}_0=0$, such that $d\mb{e}_i^{\ph{i}}=0$, and the forms $\om$ everywhere vanish. From the first structure equation $d\theta=0$, the forms $\theta$ are then exact differentials. Hence, without loss of generality, one can set $d\mb{m}=dx^i\, \mb{e}_i^{\ph{i}}$, $x$~being an affine coordinates for the entire manifold.

In a different vein, a spacetime vector has components $\mb{X}=(0,X^i)=(0,y^i-x^i)$, obtained by subtracting the coordinates of two events at its origin $\mb{x}$ and extremity $\mb{y}$, respectively. The coordinates $X^i$ will be identical in all parallel frames $\mb{e}'^{\ph{i}}_i =\mb{e}_i^{\ph{i}}$ that are simply translates of each other $o'=o+\mb{a}$. This establishes the notion of equivalence (reflexive $\mb{X}\sim \mb{X}$, symmetric $\mb{X}\sim \mb{Y} \ \Rightarrow \ \mb{Y}\sim \mb{X}$, and transitive $\mb{X}\sim \mb{Y}, \, \mb{Y}\sim \mb{Z} \ \Rightarrow \ \mb{X}\sim \mb{Z}$) that constitutes an affine structure. When the frame is rotated/boosted/or rescaled, the components $X'^i=A^i_{\ph{i}j}X^j$ are related by $A\in\mr{GL}(m,\mathbb{R})$, s.t. the linear structure $[\la\mb{X}]\approx \la[\mb{X}]$ and $[\mb{X}+\mb{Y}]\approx [\mb{X}]+[\mb{Y}]$ is preserved. In~addition, one requires the interval $|\mb{X}'|^2=|\mb{X}|^2$ to be invariant under $H=\mr{O}(\eta)$, reducing the admissible frame transformations to the Poincar\'{e} (sub-)group $G=V\rtimes H$. It is often convenient to choose the orthonormal linear basis $\bra\mb{e}_i^{\ph{i}},\mb{e}_j^{\ph{i}}\ket = \eta_{ij}$, which is always possible.


Now, consider the point particle of the mass $\mu$, moving along $\ga:I\ra\mathbb{M}\cong P/H$ with instantaneous velocity $\dot{\ga}$, s.t. its vector of energy-momentum is 
\begin{equation}\label{eq:momentum-vector}
\mb{P} \ \equiv \ \mu \, d\mb{M}(\dot{\ga}) = \mu \, \frac{dx^i}{ds}\, \mb{e}_i^{\ph{i}}.
\end{equation}
(To conform with the notation of Sec.~\ref{sec:global-geometry}, let $\mb{M}=\mb{m}_0+\mb{m}\equiv \mb{m}_0+\mb{e}_i^{\ph{i}}\, x^i$ denote the position $\mb{m}$ of the particle w.r.t. some fixed-but-arbitrary frame $p=(\mb{m}_0,e)$; we allow some sloppiness w.r.t. `big/small' letter for a radius in these few sections. Parameter $s$ is normally a proper time.) The `principle of inertia' then states that it remains constant in direction and magnitude in the absence of interactions:
\begin{equation}\label{eq:acc-0}
\dot{\mb{P}} \ \equiv \ \mu \, \frac{d^2x^i}{ds^2} \, \mb{e}_i^{\ph{i}} \ = \ 0,
\end{equation} 
(valid modulo global frame transformations with constant coefficients).
In other words, due to acceleration being zero, the velocity stays parallel to itself, and the particle's trajectory is a straight line. This preservation of momentum is trivially the consequence of the homogeneity of space w.r.t. translations.

There is another known conservation law of angular momentum, due to isotropy w.r.t. rotations/Lorentz transformations. They can be concisely put together using the formalism of multivectors from Ch.~\ref{sec:global-geometry}. Namely, consider a \emph{sliding} vector
\begin{equation}\label{eq:sliding-momentum}
\mb{P}_0^{(1)} \ := \ \mb{M}\wedge \mb{P} \ = \ P^i(\mb{m}_0^{\ph{i}}\wedge\mb{e}_i^{\ph{i}}) + \frac12(x^iP^j-x^jP^i)\, \mb{e}_i^{\ph{i}}\wedge\mb{e}_j^{\ph{i}}, \qquad P^i \ = \ \mu\, dx^i/ds,
\end{equation}
consisting of the (bound) linear momentum, and the bivector of the respective angular momentum (orbital). It is actually a Lie algebra $\mf{g}=\mf{p}\oplus\mf{h}$ valued quantity, and the preservation $\dot{\mb{P}}_0^{(1)}=0$ is a consequence of the symmetry w.r.t. (global) $G$ action.

Similarly, the dynamics of continuous media can be formulated via vector-valued 3-form:
\begin{equation}\label{eq:medium-momentum}
\tilde{\mb{P}}^{(3)} \ = \ \mb{e}_i^{\ph{i}} \, \tilde{P}^i, \qquad \text{where} \qquad \tilde{P}^i \ = \ \tilde{P}^{ij}\, \frac{1}{3!}\varepsilon_{jklm}^{\ph{i}}\, dx^k\wedge dx^l\wedge dx^m,
\end{equation}
and $\tilde{P}^{ij}=\mu \,  U^i U^j+p^{ij}$ is the usual energy-momentum tensor of an effective hydrodynamic description. Here $\mu$ should be understood as the matter density per unit volume, $\mb{U}=\mb{e}_i^{\ph{i}}\, dx^i/ds$ giving the (averaged) velocity of each element. Whereas the pressure $p$ is considered to be the flux of momentum resulting from irregularities of molecular velocities (``internal stresses''), s.t. $U^ip_{ij}=0$. It is constructed from the complementary (orthogonal) vector of a hypersurface trivector:
\begin{equation}\label{eq:hypersurface-dual}
\star\mb{\Si}^{(3)} \ = \ \star (d\mb{m}\wedge d\mb{m}\wedge d\mb{m}) \ = \ \frac{1}{3!}\Si^{ijk}\varepsilon_{ijkl}^{\ph{i}}\, \hat{\mb{e}}^l, \qquad \Si^{ijk} \ = \ dx^i\wedge dx^j\wedge dx^k,
\end{equation}
in such a way that the scalar product
\begin{align}\label{eq:total-mass}
\bra \tilde{\mb{P}}^{(3)},d\mb{m}\ket \ & = \ \mu \, \bra \star\mb{\Si}^{(3)},d\mb{m} \ket \ = \ \mu \star (\mb{\Si}^{(3)} \wedge d\mb{m}) \nonumber \\
& = \ \mu \star \mb{\Si}^{(4)}  \ = \ \mu \, \frac{1}{4!}\varepsilon_{ijkl}^{\ph{i}}\, dx^i\wedge dx^j\wedge dx^k\wedge dx^l
\end{align}
gives the total mass of the matter (in its rest frame; recall that $d\mb{m}=dx^i\mb{e}_i^{\ph{i}}$, $ds^2=\bra d\mb{m},d\mb{m}\ket$). 

The conservation law for the energy-momentum [flowing in a world tube] is then simply stated:
\begin{equation}\label{eq:energy-conserv}
d\tilde{\mb{P}}^{(3)} \ = \ 0,
\end{equation}
leading to the Euler's equations of the fluid dynamics. Analogously to the particle case, there is additionally an angular momentum preservation:
\begin{equation}\label{eq:angular-conserv}
d(\mb{M}\wedge\tilde{\mb{P}}^{(3)}) \ = \ \frac12 (dx^i\wedge \tilde{P}^j) \,  \mb{e}_i^{\ph{i}}\wedge \mb{e}_j^{\ph{i}} \ = \ 0,
\end{equation}
taking into account~\eqref{eq:energy-conserv}. In particular, it leads to the symmetry $p^{ij}=p^{ji}$ of energy-momentum tensor.

(Remark: It might appear that too many wedges are involved, or the expressions seem ambiguous. We find it convenient to consider the wedge-product of the form components (playing the role of coordinates) as `induced' by the skew-symmetric multiplication of vectors. For instance, $d\mb{m}\wedge d\mb{m}=\mb{e}_i^{\ph{i}}\wedge\mb{e}_j^{\ph{i}}\, dx^i\otimes dx^j=\mb{e}_i^{\ph{i}}\wedge\mb{e}_j^{\ph{i}}\frac12(dx^i\otimes dx^j-dx^j\otimes dx^i)\equiv \mb{e}_i^{\ph{i}}\wedge\mb{e}_j^{\ph{i}} \frac 12 (dx^i\wedge dx^j)$, by Def.~\eqref{eq:wedge}. Every next such multiplication $d\mb{m}\wedge (.)$ brings the required symmetry factor. [Such vector notations may be not widespread, but they allow to always ``keep track'' of the frame chosen, and the nature of the objects may seem clearer.]) 

\paragraph{A quick reminder} The key element of what follows is the Stokes integration~\eqref{eq:Stokes-1}, being dual to a differential (co-boundary) operator. In order to conform with the usual terminology, let us succinctly recap how it works, in the geometric context. Consider: 1) the closed contour $\ga=\pa S$, bounding the 2-dim parameterized surface $S=S(\al,\be)$; 2) the position on the contour is specified by a radius vector $\mb{m}=\mb{m}_0+\mb{e}_i^{\ph{i}}\,x^i$, and the tangential direction -- via $d\mb{m}=\mb{e}_i^{\ph{i}}\,dx^i$ (evaluated on~$\dot{\ga}$); 3) the vector $\mb{Q}=\mb{Q}(\mb{m})$ of the certain physical/geometrical quantity is bound and projected to $\ga$, obtaining the scalar 1-form $\varphi:=\bra \mb{Q},d\mb{m}\ket\equiv Q_i(x)\,dx^i$. 

One decomposes the entire region into curvilinear parallelograms defined by intersection of coordinate lines. When performing summation, the integrals along internal boundary lines are being passed twice in the opposite directions. Hence they cancel in pairs, and it suffices to restrict ourselves with the domain of an elementary parallelogram, corresponding to infinitesimal parameters $(d\al,d\be)$:
\begin{gather*}
\oint\varphi \ = \ \frac12 (\pa_i Q_j-\pa_j Q_i)\, J^{ij}\, d\al \, d\be, \\ 
\text{where} \qquad J^{ij} \ \equiv \ \frac{\pa(x^i,x^j)}{\pa(\al,\be)} \ = \ dx^i\wedge dx^j(\pa_\al,\pa_\be) \ = \ \begin{vmatrix}
\frac{\pa x^i}{\pa\al} & \frac{\pa x^i}{\pa\be} \\
\frac{\pa x^j}{\pa\al} & \frac{\pa x^j}{\pa\be} ,
\end{vmatrix}
\end{gather*}
-- the determinant of the Jacobian matrix~\footnote{Indeed, for the parallelogram $ABCD$ passed in alphabetic order, let the differentiation $d_1=d\al\, \pa_\al$ be made along lines $AB$ and $DC$, and $d_2=d\be\, \pa_\be$ along $AD$ and $BC$, respectively. One has then: $\int_A^B=\varphi(d_1)$, $\int_B^C=\varphi(d_2)+d_1\varphi(d_2)$, $\int_C^D=-\varphi(d_1)-d_2\varphi(d_1)$, $\int_C^D=-\varphi(d_2)$. The sum gives the right answer~\cite[p.21]{Cartan2001Orthogonal-frame}.}. 

This is just a familiar integration of scalar forms, independent on the choice of axes. Consider instead the sliding vector tangent to the contour $\dot{\ga}$, ``sitting at the tip'' of the radius $\mb{m}$ and swapping $\ga$ in a circular manner: 
\begin{equation}\label{eq:bivector-surface}
\oint \mb{m}\wedge d\mb{m} \ = \ \mb{e}_i^{\ph{i}}\wedge\mb{e}_j^{\ph{i}} \, \frac12J^{ij}\, d\al \, d\be.
\end{equation}
One obtains the (simple, free) bivector~\eqref{eq:bivector-simple} of a surface element $\mb{\Si}^{(2)}$. The factor of 2 provides the correct normalization~\eqref{eq:bivector-area}. The result for the scalar form can be seen as contraction of two bivectors, giving projection of the spatial variation of $Q$ across the surface~$S$. In $m=3$ this becomes just the curl in the direction of normal $\bra[\nabla\times\mb{Q}],\star\mb{\Si}^{(2)}\ket$ (using duality). 

The whole integral over $S$ is then seen as a \emph{geometric sum} of the (system of) bivectors, for which the simplicity is not satisfied (i.e. $S$ is not planar, in general). When the boundary $\ga=\pa S$ shrinks to a point, the total sum over such `cycle' will vanish, since bivector is an exact differential. One obtains the closure condition $\oint\mb{\Si}^{(2)}=0$, as the geometric representation of $\pa S=0$.


\newpage

\section{Equivalence principle and the force of gravity}
\label{sec:energy-momentum}

Einstein's `principle of relativity' states that the physical laws should be formulated invariantly w.r.t. all admissible frames of reference. The formalization of this concept includes the so-called `general-covariance' postulate, s.t. the equations are usually expressed using tensor fields on $L(\mc{M})$, made equivalent under arbitrary smooth reparametrizations of $\mc{M}$. The essential idea is that coordinates do not exist a priori in nature, but are only artifices used in describing nature, and hence should play no role.

In the Cartan's version of GR, the notions of coordinate- and frame- transformations are carefully delineated. Roughly speaking, it is `coordinate-invariant', but `frame-covariant' [in our personal opinion]. The above formulation of dynamics (in flat space-time) relies on the choice of frame (`gauge-fixing'), but the coordinates on the domain $(P,\mathpzc{K})$ are concealed within the neutral language of forms~\footnote{To overcame the intrusive impression of coordinate-dependence, it maybe worth stressing that the parameters $x^i$ appear as 1st integrals, labelling the \emph{solutions} of the first structure equation of zero-torsion (cf. footnote~\ref{foot:Pfaff} in Sec.~\ref{subsec:structure}). So that in Riemannian case they are easily mixed with the free parameters of the frame's position in the local chart on~$\mc{M}$. We must say, this is actually puzzling [sic].}. In order to extend the previous formulas to any variable choice of (local) frames, the gauge transformations $\mc{G}(P,\mathpzc{K})$ are performed.

One thus puts at every occurrence $dx\ra d\mb{m}=\theta$, so that
\begin{equation}\label{eq:mometum-arb}
\mb{P} \ = \ \mu \, \frac{d\mb{m}}{ds} \ = \ \theta(\dot{\ga}), \qquad \mb{P}_0^{(1)} \ = \ \mb{M}\wedge \mb{P}.
\end{equation}
The conservation laws will be made invariant by replacing the derivations of fields to the covariant $\nabla$ from~\eqref{eq:universal-der}, and the exterior differentiation of forms is modified using~\eqref{eq:frame-displacement}, such that~\footnote{In analogy to the covariant differential $D^\om$ of the Ehresmann (part of the) connection, one could in principle introduce the notation $D^\varpi$ to stress the action both on free and bound tensor forms. We restrain, for  the sake of economy of notation, and will assume everywhere that the differential $d$ acts on the frame as in~\eqref{eq:frame-displacement}.}
\begin{equation}\label{eq:energy-conserv-arb}
d(\mb{M}\wedge\tilde{\mb{P}}^{(3)}) \ = \ (\mb{M}\wedge \mb{e}_i^{\ph{i}}) \, [d\tilde{P}^i+\om^i_{\ph{i}j}\wedge \tilde{P}^j] + (\mb{e}_i^{\ph{i}}\wedge \mb{e}_j^{\ph{i}}) [ \theta^i\wedge \tilde{P}^j] \ = \ 0.
\end{equation}

Even more precisely, using the proof for geodesics in~\cite[Prop.6.2.7]{Sharpe1997Diff-Geometry-Cartan}, the curve will develop into a straight line, if it satisfies
\begin{equation}\label{eq:geodesic}
\dot{\theta}^i(\dot{\ga})+\om^i_{\ph{i}j}(\dot{\ga})\, \theta^j(\dot{\ga}) \ = \ \la\, \theta^i(\dot{\ga}),
\end{equation}
namely, if the velocity and acceleration are linearly dependent. (Here $\la$ is an affine parameter; the right hand side will vanish for a geodesic parametrized by arclength $s$.)

\pagebreak

Recall that the space-time is Minkowski $\mathbb{M}$, and connection is trivial, just written in an arbitrary gauge. It is then a simple matter to incorporate acceleration due to (non-trivial) gravitational field of forces. Indeed, the mechanical law tells that the (covariant) derivative of the energy-momentum equals the `hyperforce' vector:
\begin{equation}\label{eq:mech-law}
\nabla_{\dot{\ga}}\mb{P} \ = \ \mb{G}.
\end{equation}
Since the interval $ds^2=\bra d\mb{m},d\mb{m}\ket$ is preserved, one has $\bra \mb{G},d\mb{m}\ket=0$, telling that the infinitesimal work $G^t dt=G^xdx+G^ydy+G^zdz$ done by the force equals the energy change (regarding the infinitesimal displacement as $dx^i\propto\theta^i(\dot{\ga})$, in fact).

At any particular instant one can compensate for $\mb{G}$ via gauge transformation (to the particle's rest frame), and actually for any finite path $\ga$ one can construct such a frame, which will be `freely-falling' in the gravitational field. Naturally, the equations for the affine/flat spacetime will preserve their form for the new notion of the parallel transport, given by $\mb{G}$ -- \emph{but with modified coefficients} $\varpi=\theta+\om$, determining non-trivial/non-integrable connection~\cite{Cartan1986Affine-connections}:~\blockquote{If the definition of equivalence of two nearby Galilean frames -- or, equivalently, the definition of equality of two 4-vectors whose origins are infinitesimally close -- is so chosen as to cancel the gravitational forces, the equations of dynamics would reduce to [$\nabla_{\dot{\ga}}\mb{P}=0$].}~\footnote{In analogy to Riemannian GR, one could call this ``derivative-coupling'' scheme.}

In other words, the motion of the particle agrees with the principle of inertia (constancy of velocity/or momentum preservation) if the equivalence of successive inertial frames is `dictated' by the acceleration due to gravitational field, through the new $\varpi$. The framework of Cartan connections thus implements the Einstein's `principle of equivalence' between gravitation and acceleration directly at the level of (modified) geodesics. In this way, we have demonstrated how the independent excitations of the physical field of gravity are naturally encoded in the geometric language of (non-trivial) connections on $(P,\mathpzc{K})$. 

The formula~\eqref{eq:mech-law} for the (linear) energy-momentum should be complemented with the corresponding law for angular momentum. Or, in terms of analogies from Newtonian mechanics (which are quite precise): the rate of change of the `moment of inertia' $\sim[\mb{r}\times\mb{P}]$ is given by the `moment of force' $\sim[\mb{r}\times\mb{G}]$ (and possibly some non-trivial torque $\mb{H}$). Assuming flat Minkowski space-time $\mathbb{M}$, the (covariant) derivative $\nabla_{\dot{\ga}}\mb{P}^{(1)}_0$ of the \emph{sliding} vector of momentum should be put equal to the corresponding sliding vector of ``force/torque''. One can write it concisely:
\begin{subequations}\label{eq:momentum-law}
\begin{equation}\label{eq:momentum-law-1}
\mb{M}\wedge(\nabla_{\dot{\ga}}\mb{P}) +(\nabla_{\dot{\ga}}\mb{M})\wedge\mb{P} \ = \ \mb{M}\wedge \mb{G} + \mb{H},
\end{equation}
or, in terms of components:
\begin{gather}\label{eq:momentum-law-2}
 (\mb{m}_0\wedge\mb{e}_i^{\ph{i}})\, \{\dot{\theta}^i+\om^i_{\ph{i}j}\, \theta^j\}
+ (\mb{e}_i^{\ph{i}}\wedge\mb{e}_j^{\ph{i}})\, \{x^i\,[\dot{\theta}^j+\om^j_{\ph{i}k}\, \theta^k] + [\dot{x}^i+\om^i_{\ph{i}k}\,x^k]\, \theta^j\} \nonumber \\ 
 = \ (\mb{m}_0\wedge\mb{e}_i^{\ph{i}})\, G^i + (\mb{e}_i^{\ph{i}}\wedge\mb{e}_j^{\ph{i}})\, [ x^i\, G^j + H^{ij}],
\end{gather}
\end{subequations}
where all the forms are evaluated at $\dot{\ga}\in TP$ of the particle's trajectory. We omitted the coefficient of mass, putting it $\mu=1$, s.t. expressions on the right are normalized accordingly (i.e. $\mb{G}$ is acceleration). If~one incorporates both $\mb{G}$ and $\mb{H}$ into the modified connection $\varpi=\theta+\om$ with non-trivial $\Theta,\Om\neq 0$, the motion will be `inertial' w.r.t. the new definition, that is satisfying the conservation of (linear and angular) momentum: $\nabla_{\dot{\ga}}\mb{P}^{(1)}_0=0$.

Analogously, the geometric difference $d(\mb{M}\wedge\tilde{\mb{P}}^{(3)})$ between `incoming' and `outgoing' flows of (effective) energy-momentum -- in the world tube across its 3-dim boundary -- will be given by the pressures and stresses $\mb{M}\wedge \tilde{\mb{G}} + \tilde{\mb{H}}\equiv [\mb{M}\wedge \mb{G} + \mb{H}](\star\mb{\Si}^{(4)})$, due to gravitational forces and/or torques, acting in continuous medium. This change is also accommodated by the modified connection coefficients $\varpi$, such that~\eqref{eq:energy-conserv-arb} is satisfied. Using $\theta^i\wedge\frac{1}{3!}\varepsilon_{jklm}^{\ph{i}}\, \theta^k\wedge \theta^l\wedge \theta^m = \de^i_j\star\mb{\Si}^{(4)}$, one can see that, in general, the energy momentum tensor will not be symmetric: $p^{[ij]}=H^{ij}$ -- namely, if the torsion is non-vanishing.

The above formulas~\eqref{eq:momentum-law} are written in the most explicit form, being omitted/skipped in the Cartan's original derivation~\cite{Cartan1986Affine-connections}. One could provide more context for them in terms of the \emph{kinematics} of curves in an arbitrary spacetime $\mc{M}$. Indeed, the covariant derivative simply gives the (instantaneous) variation of the developed line in $\mathbb{M}$, where the usual notion of parallelism applies. So that the expression for the curve's (free) unit tangent vector $\mb{T}:=\mb{P}/|\mb{P}|$ can be seen as the ordinary limit of the variation $\De\mb{M}/s=(A(s)\cdot\mb{M}_0-\mb{M}_0)/s\xrightarrow[s \to 0]{}\dot{\mb{M}}\equiv\nabla_{\dot{\ga}}\mb{M}=\mb{T}$~\footnote{The translation $\mb{a}$ in the law of development of radius vector does not appear: it contracts in the difference of the two radiuses, when developed to a single frame. The latter could be either some arbitrary third reference frame, or the one attached to the curve (kinematical), s.t. the radius of the point coincides with the free vector (and essentially with translation $\mb{a}$).}. Roughly speaking, the development ``rolls back'' the tangents and other vectors to the single frame at the instant, s.t. one can make sense of addition/subtraction of vector- and tensor quantities in general. The parametrization is chosen in terms of arclength, s.t. $\nabla_{\dot{\ga}}|\mb{P}|=0$.

In other words, the theory of curves can be extended from the affine/flat space-time $\mathbb{M}$ to the Riemann-Cartan space $\mc{M}$, which it osculates at the point. Assuming for simplicity the Euclidean model $\mathbb{E}^3$, the Frenet-Serret frames $\{\mb{I}_i\}\equiv \{\mb{T},\mb{N},\mb{B}\}$ associated to the corresponding curves in two spaces will be the same, consisting of the unit \emph{tangent} $\mb{T}$, \emph{principal normal} $\mb{N}$, and \emph{binormal} $\mb{B}=[\mb{T}\times\mb{N}]$ vectors. They are required to be orthonormal $\bra\mb{T},\mb{N}\ket=\bra\mb{N},\mb{B}\ket=\bra\mb{B},\mb{T}\ket=0$, $|\mb{T}|=|\mb{N}|=|\mb{B}|=1$, and their derivatives be expressed in terms of each other. 

\pagebreak

Using the covariant derivative $\nabla_{\dot{\ga}}$ and expanding w.r.t. $\{\mb{I}_i\}$, one obtains without modification the \textbf{generalized Frenet-Serret formulas}:
\begin{subequations}\label{eq:Frenet-Serret}
  \begin{empheq}[left=\empheqlbrace]{align}
    \dot{\mb{M}} \ &= \ \mb{T}, \\
    \dot{\mb{T}} \ &= \ \ka \, \mb{N}, \\
    \dot{\mb{N}} \ &= \ -\ka \, \mb{T}+\tau\, \mb{N}, \\
    \dot{\mb{B}} \ &= \ -\tau \, \mb{N},
  \end{empheq}
\end{subequations}
where $(\theta^1,\theta^2,\theta^3)=(ds,0,0)$, $(\om^2_{\ph{2}1},\om^3_{\ph{3}2},\om^1_{\ph{2}3})=(\ka,\tau,0)\, ds$, and the quantities $(\ka,\tau)$ are respectively the \emph{curvature} and \emph{torsion} of the curve $\ga$. The `straight' Riemannian geodesics are the lines of zero curvature $\ka=0$. As for the curves of zero torsion, one can characterize them to be planar, i.e. the osculating plane element $\{\mb{T},\mb{N}\}$ is transported parallel to itself iff the torsion vanishes $\tau=0$. In an $m$-dim manifold, one obtains a series of equations, in each of which a new parameter and a new vector are introduced (2nd, 3rd, etc. curvatures $\ka_i$ and normals $\mb{N}_i$; we refer to~\cite[Chs.13,22]{Cartan2001Orthogonal-frame} for further details).

\newpage

\section{On the geometric summation viewed as `coarse-graining'}
\label{sec:geom-sum}

From the differentiation of vector fields we pass to the problem of integration of forms. This  appears quite often as a regularization technique in the modern approaches to Quantum Gravity. In Sec.~\ref{subsec:discretization-problem}, we briefly mentioned the measure theory and algebraic topology among many facets that usually accompany this subject. Having in mind the motivations from gravity, however, we decided to focus on the geometric aspects. In particular, we were puzzled by two things: 
\begin{enumerate}[label={\upshape(\arabic*)}, align=left, widest=iii]
\item The nature and place of the metric degrees of freedom $\cg\sim\theta$. It plays part both of the `native' measure on $\mc{M}$, as well as the dynamical field that should be `smeared'.
\item The role of the frame choice and the gauge-dependence for the vector quantities, which appear among main objects of interest.
\end{enumerate}

It is legitimate to view the integration process as the natural `coarse-graining' operation, performing summation of the elementary/infinitesimal contributions to obtain an effective description. In Sec.~\ref{subsec:energy-momentum}, we recapitulated the basics of Stokes integration in flat space, associating the simple bivector to the element of a surface. The integral is then understood as a geometric sum of multi-vectors, or, equivalently, the algebraic sum of components, in the standard basis~\footnote{We note, in passing, that the appearance of `non-simplicity' is linked with coarse-graining in that paradigm. Trivially, the composition of elementary 2-cells does not span a 2-plane, in general.}.

Now, once the gauge theoretic structure of the Cartan geometry has been revealed, we can supply an accurate handling to the issue of frame dependence. One actually finds this task to be directly addressed in~\cite[p.126]{Cartan2001Orthogonal-frame}, in conjunction with the problem of (absolute) exterior differentiation. The notion of development occupies the central place and appears to be paramount. Let us therefore formulate a few simple rules, that allow an (almost) algorithmic construction of the new geometric forms, associated to the higher dimensional elements, using the generalized Stokes integration.

In an arbitrary geometry $(P,\mathpzc{K},\varpi)$, one cannot add vectors with different initial points, if these points are not connected by certain paths. Consider a frame $p=(\mb{m},e)$ at the point inside the domain in question, and take a family of paths $\ga$, connecting it with different points of the domain. Then we can calculate integrals over surfaces, volumes, etc. Lets figure out first the case of a \emph{free} tensor, for brevity called a vector in $W$ with the basis $\mb{e}_i^{\ph{i}}$ in an appropriate representation $\rho:G\ra\mr{GL}(W)$. 

Suppose that such a vector is associated to the infinitesimal element of $k$-dim surface $S_k\subset\mc{M}$ in the frame $p'=(\mb{m}',e')$ at some other point by means of a differential form $\varphi=\mb{e}_i^{\ph{i}}\,\varphi^i\in\bar{\La}^k(P,W)$~\footnote{We assume the form is horizontal in order to possess good transformation properties w.r.t. gauge transformations $f^\ast\varphi=\tau^{-1}\cdot\varphi$.}. One performs the parallel transport to the single $p$ via pull-back $R^\ast_{A^{-1}}\varphi=A\cdot\varphi$, using the homogeneous part $A\in H$ of the development $g=(\ka\ga)\in G$, relating two frames $p=p'g$. After all frames have been synchronized in this way, one can perform addition of vectors in the usual sense. The sum is represented by the integral, expressed in a $p$-frame as
\begin{equation}\label{eq:integr-free-1}
\int_{S_k} A\cdot\varphi \ = \ \int_{S_k} \mb{e}_i^{\ph{i}}\, A^i_{\ph{i}j}\varphi^j,
\end{equation}

This gives us the sort of a `coarse-grained vector quantity' $\varphi_{S_k}(p)$, associated with the surface by means of the form integration. Naturally, the result will depend on the family of paths taken in its definition. If the surface is a boundary $S_k=\pa S_{k+1}$ of the region, application of the Stokes formula~\eqref{eq:Stokes-1} leads to
\begin{equation}\label{eq:integr-free-2}
\int_{S_{k+1}} \mb{e}_i^{\ph{i}} \, d (A^i_{\ph{i}j}\varphi^j) \ = \ \int_{S_{k+1}} \mb{e}_i^{\ph{i}} \, (A^i_{\ph{i}j}d\varphi^j +dA^i_{\ph{i}j}\wedge \varphi^j ) .
\end{equation}

If we, however, restrict ourselves to a domain that is sufficiently small in order to be able to consider only an element of the integral, -- then the application of the holonomy equation of~\eqref{eq:development} leads to the independence of the result on the choice of paths. Indeed, the $k$-form is just the suggestive name for a determinant built on the small displacements, tangent to the surface, so that the evaluation on one of such vectors leads to $dA^i_{\ph{i}j}=A^i_{\ph{i}l}\,\om^l_{\ph{i}j}=\om^i_{\ph{i}j}$, taking into account the initial condition $A=1$ at $p$. All in all, we are led to the general expression for a vector valued form:
\begin{equation}\label{eq:Stokes-2}
\oint_{S_k=\pa S_{k+1}} \mb{e}_i^{\ph{i}} \, \varphi^i \ = \ \int_{S_{k+1}} \mb{e}_i^{\ph{i}} \, (d\varphi^i +\om^i_{\ph{i}j}\wedge \varphi^j),
\end{equation}
This provides another justification for~\eqref{eq:frame-displacement-2}, and the convenient mnemonic rule follows for the Stokes integration: apply the differentiation (modified by the connection) to the components $\varphi^i$ as well as the frame vectors. 

As for the bound tensors, it will suffice to provide the formula for the radius vector, representing `a point of attachment'. The minus sign is a little bit ``tricky''. One may notice that in result of the development, the (initial) point $\mb{m}=\mb{m}'+\mb{a}$ acquires coordinates $a^i$ w.r.t. $ p'=(\mb{m}',e')$. Hence in order to ``roll back'', one applies an inverse (or, calculate the difference $\De\mb{m}=\mb{m}'-\mb{m}=-\mb{a}$ (free vector) w.r.t. an arbitrary frame). If the frames are sufficiently close, so that the path $\ga\sim I$ corresponds to an elementary edge bounded by $\pa I=\{s,s'\}$, the differentiation brings $d\mb{m}=-d\mb{a}=A(0)\cdot\theta=\mb{e}_i^{\ph{i}}\,\theta^i$, thus ``erecting an arrow'' at the point, corresponding to velocity $\dot{\ga}$. Roughly speaking, the Stokes formula could be extended to the case of a $0$-dim boundary of an interval, where it almost ``merges'' with the derivative $\nabla$ of a radius, described in the previous Sec. All in all, the rule follows that one simply applies~\eqref{eq:frame-displacement-1} for the Stokes summation/integration of bound multi-vectors. 

\subsection{Cartan-geometric treatment of curvature and torsion}

Let us re-trace how the non-triviality of the connection is manifested in the violation of the Poincar\'{e}'s lemma ($d^2=0$ for the ordinary differential -- dual to the ``boundary of a boundary is zero'' $\pa^2=0$). Take the free vector $\mb{X}=\mb{e}_i^{\ph{i}}\, X^i$, corresponding to some velocity/momentum, and calculate similarly its variation along some (unspecified) path segment:
\begin{equation}\label{eq:vector-variation}
\oint_{\pa I} \mb{X} \ = \ \mb{e}_i^{\ph{i}}\,(d X^i+\om^i_{\ph{i}j}X^j) \ \equiv \ D^\om\mb{X}.
\end{equation}
Together with the radius variation, they can be combined into the single sliding $\De(\mb{m}\wedge\mb{X}):=(\mb{m}\wedge\mb{e}_i^{\ph{i}})\,(dX^i+\om^i_{\ph{i}j}X^j)+  (\mb{e}_i^{\ph{i}}\wedge\mb{e}_j^{\ph{i}})\,\theta^iX^j$.  Take the contour integral, giving the sum of such changes across the small cycle $\ga=\pa S$, bounding a surface element:
\begin{align}\label{eq:cycle-displacement}
\oint_\ga \De(\mb{m}\wedge\mb{X}) \ & = \ (\mb{m}\wedge\mb{e}_i^{\ph{i}})\,\Om^i_{\ph{i}j}X^j + (\mb{e}_i^{\ph{i}}\wedge\mb{e}_j^{\ph{i}})\,\Theta^i X^j \nonumber \\
& \equiv \ \mb{m}\wedge (\mb{\Om}\cdot\mb{X}) +(\mb{\Theta}\wedge\mb{X}).
\end{align}
The differentials $dX$ disappear in the computation, and the result depends only on the value of the field at the point. 

One obtains the infinitesimal displacement associated with the cycle $\ga$ (or plaquette~$S$). When the contour is developed onto $\mathbb{M}$, the resulting frame at the target will not coincide with the one at the source, in general (see Fig.~\ref{fig:Cartan-cycle}). Displacement gives the small translation $-\Theta$ and rotation $-\Om$ that have to be applied in order to match the two frames at the endpoints of the developed cycle. 

The natural isomorphism $\mf{h}\cong\bigwedge^2 V$ of $H$-representations is used, in order to identify the Lie algebra element of the curvature $\mb{\Om}=(\mb{e}_i^{\ph{i}}\wedge\mb{e}_j^{\ph{i}})\,\frac12 \Om^{ij}$ with the system of simple bivectors. They provide the decomposition of the total rotation into $m(m-1)/2$ elementary ones in the planes of the axes (corresponding to generators of a Lie algebra $\mf{h}$). The values of components measure the magnitudes of the respective small angles of elementary rotations.

\begin{figure}[!h]
\center{\includegraphics[width=0.3\linewidth]{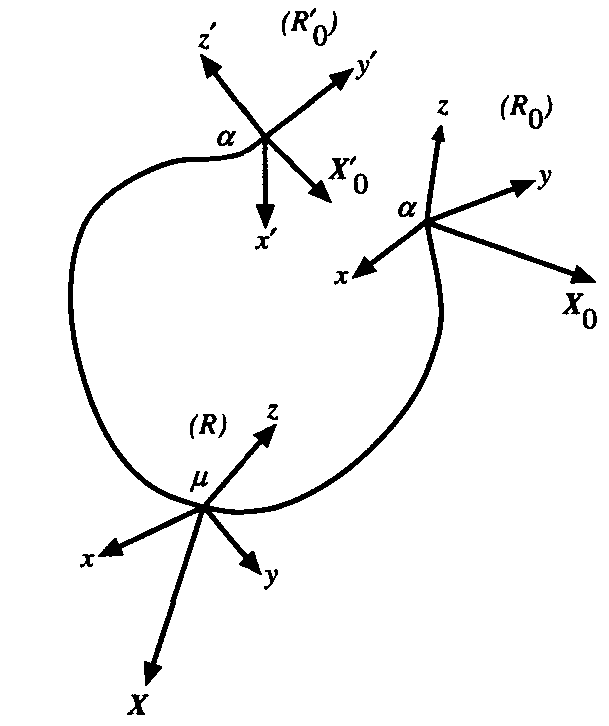}}
\caption{The developed cycle and the associated displacement (from~\cite[p.129]{Cartan2001Orthogonal-frame}).}
\label{fig:Cartan-cycle}
\end{figure}



\paragraph{Preservation of curvature and torsion of a surface.} There are important 2-dim integral invariants, constructed from $\Theta,\Om$, that satisfy the `closure' conditions. For the sake of generality, let us consider, for a moment, the system of \emph{general linear} transformations, associated by means of the 2-form $\Om\in\mf{gl}(V)$ to the elements of the surface $S_2 =\pa S_3$, bounding an infinitesimal region of space. The definition of the dual basis $\hat{\mb{e}}^i_{\ph{j}}(\mb{e}_j^{\ph{i}})=\de^i_j$ and the formula~\eqref{eq:frame-displacement-2} together provide the corresponding displacement of the dual frame $d\hat{\mb{e}}^i_{\ph{j}} =- \hat{\mb{e}}^j_{\ph{i}}\, \om^i_{\ph{i}j}$. Application of the generalized Stokes integration~\eqref{eq:Stokes-2} leads to
\begin{equation}\label{eq:Bianchi-2-meaning}
\oint_{S_2} \mb{\Om} \ \equiv \ \oint_{S_2}(\mb{e}_i^{\ph{i}}\otimes\hat{\mb{e}}^j_{\ph{i}})\, \Om^i_{\ph{i}j} \ = \ (\mb{e}_i^{\ph{i}}\otimes\hat{\mb{e}}^j_{\ph{i}}) ( d\Om^i_{\ph{i}j} +\om^i_{\ph{i}l}\wedge\Om^l_{\ph{i}j} - \Om^i_{\ph{i}l}\wedge\om^l_{\ph{i}j}),
\end{equation}
for an infinitesimal $3$-dim element (parallelepiped).

The vanishing of the integrand can be readily shown by substitution $\Om^i_{\ph{i}j}=d\om^i_{\ph{i}j}+ \om^i_{\ph{i}k}\wedge\om^k_{\ph{k}j}$, and constitutes the well-known \textbf{2nd group of Bianchi identities}:
\begin{equation}\label{eq:Bianchi-2}
d\Om \ = \ [\Om\wedge\om] \qquad \Leftrightarrow \qquad D^\om\Om \ = \ 0,
\end{equation}
for the (Ehresmann) curvature of the connection $\om$, corresponding to the general linear group of frame transformations. The above formula~\eqref{eq:Bianchi-2-meaning} clarifies their geometric meaning.

Returning to the orthogonal case $H=\mr{O}(\eta)$, consider the system of (dual) vectors, and the `moment of a pair' (according to~\cite[p.131]{Cartan2001Orthogonal-frame}), associated with a surface:
\begin{align}\label{eq:Bianchi-1-meaning}
\oint_{S_2} \star\mb{\Theta}-\mb{m}\wedge(\star\mb{\Om}) \ & \equiv \
\oint_{S_2} (\star\mb{e}_i^{\ph{i}})\, \Theta^i - \mb{m}\wedge\star(\mb{e}_i^{\ph{i}}\wedge\mb{e}_j^{\ph{i}})\, \frac12\Om^{ij} \nonumber \\
& = \ (\star\mb{e}_i^{\ph{i}})\, [d\Theta^i+\om^i_{\ph{i}j}\wedge\Theta^j] -(\eta_{jk}^{\ph{i}}\star\mb{e}_i^{\ph{i}}-\eta_{ik}^{\ph{i}} \star\mb{e}_j^{\ph{i}})\, \frac12 [\theta^k\wedge \Om^{ij}] \nonumber \\
& = \ (\star\mb{e}_i^{\ph{i}})\, [d\Theta^i+\om^i_{\ph{i}j}\wedge\Theta^j - \theta^j\wedge \Om^i_{\ph{i}j}].
\end{align}
(We have used the relation~\eqref{eq:Hodge-2}, and due to commutation of $\star$ with $\mf{h}$, the covariant derivative applied to $\Om$ eliminates by the previous~\eqref{eq:Bianchi-2}.) The quantity on the right again vanishes for any $\Theta^i=d\theta^i+\om^i_{\ph{i}j}\wedge\theta^j$, and one obtains the \textbf{1st group of Bianchi identities}:
\begin{equation}\label{eq:Bianchi-1}
D^\om\Theta \ = \ \Om\,\dot{\wedge}\,\theta,
\end{equation}
expressing the preservation of the integral in~\eqref{eq:Bianchi-1-meaning}. 

This clarifies the Cartan's example in $m=3$. In this form, the result is valid in arbitrary dimensionality, and seems to be new. The above conservation laws are strictly related with gauge symmetries, and as a matter of fact, are the direct consequences of description in terms of connections~\footnote{An analytic proof of the cycle displacement and identities in terms of sub-dominant errors and their bounds may be found in~\cite[sec.34-37]{Cartan1986Affine-connections}}.

\subsection{Surface defects and non-closure associated with torsion}
\label{subsec:non-closure}

The non-vanishing torsion implies that the infinitesimal parallelograms do not close during development~\footnote{Strictly speaking, they never do in non-flat geometries, but the defect is decrementally small for $\Theta=0$ (cf.~example of a sphere in~\cite[\S 101]{Cartan1983Riemannian-geometry}).}. The 1st Bianchi identities provide an extension of the non-closure to the higher-dimensional elements of surface/volume/etc. 

In the example~\eqref{eq:bivector-surface} from flat space, the bivector representing an element of a surface was associated to it by means of a Stokes integration of a sliding vector $\mb{\Si}^{(1)}_0=\mb{m}\wedge\boldsymbol{\theta}$ around a closed contour $\ga=\pa S$. In the general case of arbitrary geometry $(P,\mathpzc{K},\varpi)$, we have:
\begin{align}\label{eq:bivector-defect}
\tilde{\mb{\Si}}^{(2)} \ : & = \ \oint_\ga \mb{\Si}^{(1)}_0 \ = \ d[(\mb{m}\wedge\mb{e}_i^{\ph{i}})\,\theta^i] \nonumber \\
& = \ (\mb{e}_i^{\ph{i}}\wedge\mb{e}_j^{\ph{i}})\, \frac12\theta^i\wedge\theta^j + (\mb{m}\wedge\mb{e}_i^{\ph{i}})\, D^\om\theta^i \ \equiv \ \mb{\Si}^{(2)} + \De \mb{\Si}^{(2)}
\end{align}
where we have introduced the notation $\mb{\Si}^{(2)} :=\boldsymbol{\theta}\wedge\boldsymbol{\theta}$ for the surface (free) bivector, and $\De \mb{\Si}^{(2)}:=\mb{m}\wedge\mb{\Theta}$ denotes the `defect' arising due to presence of torsion.

When summing the elementary contributions over the boundary of a small three dimensional region $S_3$, the \emph{closure condition will be violated} (since $\theta$ is not an exact form):
\begin{align}\label{eq:closure-defect}
\oint_{S_2=\pa S_3} \tilde{\mb{\Si}}^{(2)} \ & = \ (\mb{e}_i^{\ph{i}}\wedge\mb{e}_j^{\ph{i}})\,\left[\frac12 (D^\om\theta^i\wedge\theta^j-\theta^i\wedge D^\om\theta^j)+\theta^i\wedge D^\om\theta^j\right] + (\mb{m}\wedge\mb{e}_i^{\ph{i}})\, D^\om\Theta^i \nonumber \\
& = \ (\mb{m}\wedge\mb{e}_i^{\ph{i}})\, \Om^i_{\ph{i}j}\wedge \theta^j \ \equiv \ \mb{m}\wedge\mb{\Om}\,\dot{\wedge}\,\boldsymbol{\theta}.
\end{align}
The contributions from non-integrable $\theta$ in $\mb{\Si}^{(2)}$ contract with those from  the torsion defect $\De \mb{\Si}^{(2)}$, appearing for $d\mb{m}=\theta$. Hence, one obtains via summation the net/integrated `torsion-defect' (bound to some point inside the small volume), given by the 1st Bianchi identities~\eqref{eq:Bianchi-1}. This should add to their geometric reading as well.

The situation is different for \emph{bound} vectors. Pick up the frame at the point inside the small volume $S_3$, and associate the sliding (multi-)vector with the infinitesimal element of its boundary surface $S_2$. For the elementary bivectors $\mb{\Si}^{(2)}$ of the surface itself, the formula~\eqref{eq:bivector-defect} generalizes to the ``$2\ra3$'' case:
\begin{align}\label{eq:trivector-defect}
\tilde{\mb{\Si}}^{(3)} \ : & = \ \oint_{S_2=\pa S_3} \mb{m}\wedge \mb{\Si}^{(2)} \ = \ d\left[(\mb{m}\wedge\mb{e}_i^{\ph{i}}\wedge\mb{e}_j^{\ph{i}})\,\frac12\theta^i\wedge\theta^j\right] \nonumber \\
& = \ (\mb{e}_i^{\ph{i}}\wedge\mb{e}_j^{\ph{i}}\wedge\mb{e}_k^{\ph{i}})\, \frac{1}{3!}\theta^i\wedge\theta^j\wedge\theta^j + (\mb{m}\wedge\mb{e}_i^{\ph{i}}\wedge\mb{e}_j^{\ph{i}})\, D^\om\theta^i\wedge\theta^j \ \equiv \ \mb{\Si}^{(3)} + \De \mb{\Si}^{(3)}.
\end{align}
One obtains the hypersurface trivector $\mb{\Si}^{(3)} :=\boldsymbol{\theta}\wedge\boldsymbol{\theta}\wedge\boldsymbol{\theta}$, corresponding to the volume of the bounded region, together with some `torsion defect' of the sum $\De \mb{\Si}^{(3)}:=\mb{m}\wedge\mb{\Theta}\wedge\boldsymbol{\theta}$.

The process can be iterated in higher dimensions. Let us restrict to $m=4$, and perform the ``$3\ra 4$'' integration step. If one takes~\eqref{eq:trivector-defect} to represent the hypersurface boundary of the (small) spacetime volume, then its failure to close is provided by
\begin{align}\label{eq:hypersurface-closure}
\oint_{S_3=\pa S_4} \tilde{\mb{\Si}}^{(3)} \ & = \ (\mb{m}\wedge\mb{e}_i^{\ph{i}}\wedge\mb{e}_j^{\ph{i}})\, \left[D^\om\Theta^i\wedge\theta^j +D^\om\theta^i\wedge D^\om\theta^j\right]  \nonumber \\
 & = \ (\mb{m}\wedge\mb{e}_i^{\ph{i}}\wedge\mb{e}_j^{\ph{i}})\, \Om^i_{\ph{i}k}\wedge\theta^k\wedge\theta^j \ \equiv \ \mb{m}\wedge(\mb{\Om}\,\dot{\wedge}\,\boldsymbol{\theta})\wedge\boldsymbol{\theta}.
\end{align}
The 4-vector, spanning the `chunk of' spacetime $S_4$, is calculated analogously to~\eqref{eq:trivector-defect}:
\begin{align}\label{eq:4-volume-defect}
\tilde{\mb{\Si}}^{(4)} \ : & = \ \oint_{S_3=\pa S_4} \mb{m}\wedge \mb{\Si}^{(3)} \ = \ d\left[(\mb{m}\wedge \mb{e}_i^{\ph{i}}\wedge\mb{e}_j^{\ph{i}}\wedge\mb{e}_k^{\ph{i}})\,\frac{1}{3!}\theta^i\wedge\theta^j\wedge\theta^k\right] \nonumber \\
& = \ (\mb{e}_i^{\ph{i}}\wedge\mb{e}_j^{\ph{i}}\wedge\mb{e}_k^{\ph{i}}\wedge \mb{e}_l^{\ph{i}})\,\frac{1}{4!}\theta^i\wedge\theta^j\wedge\theta^k\wedge\theta^l + (\mb{m}\wedge\mb{e}_i^{\ph{i}}\wedge\mb{e}_j^{\ph{i}}\wedge\mb{e}_k^{\ph{i}})\, \frac12 \theta^i\wedge\theta^j\wedge D^\om\theta^k \nonumber \\
& \equiv \ \mb{\Si}^{(4)} + \De\mb{\Si}^{(4)},
\end{align}
giving the 4-volume of spacetime region $|\mb{\Si}^{(4)}|=\star\mb{\Si}^{(4)}=\star(\boldsymbol{\theta}\wedge\boldsymbol{\theta}\wedge\boldsymbol{\theta}\wedge\boldsymbol{\theta})$, and the corresponding `defect' $\De\mb{\Si}^{(4)}:=\mb{m}\wedge\mb{\Theta}\wedge\boldsymbol{\theta}\wedge\boldsymbol{\theta}$. 

It is interesting to consider what impact does the `dualization' of forms have on the result of the integration. For instance, in the trivial 1-dimensional case, take the dual $(m-1)$-vector of the contour element, using~\eqref{eq:Hodge-1}:
\begin{align}\label{eq:dual-contour-sum}
\oint_{\ga=\pa S} \mb{m}\wedge\star\boldsymbol{\theta} \ & = \ d[(\mb{m}\wedge\star\mb{e}_i^{\ph{i}})\,\theta^i] \nonumber \\
& = \ (\mb{e}_i^{\ph{i}}\wedge\star\mb{e}_j^{\ph{i}})\, \frac12\theta^i\wedge\theta^j + (\mb{m}\wedge\star\mb{e}_i^{\ph{i}})\, D^\om\theta^i \ \equiv \ \mb{m}\wedge\star\mb{\Theta},
\end{align}
Hence, \emph{the result does not depend whether to sum over free or bound dual elements}.

One can go higher in dimensions, consecutively obtaining the respective defects due to non-zero torsion. For the orthogonal complement of the surface bivector $\mb{\Si}^{(2)}$, it is given by
\begin{align}\label{eq:dual-area-sum}
\oint_{S_2=\pa S_3} \mb{m}\wedge \star \mb{\Si}^{(2)} \ & = \ d\left[\mb{m}\wedge\star(\boldsymbol{\theta}\wedge\boldsymbol{\theta})\right] \ = \ \left[\mb{e}_i^{\ph{i}}\wedge \star(\mb{e}_j^{\ph{i}}\wedge\mb{e}_k^{\ph{i}})\right]\,\frac{1}{3!}\theta^i\wedge\theta^j\wedge\theta^k \nonumber \\
& \ph{=} \ \ + \mb{m}\wedge\star(\mb{e}_i^{\ph{i}}\wedge\mb{e}_j^{\ph{i}})\, \frac12 \left[D^\om\theta^i\wedge\theta^j-\theta^i\wedge D^\om\theta^j\right] \nonumber \\
& = \ \mb{m}\wedge\star(\mb{e}_i^{\ph{i}}\wedge\mb{e}_j^{\ph{i}})\, D^\om\theta^i\wedge\theta^j \ \equiv \ \mb{m}\wedge\star(\mb{\Theta}\wedge\boldsymbol{\theta}),
\end{align}
where the 2nd formula~\eqref{eq:Hodge-2} has been used to eliminate the contribution in the first line. 

Correspondingly, the ``$3\ra 4$'' passage gives the result:
\begin{align}\label{eq:dual-hypers-sum}
\oint_{S_3=\pa S_4} \mb{m}\wedge \star \mb{\Si}^{(3)} \ & = \ d\left[\mb{m}\wedge \star(\mb{e}_i^{\ph{i}}\wedge\mb{e}_j^{\ph{i}}\wedge\mb{e}_k^{\ph{i}})\,\frac{1}{3!}\theta^i\wedge\theta^j\wedge\theta^k\right] \nonumber \\
& = \ \mb{m}\wedge\star(\mb{e}_i^{\ph{i}}\wedge\mb{e}_j^{\ph{i}}\wedge\mb{e}_k^{\ph{i}})\, \frac12 \theta^i\wedge\theta^j\wedge D^\om\theta^k \ \equiv \ \mb{m}\wedge\star(\mb{\Theta}\wedge\boldsymbol{\theta}\wedge\boldsymbol{\theta}),
\end{align}
where the first (free vector) part of the integral is zero by the same reason as in~\eqref{eq:dual-area-sum}, using~\eqref{eq:Hodge-3}.



\subsection{Curvature tensors and scalars}
\label{subsec:curvature}

Applying the same procedure, one can construct the \emph{trivector} of the curvature associated to the element of 3-dim region as the geometric sum of bivectors of plaquette curvatures in the boundary:
\begin{align}\label{eq:trivector-curvature}
\mb{\Om}^{(3)} \ : & = \ \oint_{S_2=\pa S_3} \mb{m}\wedge \mb{\Om} \ = \ d\left[(\mb{m}\wedge\mb{e}_i^{\ph{i}}\wedge\mb{e}_j^{\ph{i}})\,\frac12\Om^{ij}\right] \nonumber \\
& = \ (\mb{e}_i^{\ph{i}}\wedge\mb{e}_j^{\ph{i}}\wedge\mb{e}_k^{\ph{i}})\, \frac{1}{3!}\Om^{ijk} \ \equiv \ \boldsymbol{\theta}\wedge\mb{\Om}, \qquad  \Om^{ijk}\ := \ \theta^i\wedge\Om^{jk} + \theta^j\wedge\Om^{ki}+\theta^k\wedge\Om^{ij}.
\end{align}
By the 1st Bianchi identities~\eqref{eq:Bianchi-1-meaning}, the bound \emph{dual} curvatures will not sum up to zero, but instead:
\begin{equation}\label{eq:bound-dual-curv}
\oint\mb{m}\wedge \star\mb{\Om}=\star(\mb{\Om}\,\dot{\wedge}\,\boldsymbol{\theta}).
\end{equation}
Unlike the 2nd Bianchi identities~\eqref{eq:Bianchi-2-meaning}, trivectors~\eqref{eq:trivector-curvature} of the hypersurface curvature (in $m=4$) do not satisfy 4-dim closure condition:
\begin{equation}\label{eq:non-closure-curv}
\oint_{S_3=\pa S_4} \mb{\Om}^{(3)} \ = \ (\mb{e}_i^{\ph{i}}\wedge\mb{e}_j^{\ph{i}}\wedge\mb{e}_k^{\ph{i}})\, \frac12 \Om^{ij}\wedge D^\om\theta^k \ \equiv \ \mb{\Om}\wedge\mb{\Theta},
\end{equation}
if only torsion is non-vanishing. The 4th curvature (multi-)vector will be given by 
\begin{align}\label{eq:4-curvature}
\oint_{S_3=\pa S_4} \mb{m}\wedge \mb{\Om}^{(3)} \ = & \ (\mb{m}\wedge\mb{e}_i^{\ph{i}}\wedge\mb{e}_j^{\ph{i}}\wedge\mb{e}_k^{\ph{i}})\, \frac12 \Om^{ij}\wedge D^\om\theta^k \nonumber\\  + & \ (\mb{e}_i^{\ph{i}}\wedge\mb{e}_j^{\ph{i}}\wedge\mb{e}_k^{\ph{i}}\wedge\mb{e}_l^{\ph{i}})\, \frac{1}{4!}\Om^{ijkl} \ \equiv \ \mb{m}\wedge\mb{\Om}\wedge\mb{\Theta} +\mb{\Om}^{(4)}, \nonumber \\
\Om^{ijkl}\ := & \ \theta^i\wedge\Om^{jkl} - \theta^j\wedge\Om^{ikl}+\theta^k\wedge\Om^{ijl}-\theta^l\wedge\Om^{ijk},
\end{align}
up to the same correction/boundary term; and so on (cf.~\cite[\S 193]{Cartan1983Riemannian-geometry}).

Introduce the coefficient tensors $\Theta^i = \frac12\Theta^i_{\ph{i}kl}\,\theta^k\wedge\theta^l$ of torsion, and curvature $\Om^i_{\ph{i}j} = \frac12\Om^i_{\ph{i}j,kl}\,\theta^k\wedge\theta^l$, respectively. One can write the differentials as 
\begin{subequations}
\begin{align}\label{eq:der-coord}
D^\om\Theta^i \ & \equiv \ \frac12 D^\om\Theta^i_{\ph{i}kl}\wedge\theta^k\wedge\theta^l \ = \ \frac12\Theta^i_{\ph{i}[kl|h]}\, \theta^h\wedge\theta^k\wedge\theta^l, \\
D^\om\Om^i_{\ph{i}j}\ & \equiv \ \frac12D^\om\Om^i_{\ph{i}j,kl}\wedge\theta^k\wedge\theta^l \ = \ \frac12\Om^i_{\ph{i}[j,kl|h]}\, \theta^h\wedge\theta^k\wedge\theta^l,
\end{align}
\end{subequations}
where ``dash'' corresponds to the covariant derivative $\nabla$ (generated by $\mb{e}_h^{\ph{i}}$). The 1st and the 2nd Bianchi identities then acquire the form:
\begin{subequations}\label{eq:Bianchi-coord}
\begin{align}
\Om^i_{\ph{i}j,kl}+\Om^i_{\ph{i}k,lj}+\Om^i_{\ph{i}l,jk} \ & = \ \Theta^i_{\ph{i}jk|l}+\Theta^i_{\ph{i}kl|j}+\Theta^i_{\ph{i}lj|k}, \label{eq:Bianchi-coord-1} \\
\Om^i_{\ph{i}j,kl|h}+\Om^i_{\ph{i}j,lh|k}+\Om^i_{\ph{i}j,hk|l} \ & = \ 0. \label{eq:Bianchi-coord-2}
\end{align}
\end{subequations}

Assuming the vanishing torsion $\Theta=0$, the first~\eqref{eq:Bianchi-coord-1} reduces to the familiar \emph{algebraic} Bianchi identities for the \emph{Riemann tensor} $R_{ij,kl}\equiv \cg_{ih}\Om^h_{\ph{h}j,kl}$. They are used to derive the symmetry w.r.t. permutations of the (pairs of) indices $R_{ij,kl}=R_{kl,ij}$, s.t. the number of (algebraically) independent components counts as:
\begin{gather*}
\frac{m^2(m^2-1)}{12} \ = \ \frac{m(m-1)}{2}+\frac{m(m-1)(m-2)}{2}+\frac{2m(m-1)(m-2)(m-3)}{4!}\\
\equiv \ [R_{ij,ij}]+[R_{ij,ik}]+[R_{ij,kl}] .
\end{gather*}
These symmetries will be lost in the most general scenario $\Theta\neq0$, as well as the restriction on the number of components.

Just like torsion manifests itself in surface and closure defects, the curvature leads to the failure of parallel transport of vectors~\eqref{eq:vector-variation}~\footnote{From the physics standpoint: Since one cannot detect a uniform distribution of force field (trivial connection) by means of mechanical experiments carried out within a given system which is embedded in the field, it follows that one can perceive only the \emph{curvature} which manifests itself physically in the \emph{variation} of the field (cf.~\cite[p.126]{Cartan1986Affine-connections}).}. Consider the bivector with components $\Si^{ij}= X^iY^j-Y^iX^j\equiv\theta^i\wedge\theta^j(\pa_\al,\pa_\be)$, representing the small element of the surface $S$, in the frame $p=(\mb{m},e)$ chosen in its interior. The vector $\mb{X}$, carried by parallelism around enclosing contour $\ga$, acquires the variation
\begin{equation}\label{eq:rotation-bivector}
\oint_{\ga=\pa S}D^\om\mb{X} \ = \ \mb{e}_i^{\ph{i}}\, \frac12 \Om^i_{\ph{i}j,kl}\Si^{kl}X^j.
\end{equation}
The displacement associated with the cycle is given by the inverse rotation $Q^i_{\ph{i}j}=-\frac12\Om^i_{\ph{i}j,kl}\Si^{kl}$. (Note that equivalence class of bivectors does not restrict to just parallelogram shapes.)

\begin{definition}
The \textbf{sectional} curvature in the planar direction is the (normalised) scalar product
\begin{equation}\label{eq:sec-curv}
K^{(2)} \ := \ \frac{\Si^{ij}Q_{ij}}{\Si^{kl}\Si_{kl}} \ = \ \frac{\bra\mb{\Om}(\mb{X},\mb{Y})\cdot\mb{X},\mb{Y}\ket}{|\mb{X}|^2|\mb{Y}|^2-\bra\mb{X},\mb{Y}\ket^2},
\end{equation}
defining the projection of the rotation associated with an infinitesimal surface element on the direction of the plaquette itself. (E.g., $K^{(2)}=R^{-2}=\mr{const}$ for the 2-dim sphere, due to isotropy.)

Similarly, the \textbf{mixed} curvature in two different planar directions can be given by
\begin{equation}\label{eq:mix-curv}
\tilde{K}^{(2)} \ := \ \frac{\Si'^{ij}Q_{ij}}{\sqrt{\sum|\Si'^{kl}|^2}\sqrt{\sum|\Si^{mn}|^2}} \ = \ -\frac{\bra\mb{\Si}',\mb{\Om}\ket}{|\mb{\Si}'||\mb{\Si}|},
\end{equation}
for which~\eqref{eq:sec-curv} is the special case when two directions coincide.
\end{definition}

The sectional curvature is actually defined on the Grassmanian $\mr{Gr}_2(V)$, i.e. depends only on the direction of the 2-plane at the point. It allows to fully characterize (the Riemann part of) the curvature tensor, as the following result implies. 

\begin{proposition}
If $K^{(2)}$ is given in all planar directions at the point, then all components $R_{ij,kl}$ of the Riemann-Christoffel tensor can be found. (Note that the proof essentially uses the symmetry of $R$, and may not hold if the torsion is present; cf.~\cite[\S 98]{Cartan2001Orthogonal-frame}). 

Closely related to this infinitesimal result, the \textbf{Ambrose-Singer theorem} states that the group of all loop holonomies $\mr{Hol}_p(\om)$ at the point is generated by Lie algebra elements of curvature $\Om_q(X,Y)\in\mf{h}$, as $q$ ranges over all points which can be joined to $p$ by a horizontal curve, s.t. $\om(X)=\om(Y)=0$ are horizontal vectors $X,Y\in TP$ (cf~\cite{AmbroseSinger1953holonomy-theorem}).
\end{proposition}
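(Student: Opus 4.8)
The plan is to prove the first (infinitesimal) assertion --- that the sectional curvature function determines the components $R_{ij,kl}$ --- by a polarization argument, and to treat the Ambrose--Singer statement as quoted from~\cite{AmbroseSinger1953holonomy-theorem}. First I would recast the data. In~\eqref{eq:sec-curv} the denominator is the Gram determinant $|\mb{X}|^2|\mb{Y}|^2-\bra\mb{X},\mb{Y}\ket^2$, which is already fixed by the metric $\cg$; hence knowing $K^{(2)}$ on every $2$-plane is equivalent to knowing the quartic form $Q(\mb{X},\mb{Y}):=\bra\mb{\Om}(\mb{X},\mb{Y})\cdot\mb{X},\mb{Y}\ket$ for all pairs of vectors, and by the antisymmetries this is the same datum as $R(\mb{X},\mb{Y},\mb{Y},\mb{X})$ (up to the sign built into those antisymmetries). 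Viewed through~\eqref{eq:rotation-bivector}, $Q$ records the diagonal values of the curvature regarded as a symmetric bilinear form $\hat{R}$ on the bivector space $\bigwedge^2 V$, but only on the \emph{simple} bivectors $\mb{X}\wedge\mb{Y}$. The claim then becomes: an algebraic curvature tensor carrying the full Riemann symmetries (antisymmetry in each pair, the pair-exchange symmetry $R_{ij,kl}=R_{kl,ij}$, and the first Bianchi identity~\eqref{eq:Bianchi-coord-1} with $\Theta=0$) is uniquely recovered from $Q$.

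I would establish this as a uniqueness statement. Suppose $R$ and $R'$ both carry these symmetries and yield the same $Q$, and set $T:=R-R'$, so that $T$ inherits all four symmetries and satisfies $T(\mb{X},\mb{Y},\mb{Y},\mb{X})=0$ identically. Polarizing in the first slot, $\mb{X}\mapsto\mb{X}+\mb{Z}$, gives $T(\mb{X},\mb{Y},\mb{Y},\mb{Z})+T(\mb{Z},\mb{Y},\mb{Y},\mb{X})=0$; here the pair-exchange symmetry together with the two antisymmetries shows $T(\mb{Z},\mb{Y},\mb{Y},\mb{X})=T(\mb{X},\mb{Y},\mb{Y},\mb{Z})$, whence $T(\mb{X},\mb{Y},\mb{Y},\mb{Z})=0$. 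Polarizing next in the middle, $\mb{Y}\mapsto\mb{Y}+\mb{W}$, yields $T(\mb{X},\mb{Y},\mb{W},\mb{Z})+T(\mb{X},\mb{W},\mb{Y},\mb{Z})=0$, i.e. antisymmetry in the second and third slots. Combined with antisymmetry in the first pair, the transpositions of slots $(12)$ and $(23)$ generate all permutations of the first three arguments, so $T$ is totally antisymmetric in slots $1,2,3$; feeding this into the first Bianchi identity (the cyclic sum over those three slots) makes all three cyclic terms equal, forcing $3T=0$ and hence $T=0$. The explicit reconstruction of each $R_{ij,kl}$ then follows by running the same two polarization identities on basis vectors $\mb{e}_i^{\ph{i}}$ and closing the resulting linear system with Bianchi.

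The main obstacle --- and the conceptual content of the statement --- is precisely this final collapse. Since $Q$ supplies only the diagonal of $\hat R$ on the \emph{simple} bivectors, and a sum of two simple bivectors is generically non-simple, one cannot polarize $\hat R$ freely inside the simple locus; the first Bianchi identity is exactly what removes the remaining ambiguity, annihilating the component of an algebraic curvature tensor invisible to simple bivectors. I would stress in the write-up where each symmetry is used --- pair-exchange in the first polarization, first Bianchi in the final step --- because this pinpoints the failure flagged in the parenthetical remark: with non-vanishing torsion the pair-symmetry $R_{ij,kl}=R_{kl,ij}$ no longer holds and the first Bianchi identity acquires the torsion terms on the right of~\eqref{eq:Bianchi-coord-1}, so neither polarization step closes and $K^{(2)}$ ceases to determine $R$.

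Finally, for the Ambrose--Singer assertion I would not reproduce a proof but cite~\cite{AmbroseSinger1953holonomy-theorem}, indicating only how it globalizes the infinitesimal result: whereas the sectional-curvature computation recovers $\mb{\Om}$ at a single point, Ambrose--Singer identifies the holonomy Lie algebra $\mr{Hol}_p(\om)\subset\mf{h}$ with the span of the curvature values $\Om_q(X,Y)$, parallel-transported back to $p$ along horizontal paths, as $q$ ranges over the points reachable from $p$ by horizontal curves --- the integrated counterpart of the plaquette-displacement picture underlying~\eqref{eq:rotation-bivector}.
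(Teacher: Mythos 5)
Your proof is correct and supplies precisely the argument the paper itself defers to Cartan~\cite[\S 98]{Cartan2001Orthogonal-frame}: the paper gives no proof beyond that citation, and its parenthetical remark that the proof ``essentially uses the symmetry of $R$'' is exactly the role the pair-exchange symmetry (first polarization step) and the first Bianchi identity (final collapse $3T=0$) play in your polarization argument, so your route coincides with the referenced one rather than departing from it. Your handling of the Ambrose--Singer statement by citation likewise matches the paper's treatment.
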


The \emph{hypersurface} curvature associated to the trivector with components $\Si^{ijk} = \theta^i\wedge\theta^j\wedge\theta^k$ (evaluated on tangents to $S_3$), is analogously defined by projecting~\eqref{eq:trivector-curvature} on the corresponding directions:
\begin{equation}\label{eq:hyper-curvature}
K^{(3)} \ = \ -\frac{\bra \mb{\Si}^{(3)},\mb{\Om}^{(3)}\ket}{|\mb{\Si}^{(3)}|^2} \ \propto \ R_{ij,kl}\Si^{ijh}\Si_h^{\ph{h}kl} ,
\end{equation}
as well as the 4-dim curvature: $K^{(4)}\propto R_{ij,kl}\Si^{ijhr}\Si_{hr}^{\ph{hr}kl}$. Expression~\eqref{eq:hyper-curvature} does not change if we `dualize' under the scalar product. So let us put $\mb{\Si}^{(3)}\equiv\star\mb{\Si}^{(1)}_\perp$ to be complementary to the hypersurface normal, with components $\Si^{ijk}=\varepsilon^{ijkl}d^3\Si_l$. And for the dual representation of hypersurface curvature:
\begin{align}\label{eq:Einstein-tensor}
\star(\boldsymbol{\theta}\wedge\mb{\Om}) \ & = \ \star(\mb{e}_i^{\ph{i}}\wedge\mb{e}_j^{\ph{i}}\wedge\mb{e}_h^{\ph{i}})\,\frac14 \Om^{ij}_{\ph{ij}kl}\, \theta^h\wedge\theta^k\wedge\theta^l \nonumber \\
& = \ \hat{\mb{e}}^r\, \frac14 \varepsilon_{ijkr}^{\ph{i}}\, \Om^{ij}_{\ph{ij}lm}\, \varepsilon^{klmn}\, d^3\Si_n \nonumber \\
& = \ \hat{\mb{e}}^j \,(-1)^\eta \left[\frac12 \Om^{kl}_{\ph{kl}kl}\de^i_j-\Om^{ih}_{\ph{ih}jh}\right]d^3\Si_i \ \equiv \ \hat{\mb{e}}^j \, E^i_{\ph{i}j} \, d^3\Si_i
\end{align}
one obtains the \textbf{Einstein tensor} $E_{ij}=\cg_{ih}E^h_{\ph{l}j} = R_{ij}-\frac12 R\, \cg_{ij}$, where $R^i_{\ph{i}j}:=R^{ih}_{\ph{il}jh}$, $R:=R^h_{\ph{i}h}$ are the \textbf{Ricci cuvature} tensor and scalar, respectively (we have put $(-1)^\eta=-1$ for Minkowski signature). The dual of 4-curvature is naturally a scalar $\star\mb{\Om}^{(4)}=(-1)^\eta R\,\det\theta$.

In Einstein-Cartan theory, the object~\eqref{eq:Einstein-tensor} is regarded as the \emph{`geometric representation of the physical vector of energy-momentum'}, or rather~\cite[p.118]{Cartan1986Affine-connections}:~\blockquote{the energy momentum density is the physical manifestation of a vector of geometric origin} (depending on preferences). In the Riemannian case when torsion vanishes, the closure of the geometric sum of dual curvature vectors (whether free or bound) represents the law of conservation for the `quantity of motion'. The (covariant) derivative being zero then leads to the analytic expression in terms of divergence $E^i_{\ph{i}j|i}=0$ for the Einstein tensor (also known as `contracted Bianchi identities').

Geometric interpretation of~\eqref{eq:Einstein-tensor} is provided in terms of `projection of a rotation on a hyperplane'~\cite[p.119]{Cartan1986Affine-connections}. Consider a 3-dim volume element enclosed by a surface $S_2=\pa S_3$, and take a point $\mb{a}$ inside. With each of the small elements of a surface, centered at $\mb{m}$, there is associated a rotation, and the vector $(\mb{m}-\mb{a})$ connects the surface to the interior. One then projects each rotation onto hyperplane orthogonal to $(\mb{m}-\mb{a})$ (could be regarded as `time normal'), where it can be represented by the dual vector (essentially the vector of `angular momentum' $\mb{J}$ for the little $\mr{O}(3)$-group of hypersurface rotations.) Multiplied by the length of $|\mb{m}-\mb{a}|$, the geometric sum of such vectors will give the `energy momentum' contained in the volume element under consideration.

\paragraph{On possible discretizations and mechanical interpretation.}

The role of Cartan's angular momentum~\eqref{eq:Einstein-tensor} and its conservation was vividly promoted by J. A. Wheeler in the form of a `boundary-of-a-boundary' topological argument~\cite[Ch.15]{MTU1973Gravitation},~\cite{Kheyfets1986boundary-principle}. It can be shown that the Regge's triangulations~\cite{Regge1961calculus} in terms of edge length variables could be re-derived from this `geometrodynamical' description~\cite{Miller1986geometrodynamic-Regge}, under certain assumptions on the lattice and its dual skeleton. The tight link between (non-)conservation of~\eqref{eq:Einstein-tensor} and `contracted Bianchi identities' can be very advantageous for the analysis of symmetries~\cite{Williams2012contracted-Bianchi-Regge,Gentle-etal2009Kirchoff-like-conservation-Regge,FreidelLouapre2003SF-diffeos}, given the convoluted nature of diff-invariance in discrete GR~\cite{Dittrich2008QG-diffeos,BahrDittrich2009broken-gauge-sym}.

The framework of Cartan connections is a versatile tool for regularization of theories without `background' metric structure. The curvature is ordinarily quantified in terms of (the product of) holonomies around plaquettes, similar to lattice gauge theories. One can mention several attempts to interpret Regge gravity in Poincar\'{e} gauge theoretic terms~\cite{Caselle-etal1989Poincare-calculus,Gionti2005discrete-Poincare-gravity}. In this way, the relations to quantum Spin Foam models have been established~\cite{Gionti2006discrete-Poincare-to-spinfoams}. Nevertheless, we strive to contribute to the above picture couple of (relatively) new ingredients: 
\begin{enumerate}[label={\upshape(\arabic*)}, align=left, widest=iii]
\item The natural `coarse-graining' flow of tensors. As we have demonstrated, the discretization by means of form-integration could be thought of in terms of usual geometric summation of vectors, transported via parallelism. This supplies a number of recipes for constructing the higher dimensional objects from the elementary ones, as well as for effectively `smoothing' the description -- in terms of the collective behaviour of components.
\item Background-independence does not allow any pre-existing lattice, apart from that is determined by the dynamical elements of the theory themselves. The affine translations are well-suited, s.t. in the enlarged configuration space, the role of lattice is reduced to purely topological cell-complex. The latter corresponds to the path-connectedness of regions, while it is conceivable that the notion of the `dual/or complementary' lattice is also part of dynamics.
\item Gauge-covariance, rather than gauge-invariance~\cite[p.54]{Cartan1986Affine-connections}:~\blockquote{To measure quantities of physical interest, one has available a local reference frame which plays the role of a true Galilean frame in a patch of space-time immediately surrounding the observer.} For instance, the Riemannian length of the geodesic segment $\int ds\sqrt{\cg_{ij}\dot{\ga}^i\dot{\ga}^j}$ will be the same as that of the straight line $\int\theta$ of developed path. In addition, the vectors possess the `sense of direction', and the linear spaces are much easier to handle, in general.
\end{enumerate} 

\begin{figure}[!h]
\center{\includegraphics[width=0.3\linewidth]{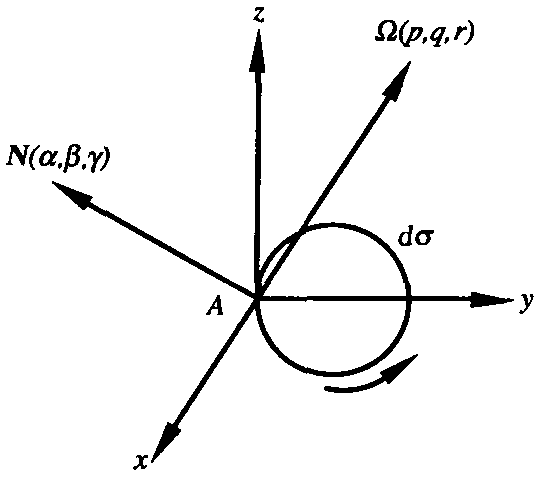}}
\caption{On mechanical interpretation (from~\cite[p.140]{Cartan2001Orthogonal-frame}).}
\label{fig:pressure}
\end{figure}

\pagebreak

We conclude the section with an analogy. Consider $m=3$ Riemannian space, where the conservation can acquire a noteworthy mechanical form. Suppose that a frame is connected with a point $A$ and let $d\boldsymbol{\si}=\star\mb{\Si}=(\al,\be,\ga)d\si$ be an oriented cycle of an element of the surface $d\si$, where $\al,\be,\ga$ denote the direction cosines of its normal $\mb{N}$. If $K_{ij}=\frac14\varepsilon_{ikl}^{\ph{i}}\,\Om^{kl}_{\ph{kl}mn}\,\varepsilon_j^{\ph{j}mn}$ gives the `double dual' of the curvature, then the dual vector of associated rotation $\star\mb{\Om}=(p,q,r)d\si$ has components of the form:
\begin{align*}
p & = K_{11}\al + K_{12}\be + K_{13}\ga, \\
q & = K_{21}\al + K_{22}\be + K_{23}\ga, \\
r & = K_{31}\al + K_{32}\be + K_{33}\ga. 
\end{align*}

These relations are identical to the formulas of elasticity, which define the liquid pressure on a surface element in continuous medium. Taking into account that $K_{ij|j}=0$ by the Bianchi identities~\eqref{eq:Bianchi-coord-2}, it then follows that~\cite[\S 198]{Cartan1983Riemannian-geometry}:
\blockquote{If a 3-dimensional Riemannian space is imagined to be a continuous medium such that the elastic pressure which acts on each element of the surface is equal to the vector representing the Riemannian curvature of the element, then the medium is in equilibrium under the action of the forces of elasticity.}
It could be inferred that the discretization of Cartan gravity might have deep similarities to that of the elasticity theory~\cite{Yavari2008discr-elasticity}, the physical equations taking the form of a `balance law'.
\begin{figure}[!h]
\center{\includegraphics[width=0.3\linewidth]{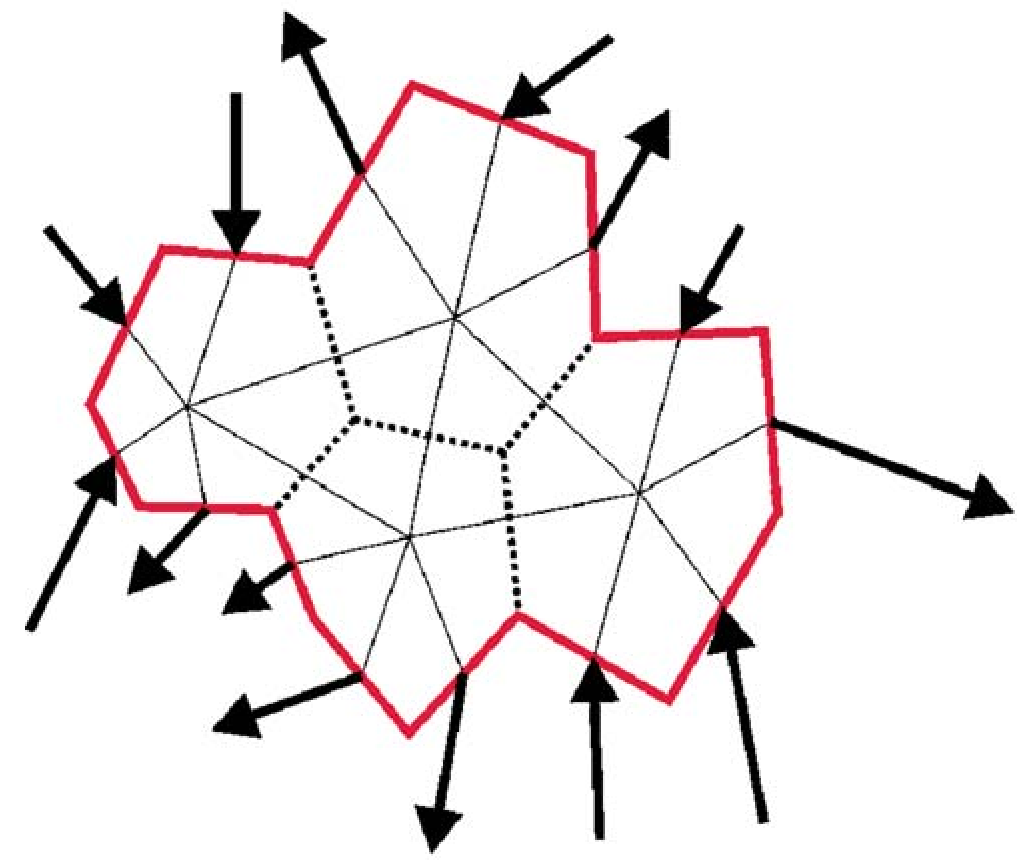}}
\caption{Rough analogy (from~\cite{Yavari2008discr-elasticity}).}
\label{fig:elast}
\end{figure}

\subsection{Action functional and variational principle(s)}

The equations of relativity theory can be obtained from the Hamilton's principle of least action. In general, the variational problem is defined by the choice of independent variables, as well as the form of lagrangian $S=\int\mc{L}$. Irrespectively of the latter, we can list three cases that can be considered in relation to GR (the fourth will be given in the next chapter).

\begin{enumerate}[label={\upshape(\arabic*)}, align=left, widest=iii]
\item The Hilbert variational principle, appearing in metric $\cg$ formulations. Since there is a unique metric-compatible torsion-free Levi-Civita connection, the above conditions are solved ab initio, expressing $\om=\om(\theta,d\theta)$ in $\mc{L}$ in terms of a given co-frame and its derivatives.
\item An equivalent Palatini constrained variation, in which both metric $\cg\sim\theta$ and connection $\om$ are varied, under condition of torsion-freeness (and metric-compatibility); the latter could be imposed via Lagrange multipliers~\cite{Tsamparlis1978onPalatini}, which play the role of `constraint reaction forces~\cite{Kichenassamy1986Lagrangian-multipliers-grav}~\footnote{The usual custom is to refer as ``Palatini'' to \emph{any} variation of $(\theta,\om)$ treated as independent variables, like our third item in the list. However, in \cite{Ferraris-etal1982Palatini-history} this method is traced back to Einstein himself, which was later (somewhat erroneously) attributed to Palatini's very similar techniques.}.
\item By `Einstein-Cartan' we will understand any variation of $\varpi=(\theta,\om)$ as independent variables.
\end{enumerate} 

The simplest natural lagrangian for relativity is given by the scalar curvature $\mc{L}=\star(\boldsymbol{\theta}\wedge\boldsymbol{\theta}\wedge\mb{\Om})$. If coupled to some matter field, such action functional will be stationary w.r.t. the E.-C. variation (with compact support, s.t. $\de\varpi$ vanishes on the boundary) if and only if
\begin{subequations}
\begin{align}\label{eq:EC-equations}
\star(\boldsymbol{\theta}\wedge\mb{\Om}) \ & = \ \mb{P}, \\
\star(\boldsymbol{\theta}\wedge\mb{\Theta}) \ & = \ \mb{S},
\end{align}
\end{subequations}
where $\mb{P}$ is the energy-momentum and $\mb{S}$ the spin-tensor (up to normalization), respectively. Since the map $\mb{\Theta}\mapsto \underbrace{\boldsymbol{\theta}\wedge...\wedge\boldsymbol{\theta}}_{m-3}\wedge\mb{\Theta}$ essentially establishes the linear isomorphism of representations $\bar{\La}^2_p(P,V)\cong\bar{\La}^{m-1}_p(P,\wedge^{m-2}V)$, having the same dimensionality ${m\choose 2}\times m={m\choose m-1}\times{m\choose m-2}$, it can be shown that $\mb{S}=0 \ \Rightarrow \ \mb{\Theta}=0$ (cf.~\cite{Bleecker1990ECSK}, for instance). Geometrically: ``torsion vanishes in the region of space-time where the energy momentum density can be represented by a simple vector'' -- e.g., in vacuum~\cite[p.124]{Cartan1986Affine-connections} (i.e. sliding vector, s.t. the angular momentum bivector is purely `orbital').

We articulate two possible viewpoints on the action functional of the Einstein-Cartan theory (in $m=4$), reflecting two ways of constructing the scalar curvature (gauge-invariant) from covariant tensors. They correspond to different approaches to regularization and quantization of GR, respectively.
\begin{enumerate}[label={\upshape(\arabic*)}, align=left, widest=iii]
\item Introducing the complementary (simple) bivector $\star\mb{B}:=2\,\mb{\Si}=(\mb{e}_i^{\ph{i}}\wedge\mb{e}_j^{\ph{i}})\, \theta^i\wedge\theta^j$, one can form the scalar product contraction
\begin{align}\label{eq:Regge-action}
\bra \mb{B},\mb{\Om}\ket \ & = \ \bra \star (\mb{e}_i^{\ph{i}}\wedge\mb{e}_j^{\ph{i}}),\mb{e}_k^{\ph{i}}\wedge\mb{e}_l^{\ph{i}}\ket \, \frac{(-1)^\eta}{2} \theta^i\wedge\theta^j\wedge\Om^{kl} \nonumber \\
& = \ \bra \mb{e}_i^{\ph{i}}\wedge\mb{e}_j^{\ph{i}},\mb{e}_k^{\ph{i}}\wedge\mb{e}_l^{\ph{i}}\ket \, \frac{(-1)^\eta}{4} \varepsilon_{ijkl}^{\ph{i}}\,\theta^i\wedge\theta^j\wedge\Om^{kl} \nonumber \\
& = \ \frac{(-1)^\eta}{2} \varepsilon_{ijkl}^{\ph{i}}\,\theta^i\wedge\theta^j\wedge\Om^{kl} \ = \ R\det\theta \ \equiv \ -2|\mb{\Si}| |\mb{Q}|_\perp^{\ph{i}},
\end{align}
where $|\mb{Q}|_\perp^{\ph{i}}$ is the projection in the planar direction $\mb{B}$ of the rotation $\mb{Q}$, associated with the cycle. This is essentially a sectional curvature~\eqref{eq:sec-curv}, and the expression on the right corresponds to the \emph{deficit angle} of Regge triangulations~\cite{Regge1961calculus,FriedbergLee1984Regge-derivation,Cheeger-etal1984Regge-convergence,Barrett19941st-order-Regge,Miller1997Hilbert-Regge,McDonaldMiller2008scalar-Regge}. (At least in the Riemannian case $\Theta=0$; the possibility of discretizing torsion along these lines was discussed in~\cite{Drummond1986Regge-Palatini}.) The left hand side of~\eqref{eq:Regge-action} can be also read in terms of invariant Killing form $\mr{K}(\mb{A},\mb{B})=\tr(\mr{ad}(\mb{A})\mr{ad}(\mb{B}))$ on the algebra $\mb{A},\mb{B}\in\mf{h}$ (bilinear and symmetric). It is the basis for discretization in Spin Foam models, discussed in Sec.~\ref{sec:Spin-Foams}. 
\item The scalar product of vector quantities can be directly formed as
\begin{align}\label{eq:Poincare-Cartan}
\bra \mb{P},\boldsymbol{\theta}\ket \ & \equiv \ \star(\mb{\Om}\wedge\boldsymbol{\theta})\,\dot{\wedge}\, \boldsymbol{\theta} \ : = \ 
\bra\star (\mb{e}_i^{\ph{i}}\wedge\mb{e}_j^{\ph{i}}\wedge\mb{e}_k^{\ph{i}}),\mb{e}_l^{\ph{i}}\ket \, \frac12 \Om^{ij}\wedge\theta^k\wedge\theta^l \nonumber \\
& = \ \bra \mb{e}_h^{\ph{i}},\mb{e}_l^{\ph{i}} \ket \, \frac12 \varepsilon_{ijk}^{\ph{ijk}h}\,\Om^{ij}\wedge\theta^k\wedge\theta^l  \ = \ \frac12 \varepsilon_{ijkl}^{\ph{i}}\,\Om^{ij}\wedge\theta^k\wedge\theta^l,
\end{align}
obtaining the same action functional. Written this way, it has the appearance of the so-called \emph{Poincare-Cartan integral invariant}~\cite{Cartan1922integral-invariants}: $I_{PC} = \oint_\ga p_\mu dx^\mu = \oint_\ga p_i dx^i - \ms{H} dt$ (schematically; recall that $\boldsymbol{\theta}=d\mb{m}$). This is the fundamental object in analytic mechanics: from the condition of independence on the contour $\ga$ around the closed tube of trajectories (in the `enlarged' phase space $(p,q,t)$), the entire dynamics of the system can be (re-)derived. 

There is (almost) one step from here to the formulation in the language of symplectic geometry -- namely, the pair of $(p,q)$ defines the (relativistic) phase space, with pre-canonical closed 2-form given by $\nu=dp\wedge dq$ (cf.~discussion in~\cite[Ch.3]{Rovelli2004QG}). The symplectic isometries/or canonical transformations are the diffeomorphisms that preserve $\nu$. They are generated by the Hamiltonian vector fields $\hat{\chi}$/or corresponding functions $\chi=\chi(p,q)$, satisfying $\imath_{\hat{\chi}}\nu=d\chi$, s.t. the commutator $[\hat{\chi},\hat{\chi}']$ is given by the Poisson bracket as $\{\chi,\chi'\}=\nu(\hat{\chi},\hat{\chi}')$. In the field-theoretic generalization~\cite{Kijowski1973canon-fin-dim,KijowskiTulczyjew1979symplectic-framework}, the time parameter $t\in\mathbb{R}$ is replaced by the points/labels $x\in\mc{M}$ on the space-time manifold, while the configuration space is normally a bundle $Q\ra\mc{M}$. The elements of the corresponding phase space need to be integrated over the boundaries of a domain, and thus given by the $T^\ast Q$-valued $(m-1)$-densities. 

The dynamical trajectories are replaced by the `Lagrangian submanifolds' $\ms{N}$, required to satisfy $\nu|_{\ms{N}}^{\ph{i}}=0$ for the generalized symplectic $(m+1)$-form (the latter may be degenerate due to gauge symmetries, leading to constraints). The action/lagrangian provides the `special symplectic structure' -- i.e. potential function $p\,dq=dS:Q\ra T^\ast Q$, generating the dynamics, s.t. $p=\pa S/\pa q$. (Depending on the chosen `control mode', there might be several such special symplectic structures -- like in thermodynamics -- e.g., any particular hamiltonian $\ms{H}$ corresponds to a certain choice of time flow $\pa/\pa t$ in the foliation picture.) It is interesting to investigate how the connection formulation on the Klein bundle $(P,\mathpzc{K},\varpi)$ fits into this picture (cf. recent~\cite{CattaneoSchiavina2019symplectic-EC}), which we leave for the prospective future research.
\end{enumerate}

\paragraph{Symmetries and Noether identities}

The invariances of the action w.r.t. the local transformations of the group result in the set of identities, that are in 1-to-1 correspondence with the gauge symmetries, which is the content of the Noether's 2nd theorem. (The 1st theorem deals with the global symmetries and the conserved charges/divergence laws, accordingly~\cite{Kosmann-Schwarzbach2011Noether}.)

\begin{definition}
Given two Cartan geometries $(P_i,\mathpzc{K}_{\ph{i}i},\varpi_i)$, $i=1,2$, and the diffeomorphism $\bar{f}:\mc{M}_1\ra\mc{M}_2$ of their base manifolds $\mc{M}_i\cong P_i/H$, the \emph{unique} cover map $f:P_1\ra P_2$, satisfying $f^\ast\varpi_2=\varpi_1$, defines the \textbf{geometric isomorphism} (cf.~\cite[p.185]{Sharpe1997Diff-Geometry-Cartan} for uniqueness.)
\end{definition}

The gauge symmetry transformations form the group $\mc{G}(P,\mathpzc{K})$ of the bundle automorphisms $f:P\ra P$, which preserve the $G$-action of the principal Poincar\'{e} group $f(pg)=f(p)g$. Given their parametrization in terms of functions $\tau\in C(P,G)$ of Prop.~\ref{prop:gauge-symmetry}, the two connections $\varpi=(\theta,\om)$, which are `gauge-equivalent' via pull-back $f^\ast\varpi=\tau^\ast\om_G+\mr{Ad}(\tau^{-1})\varpi$ of Prop.~\ref{prop:connection-transform}, specify the isomorphism of geometries in the above sense. The corresponding transformation of the action functional is
\begin{equation}
\int_{\bar{U}}\mc{L}(f^\ast\varpi) \ = \ \int_{\bar{U}}\bar{f}^\ast \mc{L}(\varpi) \ = \ \int_{\bar{f}(\bar{U})}\mc{L}(\varpi),
\end{equation}
where $\bar{f}:\mc{M}\ra\mc{M}$ is induced by the projection $\bar{f}(\pi(p))=\pi(f(p))$. (It is implied that appropriate parametrization in the local Klein chart $\bar{\ka}(\bar{U})=\bar{\ka}\circ\pi(U)=\pi_G\circ \ka(U)$ has been chosen; cf. discussions on relation between `active and passive diffeomorphisms' in~\cite[\S 2.2.4]{Rovelli2004QG} and~\cite[\S 19.1.2]{Thiemann2007ModCanQuantGR}.)

The one-parameter sub-groups of transformations are generated by the gauge algebra as $\tau_s(p)=\exp(s\,\zeta(p))$, $\zeta\in C(P,\mf{g})$, following Prop.~\ref{prop:symmetry-generator}. The corresponding infinitesimal changes of $\varpi$ are, equivalently, given by the Lie derivative $\de_X^{\ph{i}}\varpi = \ms{L}_X^{\ph{i}}\varpi =(d\circ \imath_X^{\ph{i}}+\imath_X^{\ph{i}}\circ d)\varpi$ w.r.t. fundamental vector fields, s.t. $\varpi(X)=\mb{X}=(\mb{q},\mb{Q})\in\mf{g}=\mf{p}\oplus\mf{h}$ (recall~\eqref{eq:universal-der}). The result acquires the form of `covariant Lie derivative'~\cite{Gronwald1998covariant-translations}, if written in vector notation and with greater detail:
\begin{align}\label{eq:diff-gauge-conn}
\de_X^{\ph{i}}\boldsymbol{\varpi} \ & \equiv \ \left(D^\varpi\circ \imath_X^{\ph{i}}+\imath_X^{\ph{i}}\circ D^\varpi\right)\boldsymbol{\varpi} \nonumber \\
& = \ d\imath_X^{\ph{i}}\left(\boldsymbol{\theta}+\boldsymbol{\om}\right) +\imath_X^{\ph{i}}\left(D^\om\boldsymbol{\theta}-[\boldsymbol{\om}\wedge\boldsymbol{\theta}] + D^\om\boldsymbol{\om} - \frac12 [\boldsymbol{\om}\wedge\boldsymbol{\om}]\right) \nonumber \\
& = \ d\left(\boldsymbol{\theta}(X)+\boldsymbol{\om}(X)\right) - \left([\boldsymbol{\om}(X),\boldsymbol{\theta}]-[\boldsymbol{\om},\boldsymbol{\theta}(X)]\right)-\frac12\left([\boldsymbol{\om}(X),\boldsymbol{\om}]-[\boldsymbol{\om},\boldsymbol{\om}(X)]\right) + \imath_X^{\ph{i}}\left(\mb{\Theta}+\mb{\Om}\right) \nonumber \\
& = \ d(\mb{q}+\mb{Q}) +[\boldsymbol{\om}+\boldsymbol{\theta},\mb{q}+\mb{Q}] + \imath_X^{\ph{i}}(\mb{\Theta}+\mb{\Om}) \qquad \text{(taking into acount $[\boldsymbol{\theta},\mb{q}]=0$)} \nonumber \\
& = \ D^\om\mb{q}-\mb{Q}\cdot\boldsymbol{\theta} + D^\om\mb{Q} + \left\{\mb{e}_i^{\ph{i}}\,\Theta^i_{\ph{i}kl}+(\mb{e}_i^{\ph{i}}\wedge\mb{e}_j^{\ph{i}}) \,\frac12\Om^{ij}_{\ph{ij}kl}\right\}\frac12(\theta^k(X)\theta^l-\theta^k\theta^l(X))\nonumber \\
& = \ \mb{e}_i^{\ph{i}} \left\{D^\om q^i+q^k \, \Theta^i_{\ph{i}kl}\theta^l -Q^i_{\ph{i}j}\theta^j\right\} +(\mb{e}_i^{\ph{i}}\wedge\mb{e}_j^{\ph{i}}) \, \frac12 \left\{D^\om Q^{ij} + q^k\, \Om^{ij}_{\ph{ij}kl}\theta^l\right\} 
\end{align}

The expression in the 4th line demonstrates the validity of the shorthand symbolic notation of the first. One easily recognizes in $\de_Q^{\ph{i}}\theta$ and $\de_Q^{\ph{i}}\om$ the standard Lorentz gauge-transformations in the fiber, over the fixed point. The other part, depending on $q$, encodes the action of base-diffeomorphisms in the gauge-theoretic (`internal') language of Cartan connections. The same technique if applied to torsion and curvature tensors leads to their Lie derivatives (cf.~\cite{Catren2015Cartan-gauge-gravity}), consisting of:
\begin{subequations}\label{eq:diff-gauge-curv-tor}
\begin{align}
\de_Q^{\ph{i}}\mb{\Theta} \ & = \ [\mb{\Theta},\mb{Q}], & \de_q^{\ph{i}}\mb{\Theta} \ & = \ D^\om\imath_X^{\ph{i}}\mb{\Theta} +\imath_X^{\ph{i}} [\mb{\Om},\boldsymbol{\theta}], \label{eq:diff-gauge-tor} \\
\de_Q^{\ph{i}}\mb{\Om} \ & = \ [\mb{\Om},\mb{Q}], & \de_q^{\ph{i}}\mb{\Om} \ & = \ D^\om\imath_X^{\ph{i}}\mb{\Om}, \label{eq:diff-gauge-curv}
\end{align}
\end{subequations}
where in the second column $X=\theta^{-1}(\mb{q})$. 

The variation of the Einstein-Cartan action functional can be written as
\begin{align}\label{eq:EC-variation}
\de \int \bra \star(\mb{\Om}\wedge\boldsymbol{\theta}),\boldsymbol{\theta}\ket \ & = \ 2 \, \int  \bra\star(\mb{\Om}\wedge\boldsymbol{\theta}),\de\boldsymbol{\theta}\ket + \bra\star(\mb{\Theta}\wedge\boldsymbol{\theta}),\de\boldsymbol{\om}\ket \nonumber\\
& = \ - 2 \, \int \bra \de\boldsymbol{\theta}, \star(\boldsymbol{\theta}\wedge\mb{\Om})\ket + \bra \de\boldsymbol{\om}, \star(\boldsymbol{\theta}\wedge\mb{\Theta})\ket .
\end{align}
The invariance under~\eqref{eq:diff-gauge-conn} leads, upon substitution, to the identities:
\begin{subequations}\label{eq:EC-Noether}
\begin{align}
D^\om\left[\star(\boldsymbol{\theta}\wedge\mb{\Om})\right] \ & = \ \star(\mb{\Theta}\wedge\mb{\Om}) \label{eq:EC-Noether-1} \\
D^\om\left[\star(\boldsymbol{\theta}\wedge\mb{\Theta})\right] \ & = \ \boldsymbol{\theta}\wedge\star(\boldsymbol{\theta}\wedge\mb{\Om}), \label{eq:EC-Noether-2}
\end{align}
\end{subequations}
for independent $q$ and $Q$ parameters, correspondingly. (Up to exact differential/boundary terms of the form $\bra\mb{q},\mb{P}\ket +\bra\mb{Q},\mb{S}\ket$.) The first equation~\eqref{eq:EC-Noether-1} relates \emph{non-vanishing divergence} of Einstein tensor $E$ to non-trivial torsion, while its \emph{asymmetric part} $E^{[ij]}$ is given by the derivative of the spin-tensor in the second equation~\eqref{eq:EC-Noether-2}~\footnote{Result~\eqref{eq:EC-Noether-1} is straightforward due to commutation of $\star$ and $D^\om$. Writing 1st Bianchi identities~\eqref{eq:Bianchi-1}, alternatively, as $D^\om\mb{\Theta}=\star(\boldsymbol{\theta}\wedge\star\mb{\Om})$, the purely algebraic identities follow from~\eqref{eq:EC-Noether-2}: $\boldsymbol{\theta}\wedge\star(\boldsymbol{\theta}\wedge\mb{\Om})=-\star(\boldsymbol{\theta}\wedge\star(\boldsymbol{\theta}\wedge\star\mb{\Om}))$ (cf.~\cite{Bleecker1990ECSK}).}.

The relations~\eqref{eq:EC-Noether} are identities, in the sense that they are satisfied by direct calculation, without using the field equations (i.e. they are valid \emph{``off-shell''}). If one denotes $P_l^{\ph{i}}:=\frac12\varepsilon_{ijkl}^{\ph{i}}\, \theta^i\wedge\Om^{jk}$, $S_{ij}^{\ph{i}}:=\frac12\varepsilon_{ijkl}^{\ph{i}}\,\theta^k\wedge\Theta^l$, following~\eqref{eq:EC-equations}, the (modified) conservation laws follow immediately:
\begin{subequations}\label{eq:modified-conserv}
\begin{align}
D^\om P_i^{\ph{i}} \ & = \ \theta^j\wedge\left\{\Theta^k_{\ph{k}ij}P_k^{\ph{i}}+\frac12\Om^{kl}_{\ph{kl}ij}S_{kl}^{\ph{i}}\right\}, \label{eq:modified-conserv-1} \\
D^\om S_{ij}^{\ph{i}} \ & = \ \theta_i^{\ph{i}}\wedge P_j^{\ph{i}}-\theta_j^{\ph{i}}\wedge P_i^{\ph{i}}, \label{eq:modified-conserv-2}
\end{align}
\end{subequations}
acquiring in components the form, suggested by A. Trautman (cf.~\cite{Trautman1973EC-structure} and preface to~\cite{Cartan1986Affine-connections}).

We conclude the Chapter with the following passage due to B. S. DeWitt on the role of symmetry group in relation to observables and locality of the variational equations of motion~\cite[p.13]{DeWitt1965groups-fields}:

\blockquote{Since the class of all physical observables has only group invariants as members one may alternatively be tempted to suppose that the group is really irrelevant. This is equally wrong. Although it is true that in all cases which have ever been considered it has always been found possible to give a complete \emph{intrinsic} dynamical description of the system (i.e., a way to write a complete set of dynamical equations in terms of observables only), the intrinsic equations have nevertheless in every case proved to be (a) \emph{nonlocal} and (b) unobtainable from a stationary action principle in which only observables are varied.* This therefore suggests that the role of the invariance group is twofold: Firstly, it provides the means by which a stationary action principle can be set up for the system. Secondly, it provides the means by which the fundamental locality of the dynamical equations may be displayed. 

This view of the significance of the invariance group is not as narrow as it 
first appears. The existence of a stationary action principle is essential to  
quantization, and locality of the dynamical equations is essential to causality. Therefore, whatever the physical principle involved in the existence of the invariance group may be, it is at least as fundamental as quantum theory and the concept of causality.}

\newpage

\thispagestyle{empty}
\chapter{Modern theories of Quantum Gravity}
\label{ch:quantization}

After we have figured out the geometric and gauge theoretic content of the classical theory of relativity due to Einstein-Cartan in the first part of the thesis, let us turn to the modern non-perturbtive approaches to quantization of gravity. We will provide only a brief account of two of them, where connection plays the crucial part, namely, Loop Quantum Gravity (LQG) and Spin Foams (SF).

There exist many viewpoints on what constitutes the `quantization' of a given `classical' theory. We will concentrate mainly on two elements which are (likely) to be most essential (referring for details to~\cite{Dirac1930QMprinciples,vonNeumann1932QM-foundations}, and not delving too much into interpretations):
\begin{enumerate}[label={\upshape(\arabic*)}, align=left, widest=iii]
\item Superposition principle, realized through the construction of (linear, or projective) Hilbert space of quantum states $|\psi\ket \in \mc{H}$.
\item Description of dynamics and/or observables in terms of probability amplitudes $\mc{A}:|\psi\ket \ra \mathbb{C}$ (also satisfy linear addition, unlike classical summation of probabilities themselves). 
\end{enumerate}

Concerning quantization of gravity/geometry, the first aspect is thoroughly developed within (canonical) LQG. Though, the latter struggles to incorporate dynamics, which is the focus of a complementary path-integral viewpoint of Spin Foams. Let us elaborate on essential elements of both constructions, making some broad comparisons with the perspective provided by Cartan connection formulation~\footnote{In the rest part of the thesis, we use more traditional coordinate-based approach.}.

\pagebreak

\section{Loop Quantum Gravity perspective}

Loop Quantum Gravity is the canonical quantization of General Relativity in terms of the so-called Ashtekar connection variables and conjugate fluxes of momenta (cf.~standard references~\cite{Rovelli2004QG,Thiemann2007ModCanQuantGR,AshtekarLewandowski2004LQG-status,Nicolai-etal2005LQG-outside,HanHuangMa2007LQG-structure}). The \textbf{classical prerequisites} consist of the following ingredients:
\begin{itemize}
\item The theory is not formulated directly in spacetime, but introduces first the foliation by hypersurfaces $\mc{S}_3\subset\mc{M}$, in order to put the theory into hamiltonian/canonical form. Accordingly, the basic variables $(\theta,\om)$ are split into: $3$-dimensional $(A,B)$ belonging to the spatial slice, `non-dynamical' lapses and shifts $N,N_a$, and potentials $A_0$. The action then takes the schematic form
\begin{equation}
S \ = \ \int dt\int d^3x (\tilde{B}^a_i\dot{A}^i_a -\ms{H}), \qquad \ms{H} \ = \ N C + N^a C_a + A^i_0 C_i,
\end{equation}
where the Hamiltonian $\ms{H}$ is the combination of first-class constraints: Gauss $C_i\approx 0$, spatial diffemorphisms $C_a\approx 0$, and (dynamical) Hamiltonian constraint $C\approx 0$. (The indices $i=1,2,3$ are w.r.t. the generators of a Lie algebra $\mf{so}(3)\cong\mf{su}(2)$ of hypersurface gauge-rotations, whereas spatial $a=1,2,3$ are `leftovers' of coordinate parameterization of $\mc{M}$; however, the components $(N,N_a)$ of the so-called ADM decomposition do not form the space-time vector~\cite{KiriushchevaKuzmin2011canonical-GR-myths}.) There are also the second-class constraints, which do not appear explicitly in the `time-gauge', or one `solves' them ab initio as in~\cite{e-Sa2001Holst-Hamiltonian}. (Elaborating the full constraint algebra together with Dirac brackets may be a formidable task, cf.~\cite{Alexandrov2000CovariantLQG}.)
\item The Ashtekar-Barbero (AB) connection variables correspond to a certain combination ${}^{(\ga)}A^i \ = \ \frac{1}{2}\eps^{0i}_{\phantom{0i}jk}\om^{jk} + \ga \om^{0i}$ of spatial Levi-Civita connection and the extrinsic curvature, the latter characterizes the embedding of a hypersurface $\mc{S}_3$ in the ambient $\mc{M}$. (Essentially, the `lucky' isomorphism $\mf{su}(2)\cong\mathbb{R}^3$ of vector spaces is being exploited for the `boost' part of the Lorentz connection.) At the level of the action functional, one does not start directly with the Einsten-Cartan lagrangian, but instead its modification by the Holst term $\mc{L}_{EC}+\ga^{-1} \bra\boldsymbol{\theta}\wedge\boldsymbol{\theta},\mb{\Om}\ket$ is being used \cite{Holst1996action}. (This may roughly correspond to the `mixed' curvature~\eqref{eq:mix-curv}). In the absence of spin-current, the term disappears for zero-torsion, and the Einstein equations of GR are reproduced classically.
\item The conjugate fluxes $\tilde{B}^a_i=\frac12\eps_{ijk}^{\ph{i}}\theta^j_b \theta^k_c\eps^{abc}=\det(\,\vec{\theta}\,)\,\theta^a_i$ are densitized inverse spatial triads (corresponding to the dual bivector $B=\star(\vec{\theta}\wedge\vec{\theta})$, restricted to hypersurface). The Gauss constraint $C_i=\mc{D}_a\tilde{B}^a_i\approx 0$ has the form of the surface closure condition, the covariant derivative is taken w.r.t. Ashtekar-connection. It generates spatial rotations, and satisfy the algebra of angular momentum. We omit the details of the other constraints and their closure.
\end{itemize}

\subsection{Kinematical Hilbert space of LQG}

The quantization in the connection representation is akin to the Shroedinger wave theory, where $(x,p)\sim (A,B)$ are quantized like multiplication and derivative operators on the space of functionals over connections $\mc{C}$, schematically:
\begin{equation}
\hat{A}(\mb{x})\Psi[A] \  = \ A(\mb{x})\Psi[A], \qquad
\hat{B}(\mb{y})\Psi[A] \  = \ \frac{\hbar}{i}\frac{\de}{\de A(\mb{y})}\Psi[A],
\end{equation}
having the canonical commutation relations (c.c.r.) $[\hat{A}(\mb{x}),\hat{B}(\mb{y})]=i\hbar\,\de^3(\mb{x},\mb{y})$. The Gauss constraint $\hat{C}_i\Psi=\mc{D}_a\de\Psi/\de A^i_a\approx0$ then expresses the invariance of the wave functional w.r.t. infinitesimal gauge transformations $\mc{G}$ of the connection $\Psi[A^g]=\Psi[A]$, leading to the $\mr{SU}(2)$-invariant subspace $\mc{C}/\mc{G}$. (Analogously for other constraints.)

The obstacle to the construction of the (invariant) inner product/measure on such a configuration space of connections $\mc{C}/\mc{G}$, which is both non-linear and infinite-dimensional, is bypassed by adopting techniques from lattice gauge theory (and also $C^\ast$-algebras of axiomatic QFT). There, reconstruction of gauge potentials is available, knowing the holonomies (cf.~\cite{Giles1981gauge-Wilson-reconstruction} for Wilson loop traces, satisfying certain Mandelstam identities). Let $\Ga$ be a \emph{graph}, consisting of finite number $L$ of `links' $\ell$, joining at the `nodes' $n$ of the total amount $N$. One thus defines the graph's \textbf{partial configuration spaces} as $\mc{U}_\Ga=\mr{SU}(2)^{L\subset\Ga}$, spanned by the collection of holonomies $h=h[A]$, associated to the links of $\Ga$.

Given two oriented graphs ordered by inclusion $\Ga\leq\Ga'$, s.t. the former is the subset of the latter, there exists a `projection map'
\begin{equation}
\pi_{\Ga'\Ga}  : \, \mc{U}_{\Ga'} \ \ra \ \mc{U}_\Ga , \qquad \pi_{\Ga'\Ga}(\{h_{\ell'}\})_\ell \ = \ \overleftarrow{\prod_{\ell'\subset\ell}} h_{\ell'}^{[\ell,\ell']},
\end{equation}
relating the `finer' description to the `coarser' one. ($[\ell,\ell']=\pm1$ is the relative orientation of $\ell$ and $\ell'$.) With the `partially-ordered set' of graphs there are associated operations of 
\begin{itemize}
\item addition: $\pi_{\Ga'\Ga}(h_1\cdots h_{L+1}) = (h_1\cdots h_{L+1})$;
\item subdivision $\pi_{\Ga'\Ga}(\cdots,h_i, h_j,\cdots)=(\cdots,h_i\cdot h_j,\cdots)$;
\item inversion  $\pi_{\Ga'\Ga}(\cdots,h_i,\cdots)=(\cdots,h_i^{-1},\cdots)$,
\end{itemize}
corresponding to the natural properties of connection as defining the morphism of groupoid structures. The associativity property for $\Ga\leq\Ga'\leq\Ga''$ is satisfied: $\pi_{\Ga'\Ga}\circ \pi_{\Ga''\Ga}=\pi_{\Ga''\Ga}$.

By means of $\pi_{\Ga'\Ga}$, one can `glue' all the finite-dimensional spaces $\mc{U}_{\Ga}$, using the construction of a `projective limit' (not analytic one):
\begin{equation}
\bar{\mc{U}} \ \equiv \ \lim_{\Ga\leftarrow} \mc{U}_{\Ga} \ := \ \left\{\{a_\Ga\}\big| a_\Ga\in  \mc{U}_{\Ga}, \pi_{\Ga'\Ga}a_{\Ga'} = a_{\Ga} \ \forall \, \Ga\leq\Ga'\right\}.
\end{equation}
This technically requires an extension of the configuration space to the closure $\bar{\mc{C}}=\mr{Hom}(\Upsilon,\mr{SU}(2))$, consisting of the \emph{generalized connections} (distributional, of which continuous mappings constitute the measure zero sub-set). Essentially, the continuum limit configuration is given by the collection of all its partial representatives $a_\Ga$. The `finite' projector can be defined as $\Pi_\Ga(\{a_\Ga\}_\Ga):=a_\Ga$, satisfying $\pi_{\Ga'\Ga}\Pi_\Ga=\Pi_\Ga$.

\paragraph{Partial Hilbert spaces} are spanned by `cylindrical functions' over a finite number of link holonomies, which can probe the connection only `smeared' along one-dimensional structures. More precisely, a function $f:\bar{\mc{U}}\ra\mathbb{C}$ is called \emph{cylindrical} over a graph $\Ga$ if there exists $f_\Ga:\mc{U}_{\Ga}\ra \mathbb{C}$, such that:
\begin{equation}
f(\{a_\Ga\}_\Ga) \ = \ f_\Ga(a_\Ga),
\end{equation}
which is equivalent to say $f=f_\Ga\Pi_\Ga$. The function cylindrical over $\Ga$ will be also cylindrical over all the finer graphs $\Ga'\geq\Ga$, s.t. the following relation holds $f_{\Ga'}=f_{\Ga}\pi_{\Ga'\Ga}$.

The superposition of various states can be considered, since the set $\mr{Cyl}$ of cylindrical functions over a partially ordered collection of graphs forms a \emph{vector space}. In fact, for the two cylindrical states $f\in\mr{Cyl}_\Ga$, $f'\in\mr{Cyl}_{\Ga'}$, there exist their common refinements on the graph $\Ga''\geq\Ga,\Ga'$, s.t. $f+f'\in\mr{Cyl}_{\Ga''}$.

Since the cylindrical functionals of the connection $\Psi_{\Ga,f}[A]\equiv f_\Ga(h_{\ell_1}[A],\cdots,h_{\ell_L}[A])\in\mr{Cyl}_\Ga$ are simply functions of $L$ elements of $\mr{SU}(2)$ group, the natural Haar measure $d\mu$ on the latter is used to define the Hilbert space \emph{inner product}:
\begin{equation}
\bra\Psi_{\Ga,f}|\Psi_{\Ga,f'}\ket \ = \ \int_{\mr{SU}(2)^L} d\mu^{\otimes L} \overline{f_{\Ga}(...)}f'_{\Ga}(...),
\end{equation}
which gives the square-integrability $\mc{H}_\Ga=\mr{L}^2(\mc{U}_\Ga,d\mu_\Ga)$, $d\mu_\Ga\equiv d\mu^{\otimes L}$. The collection of partial measures $\{d\mu_\Ga\}_\Ga$ satisfies the \emph{cylindrical consistency condition} $(\pi_{\Ga'\Ga})_\ast d\mu_\Ga = d\mu_{\Ga'}$, and comes from the unique (under certain conditions) Ashtekar-Lewandowski $\bra \Psi_1|\Psi_2\ket = \int_{\bar{\mc{C}}}\mc{D}\mu[A]\overline{\Psi_1[A]}\Psi_2[A]$ measure on $\bar{\mc{U}}$ (continuum).

In order to relate different Hilbert spaces on different graphs $\Ga\leq\Ga'$, the \emph{embedding maps} are defined $\iota_{\Ga\Ga'}:\mc{H}_\Ga\ra\mc{H}_{\Ga'}$ (isometric w.r.t. the measure $\mc{D}\mu$), such that
\begin{equation}
(\iota_{\Ga\Ga'}\Psi_\Ga)(a_{\Ga'}) \ \equiv \ \Psi_\Ga (\pi_{\Ga'\Ga}(a_{\Ga'})),
\end{equation}
and the consistency is satisfied $\iota_{\Ga'\Ga''}\circ\iota_{\Ga\Ga'}=\iota_{\Ga\Ga''}$, for consecutive $\Ga\leq\Ga'\leq\Ga''$.

The whole kinematical Hilbert space if then spanned by such graph states via `inductive limit' construction
\begin{equation}
\mc{H}_{\mr{kin}} \ := \ \bigsqcup_{\Ga\subset\mc{S}_3}\mc{H}_\Ga \big/ \sim \ = \ \mr{L}^2(\bar{\mc{U}},\mc{D}\mu),
\end{equation}
modulo the equivalence relation $\Psi_\Ga\sim\Psi'_{\Ga'}$, if there exists $\Ga''\geq \Ga,\Ga'$, s.t. $\iota_{\Ga\Ga''}\Psi_\Ga=\iota_{\Ga'\Ga''}\Psi'_{\Ga'}$ (i.e. two states can be refined to the same state).

\subsection{Spin-networks}

Analogously to the plane waves being the Fourier basis for the functions on $\mathbb{R}$ (abelian group), the Peter-Weyl theorem gives the decomposition of the functions on (locally compact, semisiple) group $H$ into elementary `harmonics' of irreducible representations. For the $\mr{SU}(2)$, the orthonormal basis (w.r.t. $d\mu$) is given by the Wigner rotation matrices $D^j_{mn}(h)\equiv\bra j,m|D^j(h)|j,n\ket$ of the angular momentum, in representation space $\mc{H}^j$ of the spin $j$. Any generic cylindrical state can thus be expanded into
\begin{equation}
\Psi_\Ga \ = \ \sum_{\substack{j_i,m_i,n_i \\ \forall i=1,...,L}} C^{j_1,...,j_L}_{m_1,...,m_L,n_1,...,n_L} \bigotimes D^{j_i}_{m_i,n_i}(h_{\ell_i}),
\end{equation}
where the `Fourier coefficients' are given by $C^j_{mn}=(2j+1)\int_{\mr{SU}(2)} d\mu(h) f(h)\overline{D^j_{mn}(h)}$ (in the case of a single link).

The solution to the Gauss constraints $\hat{C}_i|\Psi\ket\approx 0$, at each node, selects the gauge-invariant subspace. This can be enforced through the group averaging, using the standard projector 
\begin{equation}
P^n_{\mr{inv}}: \, \bigotimes_\ell \mc{H}^{j_\ell} \ \ra \ \mr{Inv}_{\mr{SU}(2)}^{\ph{i}} \bigotimes_\ell \mc{H}^{j_\ell}, \qquad  P^n_{\mr{inv}} \ = \ \int_{\mr{SU}(2)} d\mu(u) \bigotimes_{\ell\supset n} D^{j_\ell}(u),
\end{equation}
at each node, where the product is over all links $\ell$ meeting at $n$. The invariant singlet states, arising in the expansion $\bigotimes_\ell \mc{H}^{j_\ell} =\bigoplus_{J} (\mc{H}^J)^{k_J}$ into irreducible spins are called \emph{intertwiners} $|\iota\ket\in \mr{Inv}_{\mr{SU}(2)}^{\ph{i}} \bigotimes_\ell \mc{H}^{j_\ell}$, spanning the invariant subspace of the product at the node. (There is a single unique intertwiner in the three-valent case.) One can use the basis of intertwiners to rewrite the invariant projector in terms of identity resolution:
\begin{equation}\label{eq:intertwiner-resolution-node}
P^n_{\mr{inv}} \ = \ \sum |\iota\ket\bra\iota|.
\end{equation}
Inserting~\eqref{eq:intertwiner-resolution-node} at each node, the expansion $\Psi_\Ga = \sum_{j_\ell} C^{j_\ell} \Psi_{\Ga,j_\ell,\iota_n}$ will be given in terms of the orthonormal basis of \textbf{invariant spin-network} states~\cite{Baez1996spin-networks-QG}:
\begin{equation}\label{eq:spin-network}
\Psi_{\Ga,j_\ell,\iota_n} \ = \ \bigotimes_n \iota_n \bigotimes_\ell D^{j_\ell}(h_\ell),
\end{equation}
where contraction/saturation of indices is implied for all $D$-matrices matching the basis element $\iota_n$ at the node.

\paragraph{The geometric picture} assoiated with $\mc{H}_\Ga$ comes both from the semi-classical and phase space considerations. For example, the \emph{Livine-Speziale} (overcomplete) basis of coherent intertwiners~\cite{LivineSpeziale2007new-vertex} is obtained by the group averaging
\begin{equation}\label{eq:Livine-Speziale}
|\{j_\ell,\mb{n}_\ell\}\ket \ = \ \int_{\mr{SU(2)}} d\mu(u) \ u\rt\bigotimes_\ell|j_\ell,\mathbf{n}_\ell\ket,
\end{equation}
applied to the `stack' of Bloch spin-coherent states of minimal uncertainty. Apart from representation spins $j$, they are labelled by the unit normals $\mb{n}\in S^2\cong \mr{SU}(2)/\mr{U}(1)$ of the homogeneous space of a sphere, arising in the Perelomov's construction $|j,\mb{n}\ket=D^j(\tilde{\si}(\mb{n}))|j,\pm j\ket$ (where $\tilde{\si}: S^2\ra \mr{SU}(2)$ is the chosen section of the Hopf fibration $\xi$ over classical phase space of a sphere, cf.~\cite{Perelomov1986Generalized-CS,AliAntoineGazeau2014Generalized-CS}). Such $|\iota\ket$ are peaked on the closed configurations $\sum_\ell j_\ell\mb{n}_\ell=0$ in the limit $j\ra\infty$, which endows the intertwiner states with a geometric interpretation in terms of (semi-classical) polyhedra~\cite{BianchiDonaSpeziale2011Polyhedra}. 

The partial Hilbert spaces $\mc{H}_\Ga$ can also be viewed more directly in terms of the quantization of the space of shapes of the collection of such (fuzzy) polyhedra, glued in non-trivial manner~\cite{ConradyFreidel2009geometry-reduction,FreidelSpeziale2010Twisted-geometries,FreidelSpeziale2010Twistors-to-twisted-geometries,LivineTambornino2012LQG-spinor-representation,FreidelGeillerZiprick2013cont-LQG-phase-space}. In particular, only the areas are matched at their intersection, allowing the discrepancy/discontinuity of shapes. Such collective configurations bear the name of \emph{`twisted geometries'}. Their interpretation is problematic in terms of classical (discrete) geometry~\cite{RovelliSpeziale2010LQG-on-a-graph}. In our analysis of the (hyper-)cuboid example in Ch.~\ref{ch:problem}, the related issues are brought up in the context of the analogous `shape-mismatch', arising in Spin Foam models.

\paragraph{The quantization of geometric operators} is one of the major achievements of LQG. Dual to the link holonomies, there are momenta-fluxes $\hat{B}$ which are naturally smeared over complementary two-dimensional surfaces $\hat{B}(S)$. Quantized as the left-invariant vector fields on a Lie group, they act on holonomies with a sort of `grasping' operator, picking up Lie algebra element at every point, where the surface is pierced by a link $\ell=\ell_1\circ\ell_2$ orthogonally:
\begin{equation}
\hat{B}_i(S)h_\ell[A] \ = \ -i\hbar \int_S d^2\si\,  n_a \frac{\de h_\ell[A]}{\de A^i_a(x(\si))} \ = \ \pm \hbar\, h_{\ell_1}[A]J_i h_{\ell_2}[A].
\end{equation}
For the spin-network state $\Psi_\Ga$, one obtains $\hat{B}_i(S)|\Psi\ket=\hat{J}^{(j)}_i|\Psi\ket$ the generator in the representation $j$ of the link. The \emph{gauge-invariant} operator of elementary area (squared) $\hat{B}^2(S_\ell)\propto C_\ell=j_\ell(j_\ell+1)$ has thus a discrete spectrum of the Casimir operator. (This feature can be traced back to the compactness of the group $\mr{SU}(2)$ being used.) The spin-network states are seen from here as the quantum states of `geometric excitations of the space' itself.

\newpage

\subsection{On the `Loop-like' quantization of Cartan gravity?}
\label{subsec:Cartan-quntization}

We allow in this section a few \underline{speculations} about the relation that the Cartan gauge gravity might have to the LQG quantization, by comparison.

\begin{itemize}
\item The first thing to notice is the enlarged configuration space, associated with $\varpi$: although roles of $\theta$ and $\om$ are drastically different, they both are treated on the same gauge-theoretic footing as parts of the unified structure~\footnote{Let us mention that D. Wise had in mind, together with M. Barenz~\cite{Barenz2011Cartan}, to unravel the Cartan structure within LQG phase space variables. This may also provide a new perspective on the relations between dual connection and (non-commutative) flux representations~\cite{Baratin-etal2011non-comm-flux,Dittrich-etal2013generalized-fluxes}.}
\item Technically, the space of (generalized) Cartan connections is similarly realized: the notion of development supplies the covariant functor mapping from the groupoid of paths to the gauge group $G$. It, however, encompasses much more than the Ehresmann's holonomy, since $G$ is the full \emph{principal} group of geometry. The elements of the Poincar\'{e} group associate the parallel transport, as well as the vector of certain length, to elementary edges (e.g. could be taken as `exact' for geodesic segments.) The construction of the graph Hilbert spaces from `cylindrical' functions is applicable, in principal (leaving possible non-compactness issues aside).
\item It is manifestly a spacetime connection, both Lorentz and `diffeomorphism'-covariant, while the graphs are not necessarily restricted to the hypersurface~\footnote{For the discussion of covariance aspects of LQG, see~\cite{Samuel2000spacetime-connection,Livine2006Towards-CovarintLQG,AlexandrovRoche2011CovariantLQG-critique,RovelliSpeziale2011Loretz-covariance}.}. One mentioned that the foliation picture might be not so pivotal for symplectic geometry. However, the phase space structure requires further studies, as well as the notion of observables.
\item In LQG, the diffeomorphism constraint operator is hard to define at the point by its very nature. The `averaging' procedure is used instead, taking the equivalence classes of `s-knots' irrespectively of the precise location in $\mc{S}_3$. It maybe interesting to compare with the picture of Cartan gravity, where the translations are generated by usual shift operators.
\item It may be too restrictive to attach physical meaning only to invariant quantities (magnitudes), but the tensor operators are also quantized successfully (the simplest example is the angular momentum). Such quantities could then be of composite nature.
\end{itemize}

To provide an example for the latter, one can make a curious observation, in the context of discussion. Considering the Einstein tensor, it is observed that this formally has the structure of the Pauli-Lubanski vector of the Poincar\'{e} algebra:
\begin{subequations}
\begin{align}
i\,[\mc{J}^{\ph i}_{ij},\mc{J}^{\ph i}_{kl}] \ & = \ \mc{J}^{\ph i}_{k[i}\eta^{\ph i}_{j]l} - \mc{J}^{\ph i}_{l[i}\eta^{\ph i}_{j]k}, \nonumber \\
i\, [\mc{J}_{ij}^{\ph{i}},\mc{P}_k^{\ph{i}}] \ & = \ \mc{P}_{[i}^{\ph{i}}\eta_{j]k}^{\ph{i}}.
\end{align}
\end{subequations}

So, if one were to quantize vector-like tetrads as $\boldsymbol{\theta}\mapsto \mc{P}$, and expand the cycle holonomy to obtain the Lie algebra element of curvature $\mb{\Om}\mapsto\mc{J}$, then the Einstein tensor/`energy-momentum vector' would acquire the form:
\begin{equation}\label{eq:Pauli-Lubanski}
\star(\mb{\Om}\wedge\boldsymbol{\theta}) \ \mapsto \ \mc{W}= \star(\mc{J}\wedge{\mc{P}}), \qquad \mc{W}_i \ := \ \frac12 \varepsilon_{ijkl}^{\ph{i}}\mc{J}^{jk}\mc{P}^l
\end{equation}
This has the algebraic properties (up to normalization):
\begin{subequations}
\begin{align}
\mc{P}^i\mc{W}_i^{\ph{i}} \ & = \ 0, &  [\mc{P}_i^{\ph{i}},\mc{W}_j^{\ph{i}}] \ & = \ 0, \\
i\,[\mc{J}_{ij}^{\ph{i}},\mc{W}_k^{\ph{i}}] \ & = \ -\mc{W}_{[i}^{\ph{i}}\eta_{j]k}^{\ph{i}}, &  i\,[\mc{W}_i^{\ph{i}},\mc{W}_j^{\ph{i}}] \ & = \ \varepsilon_{ijkl}^{\ph{i}}\mc{W}^k\mc{P}^l.
\end{align}
\end{subequations}
(Note that in 3d vector notation $W_0=\vec{J}\cdot\vec{P}$, $\vec{W}=E\vec{J}-\vec{P}\times\vec{K}$.) The first line expresses the `transversality' and `conservation' w.r.t. translations, while the second -- behaviour under Lorentz transformations, and the covariant version of the 3d angular momentum algebra.

The unitary irreducible representations of the Poincar\'{e} group (massive case) are labelled by the Casimir operators $|\mc{P}|^2=\mu^2$ (rest mass) and $|\mc{W}|^2 = -\mu^2 \si(\si+1)$ ($\si$ -- spin quantum number). Have we been ``quantizing gravity'', or just re-invented the spin?

\section{Path-integral approach of Spin Foams}
\label{sec:Spin-Foams}

The Spin Foam program is the discrete path-integral/state-sum approach that emerged from the confluence of researches in Loop gravity and the broad generalization of a Topological Quantum Field Theory (TQFT).  

\subsection{Sum over histories, general boundary field theory}
\label{subsec:general-boundary}

As a simple motivating example, consider the time evolution map $U(t_f-t_i)$ of the particle from the (eigenstate of the) position $x_i$ in initial configuration, to that of the final $x_f$. The `transition probability amplitude' can be re-written concisely
\begin{equation}
\mc{A}(x_f,t_f;x_i,t_i) \ = \ \bra x_f,U(t_f-t_i)x_i\ket  \ = \ \int_{\substack{x(t_i)=x_i \\ x(t_f)=x_f}} \exp \left(\frac{i}{\hbar} S[x]\right)\mc{D}x,
\end{equation}
in the form of the Feynman's integral along interfering paths, where every single history of a particle contributes to the sum with an oscillatory phase, given by the action. 

In case of a field theory, one instead integrates the configurations $\phi(x)$ over the $m$-dim region $\mc{R}\subset\mc{M}$, bounded by the hypersurfaces $\mc{S}_t$ at initial and final time steps~\cite{Engle2013SF-intro}
\begin{equation}\label{eq:general-boundary}
\mc{A}(|\varphi\ket,\pa \mc{R}) \ \equiv \ \mc{A}(|\varphi_f\ket\otimes|\varphi_i\ket,\mc{S}_{t_f}\cup \mc{S}_{t_i}) \ = \ \int_{\substack{\phi|_{t_i}=\varphi_i \\ \phi|_{t_f}=\varphi_f}} e^{iS[\phi]}\mc{D}\phi \ \equiv \ \int_{\phi|_{\pa\mc{R}}=\varphi } e^{iS[\phi]}\mc{D}\phi,
\end{equation}
where the field has simultaneous eigenstates $|\varphi_t\ket\in\mc{H}_{\mc{S}_t}^{\ph{i}}$, in the corresponding Hilbert spaces.

In the \emph{general boundary formulation}, $\mc{R}$ is allowed to be any finite space-time region~\cite{Oeckl2003general-bndry}. This permits purely local calculations (irrespectively of the asymptotic behaviour at infinity), consistent with the locality of the measuring apparatus one would actually use. Furthermore, the lack of an a priori fixed notion of which space-time regions may be used is compatible with the spirit of background independence. The usual situation is reproduced for the disjoint boundaries like in~\eqref{eq:general-boundary}, s.t. the Hilbert space is the tensor product $\mc{H}_{\pa\mc{R}}^{\ph{i}}=\mc{H}_{\mc{S}_{t_f}}^\ast\otimes\mc{H}_{\mc{S}_{t_i}}^{\ph{i}}$. 

It would be suggestive to denote the elementary contributions with the `amplitude' $\mc{A}\propto e^{iS[\phi]}$ (heuristically), while the extended sum will give the `partition function' $Z_{\mc{R}}^{\ph{i}}[\varphi]=\int_{\phi|_{\pa\mc{R}}=\varphi}\mc{A}\,\mc{D}\phi$. In such a form, it can be viewed as an extension of the axiomatics of TQFT (Atiyah), whose data -- consisting of (finite dimensional) state spaces and partition  functions -- have to satisfy a number of conditions~\cite{Barrett1995QG-TQFT}: 1) The rules of quantum  mechanics for composing amplitudes, and 2) Functoriality, or the correct behaviour under diffeomorphisms of  manifolds (s.a. cobordisms). The latter may be viewed as the appropriate gluing conditions:
\begin{equation}
Z_{\mc{R}_2}^{\ph{i}}\circ Z_{\mc{R}_1}^{\ph{i}} \ = \ Z^{\ph{i}}_{\mc{R}_2\cup_{\mc{S}}^{\ph{i}}\mc{R}_1},
\end{equation}
where $\pa\mc{R}_1=\bar{\mc{S}}_2\cup\mc{S}_1$, $\pa\mc{R}_2=\bar{\mc{S}}_3\cup\mc{S}_2$.

The amplitude/partition function acts as a linear functional $Z_{\mc{R}}^{\ph{i}}: \, \mc{H}_{\pa\mc{R}}^{\ph{i}} \ \ra \ \mathbb{C}$. For a given state $|\Psi\ket \in \mc{H}_{\pa\mc{R}}^{\ph{i}}$, the result may be written using the path-integral representation as
\begin{equation}
Z_{\mc{R}}^{\ph{i}}(\Psi) \ \equiv \ \bra Z_{\mc{R}}^{\ph{i}}|\Psi\ket \ = \ \int_{\pa\mc{R}} \mc{D}\varphi \, Z_{\mc{R}}^{\ph{i}}[\varphi]\Psi[\varphi].
\end{equation}

\subsection{Topological BF theory and its quantization} 

An example of the implementation of this program is provided by a close TQFT relative of gravity, called BF-theory~\cite{Baez2000SF-BF}. (The name stems from the tradition to denote the curvature/`force field' of the connection as $F=F[A]$.) It is determined by the action functional $\int\bra\mb{B},\mb{\Om}\ket$, having the appearance of~\eqref{eq:Regge-action}, however the Lie algebra $\mf{h}$-valued $(m-2)$-form $\mb{B}$ is an arbitrary variable field. The field equations are then trivially
\begin{equation}\label{eq:BF-eom}
D^\om\mb{B} \ = \ 0, \qquad \mb{\Om} \ = \ 0.
\end{equation}

Apart from the usual gauge rotations/Lorentz transformations, the action is invariant w.r.t. enlarged symmetry of `shifts'
\begin{equation}\label{eq:topological-symmetry}
\de\mb{B} \ = \ D^\om \tilde{\mb{Q}}, \qquad \de\boldsymbol{\om} \ = \ 0,
\end{equation}
generated by \emph{$(m-3)$-form} $\tilde{\mb{Q}}\in\La^{m-3}(P,\mf{h})$, which makes the theory topological/finite-dimensional (i.e. devoid of non-trivial local excitations of geometry). To establish the relation of~\eqref{eq:topological-symmetry} with diffeomorphisms, the equations of motion are usually involved (i.e. ``on-shell'' considerations, cf.~\cite{Buffenoir-etal2004Hamiltonian-Plebanski}). We notice, however, that $\mb{B}$ may still be regarded as the general system of bivectors (non-simple, but spanned by~$\theta$), transforming as $\de\mb{B}=D^\om\imath_X^{\ph{i}}\mb{B} + \imath_X^{\ph{i}}D^\om\mb{B}$ under diffeomorphisms; whereas for the curvature of the connection: $\de\mb{\Om}=D^\om\imath_X^{\ph{i}}\mb{\Om}$.  Then the symmetry is just the consequence of diffeo-invariance of the action $\de\bra\mb{\Om},\mb{B}\ket = \imath_X^{\ph{i}}\bra \mb{\Om}, D^\om\mb{B}\ket + d\bra\imath_X^{\ph{i}}\mb{\Om},\mb{B}\ket+ \bra \mb{\Om}, D^\om\imath_X^{\ph{i}}\mb{B}\ket$ (up to divergence). The first two terms do not contribute, leaving the `remnant' $(m-3)$-form $\tilde{\mb{Q}}=\imath_X^{\ph{i}}\mb{B}$ of~\eqref{eq:topological-symmetry}.

Such a theory may be successfully regularized and quantized. For the \textbf{discretization} of the field-theoretic d.o.f., one considers (2-skeleton of) a cell-complex $\mc{K}$, consisting of 0-dimensional `vertices'~$v$, 1-dimensional `edges'~$e$, and 2-dimensional `faces'~$f$. Typically, $\mc{K}$ appears as a combinatorial dual of another cell-complex $\De$ of \emph{simplicial} decomposition of $\mc{M}$ (triangulation). The connection is then substituted for holonomies, `path-ordered' along the edges:
\begin{equation}\label{eq:discrete-holonomies}
h_e[\om] \ \equiv \ \int_e \overleftarrow{\exp} \left(\om\cdot\mc J\right).
\end{equation}
They form the finite product of parallel transport around a closed cycle of edges surrounding a face:
\begin{equation}
H_f \ := \ \overleftarrow{\prod_{e\subset f}} \, h_e,
\end{equation}
which play the role of discrete curvature operator. The $B$-field is naturally `smeared' as complementary bivector over the (dual) $(m-2)$-dimensional cells $S_f$ of a $\De$-triangulation~\footnote{Mention that the integrand should be acted upon by holonomies, referring it to the single frame at the source, in order to ensure the correct transformation properties of the Lie algebra element under the gauge rotations (cf.~\cite{FreidelGeillerZiprick2013cont-LQG-phase-space}).}:
\begin{equation}\label{eq:discrete-B}
B_f \ := \ \int_{S_f} B \, \in \bigwedge^2\mathbb{R}^m.
\end{equation}
which we label bijectively with $f$. The action functional, regularized this way over over $\mc{K}$, takes the form
\begin{equation}
S_{BF}^{\mc{K}} \ := \ \sum_f \tr (B_f\cdot H_f).
\end{equation}
The first equation of motion~\eqref{eq:BF-eom} is naturally seen as the discrete `closure condition':
\begin{equation}\label{eq:closure-3d}
\sum_{f\supset e} B_f \ = \ 0 \qquad \forall \, e,
\end{equation}
for the 2-dimensional boundary of the 3-dimensional region (tetrahedron $\tau_e$), dual to an edge $e$. In the canonical picture, this corresponds to the Gauss law constraint. The local flatness condition on curvature then becomes simpy $H_f=0$ for each $f\in \mc{K}$. 

On the boundary of a 2-complex, the graph $\pa\mc{K}=\Ga$ is induced: to each edge $e$ there uniquely corresponds a node $n=\pa e$, while the face $f$ cuts a link $\ell\subset\pa f$. Correspondingly, the discrete variables for the `kinematic' picture of the boundary graph  Hilbert spaces $\tilde{\mc{H}}_\Ga$ comprise the induced set of holonomies and bivectors $\{h_\ell, B_\ell\}_{\ell\subset\Ga}$. (Note that they take values in the full Lorentz group/algebra of spacetime, accordingly.)

\paragraph{The `sum over states/histories' quantization} proceeds as follows, using tools from the group representation theory. The discretized partition function
\begin{equation}\label{eq:partition-1}
Z_{BF}^{\mc{K}} \ := \ \int \prod_e dh_e \int \prod_f dB_f \, e^{i\sum_f \tr(B_f\cdot H_f)}
\end{equation}
contains the ordinary integrals that can be performed exactly:
\begin{equation}\label{eq:partition-2}
Z_{BF}^{\mc{K}} \ = \ \int \prod_e dh_e\, \prod_f \left( \de(H_f) +\de (-H_f)\right).
\end{equation}
By means of the Peter-Weyl theorem for a compact group $H$, for simplicity, the delta-distribution can be re-written:
\begin{equation}\label{eq:Peter-Weyl}
\de(h) \ = \ \sum_\rho d_\rho \chi^\rho(h) , \qquad \chi^\rho(h) \ = \ \mr{Tr}(D^{\rho}(h)),
\end{equation}
in terms of a character traces of face-holonomies in the group's unitary representation $\rho$~\footnote{The appearance of the second ``delta'' in~\eqref{eq:partition-2} is not very plausible. It is often omitted or `corrected', we will follow this practice as well.}. Upon substitution into~\eqref{eq:partition-2}, one obtains:
\begin{equation}
Z_{BF}^{\mc{K}} \ = \ \int \prod_e dh_e\, \prod_f \, d_{\rho_f}\, \mr{Tr} \left[D^{\rho_f}\left(\prod_{e\subset f} \, h_e\right)\right],
\end{equation}
for the set of faces $f$ meeting at an edge $e$ (forming the boundary of a polyhedron $\tau_e$ in the dual cellular decomposition $\De$ picture).

One now recalls that the invariant projectors for the `stack' of representation spaces $\mc{H}_e=\bigotimes_{f\supset e}\mc{H}^{\rho_f}$, appearing at the edges:
\begin{equation}\label{eq:edge-invariance}
P^e_{\mr{inv}}: \, \mc{H}_e \ \ra \ \mr{Inv}_H^{\ph{i}} \mc{H}_e, \qquad  P^n_{\mr{inv}} \ = \ \int_H dh_e \, \bigotimes_{f\supset e} D^{\rho_f}(h_e),
\end{equation}
can be equivalently rewritten using the identity resolution in terms of basis of intertwiners (invariant tensors): 
\begin{equation}\label{eq:intertwiner-resolution-edge}
P^e_{\mr{inv}} \ = \ \sum_{\iota_e} |\iota_e\ket\bra\iota_e| \ \equiv \ \mathbbm{1}^{\ph{i}}_{\mr{Inv}_H^{\ph{i}} \mc{H}_e}.
\end{equation}

This allows to re-group the expression for partition function in terms of representations (`spins') and intertwiners (could be also labelled by `intermediate spins' in the recoupling scheme). One notices that the set of edge-intertwiners $\iota_e$ can be re-collected at the vertex; whereas their indices, belonging to the same face-representation $\rho_f$, be contracted according to the pattern of a 2-complex~$\mc{K}$, one obtains an expression for the \textbf{vertex amplitude}:
\begin{equation}\label{eq:vertex-1}
\mc{A}_v \ := \ ``\mr{Tr}" \left(\bigotimes_{e\rightarrow v} |\iota_e\ket \, \bigotimes_{e\leftarrow v} \bra\iota_e|\right).
\end{equation}
(The ``trace'' here just to represent saturation of indices by contraction, of the same type as in spin-networks~\eqref{eq:spin-network}; the dualization is applied for the different orientation of `in-going' and `out-going' direction of edges.) At the same time, the spin-network state $|\Psi_{\Ga_v}\ket\in \tilde{\mc{H}}_{\Ga_v}$ is induced on the boundary graph, cut by a sphere around each vertex (see Fig.~\ref{fig:boundary-vertex}), so that the amplitude acts as
\begin{equation}\label{eq:vertex-2}
\mc{A}_v :\, |\Psi_{\Ga_v}\ket \ \ra \ \mathbb{C},
\end{equation}
thus justifying its name.

\begin{figure}[t]
\center{\includegraphics[width=0.3\linewidth]{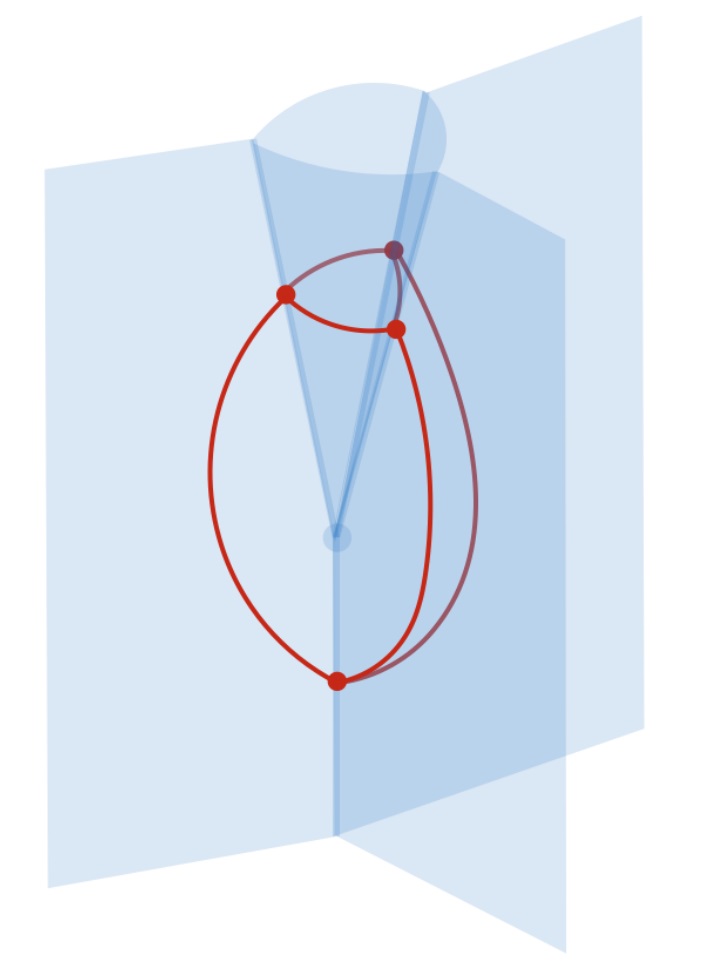}}
\caption{The boundary graph of a 2-complex $\mc{K}$ cut at the vertex (scematic 3d view from~\cite{Rabuffo2018SpinFoam-RGflow}).}
\label{fig:boundary-vertex}
\end{figure}

The tuple $(\mc{K},\{\rho_f,\iota_e\})$ of a 2-complex $\mc{K}$, `coloured' with the group representation labels of faces and edges, meeting at vertices, bears the name of a \textbf{spinfoam}. It is regarded as a `history' of a discretized/quantum spacetime geometry (flat for BF theory). The obtained expression for partition function 
\begin{equation}\label{eq:partition-3}
Z_{BF}^{\mc{K}} \ = \ \sum_{\rho_f,\iota_e} \prod_f d_{\rho_f} \, \prod_v \mc{A}_v(\{\rho_f\}_v,\{\iota_e\}_v),
\end{equation}
performs the summation over contributions from all such spinfoams, for a given 2-complex~$\mc{K}$.

\paragraph{Gravity in 3d and Ponzano-Regge model.} This picture becomes an exact quantization for the Einstein gravity in $m=3$. Indeed, since all bivectors are simple there, the classical continuum theories coincide. Considering the case of triangulations, there is a unique intertwiner for the 3-valent node, given by Wigner's $3j$-symbol $\iota^{m_1m_2m_3}=\left(\begin{smallmatrix}j_1 & j_2 & j_3 \\
m_1 & m_2 & m_3
  \end{smallmatrix}\right)$. The dual complex to a triangulation is also simplicial, so that at each vertex intertwiners are contracted along the tetrahedron pattern, obtaining the Wigner's invariant $\left(\begin{smallmatrix}j_1 & j_2 & j_3 \\
j_4 & j_5 & j_6
  \end{smallmatrix}\right)=\{6j\}_v$-symbol for the vertex amplitude of \emph{Ponzano-Regge} model of Euclidean 3d gravity~\cite{BarrettNaish-Guzman2009Ponzano-Regge}~\footnote{This could be compared with another succesful quntization of topological 3d gravity by Witten~\cite{Witten1988TQFT-3d-grav,Witten1989amplitudes-3d-grav}, using $G$-structures and ultimately based on the reformulation in terms of a Chern-Cimons theory for certain Cartan connection~\cite{Carlip1993geom-structures,Wise2009symmetric-spaces-Cartan}.}. 
  
It is reassuring that in the scaling limit $j\ra\la j$ of large spins $\la\ra\infty$, this quantity in the asymptotic regime leads to the exponential with the oscillatory phase given by Regge's deficit angle of discrete gravity. The relation to the canonical LQG can be established~\cite{Rovelli1993LQG-SF-3d,NouiPerez2005LQG-SF-3d,BonzomFreidel2011LQG-SF-3d}. This turned out to be the major inspiration for the Spin Foam program, trying to extend the above results to realistic case Lorentz group in four dimensional spacetime. For the `kinematical' picture of LQG states on the hypersurface this would allow to supply them with `dynamics', which could be viewed in terms of varying foliations and `evolving spin-networks', correspondingly~\cite{ReisenbergerRovelli1997SFfromLQG-sum,Markopoulou1997spin-network-evolution}.

\subsection{Pleba\'{n}ski constrained formulation and bivector geometries}
\label{subsec:Plebanski}

\paragraph{The setting.} The classical backdrop behind the Spin Foam proposals in 4d is the observation due to Pleba\'{n}ski~\cite{Plebanski1977,Celada-etal2016BF-review} that the Einstein-Cartan action can be recast as a constrained BF theory:
\begin{equation}\label{eq:BF+cnstr}
S[B,\om, \la] \ = \ \int \frac12 B_{AB}^{\phantom{AB}}\wedge \Om^{AB}[\om] \ + \ \la^\al  C_\al[B].
\end{equation}
(Here $A,B=0,1,2,3$ -- (internal, or anholonomic) indices in the defining vector representation of the homogeneous Lorentz group $H=SO(3,1)$; the indices of the coordinate labels on $\mc{M}$ will be lowercase latin $a,b=0,1,2,3$, respectively.) The 2nd term represents constraints $C_\al[B]=0$ on $B$-field, enforced by the Lagrange multipliers $\la^\al$ ($\al$ -- multi-index); they effectively reduce the number of independent $B$-components, so that the bivector is given by the (dual) simple product $2B=\star \theta \wedge \theta\equiv\star\Si$ of (some) tetrad co-frame $\theta$. Hence, on the constraint surface, the theory acquires the form familiar Einstein-Cartan form:
\begin{equation}\label{eq:Einstein-Cartan}
S_{\mr{EC}}^{\ph{i}}[\theta,\om] \ = \ \int \frac12 \varepsilon_{ABCD}^{\phantom{ABCD}}\,\theta^A\wedge \theta^B\wedge \Om^{CD}[\om].
\end{equation}

\paragraph{The strategy} in the majority of Spin Foam approaches is to \textit{first quantize and then constrain}, according to the following route:
\begin{enumerate}
\item discretize the classical theory on a piecewise-flat partition of the spacetime $\mc M$ (most commonly, simplicial);
\item quantize the topological BF part of the discretized theory;
\item impose (a version of) simplicity constraints $C_\al[B]\approx 0$ directly at the quantum level.
\end{enumerate}
The core non-trivial part in constructing SF models for gravity is the third step of the implementation of a quantum version of the simplicity constraints at the level of state-sum for BF theory:
\begin{equation}\label{eq:simplicity-quant}
\widehat{C_\al[\Si]} \ \approx \ 0.
\end{equation}
Depending on the first/second-class nature of the set \eqref{eq:simplicity-quant}, they should either annihilate the state functionals (\textit{\`{a} la} Dirac), or to be imposed weakly on matrix elements (\textit{\`{a} la} Gupta-Bleuler). This usually leads to restrictions on spin labels $j_f$ and/or intertwiners $\iota_e$ of the boundary Hilbert spaces. 

The most widely known and well studied is the Pleba\'{n}ski's quadratic set of constraints, existing in 2 versions:
\begin{equation}\label{eq:Plebanski-cnstr}
\text{(a)} \qquad B^{AB}\wedge B^{CD} \ = \ V \, \varepsilon^{ABCD} \quad\qquad \stackrel{\tilde{V}\neq 0}{\Longleftrightarrow} \qquad\quad \text{(b)} \qquad \varepsilon_{ABCD}^{\phantom{ABCD}}B^{AB}_{ab} B^{CD}_{cd} \ = \ \tilde{V} \, \varepsilon_{abcd}^{\phantom{ABCD}}.
\end{equation}
They are equivalent, provided the quantity $V$ (resp. $\tilde{V}$) -- which is defined by~\eqref{eq:Plebanski-cnstr} through contraction with $\varepsilon$ -- is non-vanishing~\cite{DePietriFreidel1999Plebanski}. In that case, there are two non-degenerate sectors of solutions:
\begin{equation}\label{eq:Plebanski-sectors}
I^\pm: \quad B^{AB} \ = \ \pm \theta^A\wedge \theta^B,  \qquad\qquad II^\pm: \quad B^{AB} \ = \ \pm \varepsilon^{AB}_{\phantom{AB}CD}\theta^C\wedge \theta^D,
\end{equation}
and $V=\pm\frac{1}{4!}\eps^{\phantom{A}}_{ABCD}\theta^A\wedge \theta^B\wedge \theta^C\wedge \theta^D=\tilde{V}\,d^4x$ acquires an interpretation of spacetime 4-volume.
The sectors $II^\pm$ reproduce \eqref{eq:Einstein-Cartan} up to the discrete sign ambiguity, while $I^\pm$-sectors give the topological Holst term. The treatment of degenerate case $\tilde{V}=0$, and relations between sectors can be found in \cite{Reisenberger1999Plebanski}.

\paragraph{The bivector constraints and the discretization} of classically equivalent forms of constraints \eqref{eq:Plebanski-cnstr} lead to two, \textit{a priori} different, SF models. The (a)-case gives the version of the Reisenberger state-sum proposal \cite{Reisenberger1997state-sum} (corresponding to a self-dual formulation), whereas the case (b) is the most prevailing and leads to the Barrett-Crane (BC)~\cite{BarrettCrane1998Euclid,BarrettCrane2000Lorentz} model.

Suppose, our 2-complex $\mc{K}$ is dual to a triangulation $\De$. Then, depending on the relative position of triangles in a 4-simplex, the constraints fall into 3 types:
\begin{enumerate}[label={\upshape(\roman*)}]
\item $\displaystyle{\star B_f \cdot B_f=0}$ for each triangle/face $f$ -- diagonal (or `face') simplicity; \label{diag-simplicity}
\item $\displaystyle{\star B_f\cdot B_{f'}=0}$ if two faces share an edge $f\cap f'=e$ -- cross-simplicity (or `tetrahedral' constraint); \label{cross-simplicity}
\item $\displaystyle{\star B_f\cdot B_{f'}=:\tilde{V}_v(f,f')}$ for any pair of faces $f,f'$ meeting at the vertex~$v$ and spanning 4-simplex volume -- the so-called volume (or `4-simplex') constraint.\label{volume-quadr}
\end{enumerate}

Each of these constraints have different status and are treated accordingly. In particular, they are implemented, respectively, at the level of faces/tetrahedra/4-simplices. The simplicity constraints formulate the necessary and sufficient conditions for the general system of bivectors to correspond to the faces of the discrete cell-complex of the single 4-simplex (given by the metric $\cg\sim\theta$, discretized as vectors $E=\int \theta$ along 1-dimensional lines). The 1st condition implies that the bivector is \textit{simple}, i.e. given by the wedge product of two vectors:
\begin{equation*}
\text{\ref{diag-simplicity}} \qquad \Rightarrow \qquad B^{AB}_{f_1} \ = \  E^{[A}_2E^{B]}_3 \qquad \text{or} \qquad \star B^{AB}_{f_1} \ = \  E^{[A}_2E^{B]}_3
\end{equation*}
If two triangles share a common edge, then the sum of the corresponding two bivectors is also simple:
\begin{equation*}
\text{\ref{cross-simplicity}} \qquad \Rightarrow \qquad B^{AB}_{f_2} \ = \  E^{[A}_3E^{B]}_1 \qquad \text{or} \qquad \star B^{AB}_{f_2} \ = \  E^{[A}_3E^{B]}_1, \qquad \text{and cyclically }\ \forall f\supset e.
\end{equation*}
The closure condition \eqref{eq:closure-3d} states that the geometry of the tetrahedron $\tau_e$, built on vectors $E_1,E_2,E_3$ (or its dual), has boundary surface which is closed. This condition allows a generalization to arbitrary valence and is sufficient to uniquely specify the geometry of a flat polyhedron~\cite{BianchiDonaSpeziale2011Polyhedra}. 

Regarding the kinematical Hilbert spaces induced on a boundary, we set up the following
\begin{definition}
We call a \textbf{bivector geometry} the graph $\Ga\subset S^3$, together with the set of bivectors $\{B_\ell\}_{\ell\in\Ga}$ associated with its oriented links as induced from the faces at the vertex $v\in\mc{K}$. We also demand the bivector geometry to satisfy the closure~\eqref{eq:closure-3d}, as well as the first two of (diagonal and cross-) simplicity constraints.
\end{definition}
The underlying reason, which allows to pass from the description in terms of edge vectors to face bivectors, is contained in the Barrett's \emph{reconstruction theorem}~\cite{BarrettCrane1998Euclid}. It states that the flat 4-simplex is uniquely associated (up to the orientation, translations and inversions) with the set of ten bivectors, satisfying the above three conditions (supplemented with the orientation reversion $B_{AB}=-B_{BA}$ + some non-degeneracy requirements). 

The role of \ref{volume-quadr} is to ensure that the geometries of the tetrahedra fit together to form consistently a 4d geometry, in particular, that the volume of a 4-simplex is invariably defined. The volume constraint~\ref{volume-quadr} is not the part of conditions, defining the bivector geometry, because it is implied by the constraints on the tetrahedral level and the closure. The derivation roughly goes as follows~\cite{EPRL-FK2008flipped2,EPRL-FK2008LS}. Label the five tetrahedra with $e=1,...,5$; the triangle $\triangle_{12}$ is shared by two respective tetrahedra. Using the closure \eqref{eq:closure-3d}, say for tetrahedron $1$, and contracting it with all the other bivectors, one can freely swap between triangles, for instance:
\begin{equation}\label{eq:volume-simplex}
\star B(\triangle_{12})\cdot B(\triangle_{45})+\star B(\triangle_{13})\cdot B(\triangle_{45})\ = \ -\star B(\triangle_{14})\cdot B(\triangle_{45})-\star B(\triangle_{15})\cdot B(\triangle_{45})\ = \ 0,
\end{equation}
s.t. the r.h.s. eliminates on the surface of the simplicity constraints \ref{cross-simplicity} $\Rightarrow$ hence \ref{volume-quadr} follows. (In the canonical picture parlance, \ref{volume-quadr} is interpreted as  a ``secondary'' constraint, which ensures the dynamical conservation of the simplicity constraints \ref{cross-simplicity} across the 4-simplex~\footnote{This is, however, not the statement of the Hamiltonian analysis of the underlying action~\cite{Buffenoir-etal2004Hamiltonian-Plebanski} in the Bergmann's terminology.}.) 

\subsection{Linear formulation with normals}
\label{subsec:linear-formulation}

Replacement of~\ref{volume-quadr} by~\eqref{eq:closure-3d} is particularly beneficial for the quantum theory, since the linear in $B$ and local in each tetrahedron closure constraint is much more easier to deal with. The major ingredient in the new EPRL-FK-KKL Spin Foam proposals is the \textit{linearization (partial) of the rest of simplicity constraints}. It follows directly from the geometric meaning of conditions \ref{diag-simplicity},\ref{cross-simplicity}, which basically state that four triangles, belonging to the same tetrahedron $\tau_e$ and described by the area bivectors $\Sigma^{AB}_f$, lie in one hyperplane. We now discuss briefly some details of this re-formulation, as they appear in the literature. 

In the original construction~\cite{EPRL-FK2008flipped2,EPRL-FK2008finiteImmirzi,EPRL-FK2008FK,EPRL-FK2008LS}, one associates \emph{auxiliary normal} 4d-vectors $\mc V^A_e,\ e=1,...,5$, to each of the five tetrahedra in the boundary of a 4-simplex (assume they are all timelike $\mc V_e\in \mathbb{H}^3_+\cong SL(2,\mathbb{C})/SU(2)$). The quadratic diagonal \ref{diag-simplicity} and cross-simplicity~\ref{cross-simplicity} could then be incorporated into the single \emph{`linear cross-simplicity'} constraint, by requiring the orthogonality: 
\begin{equation}\label{eq:linear-simplicity}
\text{(ii')}\qquad \forall f\supset e \ : \qquad \Si_f^{AB} \mc V_{Be}^{\phantom{B}} \ = \ 0 \qquad \Leftrightarrow  \qquad I_B(\mc V_e)\rt \Si_f \ = \ 0.
\end{equation}
The projector on the r.h.s. separates the ``boost'' components of $\Si$, which are co-aligned with $\mc V$:
\begin{equation}\label{eq:BR-projectors}
I_B(\mc V)^{AB,CD}\ :=\ \pm 2 \mc V^{[B}\eta^{A][C}\mc V^{D]}, \qquad I_R(\mc V)^{AB,CD}\ :=\ \eta^{A[C}\eta^{D]B} \mp 2 \mc V^{[B}\eta^{A][C}\mc V^{D]},
\end{equation}
from the ``rotational'' part, generating the isotropic $H_{\mc V} = h(\mc V)\rt SU(2)$ subgroup, which leaves $\mc V$ invariant. The gain in this new form of constraints is that it excludes the undesired $I^{\pm}$-solutions in~\eqref{eq:Plebanski-sectors}, leaving just the mix of the gravitational $II^{\pm}$-sectors, as well as degenerate one~\cite{Engle2011Plebanski-sectors} (which we do not consider here). It thus imposes stronger conditions than quadratic~\ref{diag-simplicity} and~\ref{cross-simplicity}, which then automatically follow.

We are making two remarks, calling the attention to normals $\mc V_e$, the prime interest of the present work. The first observation is that the constraints now depend on the enlarged set of configuration variables. In the symbolic notation of~\eqref{eq:simplicity-quant}, it is to be replaced with
\begin{equation}\label{eq:simplicity-quant+normal}
\widehat{C_\al[\Si,\mc V]} \ \approx \ 0,
\end{equation}
in order to reflect the introduction of a new geometric objects, in addition to bivectors. They are now both may be \emph{a priori} subject to quantization. 

The second comment concerns the relation of object on the constraint surface, namely: 
\begin{equation}\label{eq:linear-simplicity-solution}
\Si^{AB}_f \ = \ E^{[A}_1E^{B]}_2 \ = \ \frac{3!}{h_f^e} \star \left(\mc N^{\phantom{A}}_{f\supset e}\wedge\mc V^{\phantom{B}}_{e\phantom{f}}\right)^{AB} \qquad \text{or, conversely} \qquad  \mc V^A_e \ = \ \frac13 h_f^e \left(\star \Si^{AB}_f\right) \mc N^{\phantom{A}}_{B f\supset e} \, . 
\end{equation}
Here the edge vectors  $E_{1,2}\perp\mc N,\mc V$ of triangular face $S_f$ are orthogonal to the (spacelike, $\mc N^2=+1$) surface normal $\mc N$, lying within tetrahedron $\tau_e$: $\mc N\cdot\mc V=0$. In order for $|\mc V|^2\equiv\mc V^A\mc V_A$ to be the 3d volume (squared) of $\tau_e$, the proportionality coefficient is ought to be the height $h_f^e=(E_3\cdot\mc N)$ from the base $S_f$ to the apex. It appears from the above relations as if neither of $\Si,\mc V$ could be considered more ``fundamental'', since one can be expressed through the other and vice versa. The resolution of conundrum ultimately lies, not surprisingly, in the composite nature of quantities, both comprised of tetrad $\theta$ d.o.f. This standpoint will pave the way for our extension of field space in Sec.~\ref{sec:Poincare-BF}, and reformulation of~\eqref{eq:simplicity-quant+normal} in Sec.~\ref{sec:continuum-linear}.


\subsection{On the quantization in new models}\label{sec:quantization}
In general, the expression for the spinfoam partition function (associated with the 2-complex~$\mc{K}$) factorizes as
\begin{equation}\label{eq:partition-Z}
Z_{\mc{K}} \ = \ \sum_{j_f,\iota_e}\prod_{f}\mc A_f\prod_{e}\mc A_e\prod_{v}\mc A_v,
\end{equation}
where the choice of elementary amplitudes for the vertices, edges and faces should be specified, based on the physical/geometric considerations of the studied theory. One could say that the construction of the EPRL-FK models is highly inspired by the desire to meet the kinematical Hilbert spaces of LQG induced on its boundary graph. (Not so much for its predecessor of the purely spacetime bivector Barrett-Crane model.)

The bivectors, whether dual or not, are quantized as the Lie algebra elements, or left invariant vector fields on the group. Given that the LQG is based on the Holst action functional with non-trivial Barbero-Immirzi parameter $\ga$~\cite{Holst1996action}, one usually quantizes the linear combination, appearing in the kinematical part of the lagrangian (roughly corresponding to the conjugate momenta-fluxes in the Poisson symplectic structure). Agreeing for a while to specify this generic bivector with the notation $B$ of the BF theory action, upon quantization:
\begin{equation}\label{eq:Immirzi-flipped-map}
B_f \ = \ \star\Si_f +\frac{1}{\ga} \Si_f \ \mapsto \ \hat{B}_f \qquad \stackrel{\ga^2\neq \pm 1}{\Longleftrightarrow} \qquad \Si_f \ \mapsto \ \left(\frac{1}{\ga}-(-1)^\eta\ga\right)^{-1}\left(\hat{B}_f-\ga \star \hat{B}_f\right).
\end{equation}
In other words, there are two independent invariant bilinear forms on $\mathfrak{so}(3,1)$, resulting in the Holst action, which is classically equivalent to the vacuum EC theory (on-shell), but may differ quantum mechanically~\footnote{Since~\eqref{eq:Plebanski-cnstr} is invariant w.r.t. $\star$ but not $P_{\ga}=1+\frac{1}{\ga}\star$, it is somewhat perplexing that for finite $\ga$ both solution sectors (for $\Si$) lead to the Holst action for gravity with different effective parameters. This is not really the issue here, since the linear simplicity isolates the sectors in a more efficient way, irregardless of the value of $\ga$}.


\paragraph{The EPRL map.} Up to this point, the Barbero-Immirzi parameter $\ga$ did not partake in the formulation of constraints and should be irrelevant for their geometric content. It, however, plays somewhat mysterious role in quantization and essential for comparison with canonical LQG theory. Let us briefly recap on the basic features of the quantum vertex amplitude which arise in this way, without delving too much into details though

The unitary irreducible representations $\rho_f$ of the $\mr{rank}$-2 Lorentz/rotation algebra in $m=4$ are labelled by the eigenvalues of two Casimir operators $C^{(1)}_H(\rho_f)=\mc{J}\cdot\mc{J}$ and $C^{(2)}_H(\rho_f)=\star\mc{J}\cdot\mc{J}$, where $\mc{J}\in\mf{h}$ are generators. One should therefore relate them with the the Casimir $C_{\mc V}^{\ph{i}}(j_f)=\vec{J}\cdot\vec{J}$ of the corresponding little group, leaving the normal invariant.
\begin{itemize}
\item The linear cross-simplicity~\eqref{eq:linear-simplicity} is imposed weakly in the gauge-fixed setting, i.e. for the standard normals -- either $\mc V^A_0=\de^A_0$ for spacelike (or $\mc V^A_3=\de^A_3$ for tetrahedra of mixed signature), characterizing the canonical embedding of $H_0\cong\mr{SU}(2)$ (or $H_3\cong\mr{SU}(1,1)$) into $H\cong\mr{SL}(2,\mathbb{C})$. All the various techniques~\footnote{Such as vanishing matrix elements of~\eqref{eq:linear-simplicity}~\cite{EPRL-FK2008flipped2} $\sim$ master constraint~\cite{EPRL-FK2008finiteImmirzi,Conrady2010timelike2} $\sim$ restriction of coherent state basis~\eqref{eq:Livine-Speziale} to those with the simple expectation values in the semi-classical limit~\cite{EPRL-FK2008FK,ConradyHnybida2010timelike1}} lead to the relation between the Casimir elements of the Lorentz group and corresponding little subgroup of hypersurface rotations:
\begin{equation}\label{eq:EPRL-cross-simpl}
\frac{C_H^{(2)}(\rho_f)}{2C_{\mc V}^{\ph{i}}(j_{f\supset e})} \ \simeq \ \ga.
\end{equation}
This defines the embedding map for `spins' $j$ into decomposition of irreps $\mc{H}^{\rho_f}$ w.r.t. spin states $\mc{H}^j$. 

One nice feature of \eqref{eq:EPRL-cross-simpl} is that its exact implementation \cite{Alexandrov2010newSF-from-CovariantSU2} projects the spin-connection $\om$ in the holonomies~\eqref{eq:discrete-holonomies} to the (covariant lift of) Ashtekar-Barbero connection of LQG:
\begin{equation}\label{eq:projected-AB}
\pi^{(j)}\left(\om^{AB}_a \mc J^{(\chi)}_{AB} \right)\pi^{(j)} \ = \ {}^{(\ga)}A_a^I L_I^{(j)}, \qquad {}^{(\ga)}A^I \ = \ \frac{1}{2}\varepsilon^{0I}_{\phantom{0I}JK}\om^{JK} + \ga \om^{0I},
\end{equation}
here $\pi^{(j)}$ projects on the $j$-irrep of the $\mr{SU}(2)$ subgroup, and $L^I=\frac12\varepsilon^{0IJK}\mc J_{JK}$ is the canonical generator of rotations in the corresponding representation.

\item The part of the linear simplicity~\eqref{eq:linear-simplicity} is first-class and imposed strongly. Taking into account~\eqref{eq:EPRL-cross-simpl}, it is equivalent to the (quadratic) diagonal simplicity~\eqref{diag-simplicity}, if the Barbero-Immirzi parameter is included. It relates the $\mr{SL}(2,\mathbb{C})$ Casimirs:
\begin{equation}\label{eq:EPRL-diag-simpl}
\qquad \left(1 \pm\ga^2\right) C_H^{(2)}(\rho_f)-2\ga C_H^{(1)}(\rho_f) \ \simeq \ 0, 
\end{equation}
and puts restrictions on allowed `simple' irreps $\rho_f$.

\item The closure condition~\eqref{eq:closure-3d} translates into the requirement of the $H$-invariance of the amplitude and is ordinarily implemented through the group integration. It obviously encompasses the invariance w.r.t. the little group $H_0$ of the embedded $j$-states within the tensor product of simple representations, stacked at the tetrahedron $\tau_e$ bounded by the faces $S_f$. Thereby the EPRL embedding map is established:
\begin{equation}\label{eq:EPRL-map}
\Phi_\ga \ : \quad \mathrm{Inv}_{\mr{SU}(2)}^{\ph{1}} \,  \bigotimes_{f\supset e} \, \mc{H}^{j_f} \ \longrightarrow \ \mathrm{Inv}_{\mr{SL}(2,\mathbb{C})}^{\ph{1}}  \, \bigotimes_{f\supset e}\, \mc{H}^{\rho_f},
\end{equation}
Hence the states in the boundary space are labelled by $\mr{SU}(2)$ intertwiners glued into spin-networks. The last portion of the integration over the homogeneous space $H/H_0$ can be vied as summing over all possible gauge choices for the normals $\mc V \in H \rt \mc V_0$, and thus restoring the full Lorentz invariance at the vertex in the gauge-fixed model.
\end{itemize}

One clearly sees the subsidiary role of $\mc V$'s: in the construction of the model they are treated as ``unphysical'' gauge choice, which one can specify freely, and later ``erase'' this information. In effect, $\mc V$ allows one to reduce the problem of constraint imposition to the level of little group $H_0$, instead of operating directly on the covariant level of the full Lorentz group $H$. The few next comments discuss the possible role of these normals, and the issues of covariance in general: 

\begin{itemize}

\item For instance, one knows that the relative of the time-normal field explicitly appears as non-trivial lapse/shift components in the Lorentz-covariant canonical quantization of the 1st order action~\eqref{eq:Einstein-Cartan} with the Holst term. The boundary states of SF models are spanned by the projected spin networks of this `covariant LQG'~\cite{LivineDupuis2010lifting-SU2}:
\begin{equation}\label{eq:projected-spin-network}
\Psi_{\Ga=\pa\mc{K}}^{\phantom{\Ga}}\big(\big\{h_{\ell}\big\},\big\{\mc V_n\big\}\big)  \ = \  \bigotimes_{n\subset\pa e} \iota^{(n)}_{\mc V} \bigotimes_{\ell\subset\pa f} \left( \pi^{(j_{t(\ell)})}_{\,\mc V} \, D^{\rho_{\ell}}_H (h_{\ell}) \, \pi^{(j_{s(\ell)})}_{\,\mc V}\right),
\end{equation}
where the normals are discretized naturally over nodes $n$. The state functionals are invariant w.r.t. the covariant Lorentz group action on both sets of variables:
\begin{equation}\label{eq:projected-transformation}
\Psi \big(\big\{h_{\ell}\big\},\big\{\mc V_n\big\}\big) \ = \ \Psi \big(\big\{U_{t(\ell)}^{-1}h_{\ell}^{\ph{1}}U_{s(\ell)}^{\ph{1}}\big\},\big\{U_n\rt\mc V_n\big\}\big), \qquad \forall \, U_n \in H.
\end{equation}

\item Historically, one of the incentives, which led to FK model~\cite{EPRL-FK2008FK}, was to solve the so-called ``ultra-locality'' problem with the BC amplitude -- namely, the apparent shortage in intertwiner d.o.f., which signified about the limited nature of correlations between neighbouring 4-simplices' geometries. 
On a more technical level, the resolution of identity, associated with the invariant vector space $\mc{H}_e:=\mathrm{Inv}_H^{\phantom{1}}\big[\bigotimes_{f\supset e} \mc{H}^{\rho_f}\big]$ at each edge of initial BF spin foam, rewritten in terms of coherent intertwiners (for a moment, $H=\mr{Spin}(4)\cong \mr{SU}(2)\otimes \mr{SU}(2)$):
\begin{equation}\label{eq:identity-resolution}
\mathbbm{1}_{\mc{H}_e}^{\phantom{1}} \ = \ \bigotimes_\pm\int \prod_{f\supset e} d^2\mathbf{n}^\pm_{ef} \, d_{j^\pm_f} \int dh_{ve}^\pm\int dh_{v'e}^\pm \ h_{ve}^\pm\big|j^\pm_f,\mathbf{n}^\pm_{ef}\big\ket\big\bra j^\pm_f,\mathbf{n}^\pm_{ef}\big|\left(h_{v'e}^\pm\right)^\dagger ,
\end{equation}
is replaced by a projector, where summation is only over those states in the `simple' representations $j^+=j^-$, which solve the quantum cross-simplicity~\eqref{eq:linear-simplicity} (as expectation values). Specifically, the existence of a common $u_e\in \mr{SU}(2)$ group element is inferred, representing 4d normal $\mc V_e$, which establishes the relation $\mathbf{n}^-=-u_e\rt\mathbf{n}^+$. 

The gluing of two 4-simplices -- via identifying first the geometries, corresponding to their common tetrahedron~$\tau_e$, and only then performing an integration -- takes into account the missing correlations between neighbouring vertices, sharing an edge. Whereas the unique Barrett-Crane intertwiner is obtained if one integrates separately at each vertex over (then decoupled) geometries. Arguably, the latter identification concerned only an `internal' 3d geometry of $\tau_e$, encoded in the spins and 3d normals $\{j,\mathbf{n}\}$, corresponding to the (canonically embedded) little group $H_0=\mr{SU}(2)$.

The key point of the present work to treat normal $\mc{V}$ as truly independent geometric variable, characterizing the placement of 3d faces in 4d, creates some tension with the implementation of gauge invariance in the EPRL-FK vertex amplitude, if the above logic is extended by analogy to $\mc V$. Indeed, the dependence on the subsidiary variable $u_e$ is ``eaten'' by the follow-up $H$-group integration, performed independently at each vertex. The situation is quite similar to the BC intertwiner, so there still may be some d.o.f. left uncorrelated (even though if gauge).
 
\item A similar type of arguments have been put forward on the basis of the Lorentz-covariant canonical quantization endeavour~\cite{Alexandrov2008SF-from-CovariantLQG}. It has been argued that allowing an additional variable $\mc V$ remain unintegrated, the covariant transformation properties~\eqref{eq:projected-transformation} necessitate a relaxation of the closure condition~\cite{Alexandrov2008SF-cnstr-revisited}. The closure of the discrete bivectors is a too restrictive Gauss law, because the gauge transformations should act on the vector variables as well. In the preliminary Hamiltonian analysis of~\cite{GielenOriti2010Plebanski-linear}, the very same reason led authors to artificially enlarge the phase space by the fictitious momenta, corresponding to $\mc V$. In Ch.~\ref{ch:prop-2} we will see how our modification responds to both these objectives in quite a natural manner.


\end{itemize}

\chapter{Problem with the EPRL construction}
\label{ch:problem}

The kinematical Hilbert space of LQG contains all possible graphs of higher-valent nodes, in general. In order to match this in the boundary of the EPRL-FK model, the latter was extended in the KKL proposal~\cite{KKL2010AllLQG,KKL2010correctedEPRL}. The EPRL embedding map $\Phi_\ga$ of the original construction was generalized straightforwardly to edges, where $L\geq 4$ faces meet, corresponding to arbitrary polyhedra not restricted to triangulations. In the asymptotic analysis of KKL amplitude~\cite{BahrSteinhaus2016Cuboidal-EPRL,Dona-etal2017KKL-asympt}, there were observed that certain ``non-geometric'' configurations contribute non-trivially into path-integral state-sum. They are called like this because some shape-mismatch is allowed, as well as torsion, so that the clear interpretation in terms of Riemannian metric geometry is absent, for semi-classic states~\footnote{They are SF analogues of LQG's `twisted' geometries~\cite{FreidelSpeziale2010Twisted-geometries} (discontinuous over flat faces), or torsionful  `spinning' (continuous over arbitrarily curved faces) piecewise-flat geometries~\cite{FreidelGeillerZiprick2013cont-LQG-phase-space,FreidelZiprick2014Spinning-geometry}; whilst Regge configurations appear only as a constrained subset~\cite{DittrichRyan2011simplicial-phase-space}. Note that the non-zero torsion generically presents in LQG phase space by the Lemma 2 in~\cite{FreidelGeillerZiprick2013cont-LQG-phase-space}, since the Ashtekar-Barbero connection mixes up extrinsic curvature -- residing over the edges of the cellular decomposition -- whose impact is governed by the Immirzi parameter $\ga$. It has also been argued on the basis of more involved examples, such as $n$-point correlation functions and extended triangulations with the bulk curvature, that the `double scaling limit' with the flipped $\ga^{-1}\ra\infty$ in front of the Holst term is required \blockquote{in order to reduce the SF dynamical variables to the configurations compatible with the metric geometry of the triangulation} (see~\cite{MagliaroPerini2013asympt-Regge} and references therein).}.


In this section, by scrutinizing an instructive case of hypercuboidal vertex and boundary graph, we demonstrate that the appearance of `non-geometricity' can be traced back to the way how the simplicity constraints are imposed in the classical theory. In particular, the possibility to neglect the 4-volume constraint in the simplex does not hold for more complicated polytopes, so that \ul{Barrett's reconstruction is not applicable}. We then proceed with the application to the same system of the fully linear treatment due to Gielen and Oriti~\cite{GielenOriti2010Plebanski-linear}, with independent normals. The workings of their `linear volume' constraint prompt to switch from the normals (3-forms) directly to edge lengths (tetrads) as new independent variables, using the Hodge duality. In the rest of the thesis we study the implications of that change on the classical continuum theory.

\section{The `volume' constraint is not implemented}
\label{sec:problem-quadratic}

As discussed in Sec.~\ref{subsec:Plebanski}, the discretization of (quadratic) volume constraint employs several tetrahedra of the 4-simplex, hence it is usually thought of as consistency condition on time evolution (``secondary'' constraint). Indeed, \eqref{eq:volume-simplex} shows that if the cross-simplicity together with the 3d closure holds true for all tetrahedra, it does not matter which of the face bivectors are used to calculate the volume of the 4-simplex. Thus, it is not imposed explicitly in the quantum theory, once the former two are implemented. The same proof using the cable-wire diagrammatic representation of the 4-simplex amplitude shows that this indeed holds in  the quantum theory as well~\cite{Perez2013SF-review}, at least semi-classically.

We notice that the argument heavily relies on the combinatorics of the 4-simplex and does not necessarily extend to the generic case of arbitrary 2-complex. Explicitly, this appears already in the simple case of (hyper)cuboidal graph~\cite{BahrSteinhaus2016Cuboidal-EPRL}. There the plain ansatz, using the semi-classical substitute for the exact vertex amplitude, has been studied for the flat (no curvature) rectangular lattice. The expression for the amplitude is a straightforward KKL generalization of the Euclidean $\text{EPRL}_{\ga<1}$ model, in the FK representation using coherent states:
\begin{subequations}
\begin{gather}\label{eqs:cub-amplitude}
\mc A_v^\pm \ = \ \int \prod_e dh_{ve} \, e^{S^\pm[h_{ve}]} \ \stackrel{j\ra \infty}{\sim} \ \sqrt{\frac{(2\pi)^{21}}{\mathrm{det}\left(-\pa^2 S\big(\vec{h}_{\mathrm{c}}\big)\right)}} e^{S(\vec{h}_{\mathrm{c}})} ,
\\
S^\pm[h_{ve}] \ = \ \frac{1\pm\ga}{2}\sum_{(ee')} 2j_{(ee')} \ln\bra-\mathbf{n}_{ee'}|h_{ve}^{-1}h_{ve'}^{\phantom1}|\mathbf{n}_{e'e}\ket.
\end{gather}
\end{subequations}
Here the summation goes over the (ordered) pairs $(ee')=(f\cap\pa\mc T_v)$ -- the (directed) links of a 6-valent combinatorial hypercuboidal boundary graph, and the data $\{j,\mathbf{n}\}$ in this symmetry reduced setting was chosen to represent (semi-classically) $\mathbb{R}^3$-cuboids: 
\begin{equation}\label{eq:cuboid-intertwiner}
|\iota\ket \ = \ \int du \ u\rt\bigotimes_{i=1}^3|j_i,\mathbf{n}_i\ket|j_i',\mathbf{n}'_i\ket, \qquad j_i' \, \mathbf{n}_i' = - j_i \, \mathbf{n}_i,
\end{equation}
glued along their faces. $\pa^2 S$ denotes the Hessian matrix, evaluated at the critical point $\pa S\big(\vec{h}_{\mathrm{c}}\big)=\Re\, S\big(\vec{h}_{\mathrm{c}}\big)=0$.

\begin{figure}[!ht]
\center{\includegraphics[width=0.5\linewidth]{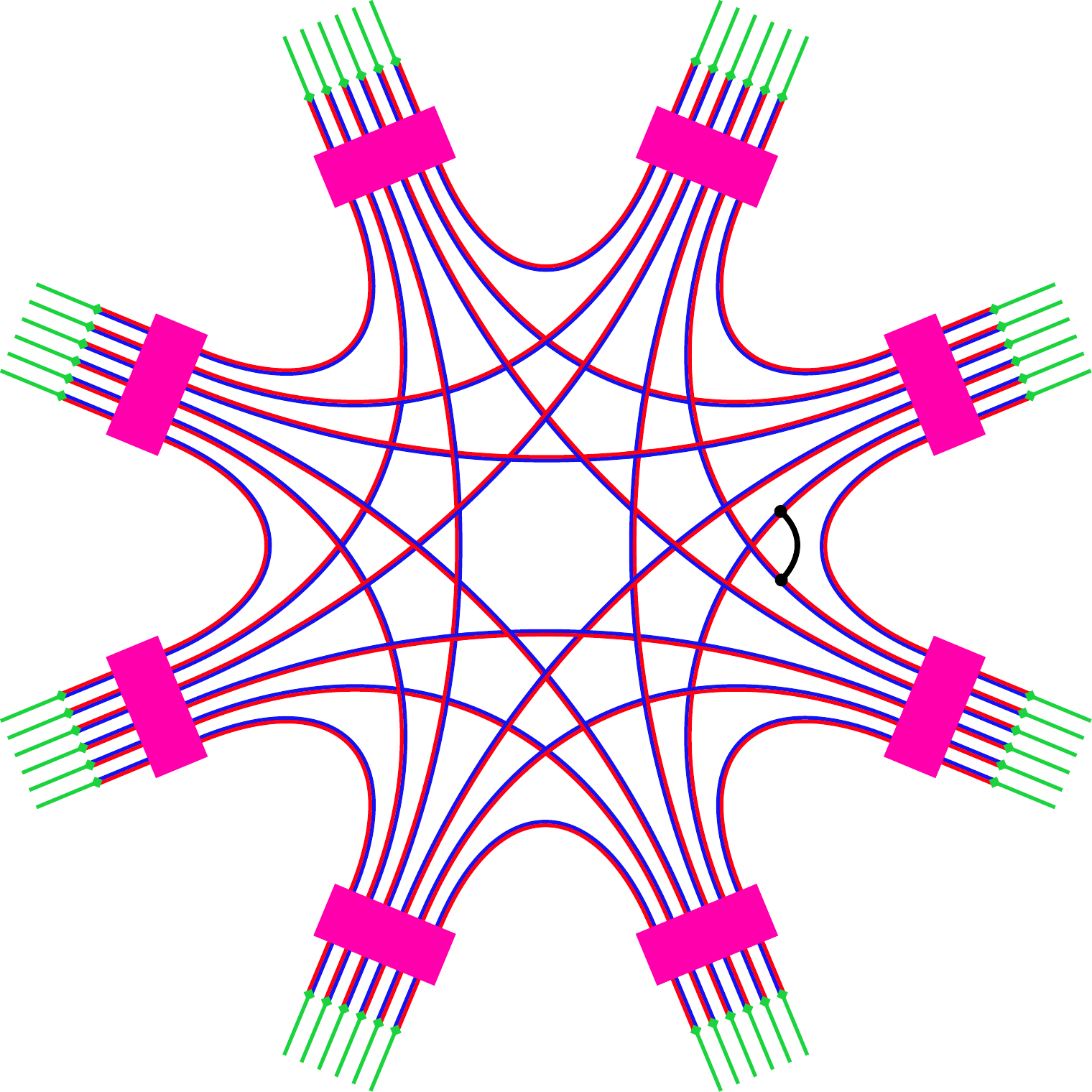}}
\caption{Diagrammatic representation of the hypercuboidal amplitude. The notation is as in~\cite{Perez2013SF-review}: lines (`wires') depict the $SL(2,\mathbb{C})$ representation matrices, and boxes (`cables') -- invariant projectors (group integrations).}
\label{fig:cable-wire-cub}
\end{figure}
It turns out that the 4-volume of a flat hypercuboid cannot be unambiguously ascribed to a vertex, using the prescription akin to~\ref{volume-quadr} for 4-simplex, where its consistency is guaranteed by~\ref{eq:volume-simplex}. If the rectangular lattice is \textit{geometric} (i.e. we are on the solution to simplicity constraints), it is characterized entirely in terms of its edge lengths $E_i,\, i=t,x,y,z$, and the unique geometric 4-volume can be computed irregardless of the faces chosen $\tilde{V}_v:=E_tE_xE_yE_z=\Si_{tx}\Si_{yz}=\Si_{xy}\Si_{zt}=\Si_{xz}\Si_{yt}$, where each area is simply given by the product of the cooresponding edge lengths, e.g. $\Si_{xy}=E_xE_y$, etc. Instead, we get 6 arbitrary areas/spins~$j_{(ij)}$ which do not necessarily satisfy the latter 2 conditions. Indeed, if we try to proceed like in~\eqref{eq:volume-simplex}, starting with the expression $j_{xy}j_{zt}$ (depicted by a `grasping' on Fig.~\ref{fig:cable-wire-cub}) and applying the 3d closure for the spatial cuboid $\tau_t$, we end up with a tautological result: the contributions from parallel faces (bounding $\tau_t$ and the 2 adjacent anti-podal cuboids $\tau_i$, $\tau_{-i}$) enter with equal areas/spins but opposite signs $\star\Si_{ij}\cdot\Si_{kt}=-\star\Si_{ji}\cdot\Si_{kt}, \ i,j,k=x,y,z$, thus contracting each other in the sum~\footnote{Stronger, $\star\Si_{xz}\cdot\Si_{zt}=\star\Si_{yz}\cdot\Si_{zt}=0$ by the cross-simplicity.}, so we arrive at the dull equality $j_{xy}j_{zt}=j_{yx}j_{zt}$.

\ul{The essential ingredient of the EPRL construction, namely, that one could effectively replace the `volume' part of the simplicity by the 3d closure, is not valid for a higher valence. We encounter the problem that the model is not constrained enough to complete the reduction from BF to gravitational theory.} The measure of deviation is captured by the `non-geometricity' parameter, in this case:
\begin{equation}\label{eq:non-geometricity}
\varsigma \ = \ \begin{pmatrix}
           j_{xy}j_{zt}-j_{xz}j_{yt} \\
           j_{xz}j_{yt}-j_{xt}j_{yz} \\
           j_{xy}j_{zt}-j_{xt}j_{yz}
         \end{pmatrix}.
\end{equation}
The numerical studies of~\cite{BahrSteinhaus2016Cuboidal-EPRL} show that the non-geometric configurations with $\varsigma\neq0$ \textit{do} generically contribute to the path-integral, although their impact might be exponentially suppressed. The dumping is controlled by the width of the Gaussian -- the effective ``mass'' term $m^2_\varsigma(\al) \approx 2\al -1>0$ for $\al\gtrsim 0.5$, which depends crucially on the parameter $\al$ in the choice of the face amplitude $\mc A_f^{(\al)} = \big((2j_f^++1)(2j_f^-+1)\big)^\al$. Reassuringly, in the same range of~$\al$ indications were given for the tentative continuum limit in the form of a phase transition, with the restoration of the (remnant) diff-invariance. This lead authors to speculate that the freedom in the face amplitude might be restricted on physical grounds, for one should definitely obtain geometric states in the classical limit.

\section{Fully linear treatment of the hypercuboid}
\label{sec:problem-linear}

Naturally, the 2 missing constraints to impose in this elucidating example are $\varsigma=0$, however, it is unevident how to proceed in the most general case. The simplicial `4-volume constraint' makes little sense here, unless appended with some additional requirements (as we tentatively propose in~\cite{BahrBelov2017VolumeSimplicity}, based on the certain type of graph invariants). Being the part of Pleba\'{n}ski's \emph{quadratic} formulation, it is also inorganic to the model built on linear constraints. An alternative fully \textit{linear} formulation was put forward in~\cite{GielenOriti2010Plebanski-linear}, providing both the continuum version of the cross-simplicity~\eqref{eq:linear-simplicity}, as well as the linearized counterpart for the `volume' constraints of the form~\eqref{eq:simplicity-quant+normal}. It introduces the basis of 3-forms $\vartheta^A$, whose discretization naturally associates 4d normal vectors
\begin{equation}\label{eq:discrete-normals}
\mc V^A_e \ = \ \int_{\tau_e} \vartheta^A
\end{equation}
to tetrahedra~$\tau_e$. Although, strictly speaking, neither Pleba\'{n}ski, nor Gielen-Oriti's linear version were ever formulated beyond triangulations, let us extrapolate the latter to our cuboidal setting, like we did with 4-volume. 


\begin{figure}[!hb]
\center{\includegraphics[width=0.7\linewidth]{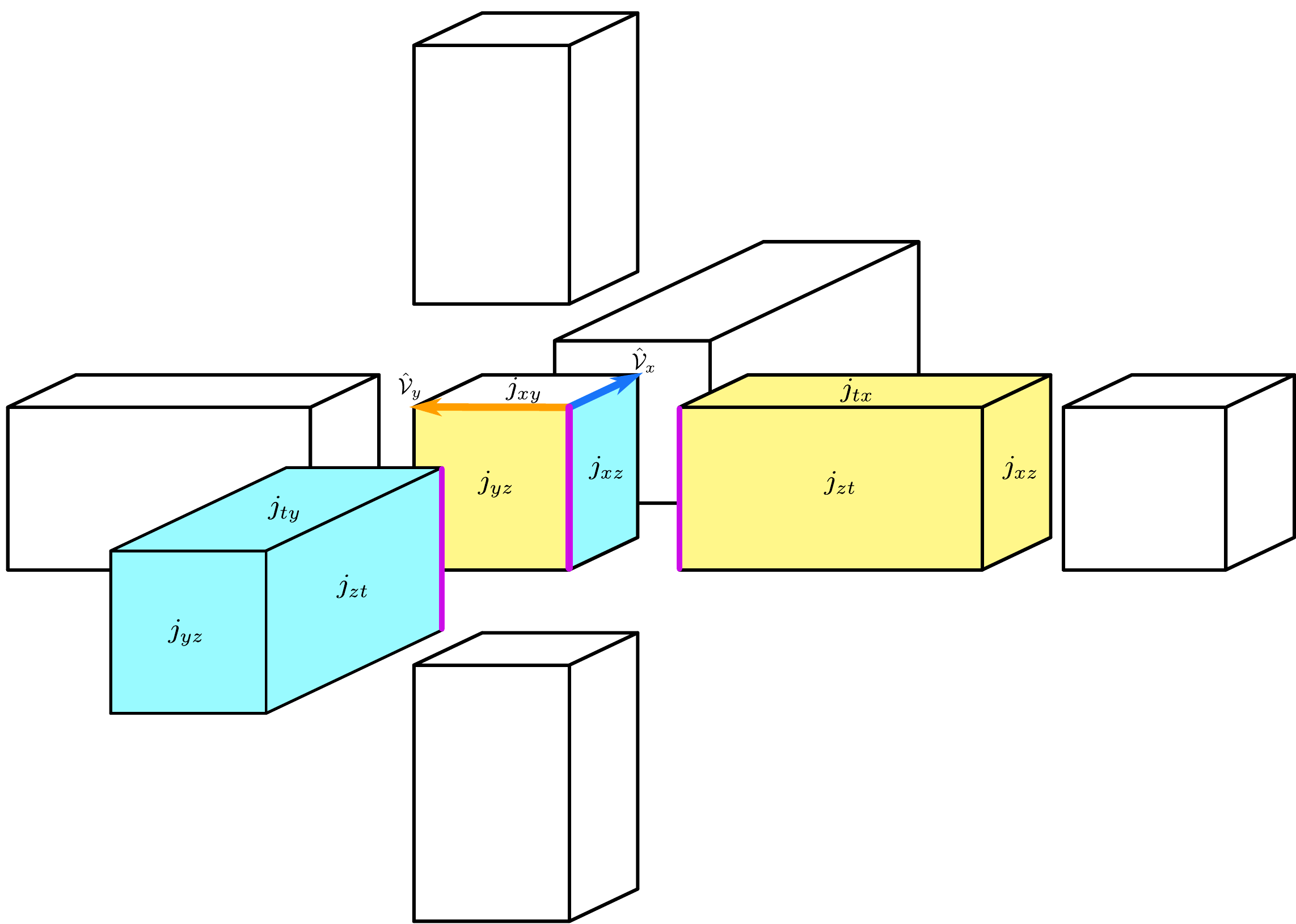}}
\caption{Elements of linear volume constraint: edge $E_z$ (purple) is shared by exactly 3 faces and 3 cuboids, orthogonal to $x$ (blue) and $y$ (yellow) directions. Colour scheme corresponds to $\Si\cdot\mc V$ pairing.}
\label{fig:hypercube}
\end{figure}

\newpage

Analogously to 4-simplex, every vertex is the source of four edge vectors, each of which is shared by exactly three cuboids, intersecting along three faces, respectively. Their directions can be identified with that of the eight cuboidal normals $\hat{\mc V}_i$ (e.g. aligned with $\hat{x}_i$ and $\hat{x}_i' = -\hat{x}_i$ of standard cartesian grid). However, their norms are considered as free parameters (like surface areas $j$). The bivector data~\footnote{We exclude the Immirzi parameter from consideration and put $B=\star\Si$. From here on all the following discussion is purely classical.} is restricted to satisfy
\begin{equation}\label{eq:hypercube-bivectors}
(B_{ij}+B_{i'j})\cdot\hat{\mc V}_j \ = \ 0,
\end{equation}
that is opposite faces of 3d cuboid are required to be parallel and of equal area. Indeed, this is precisely the boundary data for coherent intertwiners~\eqref{eq:cuboid-intertwiner}, if 3d normals are defined as $(j\hat{\mc N})_i= B_{ij}\cdot\hat{\mc V}_j$. Pick up some vertex, and fix the edge $\hat{\mc V}_l$, ejecting from it. For every such pair, via straightforward transfer from~\cite{GielenOriti2010Plebanski-linear}, one writes down the two linearised volume (vector) constraints:
\begin{equation}\label{eq:linear-volume-simplicity}
\mc V_i\cdot \Si_{il} \ = \ \mc V_j\cdot \Si_{jl} \ = \ \mc V_k\cdot \Si_{kl} \qquad \forall \ i,j,k\neq l.
\end{equation}
We have one equality per each 3d cuboid in the cycle, sharing the given edge as a hinge, at which the two consecutive 2d faces intersect. In other words, the volume constraint is discretized at the edge-cuboid pairing (see figure~\ref{fig:hypercube}). Because of orthogonality, all the terms are proportional to $\hat{\mc V}_l$. Let us regroup it at each vertex via formation of sums over edges, orthogonal to one particular direction $\hat{\mc V}_k$:
\begin{equation}\label{eq:reduced-simplicity}
\sum_{\{i,j\}\neq k} \mc V_{i}\cdot (\star B_{jk}) \ = \ 0 \qquad \forall \, k.
\end{equation}
(Writing $\star B_{zt}=\Si_{xy}$, etc., is just the convenient relabelling of face by its two orthogonal directions, corresponding to cuboids adjacent along $S_f$.) Each individual term in the sum is proportional to $\hat{\mc V}_{l\neq k}$, and by linear independence in 3d subspace $\perp\hat{\mc V}_k$ we can rewrite pre-factors in the form of proportion:
\begin{equation}\label{eq:proportion}
\frac{|\mc V_i|}{|\mc V_j|} \ = \ \frac{|B_{ik}|}{|B_{jk}|} \qquad \forall \, i,j\neq k,
\end{equation}
the third equality being the consequence of the other two. Taking another such identity for $l\neq k$ and the same $i,j$ we arrive at $|B_{ik}||B_{jl}|=|B_{il}||B_{jk}|$ -- the familiar geometricity conditions.

\section{Normals vs. frames}
\label{sec:normals-frames}

Our proof, however, showed something more than that. The volumes $|\mc V_i|$ of 3 cuboids in the cycle around edge $\hat{\mc V}_k$ relate as areas of their bases $|B_{ik}|$. The proportionality coefficient $|E_k| = |\mc V_i|/|B_{ik}|$ is independent of $i$ and we infer the existence of heights/edge lengths, invariantly defined in terms of (independent) $|\mc V_i|, |B_{ik}|$ from any of 3 cuboids, sharing this edge. The shapes of faces are obviously matching. We essentially exploited self-reciprocal nature of orthogonal lattice, allowing an identification of edges with $\mc V_i$ up to constants. Had we allowed fluctuations in directions $\hat{\mc V}_i$, the two notions would separate, and multiple angles between the lattice and its dual would intervene the formulas.

\paragraph{Closure of 4d normals.} If we agree to call ``non-local'' or ``dynamical'' either the quantities or relations, where objects from several elements of 3d boundary are involved, then the constraint~\eqref{eq:reduced-simplicity} is necessarily non-local. Indeed, it collects volumes of 3 cuboids at their intersection edge, multiplied by the areas that they cut at the beginning of the edge. Gielen and Oriti demonstrated in the case of the 4-simplex that these non-trivial equations may be replaced by another non-local condition of \emph{4d closure} of normals: 
\begin{equation}\label{eq:closure-4d}
\sum_{e\supset v} \mc V^A_e \ = \ 0 \qquad \forall \, v,
\end{equation}
together with the usual cross-simplicity~\eqref{eq:linear-simplicity} and 3d closure of bivectors~\eqref{eq:closure-3d} at each tetrahedron $\tau_e$ (i.e. ``local'', or ``kinematical'' constraints). 

This result reminds of the \textit{Minkowski's theorem}, extended to 4d. Recall that the latter asserts the existence of the unique (up to congruence) flat convex polytope, if the set of its face normals $\{\mc V_e\}$ is given, satisfying~\eqref{eq:closure-4d} (cf.~\cite{BianchiDonaSpeziale2011Polyhedra} for the 3d case). The non-trivial part is to demonstrate that it is compatible with characterization via simplicity constraints. We did not find the way to prove the analogous statement for hypercuboid. The complication, arising in this case, we relate to the presence of well-separated boundary elements (e.g. ``past'' and ``future'' cuboids in the foliation picture). Whereas in 4-simplex, every tetrahedron shares at least one common point with any other.


Returning to constrained-BF framework of Spin Foam models, our inspection highlights the following fact. In order to complete the reduction from the topological theory to gravity, using the linear formulation~\eqref{eq:simplicity-quant+normal} with independent normals, some additional requirements have to be met. This may be either `linear volume' constraint, or its equivalent. The obvious rewriting of the norms $|\mc V_l|$, using~\eqref{eq:proportion}, as
\begin{equation}\label{eq:volume-unique}
|E_i| |B_{il}| \ = \ |E_j| |B_{jl}| \ = \ |E_k| |B_{kl}| \qquad \forall \, i,j,k\neq l,
\end{equation}
leads to useful (re)interpretation of constraints~\eqref{eq:linear-volume-simplicity}. Namely, they \emph{indirectly} characterize edge lengths through compatibility both with $B$'s \emph{and} $\mc V$'s, such that the unique (metric) volume, equal to $|\mc V_i|$, can be ascribed to each cuboid. After this identification the normals have to satisfy closure~\eqref{eq:closure-4d}.

We find it convenient to switch to a simpler variables, leading to a more manageable set of relations. Instead of volumes $\mc V$, we therefore propose to use directly the edge vectors $E$, as suggested by~\eqref{eq:volume-unique}. Unlike the original Gielen-Oriti's~\eqref{eq:linear-volume-simplicity}, the new version can be stated locally at the level of 3d polyhedra. We study the continuum formulation behind it in Sec.~\ref{sec:continuum-linear}, proving the equivalence with the usual simplicity constraints $B=\star \theta\wedge \theta$.

As another option, one could prefer instead to impose the closure on 4d normals $\mc V$, at least in the case of 4-simplex. Ordinarily, the 3d closure~\eqref{eq:closure-3d} of bivectors can be thought of as the result of integration of the continuous Gauss law $\nabla^{(\om)}_{[c}\Si^{AB}_{ab]}=0$ over a 3-ball triangulation (using the Stokes' theorem and the flatness of connection).
There is, however, the problem with such field-theoretic interpretation of~\eqref{eq:closure-4d}, because the corresponding dynamical law $\nabla^{(\om)}_{[a}\vartheta^A_{bcd]}=0$ for independent 3-forms is absent in the continuum theory. This flaw is fixed in Ch.~\ref{ch:prop-2}. In accord with our choice of variables, we first establish the link with the condition of vanishing torsion for tetrads $\theta$, and then implement it via Lagrange multipliers, prior to any imposition of simplicity constraints.

\newpage

\thispagestyle{empty}
\chapter{Proposition \textnumero1: Quantum 4-volume}
\label{ch:prop-1}

In this Chapter, we address an imbroglio with the volume which our analysis in Sec.~\ref{sec:problem-quadratic} revealed. One keeps the context of bivector geometry and Plebanski quadratic constraint formulation. We are trying to find the way how the polytopes more general than the rigid simplex could be characterized purely in terms of their $2$-dimensional substructures (a variant of `holography'). That is the inverse task consists in extending the Barrett's reconstruction theorem, starting with bivectors and identifying the missing consistency requirement. We establish the relation with the certain knot-invariants associated with bivector geometry on a graph, coming from the geometry of a polytope. The tentative solution is proposed for the Spin Foam amplitude, and potential implications discussed. (In particular, we notice the similarity with another proposal to include cosmological constant.)

\section{4-volume of the polytope in terms of bivectors}
\label{sec:Hopf-link}

While every polytope $P$ embedded in $\mathbb{R}^4$ uniquely determines the set of bivectors $B_\ell$ (or, equivalently, their dual version) associated to its $2$-dimensional faces (corresponding to oriented links $\ell$ in the boundary graph $\Ga\subset S^3$), it is an unsolved problem to reconstruct a polytope from its face bivectors. This construction lies at the heart of the simplicity constraints in the spin foam models for quantum gravity. The reason for this is that the Hilbert space vectors in the theory are the $\mr{SU}(2)\times \mr{SU}(2)$ spin-network functions arising from the quantization of $BF$ theory, and the $B$-field of this theory is precisely what assigns bivectors $B_\ell\in\bigwedge^2\mathbb{R}^4$ to oriented links $\ell$. The $B_\ell$ act naturally as operators on states $\Psi$, and thus the simplicity constraints in terms of the bivector operators are used to select a certain subset of $\mr{SU}(2)\times \mr{SU}(2)$ spin-networks, corresponding to solutions to the simplicity constraints on the quantum level.

In the case of the $4$-simplex, the (classical, discrete) simplicity constraints \ref{diag-simplicity}~$\displaystyle{\star B_\ell \cdot B_\ell=0}$, \ref{cross-simplicity}~$\displaystyle{\star B_\ell\cdot B_{\ell'}=0}$, and closure \eqref{eq:closure-3d}~$\sum_{\ell\supset n} [n,\ell] B_\ell= 0$ are (given the non-degeneracy) precisely those constraints on a set of bivectors $B_\ell$ associated to a boundary graph, which ensure that there exists a geometric $4$-simplex where the induced bivectors are the given ones.

For a higher valence graph state the situation is more complicated, since in general spin foam models there might not even be a polytope from which to take the bivectors. The only input we have is an oriented graph $\Ga\subset S^3$, together with bivectors $B_\ell\in\bigwedge^2\mathbb{R}^4$ associated to the links $\ell$, which we defined to be a \emph{bivector geometry}, satisfying the three conditions above~\footnote{It is understood that any such geometry where the orientation of a link $\ell$ is reversed and the corresponding $B_\ell$ is replaced by $-B_\ell$, defines that same bivector geometry.}. (We use somewhat interchangeably the ``links'' and elementary ``edge'' in this paragraph, dealing mostly with a graph states.)

\subsection{Hopf link volumes}

We define a set of numbers which rely on the embedding of $\Gamma\subset S^3$. In particular, we equip $S^3\subset\mathbb{R}^4$ with its standard orientation. By stereographic projection (with respect to a point which can be on $\Gamma$ itself), $\Gamma$ can be projected to $\mathbb{R}^3$, and from there to $\mathbb{R}^2$, with crossings.  Given the orientation of edges $\ell$ in $\Gamma$, there are two types of crossings, called \emph{positive} and \emph{negative}, depicted in Figure~\ref{Fig:TwoKnottingCases}. To these crossings $\mathpzc{C}$ we assign crossing numbers $\sigma(\mathpzc{C})=\pm 1$, depending on their type.

\begin{figure}[hb]
\begin{center}
\includegraphics[scale=1.0]{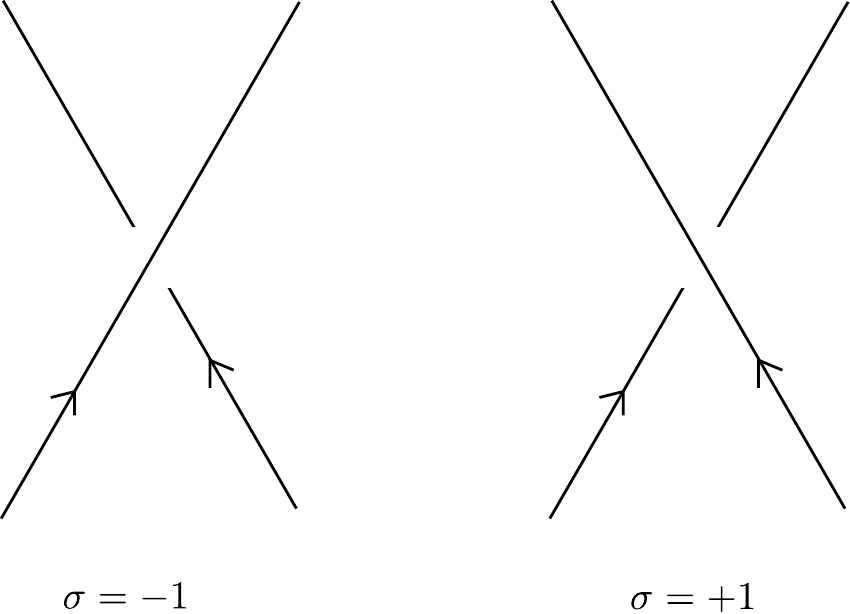}
\end{center}
\caption{With orientations, there are two types of crossings $\mathpzc{C}$ of links in a graph $\Gamma$. They receive a crossing number $\sigma(\mathpzc{C})=\pm 1$. }\label{Fig:TwoKnottingCases}
\end{figure}

For each crossing $\mathpzc{C}$, we define the corresponding crossing volume $V_\mathpzc{C}$ to be
\begin{eqnarray}\label{Eq:DefinitionCrossingVolume}
V_\mathpzc{C}\;:=\;\sigma(\mathpzc{C})\;\star\Big(B_1\wedge B_2\Big),
\end{eqnarray} 

\noindent where $B_{1,2}$ are the bivectors associated to the two links within the crossing, and $\star:\bigwedge^4\mathbb{R}^4\to
\mathbb{R}$ is the Hodge operator. We define the total $4$-volume of the bivector geometry to be
\begin{eqnarray}\label{Eq:DefinitionTotalVolume}
V\;:=\;\frac{1}{6}\sum_{\mathpzc{C}} V_{\mathpzc{C}}.
\end{eqnarray}

\noindent Indeed, one can show that, in case the bivector geometry does come from a polytope $P\subset \mathbb{R}^4$, $V$ is precisely the volume of $P$~\cite{Bahr2018polytope-volume,Bahr2018non-convex-polytope}. Furthermore, we define \emph{Hopf link volumes} the following way: A Hopf link $\mathpzc{H}$ in $\Gamma$ is defined to be a subset of edges in $\Gamma$ which form two linked (i.e.~non-intersecting) cycles  when embedded in $S^3$, see Fig.~\ref{Fig:HopfLink}. Furthermore, they are not allowed to have any other crossing with any edge not in the Hopf link.

For a crossing $\mathpzc{C}$, we write $\mathpzc{C}\between \mathpzc{H}$, if it is between two edges of $\mathpzc{H}$. The Hopf-link volume $V_\mathpzc{H}$ associated to $\mathpzc{H}$ is then defined as
\begin{eqnarray}\label{Eq:HopfLinkVolume}
V_\mathpzc{H}\;:=\;\frac{1}{6}\sum_{\mathpzc{C}\between \mathpzc{H}}V_\mathpzc{C}.
\end{eqnarray}

\begin{figure}
\begin{center}
\includegraphics[scale=0.5]{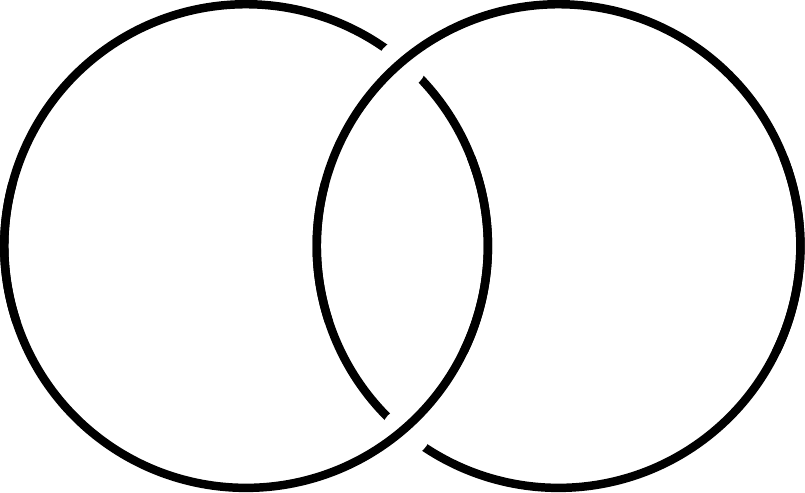}
\end{center}
\caption{A Hopf link, i.e.~two closed, linked curves, embedded in $S^3$.  }\label{Fig:HopfLink}
\end{figure}

\noindent It is not difficult so see that both $V$ and $V_\mathpzc{H}$ are independent of the embedding of $\Gamma$ to the plane with crossings. In other words, both are properties of the bivector geometry (and a choice of $\mathpzc{H}$) only. Also, both are clearly invariant under change of graph edge orientations. (See~\cite{Bahr2018polytope-volume,Bahr2018non-convex-polytope} for details.)

\begin{proposition}
We say that the bivector geometry satisfies the \textbf{Hopf link volume-simplicity constraint}, if $V_\mathpzc{H}$ is independent of the choice of Hopf link $\mathpzc{H}$ in $\Gamma$. In the following example, we will see that this is precisely the condition allowing for a reconstruction of the $4$-dimensional polytope from the bivector geometry.
\end{proposition}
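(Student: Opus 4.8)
The plan is to verify the claim in the concrete test case promised by the text, namely the hypercuboid, since the $4$-simplex is already handled by Barrett's reconstruction theorem and the genuinely new content lives at higher valence. I would first fix the boundary graph $\Ga\subset S^3$ dual to the hypercuboid: it has eight nodes (one per cubic $3$-cell) and twenty-four links (one per square $2$-face), each node $6$-valent, and under the cuboidal symmetry reduction the bivectors collapse to six area values $j_{ab}$, $\{a,b\}\subset\{x,y,z,t\}$. The decisive elementary computation is the value of a crossing volume $V_\mathpzc{C}=\sigma(\mathpzc{C})\star(B_1\wedge B_2)$. Writing $B_{ab}=\star\Si_{ab}$ with $\Si_{ab}\propto\mb{e}_a\wedge\mb{e}_b$, one has $B_{ab}\propto j_{ab}\,\mb{e}_c\wedge\mb{e}_d$ for the complementary pair $\{c,d\}$, so that $\star(B_{ab}\wedge B_{cd})=\pm\,j_{ab}j_{cd}$ when $\{a,b,c,d\}=\{x,y,z,t\}$ and vanishes whenever the two squares share a direction (the wedge of two bivectors sharing a leg is zero). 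Thus the only nonzero crossing volumes are exactly the three products $j_{xy}j_{zt}$, $j_{xz}j_{yt}$, $j_{xt}j_{yz}$ appearing in the non-geometricity vector $\varsigma$ of~\eqref{eq:non-geometricity}.

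Next I would enumerate the Hopf links. The key structural fact to establish is that, up to the symmetry of the configuration, there are exactly three combinatorial types of Hopf link in $\Ga$, in bijection with the three ways of splitting $\{x,y,z,t\}$ into complementary pairs: $\{xy|zt\}$, $\{xz|yt\}$, $\{xt|yz\}$. For a Hopf link $\mathpzc{H}$ of type $\{ab|cd\}$, the two linked cycles run through the square faces of the $ab$- and $cd$-families respectively, so every crossing $\mathpzc{C}\between\mathpzc{H}$ contributes $\pm j_{ab}j_{cd}$; collecting the signs $\sigma(\mathpzc{C})$ and applying the prefactor $\tfrac16$ in~\eqref{Eq:HopfLinkVolume}, the sum reduces to a fixed multiple of the single product $j_{ab}j_{cd}$. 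Carrying out this enumeration carefully, including a check that $V_\mathpzc{H}$ is independent of the planar diagram chosen, as asserted after~\eqref{Eq:HopfLinkVolume}, is the computational heart of the argument.

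With the three Hopf-link volumes identified as proportional to the three products, the constraint $V_{\mathpzc{H}_1}=V_{\mathpzc{H}_2}=V_{\mathpzc{H}_3}$ becomes precisely $j_{xy}j_{zt}=j_{xz}j_{yt}=j_{xt}j_{yz}$, i.e.\ $\varsigma=0$. I would then close the loop by invoking the earlier analysis: $\varsigma=0$ is exactly the geometricity condition, equivalent via~\eqref{eq:proportion} and~\eqref{eq:volume-unique} to the existence of well-defined edge lengths $|E_i|$ and hence of a unique flat hypercuboid inducing the given bivectors, which is the reconstruction statement. Conversely, if the data genuinely come from a hypercuboid then $j_{ab}=|E_a||E_b|$ and all three products equal the $4$-volume $|E_x||E_y||E_z||E_t|$, so the constraint holds automatically, supplying the necessity direction. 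A useful sanity check to record is the dimension count: a bivector geometry on this graph carries six free areas whereas a metric hypercuboid has four edge lengths, so reconstruction requires exactly two relations, matching the two independent equalities among the three $V_\mathpzc{H}$ (the third being dependent, just as $\varsigma_3=\varsigma_1+\varsigma_2$).

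The main obstacle I anticipate is the sign- and orientation-bookkeeping in the crossing enumeration: verifying that the signed sum over $\mathpzc{C}\between\mathpzc{H}$ reproduces the clean product $j_{ab}j_{cd}$ rather than some sign-indefinite combination, and that the three Hopf-link types are genuinely exhaustive and yield no hidden further relations beyond $\varsigma=0$. A second, more serious gap, which I would flag as conjectural rather than prove, is the passage from this worked example to an arbitrary polytope: there the independence of $V_\mathpzc{H}$ must be shown to be both necessary (using $V=\mr{vol}(P)$ from~\cite{Bahr2018polytope-volume,Bahr2018non-convex-polytope}) and sufficient for reconstruction, and a general proof of sufficiency is not available with the present tools.
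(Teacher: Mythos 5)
Your proposal is correct and follows essentially the same route as the paper: compute the three Hopf-link volumes for the hypercuboid graph, identify them with the three complementary area products, and observe that their equality is exactly the geometricity condition ($\varsigma=0$) reducing the six free areas to four edge lengths, from which the hypercuboid is reconstructed. Your added converse direction and the two-relation dimension count are also present (implicitly and explicitly, respectively) in the paper's argument, and your flagged caveat about general polytopes matches the paper's own admission that sufficiency beyond this example remains open.
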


\subsection{Case: the hypercuboid}\label{Sec:Hypercuboid}

In this section we consider the example of a bivector geometry which also occurs in the asymptotic analysis of the EPRL spin foam model~\cite{BahrSteinhaus2016Cuboidal-EPRL}. It also plays a prominent role in renormalization computation of the model \cite{BahrSteinhaus2016phase-transition,BahrSteinhaus2017cub-renormalization}. It is the prime example for a geometry in which diagonal- and cross-simplicity constraints are satisfied, but not necessarily volume simplicity. In particular, there does not, in general, exist a $4d$ polytope associated to it. However, the existence of such a $4d$ geometry can be formulated in terms of the Hopf volumes $V_\mathpzc{H}$, as we will show. 

\begin{figure}
\begin{center}
\includegraphics[scale=0.25]{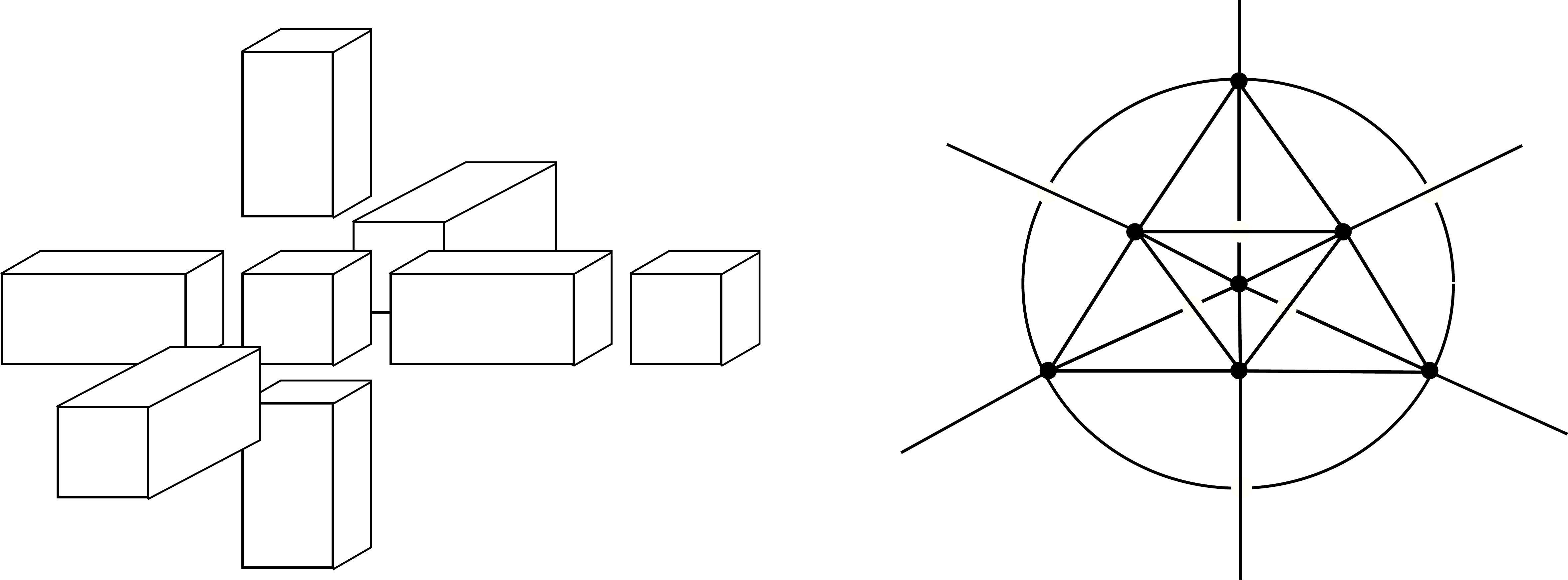}
\end{center}
\caption{The boundary of a hypercuboid, where every $3$-dimensional polyhedron is a cuboid. Its boundary graph, where each of the cuboids corresponds to one vertex. Vertex number 8 sits at the infinitely far point in $S^3$.}\label{Fig:HypercuboidGraph}
\end{figure}

The underlying graph $\Gamma$ is the dual boundary graph of a $4d$ hypercuboid (see Fig.~\ref{Fig:HypercuboidGraph}). Note that, at this point, we only take the graph itself, but do not consider the hypercuboid as geometric, $4$-dimensional polytope. Rather, we consider a bivector geometry $\{B_\ell\}_\ell$ on $\Gamma$ such that, for each vertex of the graph, the incident bivectors form a $3d$ cuboid embedded in $\mathbb{R}^4$. Bivectors of this form are such that all $B_\ell$ lying on a great circle in $\Gamma$ coincide. Since there are six independent great circles in this graph (see Fig.~\ref{Fig:HopfLinkColour}), this leaves us with the freedom of choosing six bivectors. They are as follows: 
\begin{eqnarray}\label{Eq:BivectorGeometryHypercuboid}
B_1\;&=&\;A_1\,\star\big(e_y\wedge e_z\big),\quad B_2\;=\;A_2\,\star\big(e_z\wedge e_x\big),\quad B_3\;=\;A_3\,\star\big(e_x\wedge e_y\big),\\[5pt]\nonumber
B_4\;&=&\;A_4\,\star\big(e_z\wedge e_t\big),\quad B_5\;=\;A_5\star\,\big(e_t\wedge e_y\big),\quad B_6\;=\;A_6\,\star\big(e_x\wedge e_t\big),
\end{eqnarray}

\noindent with areas $A_i>0$ and unit normal vectors $e_k\in\mathbb{R}^4$. It is straightforward to check that, for each vertex in the graph, the incident bivectors satisfy closure~\eqref{eq:closure-3d}, as well as linear simplicity~\eqref{eq:linear-simplicity}. In particular, they also satisfy diagonal-, cross-simplicity, as well as the non-degeneracy conditions specified in~\cite{BarrettCrane1998Euclid}.

\begin{figure}
\begin{center}
\includegraphics[scale=0.5]{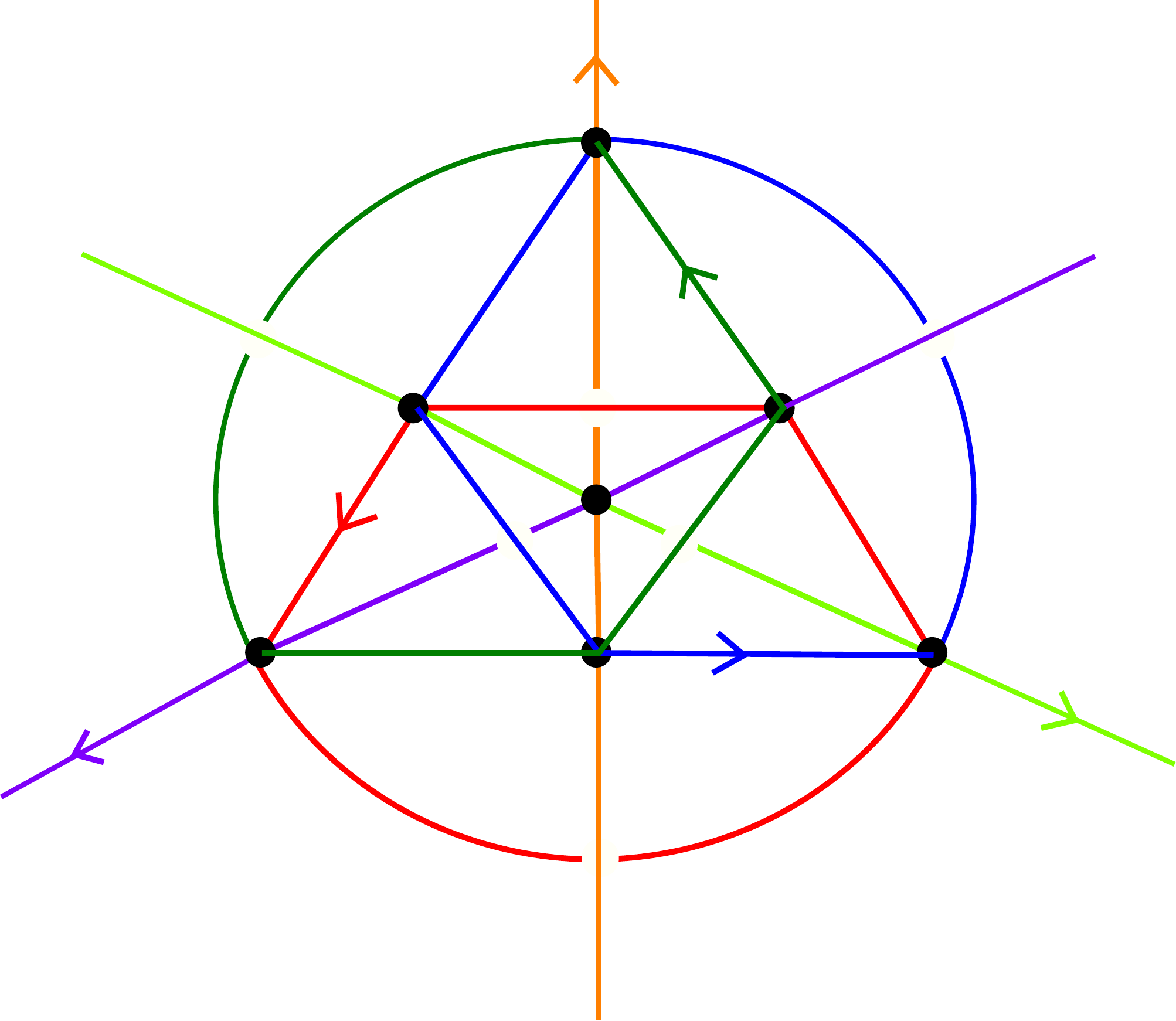}
\end{center}
\caption{The six independent loops in the hypercuboid graph $\Gamma$, colour-coded. Each such loop corresponds to one great circle on $S^3$. Two pairs form, respectively, one of the three independent Hopf links $\mathpzc{H}_{\,1}$, $\mathpzc{H}_{\,2}$, $\mathpzc{H}_{\,3}$ in $\Gamma$. The orientation of the edges in $\Gamma$ are such that each of the six crossings $\mathpzc{C}$ has $\sigma(\mathpzc{C})=+1$.}\label{Fig:HopfLinkColour}
\end{figure}

However, the bivector geometry~\eqref{Eq:BivectorGeometryHypercuboid} does, in general, not follow from a $4d$ geometric polytope. While the restriction of~\eqref{Eq:BivectorGeometryHypercuboid} to each single vertex describes the  geometry of a $3d$ cuboid, these cuboids do not fit together in $\mathbb{R}^4$. Whereas their touching faces are parallel and describe rectangles with coinciding areas, their shapes do not match (i.e.~they are rectangles with differing side lengths).

While the shape-matching problem is well-known in the $4$-simplex case, there the boundary states describing a non-shape-matching geometry are suppressed in the semiclassical asymptotics of the vertex amplitude~\cite{Barrett-etal2009asympt4simplex-euclid,Barrett-etal2010asympt4simplex-lorentz}. In the hypercuboid case (and more genral polytopes), however, geometries as depicted above are not suppressed in that limit, as has been observed in \cite{BahrSteinhaus2016Cuboidal-EPRL,Dona-etal2017KKL-asympt}. In the previous Ch.~\ref{ch:problem} we connected this to the fact that, in case one does not deal with a $4$-simplex, linear simplicity and closure \emph{do not} imply volume simplicity~\cite{Belov2018Poincare-Plebanski}. Hence, the set of allowed bivector geometries is not constrained enough, and also allows non-geometric configurations. 

However, the situation is different when we demand the bivector geometry to additionally satisfy the Hopf link volume-simplicity constraint, i.e.~if we demand that $V_\mathpzc{H}$ is independent of $\mathpzc{H}$. In the case of the hypercuboid, there are essentially three different Hopf links (depicted in Fig.~\ref{Fig:HopfLinkColour}). The respective Hopf-volumes are easily computed to be
\begin{eqnarray}
V_{\mathpzc{H}_{\,1}}\;=\;\frac{1}{3}A_1A_6,\qquad 
V_{\mathpzc{H}_{\,2}}\;=\;\frac{1}{3}A_2A_5,\qquad 
V_{\mathpzc{H}_{\,3}}\;=\;\frac{1}{3}A_3A_4.
\end{eqnarray}

\noindent Therefore, this version of the volume simplicity constraint reads 
\begin{eqnarray}\label{Eq:VolumeSimplicityHypercuboid}
A_1A_6 \ = \ A_2A_5 \ = \ A_3A_4.
\end{eqnarray}

It should be noted that these two conditions are precisely those that reduce the $6$-dimensional space of hypercuboidal bivector geometries~\eqref{Eq:BivectorGeometryHypercuboid} (given by six $A_i$) to the $4$-dimensional subset of hypercuboidal geometries (given by four lengths of the edges of the hypercuboid): using~\eqref{Eq:VolumeSimplicityHypercuboid}, one can easily show that the edge lengths of the $3d$ cuboids meeting at a common face derived from~\eqref{Eq:BivectorGeometryHypercuboid} agree, hence define four edge lengths, which in turn define a geometric hypercuboid in four dimensions. Thus, in the hypercuboid case, the Hopf link volume-simplicity constraint is precisely the right condition to allow for a reconstruction of the $4$-dimensional hypercuboid from the bivectors.

\pagebreak

\section{Quantum constraint in terms of Hopf-link invariance}
\label{sec:volume-quant}

So far we have considered classical bivector geometries. In the following we will define a quantum version of the total volume and the Hopf link volumes, as well as discuss a version of the volume simplicity constraint derived from it.

The Hilbert space we use is the one from discrete BF theory, for gauge group $H=\mr{SU}(2)\times \mr{SU}(2)$. There is one Hilbert space $\mathcal{H}_\Gamma$ associated to the boundary of a $4d$ polytope with oriented boundary graph $\Gamma$. 
\begin{eqnarray}\label{Eq:BoundaryHilbertSpace}
\mathcal{H}_\Gamma\;=\;\mr{L}^2\Big(H^L,d\mu_{\text{Haar}}^{\ph{i}}\Big)^{\text{inv}}\;\cong\;\mr{L}^2\Big(H^L/H^N,d\mu\Big),
\end{eqnarray}

\noindent where $L$ is the number of links/edges in $\Gamma$, and $N$ the number of nodes. The invariance is with respect to the action of $k\in H^N$ on $h\in H^L$, i.e.~$h_\ell\,\to\,k_{s(\ell)}^{-1}h_\ell k_{t(\ell)}$, where $s(\ell)$ and $t(\ell)$ are source- and target vertex of the edge $\ell$. 

Due to the split of the gauge group $H=\mr{SU}(2)\times \mr{SU}(2)$ into right and left $\mr{SU}(2)$, each state in~\eqref{Eq:BoundaryHilbertSpace} can be written as a linear combination of tensor products of two $\mr{SU}(2)$ spin network functions $\Psi=\psi^+\otimes \psi^-$. The field $B_\ell=(\vec{b}_\ell^+,\vec{b}_\ell^-)$ acts as the respective left-invariant vector fields $(X^{+,I},X^{-,I})$, $I=1,2 ,3$, on $\mr{SU}(2)$, i.e.
\begin{eqnarray}
\hat{b}_\ell^{+,\,I}\Psi\;=\;\Big(X_\ell^{+,I}\psi^+\Big)\otimes\psi^-,\qquad 
\hat{b}_\ell^{-,\,I}\Psi\;=\;\psi^+\otimes\Big(X_\ell^{-,I}\psi^-\Big).
\end{eqnarray}

\noindent Using the relation~\eqref{eq:Immirzi-flipped-map} between the $B$-field and $\Sigma$ in the Euclidean case $(-1)^\eta=+1$, we get
\begin{eqnarray}
\vec{\sigma}_\ell^\pm\;=\;\frac{\gamma}{1\pm\gamma}\vec{b}_\ell^\pm.
\end{eqnarray}

\noindent Since, for two bivectors $\Sigma_i\sim(\vec{\sigma}_i^+,\vec{\sigma}_i^-)$, with $i=1,2$, one has that
\begin{equation}
\star(\Sigma_1\wedge \Sigma_2)\;=\;2\big(\vec{\sigma}_1^+\cdot\vec{\sigma}_2^+\,-\,\vec{\sigma}_1^-\cdot\vec{\sigma}_2^-\big),
\end{equation}

\noindent the crossing volume $V_\mathpzc{C}$~\eqref{Eq:DefinitionCrossingVolume} of the two edges $e_{1,2}$, is quantized as
\begin{eqnarray}\label{Eq:CrossinVolumeOperator}
\hat{V}_\mathpzc{C}\;=\;2\sigma(\mathpzc{C})\gamma^2\sum_{I=1}^3\left(\frac{1}{(1+\gamma)^2}X_{\ell_1}^{+,I}\otimes X_{\ell_2}^{+,I}\,-\,
\frac{1}{(1-\gamma)^2}X_{\ell_1}^{-,I}\otimes X_{\ell_2}^{-,I}
\right).
\end{eqnarray}

\noindent The expression~\eqref{Eq:CrossinVolumeOperator} has, up to factors, also appeared in~\cite{Han2011cosm-const-vertex}, where it was introduced to construct an ad-hoc deformation of the EPRL spin foam model to include a cosmological constant (cf. also~\cite{BahrRabuffo2018EPRL-deformation}). Using this, we similarly to~\eqref{Eq:DefinitionTotalVolume},~\eqref{Eq:HopfLinkVolume}, define
\begin{eqnarray}
\hat{V}\;:=\;\frac{1}{6}\sum_\mathpzc{C}\hat{V}_\mathpzc{C},\qquad \hat{V}_\mathpzc{H}\;:=\;\frac{1}{6}\sum_{\mathpzc{C}\between \mathpzc{H}}\hat{V}_\mathpzc{C}.
\end{eqnarray}

\noindent While it is easy to define operator analogues for total volume and Hopf link volume, the corresponding constraints on the quantum level are a bit tricky. There are two reasons for this. 

Firstly, it seems unlikely that the correct way of imposing the constraints is to impose it as strong or weak constraints on the state space. The reason  for this is that the volume constraint is  conceptually different from diagonal- and cross-simplicity (or linear simplicity, for that matter): The other constraints are decidedly \emph{kinematical}, since they involve conditions that can be formulated entirely in terms of single intertwiners of the $3d$ boundary data. Indeed, diagonal- and cross-simplicity (as well as closure) are used to guarantee that the intertwiners of the boundary state have an interpretation in terms of $3d$ polyhedra~\cite{LivineSpeziale2007new-vertex,BianchiDonaSpeziale2011Polyhedra}. The volume-simplicity constraint, however, relates bivectors on different, not necessarily neighbouring $3d$ polyhedra, which have a specific relation to each other depending on the $4d$ geometry. It is therefore arguably much more \emph{dynamical}. This is why the formulation of the constraints should not just be operator equations on the boundary Hilbert space, but should involve the dynamics of the model, i.e.~the amplitude.

Secondly, the so-called `cosine problem' makes the use of the direct amplitude difficult~\cite{BarrettNaish-Guzman2009Ponzano-Regge}. Given of what we just said about the dynamical nature of the volume simplicity constraints, the most straightforward implementation of the Hopf-link volume-simplicity constraint would be to us the EPRL amplitude~\cite{EPRL-FK2008finiteImmirzi}:
\begin{eqnarray}
\mathcal{A}(\Psi)\;:=\;\Psi(\mathds{1},\ldots,\mathds{1}),
\end{eqnarray}

\noindent and define
\begin{eqnarray}\label{Eq:HopfVolumesQuantum}
V_\Psi\;:=\;\mathcal{A}\big(\hat{V}\Psi\big),\qquad V_{\mathpzc{H},\Psi}\;:=\;\mathcal{A}\big(\hat{V}_H\Psi\big),
\end{eqnarray}

\noindent demanding that 
\begin{eqnarray}\label{Eq:HopfLinkVolumeSimplicity}
V_\mathpzc{H} \text{ coincide for different Hopf links }\mathpzc{H}.
\end{eqnarray}

\noindent  However, the EPRL amplitude defines a dynamics in which both space-time orientations are being taken into account. This is easily seen in the semiclassical asymptotics~\cite{Barrett-etal2009asympt4simplex-euclid} of the amplitude $\mathcal{A}$, where not only $\exp(iS_{\text{EC}})$, but $\exp(iS_{\text{EC}})+\exp(-iS_{\text{EC}})=2\cos(S_{\text{EC}})$ appears (cf.~\eqref{eq:partition-2}). Since both of these contributions have $4$-dimensional volumes with opposite signs,  the semiclassical asymptotics of~\eqref{Eq:HopfVolumesQuantum} vanishes for the hypercuboid, as can be easily shown. This makes the comparison of~\eqref{Eq:HopfVolumesQuantum} for different $\mathpzc{H}$, as quantum version of the Hopf-link volume-simplicity constraints questionable.

There are several possible ways out of this: One would be to replace $\mathcal{A}$ with the so-called \emph{proper vertex} $\mathcal{A}^\text{pr}$~\cite{Engle2013proper-vertex}, which aims at defining the model to only include one of the two orientations. While its definition for the $4$-simplex is well-understood, the definition for arbitrary graphs is open, and relies on a definition of $4$-volume in that context. The definition of the constraint could become circular in that case. The semiclassical asymptotics of the proper vertex is, however, conjectured to be understood for all graphs, and in the hypercuboidal case leads to the right answer, as we will see below.

Alternatively, if one were to use the original EPRL amplitude, one could alter the definitions~\eqref{Eq:HopfVolumesQuantum}, by replacing $\hat{V}_\mathpzc{H}$ by $\hat{V}_\mathpzc{H}^2$ or $|\hat{V}_\mathpzc{H}|$. Either of those should alleviate the cosine problem, since both expressions are independent of the sign of space-time orientation.

In the semiclassical asymptotics, all three of these propositions give the same result. Let us consider the quantum cuboid states that have been introduced in~\cite{BahrSteinhaus2016Cuboidal-EPRL}. These states have played a crucial role in the investigation of the critical behaviour of the EPRL model at small scales~\cite{BahrSteinhaus2016phase-transition}. These states depend on six spins $j_i$, $i=1,\ldots, 6$ associated to the six great circles on the boundary, as depicted in Fig.~\ref{Fig:HopfLinkColour}. The intertwiners are all that of Livine-Speziale~\eqref{eq:Livine-Speziale}
\begin{eqnarray}
\iota_{j_{a_1}j_{a_2}j_{a_3}}\;=\;\int_{\mr{SU}(2)}du\;u\triangleright\left[\bigotimes_{i=1}^3|j_{a_i},e_i\rangle\langle j_{a_i},e_i|\right],
\end{eqnarray}

\noindent corresponding to the eight cuboids in Fig.~\ref{Fig:HypercuboidGraph}. This results in the boundary state 
\begin{eqnarray}
\Psi\;=\;\Phi_\ga\circ\left[\bigotimes_{a=1}^8 \iota_a\right],
\end{eqnarray}

\noindent where $\Phi_\gamma$ is the EPRL map~\eqref{eq:EPRL-map}, performing the embedding of the $\mr{SU}(2)$ spin-network into a boundary Hilbert space of the SF amplitude. 

It has been shown that the large-$j$-asymptotic formula for the amplitude $\mathcal{A}(\Psi)$ in the case $\gamma\in(0,1)$ can be written as follows:
\begin{eqnarray}
\mathcal{A}(\Psi)\;\stackrel{j\to\lambda j,\lambda\to\infty}{\longrightarrow}\;\lambda^{-21}\left(\frac{1}{D}+\frac{2}{|D|}+\frac{1}{D^*}\right),
\end{eqnarray}

\noindent where $D$ is a complex polynomial of order $21$ in the $j_i$.

The proper vertex in that limit can be defined the following way: The asymptotics relies on the extended stationary phase approximation of the integral over $g^\pm_v\in (\mr{SU}(2)\times \mr{SU}(2))^{N}$ defining $\mathcal{A}$. For all known examples, there are at most two solutions $S_{1,2}$ for either $g^+_v$ and $g^-_v$, leading to four solutions in total. The proper vertex is defined by choosing $S_1$ for $g^+_v$ and $S_2$ for $g^-_v$ (see~\cite{Barrett-etal2009asympt4simplex-euclid,Engle2013proper-vertex} for details; they do not matter for the discussion that much). The proper vertex for the three different Hopf links $\mathpzc{H}_{\,i}$ in $\Gamma$  lead to
\begin{eqnarray}
\mathcal{A}^\text{pr}\big(\hat{V}_{\mathpzc{H}_{\,1}}\Psi\big)\;\longrightarrow\;\frac{\gamma^2}{6}j_1j_6\,\lambda^{-19}\left(\frac{1}{|D|}\right),
\end{eqnarray}

\noindent with $j_1j_6$ being replaced by $j_2j_5$ and $j_3j_4$ for $\mathpzc{H}_{\,2}$ and $\mathpzc{H}_{\,3}$, respectively. 

Using the square of the Hopf link volume operator and the EPRL amplitude results in 
\begin{eqnarray}
\mathcal{A}\big(\hat{V}_{\mathpzc{H}_{\,1}}^2\Psi\big)\;\longrightarrow\;\frac{\gamma^4}{36}\big(j_1j_6\big)^2\,\lambda^{-17}\left(\frac{1}{D}+\frac{1}{D^*}\right),
\end{eqnarray}

\noindent again with $j_1j_6$ being replaced by $j_2j_5$ and $j_3j_4$ for $\mathpzc{H}_{\,2}$ and $\mathpzc{H}_{\,3}$, respectively. Using $|\hat{V}_\mathpzc{H}|$ instead of $\hat{V}_\mathpzc{H}^2$ leads to a similar result. 

We see that in all of these cases, the Hopf-link volume-simplicity constraints can be formulated by demanding the corresponding quantum expressions to coincide for different Hopf links $\mathpzc{H}$ in the boundary graph. While they all agree in the asymptotic limit, in the regime of small spins they could, very well, differ. The question of which of the possibilities should be the right one, is still open. 

Nevertheless, we can see that in the semi-classical limit, the proper version of quantum condition turns into the classical condition~\eqref{Eq:VolumeSimplicityHypercuboid} on the boundary data $j_i,\iota_a$ of the coherent state $\Psi$, which in turn is precisely the geometricity condition discussed in~\cite{BahrSteinhaus2016Cuboidal-EPRL,Dona-etal2017KKL-asympt,Belov2018Poincare-Plebanski}.

\subsection{Extension beyond the hypercuboidal case?}

First we note that both operators~\eqref{Eq:HopfVolumesQuantum} are invariant under ambient isotopies, i.e.~they do not depend on which way the graph $\Gamma$ is projected to the plane~\cite{Bahr2018polytope-volume}. 

Let us comment about the $4$-simplex case. The boundary graph of the $4$-simplex is the complete graph in five vertices, which does not contain \emph{any} Hopf links (see Fig.~\ref{Fig:FourSimplex}). This is not hard to see, as any non-trivial cycle needs at least three vertices, so a Hopf link needs at least six in total, since the two circles in it are not allowed to intersect. So in this case the condition~\eqref{Eq:HopfLinkVolumeSimplicity} is empty, which is agreeing with the fact that already classically, the diagonal~\ref{diag-simplicity} and cross-simplicity~\ref{cross-simplicity} constraints plus closure~\eqref{eq:closure-3d} imply the volume-simplicity. 

\begin{figure}[!hb]
\begin{center}
\includegraphics[scale=1.0]{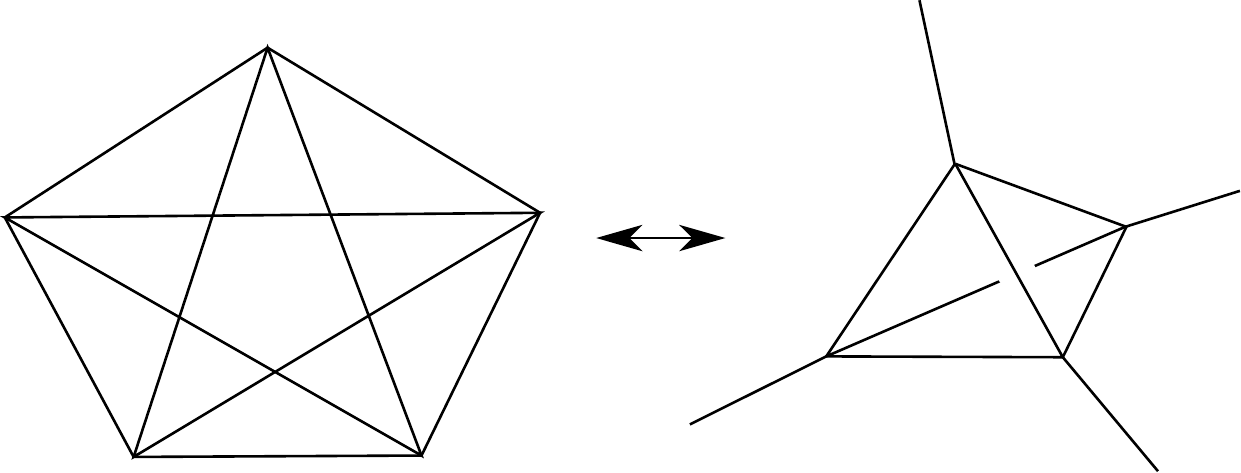}
\end{center}
\caption{A $4$-simplex, as polytope $P$. The boundary graph of $P$ is the complete graph in five vertices, and looks itself like a $4$-simplex. It can be rearranged to have only one crossing (additionally, we have moved one vertex to the infinitely far point, to make the image clearer).}\label{Fig:FourSimplex}
\end{figure}

\pagebreak

On the other hand, the larger the graph $\Gamma$, the more independent conditions are given by~\eqref{Eq:HopfLinkVolumeSimplicity}. This is in line with the fact that the classical condition~\ref{volume-quadr} $\displaystyle{\star B_f\cdot B_{f'}=:\tilde{V}_v(f,f')}$  gets more complicated for polyhedra with more faces.

Still, it is an unsolved question whether, for arbitrary graphs $\Gamma$, conditions~\eqref{Eq:HopfLinkVolumeSimplicity} are sufficient to restrict the bivector geometry enough to capture the right discretised version of metric geometry. This is related to the question of whether closure, linear simplicity and Hopf-link-simplicity are enough to prove a reconstruction theorem analogous to the $4$-simplex case in~\cite{BarrettCrane1998Euclid}. We feel that there is a very good chance for this to hold, since the Hopf-link constraint seems to capture the essential features of the $4$-volume in all cases we have looked at so far. Another hint might be the fact that the Hopf-link-volume $V_\mathpzc{H}$ could serve as a discretised version of the intersection form $Q$ of $4$-dimensional manifolds $M$. In its real formulation, it assigns a number $Q(\Sigma_1,\Sigma_2)$ to $\Sigma_i\in H^2(M)$, which, in the de Rham cohomology, are nothing but closed $2$-forms modulo translation symmetry. The expression in~\eqref{Eq:DefinitionTotalVolume} is then nothing but a discrete version of
\begin{eqnarray}
Q(\Sigma_1,\Sigma_2)=\int_M \Sigma_1\wedge \Sigma_2,
\end{eqnarray}

\noindent lifted to the context of $\Sigma_i\in H^2(M,\bigwedge^2\mathbb{R}^4)$ in the natural way. Due to the way in which the intersection form can also be evaluated by summing over the points of intersection of two  surfaces embedded in $P$, it might be possible to turn this into a sum over crossings of $1$-dimensional submanifolds in the boundary, i.e.~Hopf links in the boundary graph. It might therefore, indeed be enough to consider not the total volume, but only the Hopf link volume, to compute the volume of $P$. This is a point we may come back to in the future. 


\chapter{Proposition \textnumero2: Dual formulation with the co-frames}
\label{ch:prop-2}
 
The present Chapter approaches the problem of volume constraints in Sec.~\ref{sec:problem-linear} from the different perspective. It was already stressed that the normals are considered to be as characteristic for discrete picture as bivectors are, in our view. In what follows, we revisit the field-theoretic content of the \emph{linear formulation} in Sec.~\ref{subsec:linear-formulation}. One is able to achieve the closer contact with Cartan geometric degrees of freedom, by switching from normals directly to tetrad/co-frame, in the classical continuum formulation. Following the approach adopted in Spin Foams, we first provide the characterization in Sec.~\ref{sec:Poincare-BF} for the (unconstrained) Poincar\'{e} BF-type theory on the extended configuration space~\footnote{Apparently, an analogous model was first considered in~\cite{Bi2013PoincareBF}. It is also closely related (equivalent via integration by parts, i.e. up to boundary terms) to the corresponding topological Higher Gauge Theory~\cite{GirelliPfeifferPopescu2008BFCG}. However, no 2-group Categorical Generalization is implied in our case.}. The focus is on gauge symmetries: we first verify the invariances of lagrangian, and then show how they are preserved also on the canonical level in the Hamiltonian analysis of Sec.~\ref{subsec:Poincare-BF-hamiltonian}. The canonical generator of gauge transformations is explicitly constructed. In Sec.~\ref{sec:continuum-linear}, the equivalent set of linear simplicity constraints, reformulated in terms of co-frames, is presented and read geometrically as describing $3$-volume of a hypersurface.

\paragraph{Some preliminary remarks.} The initial incentive was to extend the 3d closure condition~\eqref{eq:closure-3d}, of the discrete Gauss law in BF theory, to the 4d normals of the linear EPRL constraints. It could then have provided the description of polytope geometry, through the generalized Minkowski theorem~\cite{BianchiDonaSpeziale2011Polyhedra}, on the `kinematical' level. Unlike the 3d case, however, the prospective 4d closure~\eqref{eq:closure-4d} does not follow immediately from the (discretized) equations of motion, but rather exploits the simplicity of bivectors themselves.




In this way, consider the hypersurface trivector $[X_1\wedge X_2\wedge X_3]$, spanning the infinitesimal 3-volume $\det[\theta(X_1),\theta(X_2),\theta(X_3)]$. Construct its complementary  dual vector, having the same magnitude and compatible orientation. The corresponding coefficients define the 3-forms:
\begin{align}\label{eq:3-forms}
\vartheta^A \ & = \ \frac{1}{3!}\vartheta^A_{bcd}\, dx^b\wedge dx^c\wedge dx^d \nonumber \\ 
& = \ \frac{1}{3!}\varepsilon^A_{\phantom{A}BCD}\, \theta^B_b \theta^C_c \theta^D_d \, dx^b\wedge dx^c\wedge dx^d \ = \ \frac{1}{3!}\varepsilon^A_{\phantom{A}BCD}\, \theta^B\wedge \theta^C\wedge \theta^D,
\end{align}
applicable to arbitrary vector arguments in $T\mc M$. This clarifies the relation of 4d normals~\eqref{eq:discrete-normals} to (co-)tetrads in the classical continuum theory.

By the Hodge $\star$ duality, we may prefer an alternative parametrization and use directly the tetrad field  $\theta^A_a$, without any loss of generality -- the number of components is the same as $\vartheta^A_{bcd}$. In order to identify the appropriate kinetic term, suppose that a cellular decomposition $\bigcup_v \mc T_v\simeq \mc M$ of the spacetime manifold is given. Express the l.h.s. of~\eqref{eq:closure-4d} through the Stokes' theorem:
\begin{equation}\label{eq:Stokes}
\sum_{e\supset v} \mc V^A_e \ = \ \int_{\pa\mc T_v} \vartheta^A \ = \ \int_{\mc T_v} d\vartheta^A, \qquad \text{where} \quad \pa\mc T_v = \bigcup_{e\supset v}\tau_e
\end{equation}
-- the boundary of the 4-simplex $\mc T_v$ dual to the vertex $v$ (more generally, any 4-polyhedron). One possibility is to introduce it with the Lagrange multipliers (and covariant derivative for non-trivial connection):
\begin{gather}
\int \mathpzc{b}_A^{\phantom{A}} \, D \vartheta^A \ = \ \int \frac12 \mathpzc{b}_A^{\phantom{A}}\varepsilon^A_{\phantom{A}BCD}\, \theta^B\wedge \theta^C\wedge D^\om \theta^D \ \equiv \ -\int  \be_A^{\phantom{A}}\wedge \Theta^A, \nonumber\\
\text{where} \quad \be_A^{\phantom{A}} \ := \ \frac12 \varepsilon_{ABCD}^{\phantom{ABCD}} \, \mathpzc{b}^B \theta^C\wedge \theta^D \ \equiv \ \frac12 \be_{Abc}^{\phantom{A}}\, dx^b\wedge dx^c, \quad \be^A_{cd}\ =\ \varepsilon^{AB}_{\phantom{AB}CD}\, \mathpzc{b}_B^{\phantom{B}}\theta^C_c \theta^D_d \label{eq:T-multiplier}
\end{gather}
-- the 2-form taking values in $\mathbb{R}^{3,1}$, whilst $\mathpzc{b}_A$ are scalar functions (0-forms). 
The first equality utilizes the relation~\eqref{eq:3-forms}, which endows arbitrary 3-forms with geometric meaning of complementary vectors, that is carries the metric information. Given this, the vanishing of torsion $\Theta^A:=D\theta^A$ in the first line is sufficient for $D^\om\vartheta^A=0$ (4d closure), but not the other way around. 


The condition of the vanishing `curl' $D^\om\theta^A=0$ is apparently stronger then $D\vartheta^A=0$, implying that dual vector (for fixed $A$) is divergence free. One can see this in terms of components of the torsion tensor:
\begin{equation}
\de\mathpzc{b}_A^{\phantom{A}} \quad \Rightarrow \quad D \vartheta_A \ = \ \frac14 \Theta^D_{\phantom{D}EF} \, \varepsilon^{\phantom{A}}_{ABCD} \, \theta^B \wedge \theta^C\wedge \theta^E\wedge \theta^F \ = \ \Theta^D_{\phantom{D}AD} \, \det(\theta) \, d^4x \ = \ 0.
\end{equation}
Thus, only the vector irreducible component of torsion vanishes in the first case [geodesics are still riemannian], whereas the entire tensor $\Theta$ is put to zero, if we choose to take generic~$\be_A$ as an independent Lagrange multipliers (conjugate to $\theta^A$)~\footnote{There is still a possibility left for exploration if we had chosen independent $\mathpzc{b}_A$ and weaker equation $D^\om\vartheta^A=0$. In this regard, it would be interesting to compare the two constraints and to identify the sets of compatible geometric configurations in each case (especially in the light of allowed `conformal shape-mismatch' in the recent work~\cite{Dona-etal2017KKL-asympt}).}.

Before studying gravity per se, let us examine first the simplifying theory, which one gets by modifying correspondingly the pure BF action. Thereby, we suggest to consider the following action with Lagrange multipliers imposing the conditions of zero curvature and torsion:
\begin{equation}\label{eq:BF-Poincare}
S_{0}[B,\be,\om,\theta] \ = \ \int B_{AB}^{\phantom{AB}}\wedge \Om^{AB} + \be_A^{\phantom{A}}\wedge \Theta^A.
\end{equation}
The main assumption made about the frame field is that it is non-degenerate, and the matrix $\theta^A_a$ is invertible. Even without knowing the Poincar\'{e} gauge structure, in the form of the action we may recognize the special case of the theory~\cite{MontesinosCuesta2008BF-Cartan-Ndim,MontesinosCuesta2007BF-Cartan}, designed to obey the Cartan's (also Bianchi's) structure equations. One could anticipate and announce~\eqref{eq:BF-Poincare} as the Poincar\'{e} BF theory -- this assertion is made precise, as we develop in this section the structure of the theory in full detail. 

\section{Poincar\'{e} BF theory}
\label{sec:Poincare-BF}
\subsection{Lagrangian and gauge symmetries}
\label{subsec:Poincare-BF-lagr}

The theories in physics are defined by their symmetries, which are usually presupposed. To find them out from the given action principle we follow the route, in a sense, reverse to the renowned Noether's 2nd theorem. Namely, the gauge symmetries (by which we understand the dependence of the dynamics on an arbitrary functions) manifest themselves through the differential identities among equations of motion. Calculate the Euler-Lagrange derivatives (putting the results into convenient language of forms):

\begin{subequations}\label{eq:BF-Poincare-variations}
\begin{align}
&& \frac{\de S_0}{\de B_{AB}}&&& := && \frac{1}{2!}\eps_{abcd} \frac{\de S_0}{\de B_{ABab}} dx^c \wedge dx^d &&= && \Om^{AB} , && \\
&& \frac{\de S_0}{\de \be_A} &&& := && \frac{1}{2!}\eps_{abcd} \frac{\de S_0}{\de \be_{Aab}} dx^c \wedge dx^d &&= && \Theta^A , && \\
&& \frac{\de S_0}{\de \om_{AB}} &&& := && \frac{1}{3!}\eps_{abcd} \frac{\de S_0}{\de \om_{aAB}} dx^b \wedge dx^c \wedge dx^d &&= && D^\om B^{AB} - \theta^{[A}\wedge \be^{B]} , && \label{eq:variation-omega}\\
&& \frac{\de S_0}{\de \theta_A} &&& := && \frac{1}{3!}\eps_{abcd} \frac{\de S_0}{\de \theta_{aA}} dx^b \wedge dx^c \wedge dx^d &&= && D^\om \be^A. &&
\end{align}
\end{subequations}

Setting variations to zero (with some fixed boundary conditions), we obtain the field equations, which every physical motion must satisfy:

\begin{equation}\label{eq:BF-Poincare-e.o.m.}
\frac{\de S_0}{\de B}, \frac{\de S_0}{\de \be}, \frac{\de S_0}{\de \om}, \frac{\de S_0}{\de \theta} \ = \ 0.
\end{equation}
The torsion now manifestly vanishes (as well as the curvature), and one recognizes in the third line~\eqref{eq:variation-omega} the generalized covariant conservation of $B$, i.e. the lifted 3d closure of bivectors in the discrete.


It is now straightforward to derive the corresponding differential relations between functional derivatives~\eqref{eq:BF-Poincare-variations}: 
\begin{subequations}\label{eq:differential-identities}
\begin{align}
&&& \mc I_{AB}^{(1)}  && := &&  D^\om \frac{\de S_0}{\de B^{AB}} \ = \ 0 , &&& \\
&&& \mc I_A^{(2)}  && := &&  D^\om \frac{\de S_0}{\de \be^A} - \theta^B \wedge \frac{\de S_0}{\de B^{AB}} \ = \ 0 , &&& \\
&&& \mc I_{AB}^{(3)}  && :=  && D^\om \frac{\de S_0}{\de \om^{AB}} - 2 B_{[A}^{\phantom{[B}C}\wedge\frac{\de S_0}{\de B^{B]C}}  - \be_{[A}^{\phantom{[}}\wedge \frac{\de S_0}{\de \be^{B]}} - \theta_{[A}^{\phantom{[}}\wedge \frac{\de S_0}{\de \theta^{B]}} \ = \ 0 , &&& \\
&&& \mc I_A^{(4)}  && := &&  D^\om \frac{\de S_0}{\de \theta^A} - \be^B\wedge\frac{\de S_0}{\de B^{AB}} \ = \ 0 , &&&
\end{align}
\end{subequations}
which are vanishing identically (off-shell), without using the equations of motion (only their form). To obtain the above relations we used the 2nd and 1st Bianchi's identities, i.e. the commutation of covariant derivatives:
\begin{align*}
D^\om\Om^{AB} \ & = \ 0, &  D^\om D^\om B^{AB} \ & = \ \Om^A_{\phantom{A}C}\wedge B^{CB} + \Om^B_{\phantom{B}C}\wedge B^{AC}, \\
D^\om D^\om\theta^A  \ & = \ \Om^A_{\phantom{A}B}\wedge \theta^B, & D^\om D^\om \be^A \ & = \ \Om^A_{\phantom{A}B}\wedge \be^B.
\end{align*} 
From the complete set of independent differential identities~\eqref{eq:differential-identities} we now form a generic linear combination, which is identically equal to zero, and integrate by parts:
\begin{gather*}
\int \Xi^{AB}\wedge \mc I_{AB}^{(1)} + \xi^A\wedge \mc I_A^{(2)} + \mathpzc{U}^{AB}\mc I_{AB}^{(3)} + \mathpzc{u}^A\mc I_A^{(4)} \\
= \int \left(D\Xi^{AB} - \xi^A\wedge \theta^B + \mathpzc{U}^A_{\phantom{A}C}B^{CB}+\mathpzc{U}^B_{\phantom{B}C}B^{AC} - \mathpzc{u}^A\be^B\right)\wedge \frac{\de S_0}{\de B^{AB}} \\
+\left(D\xi^A+\mathpzc{U}^A_{\phantom{A}B}\be^B\right)\wedge \frac{\de S_0}{\de \be^A} - D\mathpzc{U}^{AB}\wedge \frac{\de S_0}{\de \om^{AB}} - \left(D\mathpzc{u}^A -\mathpzc{U}^A_{\phantom{A}B}\theta^B\right)\wedge \frac{\de S_0}{\de \theta^A},
\end{gather*}
for some arbitrary coefficient functions $(\mathpzc{U}^{AB},\mathpzc{u}^C)$ and 1-forms $(\Xi^{AB},\xi^C)$. The transformations that leave the action invariant, up to divergence, are readily seen:
\begin{subequations}\label{eq:BF-Poincare-gauge-transform}
\begin{align}
\de \om^{AB} \ & = \ -  d\mathpzc{U}^{AB} + f^{AB}_{CD,EF}\, \om^{CD}\mathpzc{U}^{EF}, \label{eq:Poincare-connection-transform-om} \\ 
\de \theta^A \ & = \ - d\mathpzc{u}^A + f^A_{CD,B}\left(\om^{CD}\mathpzc{u}^B-\mathpzc{U}^{CD}\theta^B\right), \label{eq:Poincare-connection-transform-e}\\
\de B^{AB} \ & = \ -f^{AB}_{CD,EF}\mathpzc{U}^{CD}B^{EF}-\mathpzc{u}^{[A}\be^{B]} +D\Xi^{AB} - \xi^{[A}\wedge \theta^{B]}, \label{eq:null-B-transform} \\ 
\de \be^A \ & = \ -f^A_{CD,B}\mathpzc{U}^{CD}\be^B + D\xi^A, \label{eq:null-T-transform} \\
\text{where} \quad f^{AB}_{CD,EF} \ : & = \ \eta^{\phantom{[}}_{E[C}\de^{[A}_{D]}\de^{B]}_{F\phantom{]}} - \eta^{\phantom{[}}_{F[C}\de^{[A}_{D]}\de^{B]}_{E\phantom{]}}, \quad  f^A_{CD,B} \ := \ \eta_{B[C}^{\phantom{AB}}\de^A_{D]}. \label{eq:Poincare-structure-constants}
\end{align}
\end{subequations}

We may combine the connection $\om\in\mathfrak{so}(3,1)$ and the gauge potential of translations $\theta$ into a single (Cartan) connection~\cite{Wise2010Cartan-geometry} of the Poincar\'{e} gauge group~\footnote{This is quite appealing from another point of view that the usual non-degeneracy condition on $\mr{det}\,\theta^A_a\neq0$, which is unclear how to achieve in practise, is also the requirement for the $\mathfrak{g}$-connection to be a Cartan connection.}:
\begin{equation}\label{eq:Poincare-connection}
\mathfrak{so}(3,1)\ltimes\mathfrak{p}^{3,1}\ \ni\ \varpi \ := \ \om + \theta \ = \ \om^{AB}\mc J_{AB} + e^C\mc P_C,
\end{equation}
whose generators $\mc J_{AB},\mc P_C$ satisfy the algebra (of which~\eqref{eq:Poincare-structure-constants} are structure constants):
\begin{align}
\begin{split}\label{eq:Poincare-algebra}
[\mc J_{AB},\mc J_{CD}] \ & = \ i  \left(\eta_{C[A}\mc J_{B]D}-\eta_{D[A}\mc J_{B]C}\right), \\ 
[\mc J_{AB},\mc P_C] \ & = \ i \, \eta_{C[A}\mc P_{B]}, \\ 
[\mc P_A,\mc P_B] \ & = \ 0.
\end{split}
\end{align}
A local gauge transformation, taking values in $G=SO(3,1)\ltimes\mathbb{P}^{3,1}$, can be split into
\begin{equation}\label{eq:Poincare-split}
g(x) \ = \ u(x)U(x), \qquad u \ = \ e^{-i\mathpzc{u}\mc P}, \qquad U \ = \ e^{-i\mathpzc{U}\mc J},
\end{equation}
s.t. $u(x)$ changes the zero section, i.e. changes the local identification of points of tangency at each spacetime event. The transformation law for the connection is then
\begin{equation}\label{eq:Poincare-gauge-transform}
\varpi \ \rightarrow \ \varpi' \ = \ g^{-1}(\varpi+d)g \ = \ U^{-1}u^{-1}(\om + \theta) u U + U^{-1}u^{-1}(du)U + U^{-1}dU,
\end{equation}
whose infinitesimal form is~\eqref{eq:Poincare-connection-transform-om}, \eqref{eq:Poincare-connection-transform-e}. The combined curvature transforms in the adjoint representation:
\begin{subequations}\label{eq:Poincare-adjoint-transform}
\begin{gather}
\mathpzc{F}[\varpi] \ \rightarrow \ \mathpzc{F}' \ = \ g^{-1}\mathpzc{F} g \ = \ U^{-1}u^{-1}(\Om+\Theta)u U, \\
\intertext{or, infinitesimally:}
\de \Om^{AB} \ = \ f^{AB}_{CD,EF}\Om^{CD}\mathpzc{U}^{EF}, \qquad \de \Theta^A \ = \ f^A_{CD,B} \left(\Om^{CD}\mathpzc{u}^B - \mathpzc{U}^{CD} \Theta^B\right).
\end{gather}
\end{subequations} 

However, the analogous quantity, comprised of the conjugate variables  
\begin{equation}\label{eq:Poincare-B-field}
\mathfrak{so}(3,1)\ltimes\mathfrak{p}^{3,1}\ \ni\ \mathpzc{B} \ := \ B + \be \ = \ B^{AB}\mc J_{AB} + \be^C\mc P_C, 
\end{equation}
demonstrates slightly different behaviour~\eqref{eq:null-B-transform},\eqref{eq:null-T-transform}, compensating for~\eqref{eq:Poincare-adjoint-transform} in order to make the action invariant. The inhomogeneity is transferred from $\be$ to $B$ part. For instance, if $\be$ is of the form~\eqref{eq:T-multiplier}, the corresponding addition would be $-\de_\mathpzc{u}\star B^{AB}=\frac12 (\mathpzc{u}\cdot \mathpzc{b}) \theta^{[A}\wedge \theta^{B]} + \mathpzc{b}^{[A}\theta^{B]}\wedge(\mathpzc{u}\cdot \theta)$. We are tempted to interpret the latter as being responsible for mixing up the simple bivectors -- the symmetry which is usually `broken' in order to obtain GR, in Pleba\'{n}ski constrained approach. In addition to the usual (internal) gauge transformations, the action~\eqref{eq:BF-Poincare} is also invariant w.r.t. the shifts
\begin{equation}\label{eq:null-topological-symmetry}
\de\varpi \ = \ 0, \qquad \de \mathpzc{B} \ = \ \left(D\Xi^{AB}-\theta^A\wedge\xi^B\right)\mc J_{AB} + D \xi^C\mc P_C \ \equiv \ D_\varpi(\Xi+\xi) ,
\end{equation}
which extends the usual `topological' BF symmetry~\footnote{The relation with diffeomorphism Lie derivatives is the same as in the ordinary BF theory, i.e. the above symmetry~\eqref{eq:null-topological-symmetry} is not `independent'.}. In fact, it is always possible to gauge away any local d.o.f.; however, if $\mc M$ is topologically non-trivial, $\varpi$ and $\mathpzc{B}$ can have non-trivial solutions globally (hence the name).

Despite the non-convential form of~\eqref{eq:null-B-transform}, we still have obtained the right connection~\eqref{eq:Poincare-connection} and the algebra~\eqref{eq:Poincare-algebra}. So that allows us to conclude that we have constructed a topological theory of the BF type for the Poincar\'{e} gauge group. We may guess that the departure of $\mathpzc{B}$ transformation properties from the adjoint~\eqref{eq:Poincare-adjoint-transform} is a forced decision, due to the degenerate nature of the Killing form: for the semidirect product algebra with the abelean ideal of translations it reduces to $\Tr \big[\mathrm{ad}_{(u,U)}\circ\mathrm{ad}_{(v,V)}\big]=2\,\Tr \big[\mathrm{ad}_U\circ \mathrm{ad}_V\big]\propto\eta_{A[C}^{\phantom{b}}\eta_{D]B}^{\phantom{b}}=-\frac14 f^{GH\phantom{]}}_{AB,EF}f^{EF\phantom{]}}_{CD,GH}$. Thus, the temptation to read the expression in~\eqref{eq:BF-Poincare} as the Cartan-Killing pairing, similar to the $\int\bra \mathpzc{B}, \mathpzc{F}\ket$ with semisimple group, is valid only with certain caveats. The contraction for $\be - \Theta$ is performed with the metric $\eta$, e.g. obtained from the $\mc J-\mc P$ vector couplings. We find this to be an interesting arena for the imposition of simplicity constraints.

\subsection{Hamiltonian and the gauge generator}
\label{subsec:Poincare-BF-hamiltonian}

Having characterized the system completely at the covariant Lagrangian level, we now pass to studying it using the canonical approach (of symplectic geometry). Ordinarily the Hamiltonian methods imply the manifest breaking of covariance by explicitly separating spatial from temporal field-components, and considering equal-time Poisson brackets. We would like to stress that the chosen preferred status of time coordinate in Hamiltonian analysis is not associated \textit{a priori} with an explicit separation of spacetime \textit{itself} into ``space and time'' (not at this stage at least), as suggested e.g. by the ADM change of coordinates and their geometrical interpretation. We therefore are being cautious with usage of such notions which are usually referred to as 3+1-decomposition, or slicing/splitting/foliation/etc. In particular, all 4d symmetries persist at the canonical level in the form of gauge generators, mapping solutions into solutions, as we show for this particular example.

Following the Dirac's general treatment of singular Lagrangian systems~\cite{Dirac1964lectures}, one starts by defining the conjugate momenta (for \textit{all} configuration variables):
\begin{subequations}
\begin{align}\label{eq:null-conjugate-momenta}
&&\Pi^a_{AB} &&& := & \frac{\de L_0}{\de \dot{\om}^{AB}_a}  &&& \equiv  && \left(\Pi^0_{AB},\Pi^i_{AB}\right) && \approx && \left(0,\frac12 \eps^{ijk}B_{jkAB}^{\phantom{jkA}}\right), && \\
&&\pi^a_A  &&& := & \frac{\de L_0}{\de \dot{\theta}^A_a} &&& \equiv && \left(\pi^0_A,\pi^i_A\right) && \approx && \left(0,\frac12 \eps^{ijk}\be_{jkA}^{\phantom{jkA}}\right), && \\
&&\Phi^{ab}_{AB}  &&& := & \frac{\de L_0}{\de \dot{B}^{AB}_{ab}} &&& \equiv && \left(\Phi^{0i}_{AB},\Phi^{jk}_{AB}\right) && \approx && \quad 0,  && \\
&&\phi^{ab}_A  &&& := & \frac{\de L_0}{\de \dot{\be}^A_{ab}} &&& \equiv && \left(\phi^{0i}_A,\phi^{jk}_A\right) && \approx  && \quad 0, &&
\end{align}
\end{subequations}
where dot denotes the time derivative of the generalized variables $\dot{q} := \pa_0 q$, and $\eps^{ijk} := \varepsilon^{0ijk},\, i,j,k=1,2,3$. We have the totality of primary constraints for the generalized coordinates $q$ and momenta $p$, in the sense that none of the velocities $\dot{q}$ enter the above relations and cannot be inverted -- the system defined by $L_0$ is, thus, maximally singular. The conjugate pairs $(\om,\Pi),(\theta,\pi),(B,\Phi),(\be,\phi)$ satisfy the canonical commutation relations (c.c.r.):
\begin{subequations}\label{eq:null-ccr}
\begin{align}
\big\{\om_a^{AB}(\mathbf{x}),\Pi^b_{CD}(\mathbf{y})\big\} \ & = \ \de^{[A}_{\,C}\de^{B]}_D \de_a^b \de(\mathbf{x},\mathbf{y}), & \big\{\theta_a^A(\mathbf{x}),\pi^b_B(\mathbf{y})\big\} \ & = \ \de^A_B \de_a^b \de(\mathbf{x},\mathbf{y}), \\
\big\{B_{ab}^{AB}(\mathbf{x}),\Phi^{cd}_{CD}(\mathbf{y})\big\} \ & = \ \de^{[A}_{\,C}\de^{B]}_D \de_{[a}^{\,c}\de_{b]}^d \de(\mathbf{x},\mathbf{y}), & \big\{\be_{ab}^A(\mathbf{x}),\phi^{cd}_B(\mathbf{y})\big\} \ & = \ \de^A_B \de_{[a}^{\,c}\de_{b]}^d \de(\mathbf{x},\mathbf{y}).
\end{align}
\end{subequations}

One then constructs the Hamiltonian, schematically $H(q,p)=p\,\dot{q}-L(q,\dot{q},p)=\phi\,\dot{q}+H_{\mathrm{c}}(q,p)$, following Dirac~\cite{Dirac1964lectures}, sometimes also called ``total'' in order to distinguish it from the ``canonical'' part $H_{\mathrm{c}}$, which does not contain primary constraints $\phi$. It is straightforward to verify that $H_{\mathrm{c}}$ is indeed explicitly independent of the velocities $\dot{q}$, which enter as undetermined functions in front of~$\phi$. By this straightforward procedure one gets for the Hamiltonian density:
\begin{align}
\ms H \ = & \ \Pi^a_{AB} \dot{\om}_a^{AB} + \pi^a_A \dot{\theta}_a^A + \Phi^{ab}_{AB}\dot{B}^{AB}_{ab} + \phi^{ab}_A\dot{\be}^A_{ab} - \ms L \nonumber \\
 = & \ \left(\Pi^i_{AB} -\frac12 \eps^{ijk}B_{jkAB}^{\phantom{jkAB}}\right)\dot{\om}_i^{AB} + \left(\pi^i_A -\frac12 \eps^{ijk}\be_{jkA}^{\phantom{jkA}}\right)\dot{\theta}_i^A + \Phi^{ij}_{AB}\dot{B}^{AB}_{ij} + \phi^{ij}_A\dot{\be}^A_{ij}  \nonumber \\
& \ \ + \Pi^0_{AB} \dot{\om}_0^{AB} + \pi^0_A \dot{\theta}_0^A + 2\Phi^{0i}_{AB}\dot{B}^{AB}_{0i} + 2\phi^{0i}_A\dot{\be}^A_{0i} + \ms H_{\mathrm{c}}, \label{eq:null-Hamiltonian}
\end{align}
the canonical part simply consists of spatial components of Lagrangian $\mathscr{H}_c=-\mathscr{L}\big|_{\dot{\varpi}=\dot{\mathpzc{B}}=0}$. It contains no momenta whatsoever at this stage, due to primary constraints, and we retain full covariance w.r.t. Lorentz indices. The form of $\mathscr{H}_c$ will be specified shortly, after reduction is made (compare with the detailed expressions unfold in the Hamiltonian analysis of closely related BFCG theory~\cite{Mikovic-etal2016Hamiltonian-BFCG-Poincare}).

Next we calculate the development of the primary constraints $\dot{\phi}=\{\phi,H\}\equiv\chi$ in order to find out the additional consistency requirements for them to preserve in time -- the secondary constraints $\chi\approx 0$. It may happen that some combinations of constraints form a 2nd class (sub)system, failing to commute. This is precisely our situation, since 
\begin{equation*}
\Big\{\Phi^{ij}_{AB},\Pi^{kCD} -\frac12 \eps^{klm}B_{lm}^{CD}\Big\} = \frac12 \eps^{ijk}\de^{[C}_{\,A}\de^{D]}_B , \qquad \Big\{\phi^{ij}_A,\pi^{kB} -\frac12 \eps^{klm}\be^B_{lm}\Big\} = \frac12 \eps^{ijk}\de^B_A.
\end{equation*}
Note that the second class nature of the initial Lagrangian $L_0$ is not the specialty of our Poincar\'{e} modification but is common to any BF theory in various spacetime dimensions. This apparent fact of the full-fledged Dirac's generalized Hamiltonian analysis is often overlooked in the canonical description of BF and related theories~\cite{CMPR2012BF+Immirzi-Hamiltonian,Buffenoir-etal2004Hamiltonian-Plebanski,MontesinosCuesta2008BF-Cartan-Ndim,Perez2013SF-review} (with rare notable exceptions, e.g.~\cite{Escalante-etal2012Hamiltonian-BF-complete,Mikovic-etal2016Hamiltonian-BFCG-Poincare}).

The presence of second class constraints signals about the degrees of freedom which are physically non-relevant, in our case these are spatial $B$ and $\be$ components. Their velocities are the Lagrange multipliers to be determined by requiring the time preservation of the corresponding 2nd class set -- this allows us to express them in terms of other variables (Lagrangian equations of motion):
\begin{equation}\label{eq:Lagrangian-velocities}
\begin{aligned}
\Big\{H, \Pi^i_{AB} -\frac12 \eps^{ijk}B_{jkAB}\Big\} \ & = \ \frac12 \eps^{ijk} \Big(\dot{B}_{jkAB}^{\phantom{jkAB}}+\om_{0A}^{\phantom{0A}C}B_{jkCB}^{\phantom{jkCB}}+\om_{0B}^{\phantom{jB}C}B_{jkAC}^{\phantom{jkAC}} -\theta_{0[A}\be_{B]jk} \\
& \: \qquad -2\Big(\pa_j^{\phantom{I}} B_{0kAB}^{\phantom{okAB}}+\om_{jA}^{\phantom{jA}C}B_{0kCB}^{\phantom{0kCB}}+\om_{jB}^{\phantom{jA}C}B_{0kAC}^{\phantom{0kAC}}-\theta_{j[A}\be_{B]0k}\Big)\Big)\ = \ 0 , \\
\Big\{H, \pi^i_A -\frac12 \eps^{ijk}\be_{jkA}\Big\} \ & = \ \frac12 \eps^{ijk} \Big(\dot{\be}_{jkA}^{\phantom{jkA}}+\om_{0A}^{\phantom{0A}B}\be_{jkB}^{\phantom{jkB}}-2\Big(\pa_j^{\phantom{I}} \be_{0kA}^{\phantom{0kA}}+\om_{jA}^{\phantom{jA}B}\be_{0kB}^{\phantom{0kB}}\Big) \Big) \ = \ 0 , \\
\Big\{\Phi^{ij}_{AB},H\Big\} \ & = \ \frac12 \eps^{ijk} \Big(\dot{\om}_{kAB}^{\phantom{kAB}} +\om_{0A}^{\phantom{0A}C}\om_{kCB}^{\phantom{kCB}} -\pa_k^{\phantom{I}}\om_{0AB}^{\phantom{0AB}} -\om_{kA}^{\phantom{kA}C}\om_{0CB}^{\phantom{0CB}}\Big) \ = \ 0 , \\
\Big\{\phi^{ij}_A,H\Big\} \ & = \ \frac12 \eps^{ijk} \Big(\dot{\theta}_{kA}^{\phantom{kA}} +\om_{0A}^{\phantom{0A}B}\theta_{kB}^{\phantom{kB}} -\pa_k^{\phantom{I}} \theta_{0A}^{\phantom{0A}} -\om_{kA}^{\phantom{kA}B}\theta_{0B}^{\phantom{0B}}\Big) \ = \ 0 .
\end{aligned}
\end{equation}

One can reduce the system by solving the second class constraints as strong equations. The formal procedure includes passing to the Dirac brackets in order not to sum over variables, which have been thrown away (cf.~\cite{Escalante-etal2012Hamiltonian-BF-complete}). In our case the constraints are of special type, such that we can make a shortcut and simply solve for the spatial $B$ and $\be$ components (together with their identically vanishing momenta), since these just serve the purpose of identifying (the spatial part of) the $\varpi$-connection's conjugate momenta. The rest of the canonical commutation relations are unaltered, as can be easily verified, and one is left in~\eqref{eq:null-Hamiltonian} with the last line $\ms H'$, where prime now signals that $B,\be$ have been solved for $\Pi,\pi$. The rest of the primary constraints are first class and all commute among themselves. They give rise to the secondary
\begin{subequations}\label{eq:secondary-constraints}
\begin{align}
&&&&\Big\{\Pi^0_{AB},H\Big\} &&& \equiv && \chi^0_{AB} && = && \mc D_i^{\phantom{I}}\Pi^i_{AB}- \theta^{\phantom{A}}_{i[A}\pi^i_{B]} && \approx && 0, &&&& \\
&&&&\Big\{\pi^0_A,H\Big\}  &&& \equiv && \chi^0_A && = && \mc D_i^{\phantom{I}}\pi^i_A && \approx && 0, &&&& \\
&&&&\Big\{\Phi^{0i}_{AB},H \Big\} &&& \equiv && \chi^{0i}_{AB} &&  = && \frac12 \eps^{ijk}\left(\pa_j^{\phantom{I}} \om_{kAB}^{\phantom{kAB}} + \om_{jA}^{\phantom{jA}C}\om_{kCB}^{\phantom{kCB}} \right) && \approx && 0, &&&& \\
&&&&\Big\{\phi^{0i}_A,H \Big\} &&& \equiv && \chi^{0i}_A && = && \frac12 \eps^{ijk}\left(\mc D^{\phantom{I}}_j \theta_{kA}^{\phantom{kA}} \right) && \approx && 0. &&&&
\end{align}
\end{subequations}
The derivative $\mc D_i^{\phantom{I}}$ is taken w.r.t. the spatial connection $\om_i^{AB}$~\footnote{The constraint $\chi^0_{AB}$ is again the modified Gauss law, whose appearance was anticipated in~\cite{GielenOriti2010Plebanski-linear}. Here we encounter no need to artificially enlarge the phase space, which follows naturally from the covariant action, together with the nice transformation properties.}. The canonical part of the Hamiltonian takes form 
\begin{equation}\label{eq:null-canonical-Hamiltonian}
-\ms H_{\mathrm{c}}' \ = \ \om_0^{AB}\chi^0_{AB} + \theta_0^A\chi^0_A +2B^{AB}_{0i}\chi^{0i}_{AB} +2\be^A_{0i}\chi^{0i}_A -\pa_i^{\phantom{|}}\! \left(\om_0^{AB}\Pi^i_{AB} +\theta_0^A\pi^i_A\right).
\end{equation}
The bulk contribution to $\ms H'$ vanishes as the sum of (the primary as well as secondary) constraints. We did not specify any form of the boundary conditions and kept the surface term explicit. A good cross check is the consistency between the Hamiltonian $\dot{f} = \{f,H'\}$ and the Lagrangian~\eqref{eq:Lagrangian-velocities} equations of motion, once the solution to second class constraints is taken into account. We warn the reader not to discard the primary constraints from the outset. Although the present case of reduction is very simple, in general, it may affect the symplectic structure of the original action. Moreover, the primary constraints are essential for the equivalence between the Lagrangian and Hamiltonian formulations. By keeping only the (reduced) canonical part $H_{\mathrm{c}}'$ we cannot even address the gauge transformations on the full phase-space -- only the spatial ones. 

The completion of Dirac's procedure consists in proving that $\dot{\chi}\approx 0$ are conserved. This follows from the closure of the algebra:
\begin{equation}
\begin{aligned}\label{eq:null-constraint-algebra}
\left\{\chi^0_{AB},\chi^0_{CD}\right\} \ & = \ \chi^0_{C[A}\eta_{B]D}^{\phantom{0}}-\chi^0_{D[A}\eta_{B]C}^{\phantom{0}}, \\
\left\{\chi^0_{AB},\chi^0_C\right\} \ & = \ \chi^0_{[A}\eta_{B]C}^{\phantom{0}}, \\
\left\{\chi^0_{AB},\chi^{0i}_{CD}\right\} \ & = \ \chi^{0i}_{C[A}\eta_{B]D}^{\phantom{0i}}-\chi^{0i}_{D[A}\eta_{B]C}^{\phantom{0i}}, \\
\left\{\chi^0_{AB},\chi^{0i}_C\right\} \ & = \ \chi^{0i}_{[A}\eta_{B]C}^{\phantom{0i}}, \\
\left\{\chi^0_A,\chi^{0i}_B\right\} \ & = \ -\chi^{0i}_{AB},
\end{aligned}
\end{equation}
the rest of the commutators being trivially zero. 

It is worth at this point to perform the physical degrees of freedom count, in order to make sure that the theory is indeed topological. Starting from the initial phase space of dimensionality (which we denote by putting variables in brackets) $\big[\om^{AB}_a\big]+\big[\Pi^a_{AB}\big]+\big[\theta^A_a\big]+\big[\pi^a_A\big]+\big[B^{AB}_{ab}\big]+\big[\Phi^{ab}_{AB}\big]+\big[\be^A_{ab}\big]+\big[\phi^{ab}_A\big]=200$, we eliminate some variables through the strong second class equalities $\big[B^{AB}_{ij}\big]+\big[\Phi^{ij}_{AB}\big]+\big[\be^A_{ij}\big]+\big[\phi^{ij}_A\big]=60$. Finally, we perform the symplectic reduction as follows: put the system on the surface of first class constraints, then gauge away the redundant modes by factoring out the action of the first class constraints. In effect, we subtract twice the amount of all the first class constraints, taking in account that some of them are reducible (namely, the secondary constraints with the vector index are not independent, but related through the spatial Bianchi's identities): $2\cdot\big(\big[\Pi^0_{AB}\big]+\big[\pi^0_A\big]+\big[\Phi^{0i}_{AB}\big]+\big[\phi^{0i}_A\big]+\big[\chi^0_{AB}\big]+\big[\chi^0_A\big]+\big[\chi^{0i}_{AB}\big]+\big[\chi^{0i}_A\big]-\big[\mc D_i^{\phantom{I}} \chi^{0i}_{AB}\big]-\big[\mc D_i^{\phantom{I}} \chi^{0i}_A \- +\theta^B_i\chi^{0i}_{BA}\big]\big)=140$. We conclude that the theory is devoid of local degrees of freedom, the only relevant ones being that of global nature, those coming from non-trivial topologies. This makes it a potential candidate for spinfoam quantization.

\paragraph{The gauge generator.} If one expects the Hamiltonian picture to represent the original theory, then it has to be shown that it correctly reproduces results of the manifestly covariant approach, in particular, the gauge symmetries, in the form of canonical transformations. The Dirac's old conjecture that all first class constraints do generate such a transformations was formalized later by Castellani~\cite{Castellani1982Hamiltonian-generator} and others into a precise algorithm. This procedure defines the gauge generator (for arbitrary functions of time $\varepsilon(t)$)
\begin{equation}\label{eq:Castellani-generator}
\mc G(t) \ = \ \sum_{n=0}^N \sum_\al \varepsilon^{(n)}_\al \mc G^\al_{(N-n)}, \qquad  \varepsilon^{(n)}_\al =\frac{d^n}{dt^n}\varepsilon_\al^{\phantom{\al}},
\end{equation}
through the chains of first class constraints, unambiguously constructed once the set of primary ones (first class) $\{\al\}$ is given. The multi-index $\al$ is linked to the tensorial nature of transformations, while $(N-n)$ gives the generation number (primary/secondary/tertiary/etc.). As a by-product, knowing the derivative order of gauge transformations, one can predict the overall number $N$ of generations of constraints, and vice versa. 

The chains $\mc G^\al_{(N-n)}$ in~\eqref{eq:Castellani-generator} are constructed iteratively as follows:
\begin{equation}\label{eq:Castellani-chains}
\begin{aligned}
\mc G^\al_{(0)} \ & = \ \text{primary}, \\
\mc G^\al_{(1)} + \big\{\mc G^\al_{(0)},H\big\}\ & = \ \text{primary}, \\
& \ \vdots \\
\mc G^\al_{(N)} + \big\{\mc G^\al_{(N-1)},H\big\}\ & = \ \text{primary}, \\
\big\{\mc G^\al_{(N)},H\big\}\ & = \ \text{primary}.
\end{aligned}
\end{equation}
In the present situation, the primary ones are
\begin{equation}
\mc G^0_{(0)AB} = \Pi^0_{AB} , \qquad \mc G^0_{(0)A} = \pi^0_A , \qquad \mc G^{0i}_{(0)AB} = \Phi^{0i}_{AB}, \qquad \mc G^{0i}_{(0)A} = \phi^{0i}_A,
\end{equation}
and the procedure terminates already at the secondary $n=0,1$:
\begin{equation}
\mc G^\al_{(1)}(\mathbf{x}) \ = \ -\chi^\al(\mathbf{x}) + \int d^3\mathbf{y} \, \mathpzc{A}^\al_{\phantom{\al}\be}(\mathbf{x},\mathbf{y}) \phi^\be(\mathbf{y}),
\end{equation}
where coefficient kernels $\mathpzc{A}^\al_{\phantom{\al}\be}$ are fixed by the last requirement in~\eqref{eq:Castellani-chains} to close onto the primary constraint surface (we have the identical zero due to commutation $\{\phi,\phi\}=\{\phi,\chi\}=0$). Straightforward calculation gives the total (smeared) generator
\begin{equation}\label{eq:null-generator}
\mc G(\mathpzc{U},\mathpzc{u},\Xi,\xi) \ = \ -\ms J(\mathpzc{U}) - \ms P(\mathpzc{u}) + \ms F(\Xi) + \ms T(\xi)
\end{equation}
as a combination of elementary ones:
\begin{equation}\label{eq:null-generators-elementary}
\begin{aligned}
\ms J(\mathpzc{U}) \ & = \ \int \dot{\mathpzc{U}}^{AB}\Pi^0_{AB} - \mathpzc{U}^{AB}\left(2\om^{\phantom{0A}C}_{0A}\Pi^0_{CB}-\theta^{\phantom{0A}}_{0A}\pi^0_B +4B^{\phantom{0iA}C}_{0iA}\Phi^{0i}_{CB}-2\be^{\phantom{0iA}}_{0iA}\phi^{0i}_B + \chi^0_{AB}\right), \\
\ms P(\mathpzc{u}) \ & = \ \int \dot{\mathpzc{u}}^A\pi^0_A - \mathpzc{u}^A\left(\om^{\phantom{0A}B}_{0A}\pi^0_B +2\be^{B}_{0i}\Phi^{0i}_{BA} + \chi^0_A\right), \\
\ms F(\Xi) \ & = \ \int \dot{\Xi}^{AB}_i\Phi^{0i}_{AB} - \Xi^{AB}_i\left(2\om^{\phantom{0A}C}_{0A}\Phi^{0i}_{CB} + \chi^{0i}_{AB}\right), \\
\ms T(\xi) \ & = \ \int \dot{\xi}^A_i\phi^{0i}_A - \xi^A_i\left(\om^{\phantom{0A}B}_{0A}\phi^{0i}_B + \theta^B_0\Phi^{0i}_{BA} + \chi^{0i}_A\right).
\end{aligned}
\end{equation}
This generalizes the result for the canonical gauge generator of $SO(3,1)$ BF theory, reported in~\cite{Escalante-etal2012Hamiltonian-BF-complete}. The construction provides the correct transformation properties via
\begin{equation}
\de f \ = \ \{f, \mc G \},
\end{equation}
mapping solutions into solutions (gauge symmetry). Unlike the secondary constraints~\eqref{eq:secondary-constraints}, it acts on the full phase-space of the theory:
\begin{align*}
\de\om^{AB}_0 \ & = \ - \left(\dot{\mathpzc{U}}^{AB}+\om_{0\phantom{A}C}^{\phantom{0}A}\mathpzc{U}^{CB}+\om_{0\phantom{B}C}^{\phantom{0}B}\mathpzc{U}^{AC}\right), \\ 
\de \om^{AB}_i \ & = \ - \left(\pa_i^{\phantom{I}} \mathpzc{U}^{AB}+\om_{i\phantom{A}C}^{\phantom{i}A}\mathpzc{U}^{CB}+\om_{i\phantom{B}C}^{\phantom{i}B}\mathpzc{U}^{AC}\right), \\ 
\de \theta^A_0 \ & = \ - \left(\dot{\mathpzc{u}}^A+\om_{0\phantom{A}B}^{\phantom{0}A}\mathpzc{u}^B\right) +\mathpzc{U}^A_{\phantom{A}B}\theta^B_0, \\
\de \theta^A_i \ & = \ - \left(\pa_i^{\phantom{I}} \mathpzc{u}^A+\om_{i\phantom{A}B}^{\phantom{i}A}\mathpzc{u}^B\right) +\mathpzc{U}^A_{\phantom{A}B}\theta^B_i, \\
\de \be^A_{0i} \ & = \ \mathpzc{U}^A_{\phantom{A}B}\be^B_{0i} + \left(\dot{\xi}^A+ \om_{0\phantom{A}B}^{\phantom{0}A}\xi^B_i\right), \\
\de B^{AB}_{0i} \ & = \ \mathpzc{U}^A_{\phantom{A}C}B^{CB}_{0i} + \mathpzc{U}^B_{\phantom{B}C}B^{AC}_{0i} + \left(\dot{\Xi}^{AB}_i+\om_{0\phantom{A}C}^{\phantom{0}A}\Xi^{CB}_i+\om_{0\phantom{B}C}^{\phantom{0}B} \Xi^{AC}_i\right)-\theta^{[A}_0\xi^{B]}_i, \\
\de \Pi^0_{AB} \ & = \ \mathpzc{U}^{\phantom{A}C}_A\Pi^0_{CB}+\mathpzc{U}^{\phantom{B}C}_B\Pi^0_{AC}- \mathpzc{u}^{\phantom{A}}_{[A}\pi^i_{B]} - \Xi_{iA}^{\phantom{iA}C}\Phi^{0i}_{CB}-\Xi_{iB}^{\phantom{iB}C}\Phi^{0i}_{AC} +\xi^{\phantom{iA}}_{i[A}\phi^{0i}_{B]}, \\
\de \Pi^i_{AB} \ & = \ \mathpzc{U}^{\phantom{A}C}_A\Pi^i_{CB}+\mathpzc{U}^{\phantom{B}C}_B\Pi^i_{AC}- \mathpzc{u}^{\phantom{A}}_{[A}\pi^i_{B]} +\eps^{ijk}\left(\pa_j^{\phantom{I}} \Xi_{kAB}^{\phantom{kAB}} + \om_{jA}^{\phantom{jA}C}\Xi_{kCB}^{\phantom{kCB}} + \om_{jB}^{\phantom{jB}C}\Xi_{kAC}^{\phantom{kAC}}+\theta^{\phantom{jA}}_{j[A}\xi_{B]k}^{\phantom{Bk}}\right), \\
\de\pi^0_A \ & = \ \mathpzc{U}_A^{\phantom{A}B}\pi^0_B + \xi^B_i\Phi^{0i}_{BA}, \\
\de\pi^i_A \ & = \ \mathpzc{U}_A^{\phantom{A}B}\pi^i_B + \eps^{ijk}\left(\pa_j^{\phantom{I}}\xi_{kA}^{\phantom{kA}} + \om_{jA}^{\phantom{jA}B}\xi_{kB}^{\phantom{kB}}\right), \\
\de \phi^{0i}_A \ & = \ \mathpzc{U}_A^{\phantom{A}B}\phi^{0i}_B + \mathpzc{u}^B\Phi^{0i}_{BA}, \\
\de\Phi^{0i}_{AB} \ & = \ \mathpzc{U}_A^{\phantom{A}C}\Phi^{0i}_{CB}+\mathpzc{U}_B^{\phantom{B}C}\Phi^{0i}_{AC}.
\end{align*}
The correct covariant expressions~\eqref{eq:BF-Poincare-gauge-transform} for all the Lagrangian field components (spatial as well as temporal; using also the 2nd class relations $\de B^{AB}_{ij}=\eps_{ijk}^{\phantom{I}}\de\Pi^{kAB},\de \be^A_{ij}=\eps_{ijk}^{\phantom{I}}\de\pi^{kA}$) are reproduced within the Hamiltonian framework, thus exhibiting the equivalence between the two pictures. 

Using the Jacobi identity, the commutator between the two consecutive transformations is given:
\begin{equation}\label{eq:gauge-algebra}
(\de_1\de_2-\de_2\de_1)f \ = \ \{\{\mc G_1,\mc G_2\}, f\}.
\end{equation}
Its elementary constituents realize the generalized matrix commutators:
\begin{align*}
\{\ms J(\mathpzc{U}_1), \ms J(\mathpzc{U}_2)\} \ & = \ -\ms J([\mathpzc{U}_1,\mathpzc{U}_2]), & [\mathpzc{U}_1,\mathpzc{U}_2]^{AB} \ & = \  \mathpzc{U}^A_{1\; C}\mathpzc{U}_2^{CB} - \mathpzc{U}^B_{1\; C}\mathpzc{U}_2^{CA} ,\\
\{\ms J(\mathpzc{U}), \ms P(\mathpzc{u})\} \ & = \ -\ms P(\mathpzc{U}\rt\mathpzc{u}), & (\mathpzc{U}\rt\mathpzc{u})^{A\phantom{B}} \ & = \ \mathpzc{U}^A_{\phantom{A}B}\mathpzc{u}^B,\\
\{\ms J(\mathpzc{U}), \ms F(\Xi)\} \ & = \ -\ms F([\mathpzc{U},\Xi]), & [\mathpzc{U},\Xi]^{AB}_i \ & = \ \mathpzc{U}^A_{\phantom{A}C}\Xi^{CB}_i-\mathpzc{U}^B_{\phantom{B}C}\Xi^{CA}_i,\\
\{\ms J(\mathpzc{U}), \ms T(\xi)\} \ & = \ -\ms T(\mathpzc{U}\rt\xi), & (\mathpzc{U}\rt\xi)^{A\phantom{B}}_i \ & = \ \mathpzc{U}^A_{\phantom{A}B}\xi^B_i, \\
\{\ms P(\mathpzc{u}), \ms T(\xi)\} \ & = \ +\ms F([\mathpzc{u},\xi]), & [\mathpzc{u},\xi]^{AB}_i \ & = \ \mathpzc{u}_{\phantom{i}}^{[A}\xi^{B]}_i.
\end{align*}

\newpage

We use a chance to comment here on the relation between the dynamics and gauge in reparametrization invariant systems (cf. ``the problem of time'').
The Hamiltonian -- generator of time evolution -- in such model is a combination of first class constraints, which are also known to generate the gauge transformations (i.e. ``unphysical'' changes in the description of the system). Working on the full phase space allows to disentangle these notions: the specific combinations of first class constraints are different for two objects $\ms H$ and $\mc G$; the key role is played by the primary sub-set. 

One usually defines the notion of Dirac observables w.r.t. individual constraints $\{\chi,f\}=0$ (often disregarding the primary $\phi\approx0$, working on the smaller phase space). In the quantum theory, one represents the canonical variables via operators on the appropriate Hilbert space $\mc{H}$ of states of the system, the Poisson (Dirac) brackets being replaced by a commutator $[\;,\,]=i\{\;,\,\}$. For instance, in our example:
\begin{equation}
\ms J(\mathpzc{U}) \ \mapsto \ \mathpzc{U}\cdot\hat{\mc J}, \qquad \ms P(\mathpzc{u}) \ \mapsto \ \mathpzc{u}\cdot\hat{\mc P}
\end{equation}
are the elements of the (local) Poincar\'{e} algebra~\eqref{eq:Poincare-algebra}. The Dirac prescription then consists in imposing on states $\hat{\chi}_\al|\Psi\ket =0$ individually for each $\al$, which are then consistent for the first class system.

With the distinction just pointed out between $\ms H$ and $\mc G$ on the full phase space, the function of canonical variables may satisfy two a priori distinct conditions: $\{\ms H,F\} \ = \ 0$ and/or $\{\mc G,F\} \ = \ 0$. The first can be thought of as characterizing ``evolving constants of motion'', i.e. uniquely associated with the state -- solution of the e.o.m. The state itself, however, is not uniquely defined by the Cauchy data and depends on the arbitrary functions, entering the Hamiltonian (in our reduced case, the velocities of $(\theta,\om,\be,B)$ time components, associated with the 1st class primary constraints on momenta, are not defined by the evolution equations). Thus, the first type functionals may depend on the gauge choice for particular Hamiltonian, whereas the second condition then characterizes ``gauge-invariant'' functionals. 

Requiring the time preservation of vanishing of currents $\mc G^\al$ on every hypersurface, by construction we have then $0=\pa_t\mc G^\al =\{\ms H,\mc G^\al\}$, and the state has to satisfy Hamiltonian e.o.m. So that symmetry generators provide an example of the first type functionals. The commutation relations~\eqref{eq:gauge-algebra} express the fact that the canonical (pre-)symplectic structure is degenerate on the constraint surface. One then passes to the quotient w.r.t. the gauge directions, considering the gauge equivalent classes of solutions as ``physical'' states. The construction of the appropriate observable algebra is of primary importance for the quantization, especially in gravitational theories, so the Dirac's ``rule of thumb'' for \emph{all} 1st class constraints should be applied with certain care.

\pagebreak

\section{(Dual) linear simplicity constraints using co-frames}
\label{sec:continuum-linear}

Now, when we have the (co-)frames $\theta$ at our disposal among the legitimate dynamical variables, it is straightforward to implement the simplicity of the bivectors, in order to reproduce gravity sub-sector. Multiple choices of how to do this are conceivable. First of all, one can simply replace $B \ra\star \theta\wedge \theta$ directly in the action integral:
\begin{equation}\label{eq:EC+0torsion}
S[\theta,\om,\be] \ = \ \int \frac12 \varepsilon_{ABCD}^{\phantom{ABCD}}\,\theta^A\wedge \theta^B\wedge \Om^{CD}+ \be_A^{\phantom{A}}\wedge \Theta^A.
\end{equation}
Secondly, one could try to achieve the same effect via the Lagrange multipliers approach, imposing the simplicity in its most direct sense:
\begin{equation}\label{eq:Lambda-constraints}
S[\mathpzc{B},\varpi,\La] \ = \ S_0[\mathpzc{B},\varpi] + \int \La^{AB}
\wedge \left(B_{AB}^{\phantom{AB}}-\frac12\eps_{ABCD}^{\phantom{ABCD}}\,\theta^C\wedge \theta^D\right),
\end{equation}
with the free independent multiplier 2-forms $\La$.

It turns out that the simplicity constraints can also be put into form, linear in both $B$ and $\theta$, which is more in the vein of current Spin Foam models. In order to stay self-contained and explicit, let us formulate the following

\begin{lemma}[linear simplicity]
Provided that the tetrad field is non-degenerate, and hence the map $\theta$ is invertible, the bivector field $B$ is simple if and only if either of the two equivalent sets of constraints is satisfied:
\begin{equation}\label{eq:linear-simplicity-dual}
  \left\{\begin{aligned}
  \ast B_{AB}^{ab}\, \theta^B_c \ &= \ 0 &&&& \forall \, c\notin\{a,b\}, \\
  \ast B_{AB}^{ab}\, \theta^B_b \ = \ \ast B_{AB}^{ac}\, \theta^B_c \ & = \ \ast B_{AB}^{ad}\, \theta^B_d &&&& \forall \, a\notin\{b,c,d\},
  \end{aligned}\right.
  \quad\Leftrightarrow\quad
  \left\{\begin{aligned}
  B_{ABab}^{\phantom{AB}}\, \theta^B_c \  & = \ 0  &&&& \forall \, c\in\{a,b\}, \\
  B_{ABa(b}^{\phantom{AB}}\, \theta^B_{c)} \  & = \ 0 &&&& \forall \, c\notin\{a,b\}.
  \end{aligned}\right.
\end{equation}
\end{lemma}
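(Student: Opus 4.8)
The plan is to exploit the invertibility of $\theta$ to trade all spacetime indices for internal ones, reducing the statement to a purely algebraic fact about a four-index tensor over $\mathbb{R}^{3,1}$. Concretely, let $E=\theta^{-1}$ be the inverse co-frame, $E^a_A\theta^A_b=\de^a_b$, and set $\mc B_{ABCD}:=B_{ABab}E^a_C E^b_D$, antisymmetric in the pairs $[AB]$ and $[CD]$. Under this dictionary the target condition $B_{AB}=\tfrac12\eps_{ABCD}\theta^C\wedge\theta^D$ (the Einstein-Cartan, or sector $II^\pm$, form) becomes the single statement $\mc B_{ABCD}=\eps_{ABCD}$. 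The forward implication is then immediate: substituting $B_{ABab}=\eps_{ABCD}\theta^C_a\theta^D_b$ into the right-hand (unstarred) set, the contraction $B_{ABab}\theta^B_c$ with $c\in\{a,b\}$ vanishes because $\eps_{ABCD}$ is antisymmetric while $\theta^B_a\theta^C_a$ (no sum on the fixed direction) is symmetric in $B,C$; the symmetrised cross term $B_{ABa(b}\theta^B_{c)}$ vanishes by the same mechanism. An identical computation, using $\ast B=\star\star\Si=\pm\,\theta\wedge\theta$ together with $\theta^B_b\theta_{Bc}=g_{bc}$ in the basis in which $g_{bc}$ is diagonal, verifies the left-hand (starred) set.

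For the converse I would show that the same constraints, read through the dictionary, collapse $\mc B_{ABCD}$ onto its totally antisymmetric part. Decompose the $[AB][CD]$-antisymmetric tensor $\mc B$ into its irreducible pieces under permutations of the four internal indices: the totally antisymmetric component (one-dimensional in four dimensions, hence proportional to $\eps_{ABCD}$), the Riemann/Weyl-type component obeying the algebraic Bianchi identity, and the trace pieces. The "diagonal" constraints $B_{ABab}\theta^B_c=0$ for $c\in\{a,b\}$ translate into the vanishing of the contractions that detect the trace and mixed-symmetry parts, while the symmetrised cross constraints $B_{ABa(b}\theta^B_{c)}=0$ for $c\notin\{a,b\}$ remove the remaining Bianchi-type component. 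What survives is $\mc B_{ABCD}=\la\,\eps_{ABCD}$ for a single scalar $\la$, so that $B=\la\,\star(\theta\wedge\theta)$; absorbing $\la$ into the co-frame, $B$ has the simple (tetrad) form. It is here that non-degeneracy is indispensable: only because $\{\theta^B_c\}_{c=0}^3$ is a basis can a statement about $B_{AB}{}^{ab}\theta^B_c$ for all $c$ be promoted to a statement about the free internal index $B$.

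The equivalence of the two sets is a corollary of Hodge duality on the internal indices. Applying $\ast$, a linear isomorphism of $\bigwedge^2\mathbb{R}^{3,1}$ with $\ast^2=(-1)^\eta$, interchanges $B$ and $\pm\Si=\pm\,\theta\wedge\theta$ and hence maps the unstarred constraints onto the starred ones; the identities \eqref{eq:Hodge-relations} make this interchange explicit at the level of the contractions with $\theta^B$. This simultaneously explains why the linear form selects the gravitational sector $II^\pm$ and discards the topological $I^\pm$: the two sectors are exchanged by $\ast$, and the chosen contraction pattern is not $\ast$-invariant.

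The main obstacle I anticipate is the bookkeeping in the converse — verifying that the rather sparse system of component equations (only certain combinations of $a,b,c$ occur) is genuinely strong enough to annihilate \emph{every} non-$\eps$ irreducible component of $\mc B$, rather than merely some of them. This is the continuum counterpart of the discrete fact that diagonal- plus cross-simplicity imply full simplicity, and the cleanest route is probably to fix a $\theta$-adapted orthonormal internal frame, enumerate the independent components of $\mc B_{ABCD}$ in that frame, and check directly that the listed constraints set all but the $\eps$-component to zero; one must also confirm that no spurious degenerate solutions with $\det\theta=0$ intrude, which is excluded by the non-degeneracy hypothesis.
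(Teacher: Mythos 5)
Your overall strategy is the paper's own: trade the spacetime indices for internal ones through the invertible tetrad and analyse the resulting component system. Your dictionary $\mc{B}_{ABCD}=B_{ABab}E^a_CE^b_D$ is (up to a spacetime dualization) exactly the paper's expansion $\ast B^{ab}_{AB}=G^{ab}_{cd}\,\theta^c_{[A}\theta^d_{B]}$, your forward direction is the paper's ``pyramid'' computation $B_{ABab}\,\theta^B_c=(\det\theta)\,\theta^d_A\,\varepsilon_{dabc}$, and your fallback for the converse (enumerate components in a $\theta$-adapted frame) is precisely how the paper closes the argument: the first constraint line forces $G^{ab}_{cd}$ to vanish unless $[cd]=[ab]$, i.e. $\ast B^{ab}_{AB}=G_{ab}\,\theta^a_{[A}\theta^b_{B]}$ (no sum), and the second line forces $G_{ab}=G_{ac}=G_{ad}$ for all $b,c,d\neq a$, hence all off-diagonal entries equal and $B$ simple up to an overall normalization.

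Two of your intermediate claims would not survive being made precise, though neither is fatal since your fallback bypasses them. First, the irreducible-decomposition framing of the converse is inaccurate: the constraints are adapted to the $\theta$-basis and do not annihilate permutation-irreducible pieces of $\mc{B}$ wholesale. Concretely, writing $N_{ef|cd}:=B_{ABcd}\,\theta^A_e\theta^B_f$, the first (unstarred) set kills every component $N_{[ef]|[ab]}$ in which the two index pairs share an element, leaving only the six components where $[ef]$ is complementary to $[ab]$ --- a basis-dependent $6$-dimensional slice which is not a sum of irreducibles and strictly contains the $\varepsilon$-direction; only then does the symmetrized cross constraint force those six coefficients to be equal, collapsing the slice to a multiple of $\varepsilon$. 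So the ``diagonal'' constraints do far more than remove trace and mixed-symmetry parts, and the division of labour you assign to the two constraint subsets is not the actual one. Second, your argument for the equivalence of the starred and unstarred systems is circular as stated: saying that $\ast$ ``interchanges $B$ and $\pm\Sigma$'' presupposes that $B$ already lies on the constraint surface, whereas the equivalence must hold for arbitrary $B$. The correct (and simpler) argument, which is the paper's, is that the spacetime dual merely relabels components: for fixed $a\neq b$ one has $\ast B^{ab}_{AB}\,\theta^B_c=\pm B_{ABde}\,\theta^B_c$ with $[de]$ the pair complementary to $[ab]$, and $c\notin\{a,b\}$ if and only if $c\in\{d,e\}$, so each starred equation literally coincides with an unstarred one.
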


\begin{proof}
First, it is straightforward to verify that the two systems imply each other by noting that for some fixed $a\neq b$:
\begin{equation*}
\ast B_{AB}^{ab}\, \theta^B_{b'} \ = \ \sum_{cd} \frac12 \varepsilon^{abcd}B_{ABcd}^{\phantom{A}} \, \theta^B_{b'} \ = \ \varepsilon^{abcd}B_{ABcd}^{\phantom{A}} \, \theta^B_{b'} \quad \text{(no sum over $[cd]\neq[ab]$)}.
\end{equation*}
Assuming the conditions on the one side, the other one follows from here. Notice also that in our notation the dual becomes $\ast B_{AB}^{ab}=(\Si^{-1})_{AB}^{ab}$ on the constrained surface, where $\Si=\theta\wedge \theta$.

To show the necessity of these conditions for simplicity of bivectors, it is enough to cast $B=\star \theta\wedge \theta$ into form~\footnote{Having in mind the possible interpretation in terms of discrete geometry, \eqref{eq:3volume-simplicity} can be suggestively referred to as ``pyramid'', or ``3-volume'' form of simplicity constraints -- namely, by the nature of the object appearing on the right, and since the whole formula can be read as the expression for the volume of a pyramid with the base $\Si$ and height $\theta$.}
\begin{subequations}\label{eq:pyramid-simplicity}
\begin{equation}\label{eq:3volume-simplicity}
B_{ABab}^{\phantom{AB}}\, \theta^B_c \ = \ \varepsilon_{ABCD}^{\phantom{ABCD}}\, \theta^B_a \theta^C_b \theta^D_c \ = \ (\det \theta)\, \theta^d_A \, \varepsilon_{dabc}^{\phantom{a}},
\end{equation}
or, equivalently:
\begin{equation}\label{eq:3volume-simplicity-dual}
\ast B_{AB}^{ab}\, \theta^B_c \ = \ (\det \theta) \, (\theta^a_A \de^b_c - \theta^b_A \de^a_c).
\end{equation}
\end{subequations}
Observing that the l.h.s. is already linear leads us to the (dual) analogue of `cross-simplicity' constraint in the first line of~\eqref{eq:linear-simplicity-dual}, when the r.h.s. in~\eqref{eq:pyramid-simplicity} is zero. In turn, the non trivial expression on the right in~\eqref{eq:pyramid-simplicity} restricts the l.h.s. in~\eqref{eq:3volume-simplicity} to be totally antisymmetric in $[abc]$, whereas it is independent of $b=c\neq a$ in~\eqref{eq:3volume-simplicity-dual}, leading to the second line of~\eqref{eq:linear-simplicity-dual}, respectively (no sum over spacetime indexes).

In order to demonstrate that the conditions~\eqref{eq:linear-simplicity-dual} are also sufficient, one follows the same reasoning as in~\cite{GielenOriti2010Plebanski-linear}. Namely, the generic bivector field can be expanded over the basis, spanned by the skew-symmetric products of $e$:
\begin{equation*}
\ast B_{AB}^{ab} \ = \ G^{ab}_{cd} \, \theta^c_{[A}\theta^d_{B]}, \qquad G^{ab}_{cd} \ = \ G^{[ab]}_{[cd]},
\end{equation*}
which after substitution into the first line of~\eqref{eq:linear-simplicity-dual} leads to
\begin{equation*}
\ast B_{AB}^{ab} \ = \ G_{ab}^{\phantom{a}} \, \theta^a_{[A}\theta^b_{B]} \qquad \text{(no sum over $ab$)}.
\end{equation*}
The individual normalization coefficients have to  satisfy symmetry $G_{ab}^{\phantom{a}} = G_{ba}^{\phantom{a}}$, $G_{aa}^{\phantom{a}} = 0$, but apart from that can be arbitrary. It is only after substitution of this ansatz into the second line of~\eqref{eq:linear-simplicity-dual} that we get the restriction
\begin{equation*}
G_{ab}^{\phantom{a}} \ = \ G_{ac}^{\phantom{a}} \ = \ G_{ad}^{\phantom{a}} \qquad \forall \, a\notin\{b,c,d\},
\end{equation*}
leading to the equality among all $G$'s. Thus, the $B$ is simple up to an overall factor, which can be `eaten' by appropriate normalization.
\end{proof}
  
The continuous formulation that we are advocating for is somewhat different from that of Gielen-Oriti's linear proposal~\cite{GielenOriti2010Plebanski-linear}, which uses 3-forms $\vartheta=\star \theta\wedge \theta\wedge \theta$, and bivectors $\Si = \theta\wedge \theta$ as independent variables, but rather represents its dual version. Before introducing the action principle, and in order to make closer contact between the two formulations, we first recall the corresponding constraint term in~\cite{GielenOriti2010Plebanski-linear} and notice that this can be rewritten as
\begin{equation}\label{eq:lin-simpl-cont}
\int d^4x \ \widetilde{\Xi}^{[ab][cde]}_A\Si^{AB}_{ab}\vartheta_{Bcde}^{\phantom{Bcde}}  \ = \ \int d^4x \ \Xi^{[ab]}_{Ac}\Si^{AB}_{ab}\tilde{\vartheta}^c_B.
\end{equation}
One can choose to work either with 3-forms~\eqref{eq:3-forms} or, equivalently, their dual densitiezed vectors:
\begin{equation}
\tilde{\vartheta}^a_A \ = \ \frac{1}{3!}\varepsilon^{abcd}\vartheta_{Abcd}^{\phantom{A}} \ = \ (\det \theta^B_b)\, \theta^a_A.
\end{equation}
Correspondingly, the Lagrange multipliers $\Xi^{[ab]}_{Ac}$ should be exact tensors, s.t. $\widetilde{\Xi}^{[ab][cde]}_A = \frac{1}{3!}\varepsilon^{cdef\phantom{]}}_{\phantom{Af}}\Xi^{[ab]}_{Af}$ -- tensor densities. The somewhat convoluted index symmetries that $\widetilde{\Xi}$ has to satisfy can be restated as the traceless condition on $\Xi^{[ab]}_{Ab}=0$, which upon variation then leads to the appearance of non-trivial Kronecker deltas on the right:
\begin{equation}\label{eq:Xi-variation}
\de \Xi \quad \Rightarrow \quad \tilde{\vartheta}^c_A\Si^{AB}_{ab} \ = \ \de^c_a v^B_b - \de^c_b v^B_a \qquad \text{for some} \ v^B_b.
\end{equation}
The antisymmetry in $[AB]$ and the tensorial nature of $\Si$ leaves no choice other than $v^A_a\propto \tilde{\vartheta}^A_a=\theta\, \theta^A_a$, and we get the simplicity up to an overall normalization, which is irrelevant. Applying the Hodge dual $\ast$, one restates this in terms of 3-forms, resulting from variation w.r.t. $\widetilde{\Xi}$, correspondingly:
\begin{equation}\label{eq:lin-simpl-dual}
\Si^{AB}_{ab}\vartheta_{Bcde}^{\phantom{A}} \ = \ v^A_a\varepsilon_{bcde}^{\phantom{A}} - v^A_b \varepsilon_{acde}^{\phantom{A}},
\end{equation}
which are essentially the original Gielen-Oriti's constraints.

Comparing~\eqref{eq:Xi-variation} with~\eqref{eq:3volume-simplicity-dual}, and juxtaposing them against the constraint term in \eqref{eq:lin-simpl-cont}, then suggests the respective least action principle in terms of dual variables $B\leftrightarrow\Si$ and $\theta\leftrightarrow\vartheta$, correspondingly:
\begin{equation}\label{eq:Poincare-Plebanski}
S_{\mathrm{PP}}[\mathpzc{B},\varpi,\Xi] \ := \ S_0[\mathpzc{B},\varpi] + \int 
\left(\Xi^A\lrcorner \, \theta^B\right)\wedge B_{AB}^{\phantom{AB}},
\end{equation}
which we coined, referring to its gauge group, the Poincar\'{e}-Pleba\'{n}ski formulation (although such a name might be as well attributed either to the ``$\La$-version'', or essentially to any formulation of this flavour). The $4\times4\times6=96$ Lagrange multipliers constitute the tangent $T\mc M$-valued 2-forms, that is
\begin{equation}\label{eq:PP-Lagrange-multipliers}
\Xi^A \ = \ \frac12 \Xi^{Ac}_{ab}\, dx^a\wedge dx^b \otimes \pa_c^{\phantom{a}}
\end{equation}
are the sections of the fiber bundle $\bigwedge^2 T^\ast\mc M\bigotimes T\mc M$. In the constraint term of the action~\eqref{eq:Poincare-Plebanski} they contract with tetrad 1-forms using the pairing $dx^a_{\phantom{b}}\lrcorner\,\pa_b^{\phantom{a}}=\de^a_b$ in the tangent vector index: $\Xi^A\lrcorner \, \theta^B=\frac12 \Xi^{Ac}_{ab}\theta^B_c dx^a\wedge dx^b$. The $\Theta$'s are restricted to be traceless $\Xi^{Ab}_{[ab]}=0$, that is possess the components of the form $\Xi^{Ac}_{ab}+\frac23\de^{\,c}_{[a}\Xi^{Ad}_{b]d}$; we can formulate this in the coordinate independent way as the full contraction with the canonical tangent-valued form on $\mc M$ being zero:
\begin{subequations}
\begin{equation}\label{eq:traceless-multipliers}
\bar{\theta}_{\mc M} \ := \ dx^a\otimes \pa_a, \qquad \bar{\theta}_{\mc M}\lrcorner\, \Xi^A \ = \ 0.
\end{equation}

Lets count the number of independent $\Xi$ components, in order to verify that we have enough of them to eliminate $36$ $B^{CD}_{cd}$ in favour of $16$ $\theta^A_a$. Apart from the $4\times4$ traceless conditions~\eqref{eq:traceless-multipliers}, from the contraction with $B$ in the action~\eqref{eq:Poincare-Plebanski} follow $10\times6$ antisymmetrization equations
\begin{equation}\label{eq:Lagrange-multipliers-symmetries}
\Xi^{(A}\lrcorner \, \theta^{B)} \ = \ 0,
\end{equation}
\end{subequations}
which $\Xi$ and $\theta$ have to satisfy. Subtracting from this $16$ d.o.f. corresponding to $\theta$'s (they just serve the purpose to isomorphically map indices $\theta(x):T_x\mc M\rightarrow \mathbb{R}^{3,1}$), we are left with $60-16=44$ additional requirements on $\Xi$. Thereby we get the total number of $96-16-44=36$ independent $\Xi$'s -- exactly the right amount to enforce simplicity.

It should be more or less evident after our exposition that the symmetries of the Lagrange multipliers lead to the variation, constrained by the system~\eqref{eq:linear-simplicity-dual}, depending whether we choose to vary w.r.t. $\Xi$ or its dualized version $\widetilde{\Xi}$. The lemma then implies that this is the same as performing variation on the simplicity constraint surface. The manifest presence of the Hodge-star $\star$ in constraint~\eqref{eq:Lambda-constraints} becomes shrouded, instead one has the restriction on the multipliers~$\Xi$. The free variation of $\de\La$ equates the constraint pre-factor to zero exactly, whilst for~$\de\Xi$ obeying additional conditions -- we get the non-vanishing expression on the r.h.s. In an analogous situation within the  standard Pleba\'{n}ski quadratic approach, the corresponding quantity on the right is usually interpreted in geometric terms as a definition of the 4-volume (on the solution of constraints), whilst a non-trivial symmetrization conditions are put on the l.h.s. It is these latter conditions that actually constitute the substance of the respective `volume' part of simplicity constraints. They require that the definition of the 4-volume be consistent, i.e. does not depend on the multiple choices that could be made for its parametrization on the l.h.s.
Note that in~\eqref{eq:pyramid-simplicity} we get the very same picture, now with the quantity on the r.h.s. being precisely the non-trivial 3-volume (cf.~\eqref{eq:volume-unique}). At the same time this last bit now is \emph{``localized''} at the level of each tetrahedron, irregardless of the whole 4-simplex, which was the case for quadratic version~\ref{volume-quadr}. Lastly, the analogue of the `cross-simplicity', when the r.h.s. is zero, now expresses that the corresponding (discrete) $\theta$ is collinear with the face $S_f$, being orthogonal to its dual bivector $B_f=\star\Si_f$.

These 3 \textit{a priori} distinct choices for constraint imposition, tabulated above, all seem to represent the same physical content. In either of the $\La$ or $\Xi$ versions, inserting further the solution for $B$ back into action, one reduces the initial topological theory~\eqref{eq:BF-Poincare} to that of~\eqref{eq:EC+0torsion}, that is the Einstein-Cartan action~\eqref{eq:Einstein-Cartan} supplemented with an extra term for (zero) torsion. Thus, we expect the equivalence should hold with the Einstein-Hilbert variational principle, through the Palatini constrained variation. The relations between different action principles can be schematically depicted in a diagram:
\begin{displaymath}
\begin{tikzcd}[column sep = large, row sep=large]
\text{1st order:} & S_{\mathrm{PP}}[\mathpzc{B},\varpi,\Xi] \arrow{r}{\de\Xi} \arrow{d}[swap]{\de \be}
& S_{\mathrm{EC}}[\theta,\om] +\int \be\wedge \Theta \arrow{d}{\de \be}\\
\text{2nd order:} & S_{\mathrm{GO}}[B,\theta,\om[\theta],\Xi] \arrow{r}{\de\Xi} & S_{\mathrm{EC}}[\theta,\om[\theta]] \equiv S_{\mathrm{EH}}[\theta]
\end{tikzcd}
\end{displaymath}
In the bottom left corner appears a variant of the ``hybrid'' action of the form dual to that of Gielen-Oriti~\cite{GielenOriti2010Plebanski-linear}, but with the unique $\theta$-compatible torsion-free spin connection. This is to be contrasted with their 1st order formulation, where $\om$ is independent and the gauge status of non-dynamical $\vartheta\sim \theta$ is less obvious, which enters a separate sequence:
\begin{displaymath}
\begin{tikzcd}
S_{\mathrm{PP}}[\mathpzc{B},\varpi,\Xi]+\int \la\wedge \be \arrow{r}{\de\la} & S_{\mathrm{GO}}[B,\theta,\om,\Xi] \arrow{r}{\de\Xi} & S_{\mathrm{EC}}[\theta,\om] .
\end{tikzcd}
\end{displaymath}

We stress that the reduction of the Einstein-Cartan theory to that of GR is achieved only \textit{on-shell} in vacuum, by solving the dynamical e.o.m. for $\om$. In contrast, one puts additional restrictions on the allowed variations of the generalized coordinates by the use of (non-dynamical) Lagrange multipliers $\Xi,\be$, which then acquire the physical meaning of ``reaction forces'', corresponding to variations that violate the constraints. The discussions of the relation between two approaches have been recurrent in the literature in the past, in particular, regarding the higher order Lagrangians and matter couplings (e.g., see~\cite{Kichenassamy1986Lagrangian-multipliers-grav} and references therein). 

\newpage

\thispagestyle{empty}
\chapter{Summary and discussion}

In the present thesis we performed thorough investigation of the role and status of the field theoretic degrees of freedom associated with the metric properties of space-time geometry -- both in the classical Einstein-Cartan theory, and for some directly related models of Quantum Gravity.


\paragraph{The context of the study} is provided by the so-called EPRL-FK spinfoam amplitude, which is the particular implementation of the path-integral quantization of GR. This is closely related to the canonical Loop Quantum Gravity approach, where the theory of Einstein is reformulated in terms of Ashtekar-Barbero connection variables. Both theories are non-perturbative and share the common principles of `background independence', as well as certain techniques and practices, adapted from lattice gauge theory. In particular, the discrete structures are commonly invoked as a regularization tool, in order to make the formalism well-defined.

With the discrete graphs and cell structures there is associated a `discrete geometry', so that one encounters a particular combination of gauge theoretic and metric properties. While the quantization of connections is fairly well-developed, the complementary geometric part is quite a non-trivial subject. There exist various versions of what should be understood under `quantum space-time', but the main guidance is usually provided by the classical GR and its discrete Regge version. The major consistency check for the model thus becomes the ability to reproduce the known physics/geometry in some semi-classical regime of large quantum numbers.

\newpage

\paragraph{The results of our analysis} in the second part of the thesis~\footnote{One could roughly divide it into ``classical'' first part, and ``quantum'' second half, respectively, where one focuses on two different aspects of the subject. They are not disconnected, but rather mutually illuminate each other.} concern the interpretation of the asymptotic formula, and implications it may have on the construction of current spinfoam models. 
\begin{itemize}
\item In Ch.~\ref{ch:problem}, the potential problem with the current EPRL-FK-KKL vertex amplitude is revealed. It is demonstrated that the construction, which works fine for triangulations, does not allow an immediate extension to general cell-complexes. In particular, we show that the (sub-)set of certain `volume simplicity constraints' is not implemented properly in the model. We considered both quadratic and linear versions of the `volume simplicity' in Sec.~\ref{sec:problem-quadratic} and~\ref{sec:problem-linear}, respectively.
This does not allow to associate with the variables at the vertex the geometric picture of a flat 4-dimensional polyhedron (usually implied). One thus could raise doubts on the viability of the generalized amplitude and its interpretation. We address some of them in the next two chapters.
\item In Sec.~\ref{sec:Hopf-link}, we proposed a missing set of constraints, related to the knot-invariant of a Hopf-link. Staying within the usual context of Pleba\'{n}ski quadratic constraints, this can be defined for any bivector geometry associated with the graph. If the latter is induced on the boundary of some flat 4d polyhedron, the derived quantity is directly related with the volume of a polytope. It is shown how the `non-geometricity' problem is resolved in the simplistic case of a hypercuboid, so that it can be uniquely reconstructed, if one imposes an additional condition of invariance on the particular Hopf-link chosen. 
\item One further showed in Sec.~\ref{sec:volume-quant} how this condition may be implemented in the quantum amplitude, in the sense of operator expectation values, leading to corrected asymptotics. The resulting formula bears striking resemblance to earlier proposed ad-hoc recipe to include the cosmological constant.
\item In Ch.~\ref{ch:prop-2}, we revisit the classical continuous theory behind the linear version of simplicity constraints. As suggested by analysis in~\ref{sec:problem-linear} and the Hodge duality isomorphism, one switches from normals (which together with bivectors indirectly characterize edge lengths) directly to tetrad/co-frame variables. The configuration space of the usual BF-theory is extended to that of the Poincar\'{e} BF. The symmetry properties of the classical theory are derived in Sec.~\ref{subsec:Poincare-BF-lagr} from the covariant lagrangian perspective.
\item In Sec.~\ref{subsec:Poincare-BF-hamiltonian}, we perform the corresponding Dirac constrained Hamiltonian analysis. In particular, by constructing explicitly the canonical gauge generator, we show that the full set of 4d symmetries is reproduced in the Hamiltonian picture.
\item In the spirit of Pleba\'{n}ski constrained formulation of gravity, the dual set of linear simplicity constraints for bivector is introduced in Sec.~\ref{sec:continuum-linear}. They bear the meaning of prescribing 3d volume for the polyhedral faces in the boundary of a 4d polytope.

\end{itemize}

\paragraph{The problems with geometry, in our viewpoint,} might be ultimately caused by insufficient understanding of the metric degrees of freedom and the related symmetry w.r.t. diffeomorphisms. Even at the classical level, as mentioned, the dualistic character of the metric/connection picture is apparent in the choice of variables $(\theta,\om)$. Given all the successes of the gauge theoretic viewpoint, it thus desirable to have the corresponding description encompassing also the $\theta$-variables and the associated group of translations. This is generally known under the name of `Poincar\'{e} gauge theories', however, there seems to be no general agreement on the status and role of translations as of yet.

In Sec.~\ref{ch:background}, we developed the classical picture of geometry based on the \textbf{gauge theory of Cartan connections}. This is largely a work with sources, accumulating the known results, and can serve as introduction to the subject. Nevertheless, we may have introduced certain novelties (or brought up lesser known aspects), allowing for a better comprehension of Cartan's original insights and the geometric intuition, accompanying his theory. 

\begin{itemize}
\item In Sec.~\ref{sec:Klein-bundles}, the manifold with the $G$-structure is defined that we called a `(locally) Klein bundle', in order to set up the framework for the affine gauge theory. The action of translations is naturally defined, and we extend the notion of gauge symmetries to the full principal group $G$ of the model homogeneous Klein geometry. The Cartan connection $\mf{g}$-valued 1-form, defining the (infinitesimal) parallelization in Sec.~\ref{sec:Cartan-connections}, thus enjoys the wider use of symmetry transformations.
\item The equation corresponding to the Cartan's notion of `osculation' of geometries is provided in Sec.~\ref{subsec:structure} (along with usual equations of structure). This naturally leads to the key concept of `development' in Sec.~\ref{subsec:parallel-transport} (borrowed from~\cite{Sharpe1997Diff-Geometry-Cartan}), which generalizes the usual holonomies/parallel transport of vectors to include also the translation of points. 
\item In Sec.~\ref{subsec:universal-der} the space of generalized tensors is defined for affine group, taking into account the important separation between notions of `free' and `bound/or sliding' vectors (essentially, attached to the point). 
\item Then the universal covariant derivative is defined, following~\cite{Sharpe1997Diff-Geometry-Cartan}. One of the surprising findings is that due to enlarged symmetry this is essentially a Lie derivative, which can `drag' not only vectors but points themselves. We thus obtain the natural gauge-theoretic realization of local translations. This corresponds precisely to the ``rolling'' of affine spaces along the integral curves of vector fields on the manifold (and hence to the action of diffeomorphisms).
\item The notion of development is used to address the geometric discretization of tensor-forms in Sec.~\ref{sec:geom-sum}, in terms of vector summation. This is viewed as a natural `coarse-graining' operation. The series of results is obtained in Sec.~\ref{subsec:non-closure}, relating non-trivial torsion with `defects' of surfaces and their failure to form the closed boundary.
\end{itemize}

\paragraph{An outlook} for the future research could be the possible discretization and quantization of geometry in terms of Cartan connections, which we prefer over the standard Ehresmann's notion. We briefly discussed the issues related to discretization in Sec.~\ref{subsec:curvature}. The formal similarity with discrete elasticity was pointed out, hence one could use this as a guidance. 

As for the prospective quantization, we touched it upon in Sec.~\ref{subsec:Cartan-quntization}. The development functor conceptually and technically (almost) coincides with the generalized connections of LQG, so that familiar techniques are applicable, in principle. The Einstein tensor in the form of Cartan's `moment-of-rotation' has the appearance of a Pauli-Lubanski vector, which led us to speculate about the spectrum of corresponding operator in covariant theory. Obviously, one encounters certain technical issues (s.a. measure, and non-compactness), as well as conceptual regarding the interpretation, in general. The corresponding canonical/phase-space picture is not at all clear, and warrants a future investigation.


\appendix


\chapter{Synopsis of Ehresmann connections and gauge theory}
\label{app-1}

The notion of the general Ehresmann $G$ connection is fundamental to the Yang-Mills gauge theories of particle interactions (strong and electro-weak forces of the Standard Model). They are formulated such as to give the parallel transport on the bundle $E$ of Def.~\ref{def:fiber-bundle}, and the associated principal bundle $P$ of Def.~\ref{def:PFB}, with abstractly given (vector) fibers and the structure group $G$ (typically, unitary $\mr{SU}(N)$). The bundle may not be trivial product $P \approx\mc{M}\times G$, but due to natural right action of $G$, it admits (canonical) parallelization of the vertical fiber directions. The non-trivial connection then specifies horizontal subbundle of $TP$ through the complement construction (non-canonically).

\begin{definition} The \textbf{geometric definition of (Ehresmann, principal) connection on $P$} consists of:\label{def:connection-Ehresmann-1}
\begin{enumerate}[label={\upshape(\roman*)}, align=left, widest=iii]
\item The horizontal distribution as assignment of vector subspaces $H_pP\subset T_pP$, complementary to $V_pP\equiv\{X\in T_pP|\pi_\ast(X)=0\}$, varying smoothly with $p\in P$, such that \label{prop:connection-distr-1}
\begin{equation}
TP \ = \ VP\oplus HP, \qquad VP \ = \ \ker\pi_\ast;
\end{equation} 
\item Right $G$ invariance is required $R_{g\ast}(H_pP)=H_{pg}P$, for each $g\in G$.\label{prop:connection-distr-2}
\end{enumerate}
\end{definition}

On the principal $G$ bundle $P$, an alternative characterization is available.

\begin{definition} Let $\mf{g}$ be the Lie algebra of $G$. The \textbf{principal connection on $P$ in terms of $\mf{g}$-valued 1-form} $\om:TP\ra \mf{g}$ is specified, if the following key properties hold:\label{def:connection-Ehresmann-2}
\begin{enumerate}[label={\upshape(\roman*)}, align=left, widest=iii]
\item Restricted to fibers, $\om|_{VP}^{\ph{i}}$ provides the linear isomorphism between Lie algebra $\mf{g}$ and the vertical sub-bundle $\ker\pi_{\ast p}\equiv V_pP\subset T_pP$, spanned by the \textbf{fundamental vector fields}:\label{prop:connection-form-1}
\begin{equation}\label{eq:fundamental-vector}
X^\mb{A}_p \ = \ \frac{d}{ds}(p\exp s\mb{A})\bigg|_{s=0} , \qquad \om(X^\mb{A}) \ = \ \mb{A}\in \mf{g};
\end{equation}
\item $R_g^\ast\om =\mr{Ad}(g^{-1})\om$, that is $\om_{pg}(R_{g\ast}X)=\mr{Ad}(g^{-1})\om_p(X)$ for all $g\in G,p\in P, X\in T_pP$. \label{prop:connection-form-2}
\end{enumerate}
\end{definition}

\begin{proposition}
There is a natural 1-to-1 correspondence $HP=\ker \om$ between the two definitions of the principal connection above (cf.~\cite[Th.1.2.4]{Bleecker1981gauge-variational-principles},~\cite[Prop.2.1.1]{KobayashiNomizu1963vol-1}).
\end{proposition}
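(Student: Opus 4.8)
The plan is to exhibit the two sets of data---horizontal distributions and connection $1$-forms---as images of one another under a pair of mutually inverse constructions, so that the identity $HP=\ker\om$ serves simultaneously as the \emph{definition} of one of the maps and as a \emph{property} verified by the round trip. The common backbone of both directions is a preliminary fact I would record first: because the right $G$-action is free and fibre-preserving, the assignment $\mb{A}\mapsto X^{\mb{A}}_p$ of~\eqref{eq:fundamental-vector} is, for each $p\in P$, a linear isomorphism $\iota_p:\mf{g}\stackrel{\sim}{\ra}V_pP=\ker\pi_{\ast p}$ onto the vertical subspace (injectivity from freeness of the action, surjectivity because the fibres are precisely the $G$-orbits, so their tangents are spanned by the fundamental fields). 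Alongside this I would prove the equivariance lemma $R_{g\ast}X^{\mb{A}}_p=X^{\mr{Ad}(g^{-1})\mb{A}}_{pg}$ by the direct one-line computation $R_{g\ast}X^{\mb{A}}_p=\frac{d}{ds}\big(p\,e^{s\mb{A}}g\big)\big|_{0}=\frac{d}{ds}\big(pg\,e^{s\,\mr{Ad}(g^{-1})\mb{A}}\big)\big|_{0}$, which is the engine behind every transformation law below.

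For the direction from a form $\om$ to a distribution, I would simply set $H_pP:=\ker\om_p$. Property~\ref{prop:connection-form-1} makes $\om_p$ surjective (it is already onto on $V_pP$), so $\dim\ker\om_p=\dim P-\dim G=\dim\mc{M}$, while $\ker\om_p\cap V_pP=0$ since $\om$ is injective on the vertical space; counting dimensions then yields $T_pP=V_pP\oplus\ker\om_p$, which is exactly~\ref{prop:connection-distr-1}. Right-invariance~\ref{prop:connection-distr-2} follows from~\ref{prop:connection-form-2}: for $X\in\ker\om_p$ one has $\om_{pg}(R_{g\ast}X)=(R_g^\ast\om)_p(X)=\mr{Ad}(g^{-1})\om_p(X)=0$, so $R_{g\ast}(\ker\om_p)\subseteq\ker\om_{pg}$, with equality by the dimension match. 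Smoothness of the distribution is inherited from smoothness of $\om$.

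For the converse, given $HP$ I would define $\om_p:=\iota_p^{-1}\circ\mr{vert}_p$, where $\mr{vert}_p:T_pP\to V_pP$ is the projection along $H_pP$ furnished by~\ref{prop:connection-distr-1}. Property~\ref{prop:connection-form-1} is immediate because $\mr{vert}$ fixes vertical vectors. For~\ref{prop:connection-form-2} I would decompose $X=X^v+X^h$, note that $R_g$ is fibre-preserving (hence $R_{g\ast}V_pP=V_{pg}P$) and that~\ref{prop:connection-distr-2} gives $R_{g\ast}H_pP=H_{pg}P$, so $R_{g\ast}$ respects the splitting and $\mr{vert}_{pg}(R_{g\ast}X)=R_{g\ast}X^v$; writing $X^v=X^{\om_p(X)}_p$ and applying the equivariance lemma turns this into $\mr{Ad}(g^{-1})\om_p(X)$, which is exactly the claimed transformation rule. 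Finally I would check the two constructions are inverse: starting from $\om$ the recovered form is $\iota^{-1}\circ\mr{vert}$ evaluated on $X^v+X^h$ with $X^h\in\ker\om$, which returns $\iota^{-1}(X^v)=\om(X)$; starting from $HP$, $\ker(\iota^{-1}\circ\mr{vert})=\{X:\mr{vert}(X)=0\}=HP$. The whole argument is essentially bookkeeping, and the only place demanding genuine care---the ``main obstacle'' in so far as there is one---is establishing the equivariance of the fundamental fields and the smoothness of $\mr{vert}_p$ in $p$ (which follows from the smooth variation of the distribution, e.g. via a local trivialization); everything else is linear algebra fibrewise.
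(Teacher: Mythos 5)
Your proof is correct and follows precisely the standard argument that the paper invokes by citation rather than proving itself (it defers to Bleecker Th.~1.2.4 and Kobayashi--Nomizu Prop.~2.1.1): the fundamental-field isomorphism $\mf{g}\cong V_pP$, the equivariance computation $R_{g\ast}X^{\mb{A}}_p=X^{\mr{Ad}(g^{-1})\mb{A}}_{pg}$, and the two mutually inverse constructions $\om\mapsto\ker\om$ and $HP\mapsto\iota^{-1}\circ\mr{vert}$ are exactly how those references proceed. There is nothing to correct.
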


Moreover, the map $\pi_{\ast p}:T_pP\ra T_{\pi(p)}\mc{M}$ restricts to give an isomorphism between horizontal vectors $X^H_p:=X_p-X^{\mb{A}}_p\in\ker\om_p$ in $P$ and tangent vectors $\bar{X}_{\pi(p)}=\pi_{\ast p}(X)$ in $\mc{M}$, respectively. Hence, every path in the base $\bar{\ga}:I\ra\mc{M}$, $\bar{\ga}'(s)\in T_{\bar{\ga}(s)}\mc{M}$, can always be covered by the unique \textbf{horizontal lift} in the bundle $\ga:I\ra P$, $\ga'(s)\in H_{\ga(s)}P$, such that $\ga^\ast\om=0$ and $\bar{\ga}(s)=\pi(\ga(s))$.

The \emph{finite} parallel transport is then given by the \textbf{holonomy} as the path-dependent $G$-valued solution of the ordinary differential equation:
\begin{equation}\label{eq:holonomy}
h^{-1}_{\bar{\ga}}dh_{\bar{\ga}}^{\ph{1}}(\pa_s) \ = \ \om(\ga_\ast(\pa_s)), \qquad h_{\bar{\ga}}(s_0) \ = \ g_0,
\end{equation}
of the same type as in the main text. This establishes the correspondence between frames $p_i\in P_{x_i}$ and $p_j\in P_{x_j}$ at the endpoints of the path $\bar{\ga}:(I,s_i,s_j)\ra (\mc{M},x_i,x_j)$, as follows: $p_j=p_ih_{ij}$, where $h_{ij}\equiv h_{\bar{\ga}}$ is the solution of~\eqref{eq:holonomy}. The latter depends both on path in $\mc{M}$, as well as the frame's initial configuration at the source point (though in a well-behaved covariant manner).~\footnote{In general, the connection $\om$ results in the groupoid morphisms $h:[\bar{\ga}]\ra G$ from the set of paths in $\mc{M}$ into the gauge group of frame transformations. This is often used for discretization over graphs/plaquettes (e.g. in the lattice gauge theory, or LQG with Ashtekar-Barbero connection on the 3d hypersurface $\mc{S}_3\subset\mc{M}$).}

Regarding the correspondence between PFB and associated fiber bundles, the connection on $P$ determines also the \textbf{general Ehresmann connection}~\cite{Marle2014fromCartan-toEhresmann} on $E=P\times_G^{\ph{1}}F$, namely:
\begin{enumerate}[label={\upshape(\roman*)}, align=left, widest=iii]
\item There is a distribution of horizontal subspaces $H_zE\subset T_zE$, complementary to the vertical sub-bundle $VE\equiv\ker\pi_{E\ast}$, so that: $T_zE = V_zE\oplus H_zE$ for every  $z\in E$; \label{prop:Ehresmann-1}
\item Given any smooth path $\bar{\ga}: (I,s_0,s_1) \ra (\mc{M},x_0,x_1)$ in the base, there exists a unique \emph{horizontal lift} $\hat{\ga}:(I,s_0,s_1) \ra (E,z_0,z_1)$ through any $z_0=\hat{\ga}(s_0)\in E_{x_0}$, covering $\bar{\ga}$, such that: \label{prop:Ehresmann-2}
\begin{equation}\label{eq:hor-lift-E}
\pi_E(\hat{\ga}(s)) \ = \ \bar{\ga}(s), \quad \text{and} \quad \hat{\ga}'(s) \in H_{\hat{\ga}(s)}E \quad \text{for all} \quad s\in[s_1,s_2];
\end{equation}
\item The diffeomorphism map $\hat{\phi}_{s_1s_0}:E_{x_0}\ra E_{x_1}$, given by the horizontal lift $z_0\mapsto\hat{\phi}_{s_1s_0}(z_0)=\hat{\ga}(s_1)$, is called \emph{parallel transport} of fibers along $\bar{\ga}$. The connection on $E$ is compatible with that on $P$ if the parallel transport is provided by $\hat{\phi}_{s_1s_0}=\ga(s_1)\circ\ga(s_0)^{-1}$, where $\ga$ is any horizontal lift in $P$ of a given smooth path in $\mc{M}$. Conversely, a ($G$-compatible) connection in $E$ determines the one in $P$, s.t. the horizontal lifts through any $p_0\in P_{x_0}$ are provided by $\ga(s_1)=\hat{\phi}_{s_1s_0}\circ p_0$.
\end{enumerate}

Closely related to the parallel transport is another notion of `development of a path in the fiber'. Let $z:(I,s_0,s_1)\ra (E,z_0,z_1)$ be \emph{some} path in the associated bundle, with the corresponding projected path $\bar{\ga}(s)=\pi_E(z(s))$ in the base. Let $\hat{\phi}_{s_1s_0}$ be the parallel transport along $\bar{\ga}$ obtained via \emph{horizontal} lift in $E$. The smooth path $\hat{z}(s):=\hat{\phi}_{ss_0}^{-1}\circ z(s)$ \emph{in the fiber} $E_{x_0}$ is called the \textbf{development of the path $z(s)$ in the fiber at $s_0$}. Naturally, this path is constant if and only if $z$ is horizontal~\footnote{Unlike the holonomies, which are routinely used as the basis for discretization in non-perturbative approaches to Quantum Gravity, which favour the connective aspects of GR (like LQG and Spin Foams), this notion of development is rarely the focus point of attention. Indeed, it basically retrieves the same information already contained in holonomies -- at least for the Ehresmann's vertical parallelism in the abstractly defined vector fiber bundles, with some generic $G$. On the other hand, there is more meaning associated with this concept in the framework of affine connections, describing the geometry of spacetime itself. The fibers become affine spaces, while the vectorial part is associated with velocities. The notion of horizontality then implies freely falling (inertial) frames, whereas the developed curves represent some actual stretch in $\mc{M}$ -- that is, correspond to the metric degrees of freedom.} 

In the language of equivariant forms: if the two frames in $P$ at the endpoints of the path $\bar{\ga}:(I,s_i,s_j)\ra (\mc{M},x_i,x_j)$ are related by means of the corresponding holonomy $h_{ij}$ as follows $p_j=p_ih_{ij}$, then the value of the field $\psi$ (see~Defs.~\ref{def:equiv-map}-\ref{def:forms} in the main text) for $p_i$ at the source $x_i=\pi(p_i)$ is obtained from that of $p_j$ at the target $x_j=\pi(p_j)$ by means of the pull-back: 
\begin{equation}\label{eq:fiber-development}
(R^\ast_{h_{ij}^{-1}}\psi)(p_i)=h_{ij}\cdot\psi(p_j).
\end{equation}

The notion of covariant (absolute) derivative is normally specified by the horizontal vector fields, determined by connection (and the base direction). 

\begin{definition}\label{def:covariant-differential}
The \textbf{exterior covariant differential} of $\varphi\in\La^k(P,V)$ (w.r.t. connection $\om$) is $D^{\om}\varphi(X_1,...,X_{k+1})\equiv (d\varphi)^H(X_1,...,X_{k+1}):=d\varphi(X_1^H,...,X_{k+1}^H)$. For the horizontal forms $\varphi\in\bar{\La}^k(P,V)$, we have $D^{\om}\varphi=d\varphi+\om\dot{\wedge}\,\varphi\in\bar{\La}^{k+1}(P,V)$. The form is called \textbf{parallel} (or covariantly constant, w.r.t. connection), if $D^\om\varphi=0$. 
\end{definition} 

\begin{definition}[cf.~\cite{Bleecker1981gauge-variational-principles}, \S 3.2]\label{def:gauge-transform}
An \textbf{automorphism} of a PFB $\pi:P\ra\mc{M}$ is a diffeomorphism $f:P\ra P$ such that $f(pg)=f(p)g$ for all $g\in G,p\in P$. It induces the well-defined diffeomorphism of the base $\bar{f}:\mc{M}\ra\mc{M}$ given by $\bar{f}(\pi(p))=\pi(f(p))$, so that the following diagram commutes:
\begin{displaymath}
\begin{tikzcd}[column sep=small]
P \arrow[d,swap, "\pi "] \arrow[rr, "f"] & & P \arrow[d,"\pi"] \\
\mc{M}  \arrow[rr, "\bar{f}"] & & \mc{M}
\end{tikzcd}.
\end{displaymath}
 A \textbf{gauge transformation} (in this narrower sense) is a vertical automorphism over the identity $\bar{f}=\mr{Id}_\mc{M}$ (i.e. $\pi(p)=\pi(f(p))$). The group of all gauge transformations $\mr{GA}(P)$ is isomorphic to the space of $G$-valued functions, transforming in the (anti-)adjoint representation $C(P,G)=\{\tau:P\ra G|\tau(pg)=g^{-1}\tau(p)g\}$, via the correspondence $f(p)=p\tau(p)$. Accordingly, it is generated by the map $\exp:C(P,\mf{g})\ra\mr{GA}(P)$ as $\exp(\zeta)(p)=p\exp(\zeta(p))$, with an obvious multiplication $[\zeta,\zeta'](p)=[\zeta(p),\zeta'(p)]$ making $C(P,\mf{g})$ into a \textbf{gauge algebra}. 
\end{definition}

If $\om$ is a connection form and $f\in\mr{GA}(P)$ is a gauge transformation, then the pull-back $f^\ast\om$ is also a connection (i.e. satisfying~\ref{prop:connection-form-1}-\ref{prop:connection-form-2}, cf.~\cite[Th.3.2.5]{Bleecker1981gauge-variational-principles}), related to the first one as follows: $(f^\ast\om)_p=(\tau^\ast\om_G)_p+\mr{Ad}(\tau(p)^{-1})\om_p$, i.e. $\tau^\ast\om_G=L_{\tau(p)^{-1}\ast}\tau_{\ast p}^{\ph{1}}$. As a consequence of horizontality, the good behaviour of the forms $\varphi\in\bar{\La}^k(P,V)$ w.r.t. gauge transformations follows: $f^\ast\varphi=\tau^{-1}\cdot\varphi$.

\begin{definition}\label{def:curvature-Ehresmann}
The covariant differential of the connection 1-form $\om\in\La^1(P,\mf{g})$ is generally known as the \textbf{curvature}~\footnote{Important: do not confuse the defining formula~\eqref{eq:cuvature-Ehresmann} for $\Om$ with that of the Def.~\ref{def:covariant-differential} $D^\om\varphi=d\varphi+[\om\wedge\varphi]$ for the $\mf{g}$-valued \emph{horizontal} form $\varphi\in\bar{\La}^k(P,\mf{g})$, transforming in the (anti-)adjoint representation. Though, if for a moment we assume the convention to denote by ``dot'' the multiplication of the \emph{matrix group} and its Lie algebra $\mf{g}$, then from $[\varphi_1\wedge\varphi_2]=\varphi_1\dot{\wedge}\,\varphi_2-(-1)^{k_1k_2}\varphi_2\dot{\wedge}\,\varphi_1$ follows $\frac12[\om\wedge\om]=\om\dot{\wedge}\,\om$ for the connection 1-form, leading to the expression in components in the certain representation $\rho_\ast(\om)=\om^\al\rho_\ast(\mb{E}_\al)$ (taken to be defining).}.:
\begin{equation}\label{eq:cuvature-Ehresmann}
\Om \ :=  \ D^\om\om \ \equiv \ (d\om)^H \ = \ d\om+\frac12[\om\wedge\om] \ \in\bar{\La}^2(P,\mf{g}).
\end{equation}
\end{definition} 

It is horizontal by definition, and transforms in the (anti-)adjoint representation under arbitrary gauge transformations: $f^\ast\Om\equiv R^\ast_{\tau}\Om=\mr{Ad}(\tau^{-1})\Om$, for $\tau\in C(P,G)$. Along with the two properties of Def.~\ref{def:connection-Ehresmann-2}, this characterize $\om$ as ``fiber-wise Maurer-Cartan''. The fundamental vectors~\eqref{eq:fundamental-vector} play the role of the left-invariant vector fields on $G$, and satisfy commutativity $[X^\mb{A},X^\mb{B}]=X^{[\mb{A},\mb{B}]}$. Applying Theorem~\ref{th:G-characterization}, the vertical distribution is completely integrable and fibers are isomorphic to $G$ (which is postulated a priori for PFB). The curvature of the Ehresmann's connection $\Om\neq0$ provides the measure of the non-involutivity of horizontal distribution.

The well-known 2nd Bianchi identities are satisfied:
\begin{equation}\label{eq:Bianchi-Ehr}
d\Om \ = \ [\Om\wedge\om] \qquad \Leftrightarrow \qquad D^\om\Om \ = \ 0,
\end{equation}
being inherently linked to the gauge symmetry. Their geometric meaning is revealed in~\eqref{eq:Bianchi-2-meaning} .



\clearpage

\bibliography{LQG-SpinFoams-bibl}

\newpage
\thispagestyle{empty}
\mbox{}

\thispagestyle{empty}
\clearpage


\newpage
\noindent \textbf{\Large This thesis is based on the publications:} 

\vspace{1.0cm}
 
\begin{itemize}
\item
    V.~Belov, {\it {Poincar\'{e}-Pleba\'{n}ski formulation of GR and dual simplicity constraints}}, Class. Quant. Grav. 35 (2018) 235007,  \href{http://arxiv.org/abs/1708.03182}{{\tt arXiv:1708.03182}}.	\\
	
\item
	B.~Bahr, V.~Belov, {\it {Volume simplicity constraint in the Engle-Livine-Pereira-Rovelli spin foam model}},  Phys. Rev. D97 (2018) 086009, \href{http://arxiv.org/abs/1710.06195}{{\tt arXiv:1710.06195}}.
\end{itemize}

\thispagestyle{empty}

\newpage
\thispagestyle{empty}
\chapter*{Declaration\markboth{Declaration}{Declaration}}
\thispagestyle{empty}
\noindent {\Large Eidesstattliche Versicherung / Declaration on oath }\\

\noindent Hiermit versichere ich an Eides statt, die vorliegende Dissertationsschrift selbst verfasst und keine anderen als die angegebenen Hilfsmittel und Quellen benutzt zu haben.\\

\noindent Die eingereichte schriftliche Fassung entspricht der auf dem elektronischen Speichermedium.\\

\noindent Die Dissertation wurde in der vorgelegten oder einer {\"a}hnlichen Form nicht schon einmal in einem fr{\"u}heren Promotionsverfahren angenommen oder als ungen{\"u}gend beurteilt.\\

Hamburg, den M\"{a}rz 14th 2019

\vspace*{1.5cm}

\rule{0.3\textwidth}{.5pt}\\

Vadim Belov

\end{document}